\newtheorem{thm}{Theorem}
\newtheorem{prop}[thm]{Proposition}
\newtheorem{cor}[thm]{Corollary}
\newtheorem{lemme}{Lemma}
\newtheorem{rmq}{Remark}
\theoremstyle{plain}
\begin{document}
\begin{frontmatter}
\title{Estimating FARIMA models with uncorrelated but non-independent error terms\\[0.5cm]
}
\runtitle{Estimation of weak FARIMA models}
%
\begin{aug}
\author{\fnms{\Large{Yacouba}}
\snm{\Large{Boubacar Ma\"{\i}nassara}}
\ead[label=e1]{yacouba.boubacar$\_$mainassara@univ-fcomte.fr}}\\[2mm]
\author{\fnms{\Large{Youssef}}
\snm{\Large{Esstafa}}
\ead[label=e2]{youssef.esstafa@univ-fcomte.fr}}\\[2mm]
\author{\fnms{\Large{Bruno}}
\snm{\Large{Saussereau}}
\ead[label=e3]{bruno.saussereau@univ-fcomte.fr}}
\runauthor{Y. Boubacar Ma\"{\i}nassara, Y. Esstafa and B. Saussereau}
\affiliation{Universit\'e Bourgogne Franche-Comt\'e}

\address{\hspace*{0cm}\\
Universit\'e Bourgogne Franche-Comt\'e, \\
Laboratoire de math\'{e}matiques de Besan\c{c}on, \\ UMR CNRS 6623, \\
16 route de Gray, \\ 25030 Besan\c{c}on, France.\\[0.2cm]
\printead{e1}\\
\printead{e2}\\
\printead{e3}}
\end{aug}
\vspace{0.5cm}
\begin{abstract} In this paper we derive the asymptotic properties of the least squares estimator (LSE) of fractionally integrated autoregressive moving-average (FARIMA) models under the assumption that the errors are uncorrelated but not necessarily independent nor martingale differences. We relax considerably the independence and even the martingale difference
assumptions on the innovation process to extend the range of application of the FARIMA models. We propose  a consistent estimator of the asymptotic covariance matrix of the LSE which may be very different from that obtained in the standard framework.  A self-normalized approach to  confidence interval construction for weak FARIMA model parameters is also presented. All our results are done under a mixing  assumption on the noise. Finally, some simulation studies  and an application to the daily returns  of stock market indices  are presented to corroborate our theoretical work. 
\end{abstract}
\begin{keyword}[class=AMS]
\kwd[Primary ]{62M10}
\kwd[; secondary ]{91B84}
\end{keyword}
\begin{keyword} Nonlinear processes; FARIMA models; Least-squares estimator; Consistency; Asymptotic normality; Spectral density estimation; Self-normalization; Cumulants
\end{keyword}
\end{frontmatter}
%
\section{Introduction}
Long memory processes takes a large part in the literature of time series (see for instance \cite{Granger&Joyeux}, \cite{Fox&Taqqu1986}, \cite{Dahlhaus1989}, \cite{Hosking}, \cite{Beran2013}, \cite{palma}, among others). They also play an important role in many scientific disciplines and applied fields such as hydrology, climatology, economics, finance, to name a few. 
To model the long memory phenomenon, a widely used model is the fractional autoregressive
integrated moving average  (FARIMA, for short) model. Consider a second order centered stationary process $X:=(X_t)_{t\in\mathbb{Z}}$ satisfying a  FARIMA$(p,d_0,q)$ representation of the form 
\begin{equation}\label{FARIMA}
a(L)(1-L)^{d_0}X_t=b(L)\epsilon_t,
\end{equation}
where $d_0\in\left]-1/2,1/2\right[$ is the long memory parameter, $L$ stands for the back-shift operator and
$a(L)=1-\sum_{i=1}^pa_iL^i$ is the autoregressive (AR for short) operator and  $b(L)=1-\sum_{i=1}^qb_iL^i$ is the moving average (MA for short) operator  (by convention $a_0=b_0=1$). 
The operators $a$ and $b$  represent the short memory part of the model. The linear innovation process $\epsilon:=(\epsilon_t)_{t\in \mathbb Z}$ is assumed to be a stationary sequence satisfies
\begin{itemize}
\item[\hspace*{1em} {\bf (A0):}]
\hspace*{1em} $\mathbb{E}\left[ \epsilon_t\right]=0, \ \mathrm{Var}\left(\epsilon_t\right)=\sigma_{\epsilon}^2 \text{ and } \mathrm{Cov}\left(\epsilon_t,\epsilon_{t+h}\right)=0$ for all $t\in\mathbb{Z}$ and all $h\neq 0$.
\end{itemize}
Under the above assumptions the process $\epsilon$ is called a weak white noise.
Different sub-classes of FARIMA models can be distinguished depending on the noise assumptions.
It is customary to say that  $X$ is a strong FARIMA$(p,d_0,q)$ representation and we will do this henceforth if in \eqref{FARIMA} $\epsilon$
is a  strong white noise, namely an independent and identically distributed (iid for short) sequence of random variables with mean 0 and common
variance. A strong white noise is obviously  a weak white noise because independence entails uncorrelatedness. Of course the
converse is not true. Between weak and strong noises, one can
say that $\epsilon$ is a semi-strong white noise if ${\epsilon}$ is a stationary martingale
difference, namely a sequence such that $\mathbb{E}(\epsilon_{t}|\epsilon_{t-1},\epsilon_{t-2},\dots)=0$.
 An example of semi-strong white noise is the generalized autoregressive conditional heteroscedastic (GARCH) model (see \cite{FZ2010}).
If $\epsilon$ is a semi-strong white noise in \eqref{FARIMA}, $X$ is called a semi-strong FARIMA$(p,d_0,q)$.
If no additional assumption is made on $\epsilon$, that is if $\epsilon$ is only a weak white noise (not
necessarily iid, nor a martingale difference), the representation \eqref{FARIMA} is called a weak FARIMA$(p,d_0,q)$.
It is clear from these definitions that the following inclusions hold:
$$\left\{\text{strong FARIMA}(p,d_0,q) \right\}\subset\left\{\text{semi-strong FARIMA}(p,d_0,q)
\right\}\subset\left\{\text{weak FARIMA}(p,d_0,q)\right\}.$$
Nonlinear models are becoming more and more employed because numerous real
time series exhibit nonlinear dynamics. For instance conditional
heteroscedasticity can not be generated by FARIMA models with iid noises.\footnote{ To cite few
examples of nonlinear processes, let us mention the self-exciting
threshold autoregressive (SETAR), the smooth transition
autoregressive (STAR), the exponential autoregressive (EXPAR), the
bilinear, the random coefficient autoregressive (RCA), the
functional autoregressive (FAR) (see \cite{Tong1990} and \cite{FY2008} for references on these nonlinear time series models).}
As mentioned by \cite{fz05,fz98} in the case of ARMA models, many important classes of nonlinear processes admit  weak ARMA
representations in which the linear innovation is not a martingale difference. 
The main issue with nonlinear models is that they are generally hard
to identify and implement. These technical difficulties certainly explain
the reason why the asymptotic theory of  FARIMA model estimation
is mainly limited to the strong or semi-strong FARIMA model. 

Now we present some of the main works about FARIMA model estimation when the noise is strong or semi-strong. For the estimation of long-range dependent processes, the commonly used
estimation method is based on the Whittle frequency domain maximum likelihood estimator (MLE)
(see for instance \cite{Dahlhaus1989}, \cite{Fox&Taqqu1986}, \cite{Taqqu1997}, \cite{GS1990}).
The asymptotic properties of the MLE of FARIMA models are well-known under the restrictive assumption
that the errors $\epsilon_t$ are independent or  martingale difference (see \cite{Beran1995}, \cite{Beran2013}, \cite{palma}, \cite{baillie1996}, \cite{ling-li}, \cite{hauser1998}, among others). \cite{hualde2011}, \cite{Nielsen2015} and \cite{CAVALIERE2017} have considered the problem of conditional sum-of squares estimation and inference on parametric fractional time series models driven by conditionally (unconditionally) heteroskedastic shocks.
All the works mentioned above assume either strong or semi-strong innovations.
In the modeling of financial time series, for example, the GARCH assumption on the errors is often used (see for instance \cite{baillie1996}, \cite{hauser1998})
to capture the  conditional  heteroscedasticity.
There is no doubt that it is  important to have a soundness inference procedure for the parameter in the
FARIMA model when the (possibly dependent) error is subject to unknown conditional  heteroscedasticity. Little is thus known when
the martingale difference assumption is relaxed. Our aim in this
paper is to consider a flexible FARIMA specification and to relax the
independence assumption (and even the martingale difference
assumption) in order to be able to cover weak FARIMA representations of general nonlinear models. This is why it is interesting to consider weak FARIMA models.

A  very few works deal with the asymptotic behavior of the MLE of weak FARIMA models. To our knowledge, \cite{shaox, shao2010} are the only papers on this subject. 
Under weak assumptions on the noise process, the author has obtained the asymptotic normality of the Whittle estimator (see \cite{Whittle1953}). Nevertheless, the inference
problem is not addressed. This is due to the fact that the asymptotic covariance matrix of the Whittle estimator
involves the integral of the fourth-order cumulant spectra of the dependent errors $\epsilon_t$. 
Using non-parametric bandwidth-dependent methods, one build an estimation of this integral but there is no guidance on the choice of the bandwidth in the
estimation procedures (see \cite{shaox,tani,kee,chiu} for further details).  
The difficulty is caused by the dependence in $\epsilon_t$. Indeed, for strong noise, 
a bandwidth-free consistent estimator of the asymptotic covariance matrix is available. When $\epsilon_t$ is dependent, no explicit formula for a consistent
estimator of the asymptotic variance matrix seems to be provided in the literature (see \cite{shaox}). 

In this work we propose to adopt for weak FARIMA models the estimation procedure developed in \cite{fz98} so we use the  least squares
estimator (LSE for short). 
We show that a strongly mixing property and the existence of moments are sufficient to obtain a consistent and asymptotically normally
distributed  least squares estimator for the parameters of a weak FARIMA representation.
For technical reasons, we often use an assumption on the summability of cumulants. This can be a consequence of a mixing and moments assumptions (see \cite{doukhan1989}, for more details). 
These kind of hypotheses enable us to  circumvent the problem of the lack of speed of convergence (due to  the long-range dependence) in the infinite AR or MA representations. 
We fix this gap by proposing  rather sharp estimations of the infinite AR and MA representations in the presence of long-range dependence (see Subsection \ref{prelim} for details). 

In our opinion there are three major contributions in this work. The first one is to show that the estimation procedure developed
in \cite{fz98} can be extended to weak FARIMA models. This goal is achieved thanks to Theorem \ref{convergence} and Theorem \ref{n.asymptotique} in which the consistency and the asymptotic normality are stated. 
The second one is to provide an answer to the open problem raised by  \cite{shaox} (see also \cite{shao2010})
on the asymptotic covariance matrix estimation. We propose  in our work a weakly consistent estimator of the asymptotic variance matrix (see Theorem \ref{convergence_Isp}).  
Thanks to this estimation of the asymptotic variance matrix, we can construct a confidence region for the estimation of the parameters. Finally another method to construct such confidence region is achieved thanks to  an alternative method using a self normalization procedure (see Theorem \ref{self-norm-estimator2}). 

The paper is organized as follows. Section \ref{estimation} shown  that the least squares
estimator for the parameters of a weak FARIMA model is consistent when the weak white noise $(\epsilon_t)_{t\in\mathbb{Z}}$ is
ergodic and stationary, and that the LSE is asymptotically normally distributed
when $(\epsilon_t)_{t\in\mathbb{Z}}$ satisfies mixing assumptions. The
asymptotic variance of the LSE may be very different in the weak
and strong cases. Section \ref{estimOmega} is devoted to the estimation of
this covariance matrix. We also propose a self-normalization-based approach to constructing a confidence region for the
parameters of  weak FARIMA models which avoids to estimate the asymptotic covariance
matrix. We gather in Section \ref{fig} all our figures and tables. These simulation studies and illustrative applications on real data are presented and discussed in Section \ref{num-ill}.
The proofs of the main results are collected in Section \ref{proofs}. 

\medskip
In all this work, we shall use the matrix norm defined  by $\|A\|=\sup_{\|x\|\leq 1}\|Ax\|=\rho^{1/2}( A^{'}A)$, when $A$ is a $\mathbb{R}^{k_1\times k_2}$ matrix, $\|x\|^2=x'x$ is the Euclidean norm of the vector $x\in\mathbb{R}^{k_2}$, and $\rho(\cdot)$ denotes the spectral radius.

\section{Least squares estimation}\label{estimation}
In this section we present the parametrization and the assumptions that are used in the sequel. 
Then we state the asymptotic properties of the LSE of weak FARIMA models. 
\subsection{Notations and assumptions}
We make the following standard assumption on the roots of the AR and MA polynomials in \eqref{FARIMA}.
\begin{itemize}
\item[\hspace*{1em} {\bf (A1):}]
The polynomials $a(z)$ and $b(z)$
have all their roots outside of the unit disk with no common factors.
\end{itemize}

Let $\Theta^{*}$ be the space
\begin{align*}
\Theta^{*}:=&  \Big
\{(\theta_1,\theta_2,...,\theta_{p+q})
\in{\Bbb R}^{p+q},\text{ where }  a_{\theta}(z)=1-\sum_{i=1}^{p}\theta_{i}z^{i}\, \text{
and } b_{\theta}(z)=1-\sum_{j=1}^{q}\theta_{p+j}z^{j} \\
& \hspace{6.5cm} \text{ have all
their zeros outside the unit disk}\Big \}\ .
\end{align*}
Denote by $\Theta$ the cartesian product $\Theta^{*}\times\left[ d_1,d_2\right]$, where $\left[ d_1,d_2\right]\subset\left] -1/2,1/2\right[$ with $d_1-d_0>-1/2$.
The unknown parameter of interest $\theta_0=(a_1,a_2,\dots,a_p,b_1,b_2,\dots,b_q,d_0)$ is supposed to belong to the parameter space $\Theta$.

The fractional difference operator $(1-L)^{d_0}$ is defined, using the generalized binomial series, by
\begin{equation*}
(1-L)^{d_0}=\sum_{j\geq 0}\alpha_j(d_0)L^j,
\end{equation*}
where for all $j\ge 0$, $\alpha_j(d_0)=\Gamma(j-d_0)/\left\lbrace \Gamma(j+1)\Gamma(-d_0)\right\rbrace $ and $\Gamma(\cdot)$ is the Gamma function. 
Using the Stirling formula we obtain that for large $j$,  $\alpha_j(d_0)\sim j^{-d_0-1}/\Gamma(-d_0)$ (one refers to \cite{Beran2013} for further details). 

For all $\theta\in\Theta$ we define $( \epsilon_t(\theta)) _{t\in\mathbb{Z}}$ as the second order stationary process which is the solution of
\begin{equation}\label{FARIMA-th}
\epsilon_t(\theta)=\sum_{j\geq 0}\alpha_j(d)X_{t-j}-\sum_{i=1}^p\theta_i\sum_{j\geq 0}\alpha_j(d)X_{t-i-j}+\sum_{j=1}^q\theta_{p+j}\epsilon_{t-j}(\theta).
\end{equation}
Observe that, for all $t\in\mathbb{Z}$, $\epsilon_t(\theta_0)=\epsilon_t$ a.s.  Given a realization  $X_1,\dots,X_n$ of length $n$, $\epsilon_t(\theta)$ can be approximated, for $0<t\leq n$, by $\tilde{\epsilon}_t(\theta)$ defined recursively by
\begin{equation}\label{exp-epsil-tilde}
\tilde{\epsilon}_t(\theta)=\sum_{j=0}^{t-1}\alpha_j(d)X_{t-j}-\sum_{i=1}^p\theta_i\sum_{j=0}^{t-i-1}\alpha_j(d)X_{t-i-j}+\sum_{j=1}^q\theta_{p+j}\tilde{\epsilon}_{t-j}(\theta),
\end{equation}
\noindent with $\tilde{\epsilon}_t(\theta)=X_t=0$ if $t\leq 0$.
It will be shown that these initial values are asymptotically
negligible and, in particular, that
$\epsilon_t(\theta)-\tilde{\epsilon}_t(\theta)\to 0$ in $\mathbb{L}^2$ as $t\to\infty$ (see Remark~\ref{rmq:important} hereafter).
Thus the choice of the initial values has no influence on the asymptotic properties of the model parameters estimator.

Let $\Theta^{*}_{\delta}$ denote the compact set
$$\Theta^{*}_{\delta}=\left\lbrace \theta\in\mathbb{R}^{p+q}; \text{ the roots of the polynomials } a_{\theta}(z) \text{ and } b_{\theta}(z) \text{ have modulus } \geq 1+\delta\right\rbrace.$$
We define the set $\Theta_{\delta}$ as the cartesian product of $\Theta^{*}_{\delta}$ by $\left[ d_1,d_2\right] $, i.e. $\Theta_{\delta}=\Theta^{*}_{\delta}\times\left[ d_1,d_2\right]$, where $\delta$ is a positive constant chosen such that $\theta_0$ belongs to $\Theta_{\delta}$.

The random variable $\hat{\theta}_n$ is called least squares estimator if it satisfies, almost surely,
\begin{equation}\label{theta_chap}
\hat{\theta}_n=\underset{\theta\in\Theta_{\delta}}{\mathrm{argmin}} \ Q_n(\theta),
\text{ where }Q_n(\theta)=\frac{1}{n}\sum_{t=1}^n\tilde{\epsilon}_t^2(\theta).
\end{equation}
Our main results are proven under the following assumptions:
\begin{itemize}
\item[\hspace*{1em} {\bf (A2):}]
\hspace*{1em} The process $(\epsilon_t)_{t\in\mathbb{Z}}$ is strictly stationary and ergodic.
\end{itemize}
The consistency of the least squares estimator will be proved under the three above assumptions (\textbf{(A0)}, \textbf{(A1)} and \textbf{(A2)}).
For the asymptotic normality of the LSE, additional assumptions are required.
It is necessary to assume
that $\theta_0$ is not on the boundary of the parameter space
${\Theta_\delta}$.
\begin{itemize}
\item[\hspace*{1em} {\bf (A3):}]
\hspace*{1em} We have $\theta_0\in\overset{\circ}{\Theta_\delta}$, where $\overset{\circ}{\Theta_\delta}$ denotes the interior of $\Theta_\delta$.
\end{itemize}
The stationary process $\epsilon$ is not supposed to be an independent sequence. So one needs to control its dependency by means of its  strong mixing coefficients $\left\lbrace \alpha_{\epsilon}(h)\right\rbrace _{h\geq 0}$ defined by
$$\alpha_{\epsilon}\left(h\right)=\sup_{A\in\mathcal F_{-\infty}^t,B\in\mathcal F_{t+h}^{+\infty}}\left|\mathbb{P}\left(A\cap
B\right)-\mathbb{P}(A)\mathbb{P}(B)\right|,$$
where $\mathcal F_{-\infty}^t=\sigma (\epsilon_u, u\leq t )$ and $\mathcal F_{t+h}^{+\infty}=\sigma (\epsilon_u, u\geq t+h)$.

We shall need  an integrability assumption on the moments of the noise $\epsilon$ and a summability condition on the strong mixing coefficients $(\alpha_{\epsilon}(h))_{h\ge 0}$. 
\begin{itemize}
\item[\hspace*{1em} {\bf (A4):}]
\hspace*{1em}  There exists an integer $\tau$ such that for some $\nu\in]0,1]$, we have $\mathbb{E}|\epsilon_t|^{\tau+\nu}<\infty$ and $\sum_{h=0}^{\infty}(h+1)^{k-2} \left\lbrace \alpha_{\epsilon}(h)\right\rbrace ^{\frac{\nu}{k+\nu}}<\infty$ for $k=1,\dots,\tau$. 
\end{itemize}

Note that {\bf (A4)} implies the following weak assumption on the joint cumulants of the innovation process $\epsilon$ (see \cite{doukhan1989}, for more details).
\begin{itemize}
\item[\hspace*{1em} {\bf (A4'):}]
\hspace*{1em} There exists an integer $\tau\ge 2$ such that $C_\tau:=\sum_{i_1,\dots,i_{\tau-1}\in\mathbb{Z}}|\mathrm{cum}(\epsilon_0,\epsilon_{i_1},\dots,\epsilon_{i_{\tau-1}})|<\infty \ .$
\end{itemize}
In the above expression, $\mathrm{cum}(\epsilon_0,\epsilon_{i_1},\dots,\epsilon_{i_{\tau-1}})$ denotes the $\tau-$th order cumulant of the stationary process. Due to the fact that the $\epsilon_t$'s are centered,  we notice that for fixed $(i,j,k)$
$$\mathrm{cum}(\epsilon_0,\epsilon_i,\epsilon_j,\epsilon_k)=\mathbb{E}\left[\epsilon_0\epsilon_i\epsilon_j\epsilon_k\right]-\mathbb{E}\left[\epsilon_0\epsilon_i\right]\mathbb{E}\left[\epsilon_j\epsilon_k\right]-\mathbb{E}\left[\epsilon_0\epsilon_j\right]\mathbb{E}\left[\epsilon_i\epsilon_k\right]-\mathbb{E}\left[\epsilon_0\epsilon_k\right]\mathbb{E}\left[\epsilon_i\epsilon_j\right].$$
Assumption \textbf{(A4)} is a usual technical hypothesis which is useful when one proves the asymptotic normality (see \cite{fz98} for example). Let us notice however that we impose a stronger convergence speed for the mixing coefficients than in the works on weak ARMA processes. This is due to the fact that the coefficients in the AR or MA representation of $\epsilon_t(\theta)$ have no more exponential decay because of the fractional operator (see Subsection \ref{prelim} for details and comments). 

As mentioned before, Hypothesis  \textbf{(A4)} implies  \textbf{(A4')} which is also a technical assumption usually used in the fractionally integrated ARMA processes framework (see for instance \cite{s2010JRSSBa}) or even in an ARMA context (see \cite{FZ2007,zl}). One remarks that in \cite{shao2010}, the author emphasized that a geometric moment contraction implies \textbf{(A4')}. This provides an alternative to strong mixing assumptions but, to our knowledge, there is no relation between this two kinds of hypotheses. 
\subsection{Asymptotic properties}

The asymptotic properties of the LSE  of the weak FARIMA model are stated in the following two theorems. 
\begin{thm}{(Consistency).}\label{convergence} Assume that $(\epsilon_t)_{t\in\mathbb{Z}}$ satisfies $\eqref{FARIMA}$ and belonging to $\mathbb{L}^2$. Let $( \hat{\theta}_n)_{n\ge 1}$ be a sequence of least squares estimators. Under Assumptions \textbf{(A0)}, \textbf{(A1)} and \textbf{(A2)}, we have
\begin{equation*}
\hat{\theta}_n\xrightarrow[n\to \infty]{\mathbb{P}} \, \theta_0.
\end{equation*}
\end{thm}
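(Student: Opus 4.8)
The plan is to follow the classical consistency argument for least squares estimators of ARMA-type models, adapted to the long-memory setting, which rests on three ingredients: (i) the approximation errors coming from the truncated innovations $\tilde\epsilon_t(\theta)$ are asymptotically negligible, uniformly in $\theta\in\Theta_\delta$; (ii) an ergodic theorem yielding the uniform almost-sure convergence of $Q_n(\theta)$ to a limit function $Q_\infty(\theta):=\mathbb{E}[\epsilon_t^2(\theta)]$; and (iii) an identifiability argument showing that $Q_\infty$ has a strict global minimum at $\theta_0$. Combining these three facts with a standard argument for M-estimators (compactness of $\Theta_\delta$ plus uniform convergence plus a well-separated minimum) gives $\hat\theta_n\to\theta_0$ almost surely, hence in probability.

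First I would handle the negligibility of initial values. Using the recursion \eqref{FARIMA-th} versus \eqref{exp-epsil-tilde}, one writes $\epsilon_t(\theta)-\tilde\epsilon_t(\theta)$ as a sum of tail terms $\sum_{j\ge t}\alpha_j(d)X_{t-j}$ and analogous shifted pieces, propagated through the MA operator $b_\theta^{-1}$. Since on $\Theta^*_\delta$ the roots of $b_\theta$ stay at distance $\ge 1+\delta$ from the unit disk, $b_\theta^{-1}$ has coefficients decaying geometrically, uniformly in $\theta$; and the coefficients $\alpha_j(d)\sim j^{-d-1}/\Gamma(-d)$ are square-summable uniformly for $d\in[d_1,d_2]$. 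This yields $\sup_{\theta\in\Theta_\delta}\mathbb{E}\,|\epsilon_t(\theta)-\tilde\epsilon_t(\theta)|^2\to 0$ as $t\to\infty$, and by a Cesàro/Toeplitz argument one gets $\sup_{\theta\in\Theta_\delta}|Q_n(\theta)-\bar Q_n(\theta)|\to 0$ a.s., where $\bar Q_n(\theta)=n^{-1}\sum_{t=1}^n\epsilon_t^2(\theta)$. The subtlety here, compared with the weak ARMA case, is precisely the slow polynomial decay of $\alpha_j(d)$ rather than exponential decay; one must check square-summability (not just summability) is enough for the $\mathbb{L}^2$ control, and that the bounds are uniform in $d$ over the compact interval $[d_1,d_2]$ with $d_1-d_0>-1/2$ ensuring $\epsilon_t(\theta)$ itself is a well-defined $\mathbb{L}^2$ process.

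Next, since $(\epsilon_t)$ is strictly stationary and ergodic by \textbf{(A2)}, and $\theta\mapsto\epsilon_t(\theta)$ is a measurable function of the $\epsilon$-process that is continuous in $\theta$, the process $(\epsilon_t^2(\theta))_t$ is stationary and ergodic for each $\theta$; a uniform ergodic theorem (e.g. via the standard argument covering $\Theta_\delta$ by small balls and using $\mathbb{E}\sup_{\theta\in\Theta_\delta}\epsilon_t^2(\theta)<\infty$, which follows from the uniform geometric control of the MA inversion and square-summability of the fractional coefficients) gives $\sup_{\theta\in\Theta_\delta}|\bar Q_n(\theta)-Q_\infty(\theta)|\to 0$ a.s., with $Q_\infty$ continuous. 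For identifiability, I would write $\epsilon_t(\theta)=\epsilon_t+\big(\text{something orthogonal to the past at time }t\big)$ is not quite right in the weak setting, so instead I use the spectral/projection argument: $Q_\infty(\theta)-Q_\infty(\theta_0)=\mathbb{E}[(\epsilon_t(\theta)-\epsilon_t)^2]+2\,\mathbb{E}[(\epsilon_t(\theta)-\epsilon_t)\epsilon_t]$, and show the cross term vanishes because $\epsilon_t(\theta)-\epsilon_t$ is a limit of linear combinations of $\{X_s:s<t\}=\{\epsilon_s:s<t\}$ (in the closed linear span sense, using that $X$ has the linear representation driven by $\epsilon$), hence uncorrelated with $\epsilon_t$ by \textbf{(A0)}. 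Thus $Q_\infty(\theta)\ge Q_\infty(\theta_0)=\sigma_\epsilon^2$ with equality iff $\epsilon_t(\theta)=\epsilon_t$ a.s., which by comparing the AR/MA/fractional filters and invoking \textbf{(A1)} (coprimeness, roots outside the disk) forces $\theta=\theta_0$.

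Finally I assemble the pieces: for any neighbourhood $V$ of $\theta_0$, $\inf_{\theta\in\Theta_\delta\setminus V}Q_\infty(\theta)=:Q_\infty(\theta_0)+\eta$ with $\eta>0$ by continuity, compactness and the strict minimum; combining with the two uniform convergences, $Q_n(\hat\theta_n)\le Q_n(\theta_0)\to\sigma_\epsilon^2$ forces $\hat\theta_n\in V$ eventually, a.s. I expect the main obstacle to be step (i)–(ii), namely establishing the \emph{uniform} (in $\theta$, and crucially in the memory parameter $d$) $\mathbb{L}^2$-control and integrability $\mathbb{E}\sup_{\theta\in\Theta_\delta}\epsilon_t^2(\theta)<\infty$ despite the merely polynomial decay $\alpha_j(d)\sim j^{-d-1}/\Gamma(-d)$ of the fractional coefficients; this is exactly the ``sharp estimations of the infinite AR and MA representations in the presence of long-range dependence'' alluded to in the introduction, and it is what distinguishes the argument from the weak ARMA case where everything decays geometrically.
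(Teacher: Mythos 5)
Your overall strategy (negligibility of the truncated initial values, an ergodic-theorem limit for the empirical criterion, identifiability via the orthogonality of $\epsilon_t(\theta)-\epsilon_t$ to $\epsilon_t$, and a Wald-type compactness argument) is the same as the paper's, and your identifiability step is exactly right: the cross term vanishes because $\epsilon_t(\theta)-\epsilon_t=\sum_{i\ge1}(\gamma_i(\theta)-\gamma_i(\theta_0))X_{t-i}$ lies in the closed linear span of $\{\epsilon_s,\,s<t\}$, and equality $\epsilon_t(\theta)=\epsilon_t$ a.s.\ forces $\theta=\theta_0$ by comparing the filters and using \textbf{(A1)}.

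There is, however, a gap in the way you obtain uniformity. You base the uniform ergodic theorem and the uniform a.s.\ control of the truncation error on $\mathbb{E}\sup_{\theta\in\Theta_\delta}\epsilon_t^2(\theta)<\infty$, claiming this follows from ``square-summability of the fractional coefficients.'' It does not: writing $\sup_\theta|\epsilon_t(\theta)|\le\sum_{j\ge0}\sup_\theta|\gamma_j(\theta)|\,|X_{t-j}|$ and taking $\mathbb{L}^2$-norms, the $X_{t-j}$ are correlated (and appear through absolute values), so one needs $\sum_j\sup_\theta|\gamma_j(\theta)|<\infty$, i.e.\ $\ell^1$-summability of the \emph{envelope}. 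Since $\sup_{\theta\in\Theta_\delta}|\gamma_j(\theta)|\asymp j^{-1-d_1}$ by \eqref{Coef-Gamma}, this fails whenever $d_1\le0$, which the parameter space permits ($[d_1,d_2]\subset\,]-1/2,1/2[$ with only $d_1-d_0>-1/2$). Square-summability only controls $\sup_\theta\mathbb{E}[\epsilon_t^2(\theta)]$ (supremum outside the expectation), where the uncorrelatedness of the noise gives $\mathbb{E}[\epsilon_t^2(\theta)]=\sigma_\epsilon^2\|\lambda(\theta)\|_{\ell^2}^2$. This is precisely why the paper does not prove uniform a.s.\ convergence: it establishes only pointwise convergence in probability of $Q_n(\theta)-\mathbb{E}[O_n(\theta)]$ (via $\mathbb{E}|Q_n(\theta)-O_n(\theta)|\to0$ using the $\ell^2$ bound of Remark~\ref{rmq:important}, plus the ergodic theorem), and then upgrades it to \emph{uniform convergence in probability} by Corollary~2.2 of \cite{Whitney-1991}, which only requires a stochastic Lipschitz bound $|Q_n(\theta_1)-Q_n(\theta_2)|\le B_n h(\|\theta_1-\theta_2\|)$ with $B_n=\mathrm{O}_{\mathbb{P}}(1)$ — and that is enough because the theorem only asserts convergence in probability. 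To make your route work you would either have to restrict to $d_1>0$, or replace the crude envelope bound by a genuinely uniform argument; as written, the step asserting $\mathbb{E}\sup_\theta\epsilon_t^2(\theta)<\infty$ does not follow from what you invoke.
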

The proof of this theorem is given in Subsection~\ref{sectconv}.

In order to state our asymptotic normality result, we define the function
\begin{align}\label{On}
O_n(\theta)& =\frac{1}{n}\sum_{t=1}^n\epsilon_t^2(\theta),
\end{align}
where the sequence $\left( \epsilon_t(\theta)\right)_{t\in\mathbb{Z}}$ is given by \eqref{FARIMA-th}.
We consider the following information matrices
$$I(\theta)=\lim_{n\rightarrow\infty}Var\left\lbrace \sqrt{n}\frac{\partial}{\partial\theta}O_n(\theta)\right\rbrace \text{  and  } J(\theta)=\lim_{n\rightarrow\infty}\left[ \frac{\partial^2}{\partial\theta_i\partial\theta_j}O_n(\theta)\right] \text{a.s.}$$
The existence of these matrices are proved when one demonstrates the following result. 
\begin{thm}{(Asymptotic normality).}\label{n.asymptotique} We assume that $(\epsilon_t)_{t\in\mathbb{Z}}$ satisfies \eqref{FARIMA}. 
Under \textbf{(A0)}-\textbf{(A3)} and Assumption {\bf (A4)} with  $\tau=4$, the sequence $( \sqrt{n}( \hat{\theta}_n-\theta_0)) _{n\ge 1}$  has a limiting centered normal distribution with covariance matrix $\Omega:=J^{-1}(\theta_0)I(\theta_0)J^{-1}(\theta_0)$.
\end{thm}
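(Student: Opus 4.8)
The plan is to follow the classical Taylor-expansion scheme for M-estimators, adapted to the long-memory setting. First I would reduce the problem from the feasible criterion $Q_n$ (built from $\tilde\epsilon_t$) to the stationary criterion $O_n$ (built from $\epsilon_t(\theta)$): since Theorem~\ref{convergence} gives $\hat\theta_n\to\theta_0$ in probability and (by Assumption~\textbf{(A3)}) $\theta_0$ lies in the interior of $\Theta_\delta$, with probability tending to one $\hat\theta_n$ is an interior point, so $\partial Q_n(\hat\theta_n)/\partial\theta=0$. A Taylor expansion of $\partial Q_n/\partial\theta$ around $\theta_0$ then yields
\begin{equation*}
\sqrt{n}\,(\hat\theta_n-\theta_0)=-\Bigl[\frac{\partial^2 Q_n}{\partial\theta\,\partial\theta'}(\theta_n^{*})\Bigr]^{-1}\sqrt{n}\,\frac{\partial Q_n}{\partial\theta}(\theta_0),
\end{equation*}
for some $\theta_n^{*}$ between $\hat\theta_n$ and $\theta_0$. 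The whole proof then splits into: (i) controlling the initial-value discrepancy $\epsilon_t(\theta)-\tilde\epsilon_t(\theta)$ so that $Q_n$ may be replaced by $O_n$ in both the score and the Hessian without changing the limit; (ii) a central limit theorem for the score $\sqrt{n}\,\partial O_n(\theta_0)/\partial\theta$; and (iii) a uniform law of large numbers for the Hessian $\partial^2 O_n(\theta)/\partial\theta\,\partial\theta'$ on a neighbourhood of $\theta_0$, giving convergence to the nonrandom matrix $2J(\theta_0)$.

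For step (i), I would use the sharp bounds on the coefficients of the infinite AR/MA representations of $\epsilon_t(\theta)$ announced in Subsection~\ref{prelim}: differentiating \eqref{FARIMA-th} in $\theta$ shows that $\partial\epsilon_t(\theta)/\partial\theta_i$ and the second derivatives are themselves stationary processes admitting such expansions, with coefficients decaying polynomially (not geometrically) because of the fractional operator $(1-L)^d$; one must check that the truncation at lag $t$ in \eqref{exp-epsil-tilde} produces an $\mathbb{L}^2$ (and, after summing over $t$, an $o_{\mathbb P}(1)$ after normalisation by $\sqrt n$) error. This is where the extra summability imposed in \textbf{(A4)} beyond the ARMA case is used. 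For step (iii), ergodicity \textbf{(A2)} plus the moment bound gives pointwise convergence of $\partial^2 O_n(\theta)/\partial\theta\,\partial\theta'$ to its expectation by the ergodic theorem, uniformity over a compact neighbourhood follows from an equicontinuity/Ascoli argument as in the consistency proof, and invertibility of $J(\theta_0)$ is obtained from Assumption~\textbf{(A1)} (no common roots), exactly as in the strong-noise case — the identifiability of the FARIMA parametrisation.

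For step (ii), the CLT for the score, write $\sqrt n\,\partial O_n(\theta_0)/\partial\theta = n^{-1/2}\sum_{t=1}^n Y_t$ where $Y_t = 2\epsilon_t\,\partial\epsilon_t(\theta_0)/\partial\theta$ is a stationary, centered, strongly mixing sequence (a measurable function of $(\epsilon_s)_{s\le t}$, hence inheriting mixing from $\epsilon$). The existence of $I(\theta_0)=\lim_n \mathrm{Var}(n^{-1/2}\sum_t Y_t)$ and the asymptotic normality then follow from a CLT for mixing sequences (e.g.\ Herrndorf's theorem), for which one needs $\mathbb{E}\|Y_t\|^{2+\eta}<\infty$ together with a mixing-rate condition; these are exactly what \textbf{(A4)} with $\tau=4$ delivers once one accounts for the polynomially decaying coefficients of $\partial\epsilon_t(\theta_0)/\partial\theta$ (so that $Y_t$ is a well-defined element of $\mathbb{L}^{2+\eta}$ and the series defining its mixing coefficients converges). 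Finally, combining (i)--(iii), Slutsky's lemma gives $\sqrt n(\hat\theta_n-\theta_0)\xrightarrow{d}\mathcal N(0,\,(2J(\theta_0))^{-1}\,I_Y\,(2J(\theta_0))^{-1})$ with $I_Y$ the long-run variance of $Y_t$, which is precisely $\Omega=J^{-1}(\theta_0)I(\theta_0)J^{-1}(\theta_0)$ after absorbing the factors of $2$ into the definitions of $I$ and $J$.

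The main obstacle I expect is step (i): establishing that the fractional-integration tails do not destroy the $\sqrt n$-consistency, i.e.\ that the initial-value errors and the truncation of the infinite expansions of the derivative processes are negligible at the $n^{-1/2}$ scale. In the classical ARMA proof these tails decay geometrically and the estimates are routine; here the decay is only of order $j^{-d-1}$ (up to logarithmic factors near $d=0$), so one must sum polynomially small terms carefully and it is precisely to make these sums converge that the strengthened mixing/cumulant summability in \textbf{(A4)}--\textbf{(A4')} is imposed. A secondary technical point is verifying that $\partial\epsilon_t(\theta)/\partial\theta$ and its second derivatives admit the required uniform-in-$\theta$ $\mathbb{L}^{2+\eta}$ bounds on $\Theta_\delta$, which is where the lower bound $d_1-d_0>-1/2$ on the parameter set enters.
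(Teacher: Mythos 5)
Your overall architecture is the same as the paper's: Taylor expansion of the score around $\theta_0$, replacement of $Q_n$ by $O_n$ via control of the truncation error, convergence of the Hessian to $J(\theta_0)$ with invertibility from identifiability, a CLT for the score, and Slutsky. Steps (i) and (iii) are essentially what the paper does (the paper handles uniformity of the Hessian by a further Taylor expansion and third-derivative moment bounds rather than Ascoli, but that is a cosmetic difference), and your identification of the polynomial tails as the main technical obstacle is exactly right.

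There is, however, a genuine gap in your step (ii). You assert that $Y_t=2\epsilon_t\,\partial\epsilon_t(\theta_0)/\partial\theta$ is strongly mixing because it is a measurable function of $(\epsilon_s)_{s\le t}$, and you propose to apply Herrndorf's CLT directly to $(Y_t)$. This inheritance claim is false in general: strong mixing is preserved under measurable functions of \emph{finitely many} coordinates (with a shifted lag), but $\partial\epsilon_t(\theta_0)/\partial\theta=\sum_{i\ge 1}\overset{\textbf{.}}{\lambda}_{i,k}(\theta_0)\epsilon_{t-i}$ depends on the entire infinite past, so $\sigma(Y_s,\,s\ge t+h)$ is not contained in any $\sigma(\epsilon_u,\,u\ge t+h')$ and the mixing coefficients of $Y$ cannot be bounded by those of $\epsilon$. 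The paper circumvents this precisely by truncating: it sets $H_{t,r}(\theta_0)=2\sum_{i=1}^{r}\Lambda_i(\theta_0)\epsilon_t\epsilon_{t-i}$, which is a function of the finite block $(\epsilon_t,\dots,\epsilon_{t-r})$ and hence genuinely inherits the mixing of {\bf (A4)}, applies the mixing CLT to each fixed $r$, and then shows
$\sup_n \mathrm{Var}\bigl(n^{-1/2}\sum_{t=1}^n (H_t-H_{t,r})\bigr)\to 0$ as $r\to\infty$, using the summability of fourth-order cumulants from {\bf (A4')} together with the $\mathrm{O}(1/i)$ decay of the coefficients $\overset{\textbf{.}}{\lambda}_{i,k}(\theta_0)$ (the $d$-derivative contributes exactly the coefficients of $\ln(1-z)$, i.e. $-1/i$, whose square-summability is what makes $I(\theta_0)$ finite); the limit is then transferred by Anderson's approximation theorem. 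Without this truncation-and-approximation step your CLT for the score, and hence the existence of $I(\theta_0)$ as a long-run variance, is not justified.
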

The proof of this theorem is given in Subsection~\ref{sectna}.
\begin{rmq}Hereafter (see more precisely  \eqref{matrixJexplicit}), we will be able to prove that $$J(\theta_0)= 2\mathbb{E}\left[ \frac{\partial}{\partial\theta}\epsilon_t(\theta_0)\frac{\partial}{\partial\theta'}\epsilon_t(\theta_0)\right] \ \text{a.s.}$$
Thus the matrix $J(\theta_0)$ has the same expression in the strong and weak FARIMA cases (see Theorem 1 of \cite{Beran1995}).
On the contrary, the matrix $I(\theta_0)$ is in general much more complicated in the
weak case than in the strong case.
\end{rmq}
\begin{rmq}
\label{IegalJ}  In the standard strong FARIMA case, {\em i.e.}
when {\bf (A2)} is replaced by the assumption that $(\epsilon_t)_{t\in\mathbb{Z}}$ is
iid, we have $I(\theta_0)=2\sigma_{\epsilon}^2J(\theta_0)$. Thus the asymptotic
covariance matrix is then reduced as $\Omega_S:=2\sigma_{\epsilon}^2J^{-1}(\theta_0)$. Generally, when
the noise is not an independent sequence, this simplification can not be made and
we have $I(\theta_0)\ne 2\sigma_{\epsilon}^2J(\theta_0)$.
The true asymptotic covariance matrix $\Omega=J^{-1}(\theta_0)I(\theta_0)J^{-1}(\theta_0)$ obtained
in the weak FARIMA framework can be very different from $\Omega_S$.
As a consequence, for the statistical inference on the
parameter,  the ready-made softwares used to
fit FARIMA do not provide a correct estimation of $\Omega$ for weak
FARIMA processes because the standard
time series analysis softwares use empirical estimators of $\Omega_S$.
The problem also holds in the weak ARMA case (see
\cite{FZ2007} and the references therein).This is why it is interesting to find  an estimator of $\Omega$  which is consistent
for both weak and (semi-)strong FARIMA cases.
\end{rmq}
Based on the above remark, the next section deals with two different methods  in order to find  an estimator of $\Omega$. 
\section{Estimating the asymptotic variance matrix}\label{estimOmega}
For statistical inference problem, the asymptotic variance $\Omega$ has to be estimated. In particular
Theorem \ref{n.asymptotique} can be used to obtain confidence intervals and significance tests for the parameters.

First of all, the matrix $J(\theta_0)$ can be  estimated empirically by the square matrix $\hat{J}_n$ of order $p+q+1$ defined by:
\begin{equation}\label{estim_J}
\hat{J}_n=\frac{2}{n}\sum_{t=1}^n\left\lbrace \frac{\partial}{\partial\theta}\tilde{\epsilon}_t\left(\hat{\theta}_n\right)\right\rbrace \left\lbrace \frac{\partial}{\partial\theta^{'}}\tilde{\epsilon}_t\left(\hat{\theta}_n\right)\right\rbrace .
\end{equation}
The convergence of $\hat{J}_n$ to $J(\theta_0)$ is  classical (see Lemma \ref{Conv_almost_sure_J_n_vers_J} in Subsection \ref{sectna} for details).

In the standard strong FARIMA case, in view of remark~\ref{IegalJ}, we have  $\hat{\Omega}_S:=2\hat{\sigma}_{\epsilon}^2\hat{J}_n^{-1}$ with $\hat{\sigma}_{\epsilon}^2=Q_n(\hat{\theta}_n)$. Thus $\hat{\Omega}_S$  is a consistent estimator of $\Omega_S$.
In the general weak FARIMA case, this estimator is not consistent when $I(\theta_0)\neq 2\sigma_{\epsilon}^2J(\theta_0)$. So we need a consistent estimator of $I(\theta_0)$.
\subsection{Estimation of the asymptotic matrix $I(\theta_0)$}\label{ii}
For all $t\in\mathbb{Z}$, let
\begin{align}\label{Ht_process}
H_t(\theta_0)&=2\epsilon_t(\theta_0)\frac{\partial}{\partial\theta}\epsilon_t(\theta_0)
=\left(2\epsilon_t(\theta_0)\frac{\partial}{\partial\theta_1}\epsilon_t(\theta_0),\dots,2\epsilon_t(\theta_0)\frac{\partial}{\partial\theta_{p+q+1}}\epsilon_t(\theta_0)\right) ^{'}.
\end{align}
We shall see in the proof of Lemma~\ref{exit_I} that 
\begin{align*}
I(\theta_0)&=\lim_{n\rightarrow\infty}\mathrm{Var}\left( \frac{1}{\sqrt{n}}\sum_{t=1}^nH_t(\theta_0)\right)= \sum_{h=-\infty}^{+\infty}\mathrm{Cov}\left( H_t(\theta_0),H_{t-h}(\theta_0)\right) .
\end{align*}
Following the arguments developed in \cite{BMCF2012}, the matrix $I(\theta_0)$ can be estimated using Berk's approach (see \cite{berk}). More precisely, by interpreting $I(\theta_0)/2\pi$ as the spectral density of the stationary process $(H_t(\theta_0))_{t\in\mathbb{Z}}$ evaluated at frequency $0$, we can use a parametric autoregressive estimate of the spectral density of $(H_t(\theta_0))_{t\in\mathbb{Z}}$ in order to estimate
the matrix $I(\theta_0)$.

For any $\theta\in\Theta$, $H_t(\theta)$ is a measurable function of $\left\lbrace \epsilon_s,s\leq t\right\rbrace $.
The stationary process $(H_t(\theta_0))_{t\in\mathbb{Z}}$ admits the following Wold decomposition $H_t(\theta_0)=u_t+\sum_{k=1}^{\infty}\psi_ku_{t-k}$, where $(u_t)_{t\in\mathbb{Z}}$ is a $(p+q+1)-$variate weak white noise with variance matrix $\Sigma_u$.

Assume that $\Sigma_u$ is non-singular, that $\sum_{k=1}^{\infty}\left\|\psi_k\right\|<\infty$, and that $\det(I_{p+q+1}+\sum_{k=1}^{\infty}\psi_kz^k)\neq 0$ if $\left|z\right|\leq 1$. Then $(H_t(\theta_0))_{t\in\mathbb{Z}}$ admits a weak multivariate $\mathrm{AR}(\infty)$ representation (see \cite{Akutowicz1957}) of the form
\begin{equation}\label{AR_infty}
\Phi(L)H_t(\theta_0):=H_t(\theta_0)-\sum_{k=1}^{\infty}\Phi_kH_{t-k}(\theta_0)=u_t,
\end{equation}
such that $\sum_{k=1}^{\infty}\left\|\Phi_k\right\|<\infty$ and $\det\left\lbrace \Phi(z)\right\rbrace \neq 0$ if $\left|z\right|\leq 1$. %

Thanks to the previous remarks, the estimation of $I(\theta_0)$ is therefore based on the following expression
$$I(\theta_0)=\Phi^{-1}(1)\Sigma_u\Phi^{-1}(1).$$
Consider the regression of $H_t(\theta_0)$ on $H_{t-1}(\theta_0),\dots,H_{t-r}(\theta_0)$ defined by
\begin{equation}\label{AR_tronquee}
H_t(\theta_0)=\sum_{k=1}^{r}\Phi_{r,k}H_{t-k}(\theta_0)+u_{r,t},
\end{equation}
where $u_{r,t}$ is uncorrelated with $H_{t-1}(\theta_0),\dots,H_{t-r}(\theta_0)$.
Since ${H}_t(\theta_0)$ is not observable, we introduce $\hat{H}_t\in\mathbb{R}^{p+q+1}$ obtained by replacing $\epsilon_t(\cdot)$ by $\tilde{\epsilon}_t(\cdot)$ and $\theta_0$ by $\hat{\theta}_n$ in $\eqref{Ht_process}$: 
\begin{align}\label{Ht_chapeau}
\hat H_t&=2\tilde\epsilon_t(\hat\theta_n)\frac{\partial}{\partial\theta}\hat\epsilon_t(\theta_n) \ .
\end{align}
Let $\hat{\Phi}_r(z)=I_{p+q+1}-\sum_{k=1}^r\hat\Phi_{r,k}z^k$, where $\hat\Phi_{r,1},\dots,\hat\Phi_{r,r}$ denote the coefficients of the LS regression
of $\hat{H}_t$ on $\hat{H}_{t-1},\dots,\hat{H}_{t-r}$. Let $\hat{u}_{r,t}$ be the residuals of this regression and  let $\hat{\Sigma}_{\hat{u}_r}$ be the empirical variance (defined in \eqref{coef-reg} below) of $\hat{u}_{r,1},\dots,\hat{u}_{r,r}$.
The LSE of $\underline{\Phi}_r=\left( \Phi_{r,1},\dots,\Phi_{r,r}\right)$ and  $\Sigma_{u_r}=\mathrm{Var}(u_{r,t})$ are given by
\begin{equation}\label{coef-reg}
\underline{\hat{\Phi}}_r=\hat{\Sigma}_{\hat{H},\underline{\hat{H}}_r}\hat{\Sigma}_{\underline{\hat{H}}_r}^{-1} \  \text{ and } \ \hat{\Sigma}_{\hat{u}_r}=\frac{1}{n}\sum_{t=1}^n\left(\hat{H}_t-\underline{\hat{\Phi}}_r \underline{\hat{H}}_{r,t}\right) \left( \hat{H}_t-\underline{\hat{\Phi}}_r \underline{\hat{H}}_{r,t}\right)^{'},
\end{equation}
where
$$\underline{\hat{H}}_{r,t}=( \hat{H}_{t-1}^{'},\dots,\hat{H}_{t-r}^{'}) ^{'},\quad\hat{\Sigma}_{\hat{H},\underline{\hat{H}}_r}=\frac{1}{n}\sum_{t=1}^n\hat{H}_t\underline{\hat{H}}_{r,t}^{'}
\text{ and } \hat{\Sigma}_{\underline{\hat{H}}_r}=\frac{1}{n}\sum_{t=1}^n\underline{\hat{H}}_{r,t}\underline{\hat{H}}_{r,t}^{'},$$
with by convention $\hat{H}_t=0$ when $t\leq 0$. We assume that  $\hat{\Sigma}_{\underline{\hat{H}}_r}$ is non-singular (which holds true asymptotically).
%

In the case of linear processes with independent innovations, Berk (see \cite{berk}) has shown that the spectral density can be consistently estimated by fitting autoregressive models of order $r=r(n)$, whenever $r$ tends to infinity and $r^3/n$ tends to $0$ as $n$ tends to infinity.  There are differences with \cite{berk}: $(H_t(\theta_0))_{t\in\mathbb{Z}}$ is multivariate, is not directly observed and is replaced by $(\hat{H}_t)_{t\in\mathbb{Z}}$. It is shown that this result remains valid for the multivariate linear process $(H_t(\theta_0))_{t\in\mathbb{Z}}$ with non-independent innovations (see \cite{BMCF2012,yac2}, for references in weak (multivariate) ARMA models). We will extend the results of \cite{BMCF2012} to weak FARIMA models.

The asymptotic study of the estimator of $I(\theta_0)$ using the spectral density method is given in the following theorem. 
\begin{thm}\label{convergence_Isp} 
We assume \textbf{(A0)}-\textbf{(A3)} and Assumption {\bf (A4')} with  $\tau=8$. In addition, we assume that the innovation process $(\epsilon_t)_{t\in\mathbb{Z}}$ of the FARIMA$(p,d_0,q)$ model \eqref{FARIMA} is such that the process $(H_t(\theta_0))_{t\in\mathbb{Z}}$ defined in \eqref{Ht_process} admits a multivariate AR$(\infty)$ representation \eqref{AR_infty}, where $\|\Phi_k\|=\mathrm{o}(k^{-2})$ as $k\to\infty$, the roots of $\det(\Phi(z))=0$ are outside the unit disk, and $\Sigma_u=\mathrm{Var}(u_t)$ is non-singular. 
Then, the spectral estimator of $I(\theta_0)$
$$\hat{I}^{\mathrm{SP}}_n:=\hat{\Phi}_r^{-1}(1)\hat{\Sigma}_{\hat{u}_r}\hat{\Phi}_r^{'-1}(1)\xrightarrow[]{} I(\theta_0)=\Phi^{-1}(1)\Sigma_u\Phi^{-1}(1)$$
in probability when $r=r(n)\to\infty$ and $r^5(n)/n^{1-2(d_0-d_1)}\to0$ as $n\to\infty$ (remind that  $d_0\in [ d_1{,}d_2]\subset] -1/2{,}1/2[$).
\end{thm}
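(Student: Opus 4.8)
The plan is to follow the Berk-type strategy adapted to this weak FARIMA setting, controlling simultaneously the autoregressive truncation error and the various approximation errors coming from (i) replacing $\epsilon_t(\cdot)$ by $\tilde\epsilon_t(\cdot)$, (ii) replacing $\theta_0$ by $\hat\theta_n$, and (iii) the slow (polynomial rather than exponential) decay of the AR/MA coefficients induced by the fractional operator. I would decompose the statement into: first, a \emph{population-level} result on the truncated AR fit of $H_t(\theta_0)$; second, a \emph{feasible-version} result comparing $\hat\Phi_{r,k}$ and $\hat\Sigma_{\hat u_r}$ to their counterparts built from the true $H_t(\theta_0)$; and third, combining these via the continuity of $\Phi\mapsto\Phi^{-1}(1)\Sigma_u\Phi^{-1}(1)$.

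For the population step I would argue as follows. Since $(H_t(\theta_0))_{t\in\mathbb Z}$ is a multivariate stationary process admitting an $\mathrm{AR}(\infty)$ representation \eqref{AR_infty} with $\|\Phi_k\|=\mathrm o(k^{-2})$, the Baxter-type inequality / standard Yule--Walker truncation arguments give $\|\underline\Phi_r-(\Phi_1,\dots,\Phi_r)\|\to 0$ and, more quantitatively, $\sum_{k>r}\|\Phi_k\|=\mathrm o(r^{-1})$, so that $\Phi_r(1)\to\Phi(1)$ and $\Sigma_{u_r}\to\Sigma_u$ at a rate controlled by the tail of the $\Phi_k$'s. Here Assumption \textbf{(A4')} with $\tau=8$ is what guarantees enough summable joint cumulants of $H_t(\theta_0)$ — note $H_t$ is bilinear in $\epsilon$ and $\partial_\theta\epsilon$, so controlling its fourth-order cumulants requires eighth-order cumulant summability of $\epsilon$ — which in turn gives a $1/n$-rate for the empirical autocovariances $\hat\Gamma_H(h)=n^{-1}\sum_t H_t H_{t-h}'$ uniformly in $|h|\le r$, of the form $\mathbb E\|\hat\Gamma_H(h)-\Gamma_H(h)\|^2=\mathrm O(1/n)$. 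Summing over the $r$ lags and inverting $\hat\Sigma_{\underline H_r}$ (whose smallest eigenvalue is bounded below because $\Sigma_u$ is nonsingular) yields $\|\underline{\hat\Phi}_r^{(H)}-\underline\Phi_r\|=\mathrm O_{\mathbb P}(r/\sqrt n)$, which is $\mathrm o_{\mathbb P}(1)$ under $r^5/n\to 0$ (and a fortiori under $r^5/n^{1-2(d_0-d_1)}\to 0$).

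The genuinely delicate step — and the one I expect to be the main obstacle — is passing from $H_t(\theta_0)$ to the computable $\hat H_t$. One must bound $n^{-1}\sum_{t=1}^n\|\hat H_t-H_t(\theta_0)\|^2$. This splits via a telescoping/Taylor argument into a term governed by $\hat\theta_n-\theta_0=\mathrm O_{\mathbb P}(n^{-1/2})$ (from Theorem~\ref{n.asymptotique}) times a uniform bound on $\partial_\theta H_t$, and a term governed by the initialization error $\epsilon_t(\theta)-\tilde\epsilon_t(\theta)$ together with its $\theta$-derivatives. Because of the fractional operator this initialization error does \emph{not} decay geometrically; from the preliminary estimates announced in Subsection~\ref{prelim} (and the asymptotics $\alpha_j(d)\sim j^{-d-1}/\Gamma(-d)$), one only has $\mathbb E|\epsilon_t(\theta)-\tilde\epsilon_t(\theta)|^2=\mathrm O(t^{-1+2(d_0-d_1)})$ uniformly on $\Theta_\delta$, hence $n^{-1}\sum_{t\le n}\mathbb E\|\hat H_t-H_t(\theta_0)\|^2=\mathrm O(n^{-1}) + \mathrm O(n^{-1+2(d_0-d_1)})$. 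This is precisely why the sharper rate condition $r^5(n)/n^{1-2(d_0-d_1)}\to 0$ — rather than the classical $r^3/n\to0$ — appears: propagating this $\mathrm O(n^{-(1-2(d_0-d_1))})$ error through the regression of $\hat H_t$ on $r$ lags and through the matrix inversion costs a factor $r^c$ (one loses a couple of powers of $r$ both from summing $r$ lagged cross-products and from the operator-norm bound on the inverse Gram matrix), and one needs $r^5 n^{-(1-2(d_0-d_1))}\to0$ for the product to be $\mathrm o_{\mathbb P}(1)$.

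Finally I would assemble the pieces: write
\begin{equation*}
\hat I_n^{\mathrm{SP}}-I(\theta_0)=\bigl(\hat\Phi_r^{-1}(1)-\Phi_r^{-1}(1)\bigr)\hat\Sigma_{\hat u_r}\hat\Phi_r'^{-1}(1)+\Phi_r^{-1}(1)\bigl(\hat\Sigma_{\hat u_r}-\Sigma_{u_r}\bigr)\hat\Phi_r'^{-1}(1)+\Phi_r^{-1}(1)\Sigma_{u_r}\bigl(\hat\Phi_r'^{-1}(1)-\Phi_r'^{-1}(1)\bigr)+\bigl(\Phi_r^{-1}(1)\Sigma_{u_r}\Phi_r'^{-1}(1)-I(\theta_0)\bigr),
\end{equation*}
and bound each bracket: the last one is the deterministic truncation error handled in the population step; the first three are controlled using $\|\hat\Phi_r(1)-\Phi_r(1)\|\le\sum_{k=1}^r\|\hat\Phi_{r,k}-\Phi_{r,k}\|\le r\,\max_k\|\hat\Phi_{r,k}-\Phi_{r,k}\|=\mathrm o_{\mathbb P}(1)$, the uniform invertibility of $\Phi_r(1)$ and $\hat\Phi_r(1)$ (asymptotically, since $\det\Phi(z)\ne0$ on the closed unit disk), and the convergence $\hat\Sigma_{\hat u_r}\to\Sigma_u$ obtained from the $\hat H_t\leftrightarrow H_t(\theta_0)$ comparison plus the empirical-covariance rates above. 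Each term is $\mathrm o_{\mathbb P}(1)$ under the stated rate on $r(n)$, which gives $\hat I_n^{\mathrm{SP}}\xrightarrow{\ \mathbb P\ }I(\theta_0)$ and completes the proof.
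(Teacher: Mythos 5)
Your proposal is correct and follows essentially the same Berk-type route as the paper: the same three-way decomposition (population truncation error, empirical error for the true $H_t(\theta_0)$, and the feasible $\hat H_t$ versus $H_t(\theta_0)$ comparison), the same use of \textbf{(A4')} with $\tau=8$ to control fourth-order cumulants of the bilinear process $H_t$, and the same identification of the slowly decaying initialization error $\epsilon_t(\theta)-\tilde\epsilon_t(\theta)$ as the source of the nonstandard rate $r^5(n)/n^{1-2(d_0-d_1)}\to 0$. The only (immaterial) differences are that the paper passes through $\Phi^{-1}(1)$ rather than the truncated $\Phi_r^{-1}(1)$ in the final assembly and uses the sharper bound $\|\hat\Phi_r(1)-\Phi(1)\|\le K\sqrt{r}\,\|\underline{\hat\Phi}_r-\underline{\Phi}_r\|+\cdots$ in place of your factor $r$, which your rate condition absorbs anyway.
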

The proof of this theorem is given in Subsection~\ref{sectI}.

A second method to estimate the asymptotic matrix (or rather avoiding estimate it) is proposed in the next subsection. 
\subsection{A self-normalized approach to confidence interval construction in weak FARIMA models}\label{sn}
We have seen previously that we may obtain confidence intervals for weak FARIMA model parameters as soon as we can construct a convergent estimator of the variance matrix $I(\theta_0)$ (see Theorems \ref{n.asymptotique} and \ref{convergence_Isp}). 
The parametric approach based on an autoregressive estimate of the spectral density of $(H_t(\theta_0))_{t\in\mathbb{Z}}$ that we used before has the drawback of choosing the truncation parameter $r$ in \eqref{AR_tronquee}. 
This choice of the order truncation is often crucial and difficult. So the aim of this section is to avoid such a difficulty. 

This section is also of interest because, to our knowledge, it has not been studied for weak FARIMA models.
Notable exception is \cite{shaox} who studied this problem in a short memory case (see Assumption 1 in  \cite{shaox} that implies that the process $X$ is short-range dependent).

We propose an alternative method to obtain confidence intervals for weak FARIMA models by avoiding the estimation of the asymptotic covariance matrix $I(\theta_0)$. It is based on a self-normalization approach  used to build a statistic which depends on the true parameter $\theta_0$ and which is asymptotically distribution-free (see Theorem 1 of
\cite{shaox} for a reference in weak ARMA case). 
The idea comes from \cite{lobato} and has been already extended by \cite{BMS2018,kl2006,s2010JRSSBa,s2010JRSSBb,shaox} to more general frameworks.
See also \cite{s2016} for a review on some recent developments on the inference of time series data using the self-normalized approach.

Let us briefly explain the idea of the self-normalization.

By a Taylor expansion of the function $\partial Q_n(\cdot)/ \partial\theta $ around $\theta_0$, under {\bf (A3)}, we have
\begin{equation}\label{OOO}
0=\sqrt{n}\frac{\partial}{\partial\theta}Q_n(\hat{\theta}_n)=\sqrt{n}\frac{\partial}{\partial\theta}Q_n(\theta_0)+\left[\frac{\partial^2}{\partial\theta_i\partial\theta_j}Q_n\left( \theta^*_{n,i,j}\right) \right]\sqrt{n}\left( \hat{\theta}_n-\theta_0\right) ,
\end{equation}
where the $\theta^*_{n,i,j}$'s are between $\hat{\theta}_n$ and $\theta_0$.
Using the following
equation
\begin{align*}
\sqrt{n}\left(\frac{\partial}{\partial\theta}O_n(\theta_0)-\frac{\partial}{\partial\theta}Q_n(\theta_0)\right)=\sqrt{n}\frac{\partial}{\partial\theta}O_n(\theta_0)+\left\lbrace\left[ \frac{\partial^2}{\partial\theta_i\partial\theta_j}Q_n(\theta^{*}_{n,i,j})\right]-J(\theta_0) +J(\theta_0)\right\rbrace\sqrt{n}(\hat{\theta}_n-\theta_0),
\end{align*}we shall be able to prove that \eqref{OOO} implies that
\begin{align}\label{OO}
\sqrt{n}\frac{\partial}{\partial\theta}O_n(\theta_0)+J(\theta_0)\sqrt{n}(\hat{\theta}_n-\theta_0)=\mathrm{o}_{\mathbb{P}}\left(1\right).
\end{align}
This is due to the following technical properties:
\begin{itemize}
\item the convergence in probability of $\sqrt{n}\partial Q_n(\theta_0)/\partial\theta-\sqrt{n}\partial O_n(\theta_0)/\partial\theta$ to 0 (see Lemma \ref{ConProQversO} hereafter),
\item the convergence in probability of $[\partial^2 Q_n(\theta_{n,i,j}^{*})/\partial\theta_i\partial\theta_j]$ to $J(\theta_0)$ (see Lemma \ref{Conv_almost_sure_J_n_vers_J} hereafter),
\item the tightness of the sequence $(\sqrt{n}(\hat{\theta}_n-\theta_0))_{n\geq 1}$ (see Theorem \ref{n.asymptotique}) and
\item  the existence and invertibility of the matrix $J(\theta_0)$ (see Lemma \ref{matrixJ} hereafter).
\end{itemize}
%
Thus we obtain from \eqref{OO} that
\begin{align*}
\sqrt{n}(\hat{\theta}_n-\theta_0)=\frac{1}{\sqrt{n}}\sum_{t=1}^nU_t +\mathrm{o}_{\mathbb{P}}\left(1\right),
\end{align*}
where (remind \eqref{Ht_process})
\begin{align*}
U_t=-J^{-1}(\theta_0)H_t(\theta_0).
\end{align*}
At this stage, we do not rely on the classical method that would consist in estimating
the asymptotic covariance matrix $I(\theta_0)$. We rather try to apply Lemma 1 in \cite{lobato}. So
we need to check that a functional central limit theorem holds for the process $U:=(U_t)_{t\geq 1}$.
For that sake, we define the normalization matrix $P_{p+q+1,n}$ of $\mathbb{R}^{(p+q+1)\times (p+q+1)}$ by
\begin{equation}\label{matP}
P_{p+q+1,n}=\frac{1}{n^2}\sum_{t=1}^n\left(\sum_{j=1}^t( U_j-\bar U_n)\right)\left(\sum_{j=1}^t (U_j-\bar U_n)\right)^{'} , 
\end{equation}
where $\bar U_n = (1/n)\sum_{i=1}^n U_i$. To ensure the invertibility of the normalization matrix $P_{p+q+1,n}$ (it is the result stated in the next proposition), we need the following technical assumption on the distribution of $\epsilon_t$.
\begin{itemize}
\item[\hspace*{1em} {\bf (A5):}]
\hspace*{1em} The process $(\epsilon_t)_{t\in\mathbb{Z}}$ has a positive density on some neighborhood of zero.
\end{itemize}
\begin{prop}\label{inversibleP}
Under the assumptions of Theorem \ref{n.asymptotique} and {\bf (A5)}, the matrix $P_{p+q+1,n}$ is almost surely non singular.
 \end{prop}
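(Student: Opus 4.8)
The plan is to reduce the non-singularity of $P_{p+q+1,n}$ to the positive-definiteness of a fixed covariance matrix, and then to prove the latter by an elementary argument that uses the uncorrelatedness \textbf{(A0)}, the invertibility of $J(\theta_0)$, and \textbf{(A5)}.

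First I would rewrite $P_{p+q+1,n}$ of \eqref{matP} as $\sum_{t=1}^n (S_t/n)(S_t/n)^{'}$ with $S_t=\sum_{j=1}^t(U_j-\bar U_n)$. As a sum of rank-one positive semi-definite matrices it is non-singular if and only if $\{S_1,\dots,S_n\}$ spans $\mathbb{R}^{p+q+1}$, and since $S_t-S_{t-1}=U_t-\bar U_n$ for $t\ge 2$ (and $S_1=U_1-\bar U_n$) this span equals $\mathrm{span}\{U_t-\bar U_n:1\le t\le n\}$; the latter is all of $\mathbb{R}^{p+q+1}$ exactly when the augmented vectors $\widetilde U_t:=(1,U_t^{'})^{'}$ span $\mathbb{R}^{p+q+2}$, i.e.\ when $\frac1n\sum_{t=1}^n\widetilde U_t\widetilde U_t^{'}$ is invertible. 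Recalling $U_t=-J^{-1}(\theta_0)H_t(\theta_0)$ with $H_t(\theta_0)=2\epsilon_t\frac{\partial}{\partial\theta}\epsilon_t(\theta_0)$, the process $(\widetilde U_t)_t$ is a measurable, strictly stationary and ergodic function of $(\epsilon_s)_s$ (by \textbf{(A2)}) and $\mathbb{E}\|\widetilde U_0\|^2<\infty$ under \textbf{(A4)} with $\tau=4$. The ergodic theorem then gives $\frac1n\sum_{t=1}^n\widetilde U_t\widetilde U_t^{'}\to M:=\mathbb{E}[\widetilde U_0\widetilde U_0^{'}]$ almost surely, so it suffices to show that $M$ is positive definite: the smallest eigenvalue of $\frac1n\sum_{t=1}^n\widetilde U_t\widetilde U_t^{'}$ then converges a.s.\ to $\lambda_{\min}(M)>0$ and the matrix — hence $P_{p+q+1,n}$ — is invertible for all $n$ large enough, almost surely.

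To prove $M\succ 0$, I would argue by contradiction: if $M$ is singular there is $(a,b)\ne 0$ in $\mathbb{R}\times\mathbb{R}^{p+q+1}$ with $a+b^{'}U_0=0$ a.s. Necessarily $b\ne 0$; with $\gamma:=(J^{-1}(\theta_0))^{'}b\ne 0$ ($J(\theta_0)$ being invertible, Lemma~\ref{matrixJ}) the relation becomes $\epsilon_0\bigl(\gamma^{'}\tfrac{\partial}{\partial\theta}\epsilon_0(\theta_0)\bigr)=a/2$ a.s. Now $\tfrac{\partial}{\partial\theta}\epsilon_0(\theta_0)$ is a mean-square convergent \emph{linear} combination of the strict-past innovations $\{\epsilon_{-k}\}_{k\ge 1}$ only: differentiating \eqref{FARIMA-th} at $\theta_0$ and expanding $X$ through its $\mathrm{MA}(\infty)$ form, the autoregressive and moving-average coordinates of the score involve lags $\ge 1$, while the $d$-coordinate has a vanishing lag-$0$ term because $\alpha_0(d)\equiv 1$. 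Hence \textbf{(A0)} gives $\mathbb{E}[\epsilon_0\tfrac{\partial}{\partial\theta}\epsilon_0(\theta_0)]=0$ (equivalently $\mathbb{E}[H_0(\theta_0)]=0$), so taking expectations in the identity above forces $a=0$, whence $\epsilon_0\bigl(\gamma^{'}\tfrac{\partial}{\partial\theta}\epsilon_0(\theta_0)\bigr)=0$ a.s. Since \textbf{(A5)} entails $\mathbb{P}(\epsilon_0=0)=0$, we obtain $\gamma^{'}\tfrac{\partial}{\partial\theta}\epsilon_0(\theta_0)=0$ a.s.; but by \eqref{matrixJexplicit} this would give $0=\mathbb{E}\bigl[(\gamma^{'}\tfrac{\partial}{\partial\theta}\epsilon_0(\theta_0))^2\bigr]=\tfrac12\gamma^{'}J(\theta_0)\gamma>0$, a contradiction. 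Therefore $M\succ 0$.

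I expect the actual difficulty to lie entirely in the positive-definiteness of $M$. The point worth stressing is that $(\epsilon_t)$ is only uncorrelated — not independent, not even a martingale difference — so there is no ``built-in'' non-degeneracy of the innovations; what saves the argument is that uncorrelatedness \textbf{(A0)} is exactly what kills the affine constant $a$ (via $\mathbb{E}[\epsilon_0\tfrac{\partial}{\partial\theta}\epsilon_0(\theta_0)]=0$), after which only a purely linear degeneracy remains and is excluded by \textbf{(A5)} together with $J(\theta_0)\succ 0$ — this last point being where the identifiability assumption \textbf{(A1)} enters. The only other thing to verify carefully, exploited above, is that the score $\tfrac{\partial}{\partial\theta}\epsilon_t(\theta_0)$ is linear in, and measurable with respect to, the strict past of $(\epsilon_s)$.
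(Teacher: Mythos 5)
Your proof is correct, and it reaches the same final contradiction as the paper (a linear relation $\gamma^{'}\frac{\partial}{\partial\theta}\epsilon_0(\theta_0)=0$ a.s.\ that is incompatible with the invertibility of $J(\theta_0)$, after \textbf{(A5)} removes the factor $\epsilon_0$), but the route you take to get there is organized differently. The paper argues pathwise: assuming $\mathbf{d}^{'}P_{p+q+1,n}\mathbf{d}=0$ it deduces $\mathbf{d}^{'}\bigl(S_t-\tfrac{t}{n}S_n\bigr)=0$ for all $t$, lets the ergodic theorem kill the centering term via $\bar U_n\to\mathbb{E}[U_0]=0$, and then telescopes $S_t-S_{t-1}=U_t$ to obtain $\mathbf{d}^{'}U_t=0$ a.s.; the final contradiction is delegated to the identifiability argument already used to prove Lemma~\ref{matrixJ}. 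You instead reduce the whole question to the positive-definiteness of the fixed augmented moment matrix $M=\mathbb{E}[\widetilde U_0\widetilde U_0^{'}]$, handle the centering by the augmentation trick and by taking expectations in the a.s.\ identity $a+b^{'}U_0=0$ (where \textbf{(A0)} enters through $\mathbb{E}[H_0(\theta_0)]=0$, exactly as in the paper's use of $\bar U_n\to 0$), and close with the quadratic form $\gamma^{'}J(\theta_0)\gamma=2\mathbb{E}\bigl[(\gamma^{'}\frac{\partial}{\partial\theta}\epsilon_0(\theta_0))^2\bigr]>0$, which short-circuits the re-run of the identifiability argument. Your version is arguably cleaner in two respects: it isolates the deterministic linear-algebra content in the single statement $M\succ 0$, and it makes explicit that the conclusion is ``non-singular for all $n$ large enough, almost surely'' --- which is in fact all that can be true (for $n\le p+q+1$ the matrix is trivially singular since $S_n=0$) and all that Theorems~\ref{self-norm-estimator} and~\ref{self-norm-estimator2} require; the paper's proof has the same asymptotic character but leaves it implicit. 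The one point you should state rather than assert is $\mathbb{E}\|U_0\|^2<\infty$, but this is exactly the $h=0$ term controlled in the proof of Lemma~\ref{exit_I} under \textbf{(A4')} with $\tau=4$ together with Lemma~\ref{miss2}, so there is no gap.
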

The proof of this proposition is given in Subsection~\ref{sectinvP}.

Let $(B_m(r))_{r\geq 0}$ be a $m$-dimensional Brownian motion starting from 0. For $m\geq 1$, we denote by $\mathcal{U}_m$ the random variable defined by:
\begin{equation}\label{Uk}
\mathcal{U}_m=B_m^{'}(1)V_m^{-1}B_m(1),
\end{equation}
where
\begin{equation}\label{Vk}
V_m=\int_0^1\left(B_m(r)-rB_m(1)\right)\left(B_m(r)-rB_m(1)\right)^{'}dr.
\end{equation}
The critical values of $\mathcal{U}_m$ have been tabulated by \cite{lobato}.

The following theorem states the self-normalized asymptotic distribution of the random vector $\sqrt{n}(\hat{\theta}_n-\theta_0)$.
\begin{thm}\label{self-norm-estimator} Under the assumptions of Theorem $\ref{n.asymptotique}$ and \textbf{(A5)}, we have
$$n (\hat{\theta}_n-\theta_0)^{'}P_{p+q+1,n}^{-1}(\hat{\theta}_n-\theta_0)\xrightarrow[n\to\infty]{\text{in law}}\mathcal{U}_{p+q+1}.$$
\end{thm}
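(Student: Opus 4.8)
The plan is to reduce Theorem~\ref{self-norm-estimator} to an application of Lemma~1 in \cite{lobato} by verifying a functional central limit theorem (FCLT) for the partial sums of the process $U:=(U_t)_{t\geq 1}$, where $U_t=-J^{-1}(\theta_0)H_t(\theta_0)$. Recall from the discussion preceding the theorem that, under \textbf{(A0)}--\textbf{(A3)} and \textbf{(A4)} with $\tau=4$, one has the asymptotic linearization
\begin{equation*}
\sqrt{n}(\hat{\theta}_n-\theta_0)=\frac{1}{\sqrt{n}}\sum_{t=1}^nU_t+\mathrm{o}_{\mathbb{P}}(1),
\end{equation*}
which is itself a consequence of \eqref{OO} together with Lemmas~\ref{ConProQversO}, \ref{Conv_almost_sure_J_n_vers_J}, \ref{matrixJ} and the tightness established in Theorem~\ref{n.asymptotique}. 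The first step is therefore to record this expansion rigorously and to note that the same estimates give, more strongly, a functional version: for $r\in[0,1]$, $n^{-1/2}\sum_{t=1}^{\lfloor nr\rfloor}\bigl(\sqrt{n}(\hat\theta_{\lfloor nr\rfloor}-\theta_0)\text{-type increments}\bigr)$, or more cleanly, that the normalized partial-sum process $n^{-1/2}\sum_{t=1}^{\lfloor nr\rfloor}U_t$ converges weakly in the Skorokhod space $D([0,1])$ to $\Xi^{1/2}B_{p+q+1}(r)$, where $\Xi:=J^{-1}(\theta_0)I(\theta_0)J^{-1}(\theta_0)=\Omega$ and $B_{p+q+1}$ is a standard Brownian motion.

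The second and central step is to prove that FCLT for $(U_t)_{t\geq 1}$. Since $U_t$ is a fixed linear transform of $H_t(\theta_0)=2\epsilon_t(\theta_0)\tfrac{\partial}{\partial\theta}\epsilon_t(\theta_0)$, which is a measurable function of $\{\epsilon_s,\ s\leq t\}$, the process $(H_t(\theta_0))_{t\in\mathbb Z}$ inherits strong mixing coefficients that decay at least as fast as those of $(\epsilon_t)$ (a standard fact for functionals of mixing sequences, used already in the proof of Lemma~\ref{exit_I}). Under \textbf{(A4)} with $\tau=4$ we have $\mathbb E|\epsilon_t|^{4+\nu}<\infty$ together with the summability $\sum_h (h+1)^{2}\{\alpha_\epsilon(h)\}^{\nu/(4+\nu)}<\infty$; by the moment bounds on the derivative processes $\tfrac{\partial}{\partial\theta}\epsilon_t(\theta_0)$ established in Subsection~\ref{prelim} (the sharp estimates of the infinite AR/MA representations in the long-memory setting), $H_t(\theta_0)$ has a finite moment of order slightly larger than $2$ and its mixing coefficients are summable with the polynomial weight needed for Herrndorf's (or Doukhan--Massart--Rio's) invariance principle for strongly mixing sequences. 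Applying that multivariate FCLT — using the Cram\'er--Wold device to pass from the scalar statements to the vector one — yields $n^{-1/2}\sum_{t=1}^{\lfloor nr\rfloor}H_t(\theta_0)\Rightarrow I^{1/2}(\theta_0)B_{p+q+1}(r)$, hence the claimed limit for the $U$-partial sums with covariance $\Omega$. The positive-definiteness of $I(\theta_0)$ (equivalently, non-degeneracy of the limiting Brownian motion) follows from the non-singularity of $\Sigma_u$ and the AR$(\infty)$ representation assumed in the analogous setting, or more directly is part of what is needed for $\mathcal U_{p+q+1}$ to be well defined; here we use \textbf{(A5)} and Proposition~\ref{inversibleP} to guarantee that the empirical normalization $P_{p+q+1,n}$ is a.s. invertible, so the left-hand statistic is well defined for every $n$.

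The third step is purely the continuous-mapping argument of Lobato. Write $S_n(r)=n^{-1/2}\sum_{t=1}^{\lfloor nr\rfloor}(U_t-\bar U_n)$; from the FCLT and the fact that $\bar U_n=\mathrm{O}_{\mathbb P}(n^{-1/2})$ we get $S_n\Rightarrow \Omega^{1/2}\bigl(B_{p+q+1}(r)-rB_{p+q+1}(1)\bigr)$ in $D([0,1])$, while $\sqrt n(\hat\theta_n-\theta_0)\Rightarrow \Omega^{1/2}B_{p+q+1}(1)$ jointly with $S_n$ (both being continuous functionals of the same partial-sum process up to $\mathrm{o}_{\mathbb P}(1)$ terms). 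Consequently
\begin{equation*}
P_{p+q+1,n}=\int_0^1 S_n(r)S_n(r)^{'}dr+\mathrm{o}_{\mathbb P}(1)\xrightarrow[n\to\infty]{\text{in law}}\Omega^{1/2}V_{p+q+1}\,\Omega^{1/2},
\end{equation*}
jointly with $\sqrt n(\hat\theta_n-\theta_0)$, where $V_{p+q+1}$ is as in \eqref{Vk}. Since $V_{p+q+1}$ is a.s. invertible (Lobato) and $\Omega$ is invertible, the map $(x,M)\mapsto x^{'}M^{-1}x$ is continuous at the relevant limit point, so by the continuous mapping theorem
\begin{equation*}
n(\hat\theta_n-\theta_0)^{'}P_{p+q+1,n}^{-1}(\hat\theta_n-\theta_0)\xrightarrow[n\to\infty]{\text{in law}}B_{p+q+1}^{'}(1)\Omega^{1/2}\bigl(\Omega^{1/2}V_{p+q+1}\Omega^{1/2}\bigr)^{-1}\Omega^{1/2}B_{p+q+1}(1)=B_{p+q+1}^{'}(1)V_{p+q+1}^{-1}B_{p+q+1}(1)=\mathcal U_{p+q+1},
\end{equation*}
the $\Omega^{1/2}$ factors cancelling exactly. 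This is the asserted convergence.

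I expect the main obstacle to be the second step: upgrading the marginal CLT into a genuine functional CLT and, above all, controlling the replacement of $\epsilon_t(\theta_0)$ and its derivatives by finitely-truncated approximations uniformly in the partial-sum index $\lfloor nr\rfloor$. The long-memory setting is what makes this delicate — the coefficients in the AR/MA expansions of $\epsilon_t(\theta)$ decay only polynomially rather than exponentially, so the tail contributions in $H_t(\theta_0)$ must be handled with the sharp estimates of Subsection~\ref{prelim}, and the mixing-rate bookkeeping in \textbf{(A4)} (with the polynomial weight $(h+1)^{k-2}$) is exactly calibrated for this. Once the approximation errors are shown to be $\mathrm{o}_{\mathbb P}(1)$ uniformly in $r$ and the mixing invariance principle is invoked, the remainder of the argument is the standard self-normalization machinery.
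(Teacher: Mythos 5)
Your overall architecture is exactly the paper's: linearize $\sqrt{n}(\hat\theta_n-\theta_0)$ as $n^{-1/2}\sum_t U_t+\mathrm{o}_{\mathbb P}(1)$, establish a functional CLT for the partial sums of $U_t=-J^{-1}(\theta_0)H_t(\theta_0)$ with covariance $\Omega=TT'$, and then run Lobato's continuous-mapping argument so that the $\Omega^{1/2}$ factors cancel. Steps 1 and 3 are fine. The problem is the justification of Step 2.

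You assert that since $H_t(\theta_0)$ is a measurable function of $\{\epsilon_s,\ s\le t\}$, the process $(H_t(\theta_0))_t$ ``inherits strong mixing coefficients that decay at least as fast as those of $(\epsilon_t)$.'' That is false as stated: the inheritance of mixing holds for measurable functions of \emph{finitely many} coordinates $(\epsilon_t,\dots,\epsilon_{t-m})$ (this is the content of Theorem 14.1 of \cite{davidson1994} invoked in the paper), whereas $H_t(\theta_0)=2\epsilon_t\sum_{i\ge 1}\Lambda_i(\theta_0)\epsilon_{t-i}$ depends on the entire infinite past, and a function of the infinite past of a strongly mixing sequence need not be strongly mixing at all. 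Consequently you cannot apply Herrndorf's invariance principle directly to $(H_t(\theta_0))_t$. The paper's proof avoids this by truncating: it introduces $U_t^k$ built from $\sum_{i=1}^{k}\overset{\textbf{.}}{\lambda}_{i,\cdot}(\theta_0)\epsilon_t\epsilon_{t-i}$, which \emph{is} a function of finitely many innovations and hence mixing, applies the FCLT to $U_t^k$, and then shows that the tail process $\tilde Z_t^k$ satisfies $\sup_n\mathrm{Var}\bigl(n^{-1/2}\sum_{t=1}^n\tilde Z_t^k\bigr)\to 0$ as $k\to\infty$ (the bound \eqref{koko}, which rests on Lemma~\ref{miss2} and the cumulant summability in \textbf{(A4')}), so that the double-limit approximation argument (Anderson, Theorem 7.7.1) transfers the weak convergence to the full process uniformly over $r\in[0,1]$. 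You flag this truncation issue at the end as ``the main obstacle,'' but your main argument bypasses it with an incorrect mixing claim and you never carry out the uniform tail-variance control that makes the approximation legitimate. Without that, the FCLT — and hence the theorem — is not established.
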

The proof of this theorem is given in Subsection~\ref{sectSN}.

Of course, the above theorem is useless for practical purpose  because the normalization matrix $P_{p+q+1,n}$ is not observable. This gap will be fixed below when one replaces the matrix $P_{p+q+1,n}$ by  its empirical or observable counterpart
\begin{equation}\label{matPchap}
\hat{P}_{p+q+1,n}=\frac{1}{n^2}\sum_{t=1}^n\left(\sum_{j=1}^t( \hat U_j- \mbox{$\frac{1}n \sum_{k=1}^n \hat U_k$})\right)\left(\sum_{j=1}^t( \hat U_j- \mbox{$\frac{1}n \sum_{k=1}^n \hat U_k$})\right)^{'}\text{ where }\hat{U}_j=-\hat{J}_n^{-1}\hat{H}_j.
\end{equation}
The above quantity is observable and we are able to state our Theorem which is the applicable version of Theorem \ref{self-norm-estimator}.
\begin{thm}\label{self-norm-estimator2} Under the assumptions of Theorem $\ref{n.asymptotique}$ and \textbf{(A5)}, we have
$$n (\hat{\theta}_n-\theta_0)^{'}\hat P_{p+q+1,n}^{-1}(\hat{\theta}_n-\theta_0)\xrightarrow[n\to\infty]{\text{in law}}\mathcal{U}_{p+q+1}.$$
\end{thm}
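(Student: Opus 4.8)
The plan is to deduce Theorem~\ref{self-norm-estimator2} from Theorem~\ref{self-norm-estimator} by showing that replacing the unobservable normalization matrix $P_{p+q+1,n}$ by its observable counterpart $\hat P_{p+q+1,n}$ (and $\hat\theta_n-\theta_0$ stays as it is) does not alter the limiting distribution. Concretely, it suffices to prove that
\begin{equation*}
\hat P_{p+q+1,n} - P_{p+q+1,n} \xrightarrow[n\to\infty]{\mathbb P} 0 ,
\end{equation*}
because then, combined with the almost sure non-singularity of $P_{p+q+1,n}$ from Proposition~\ref{inversibleP} and the fact that $P_{p+q+1,n}$ converges in law to the a.s.\ invertible matrix $V_{p+q+1}$ (this is implicit in the proof of Theorem~\ref{self-norm-estimator}), the continuous mapping theorem gives $\hat P_{p+q+1,n}^{-1} - P_{p+q+1,n}^{-1}\to 0$ in probability. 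Writing $n(\hat\theta_n-\theta_0)'\hat P_{p+q+1,n}^{-1}(\hat\theta_n-\theta_0) = n(\hat\theta_n-\theta_0)'P_{p+q+1,n}^{-1}(\hat\theta_n-\theta_0) + n(\hat\theta_n-\theta_0)'(\hat P_{p+q+1,n}^{-1}-P_{p+q+1,n}^{-1})(\hat\theta_n-\theta_0)$, the first term converges in law to $\mathcal U_{p+q+1}$ by Theorem~\ref{self-norm-estimator}, while the second is bounded by $\|n(\hat\theta_n-\theta_0)(\hat\theta_n-\theta_0)'\|\,\|\hat P_{p+q+1,n}^{-1}-P_{p+q+1,n}^{-1}\|$; since $\sqrt n(\hat\theta_n-\theta_0)$ is tight (Theorem~\ref{n.asymptotique}) the first factor is $O_{\mathbb P}(1)$ and the second is $o_{\mathbb P}(1)$, so the remainder vanishes and Slutsky concludes.

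It remains to establish $\hat P_{p+q+1,n} - P_{p+q+1,n} \to 0$ in probability. Set $S_t = \sum_{j=1}^t (U_j-\bar U_n)$ and $\hat S_t = \sum_{j=1}^t (\hat U_j - \bar{\hat U}_n)$, so that $P_{p+q+1,n} = n^{-2}\sum_{t=1}^n S_tS_t'$ and $\hat P_{p+q+1,n} = n^{-2}\sum_{t=1}^n \hat S_t\hat S_t'$. The difference telescopes as
\begin{equation*}
\hat P_{p+q+1,n}-P_{p+q+1,n} = \frac{1}{n^2}\sum_{t=1}^n\Big\{(\hat S_t - S_t)\hat S_t' + S_t(\hat S_t-S_t)'\Big\},
\end{equation*}
and by Cauchy--Schwarz its norm is controlled by $\big(n^{-2}\sum_{t=1}^n\|\hat S_t-S_t\|^2\big)^{1/2}\big(n^{-2}\sum_{t=1}^n(\|\hat S_t\|^2+\|S_t\|^2)\big)^{1/2}$. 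The second factor is $O_{\mathbb P}(1)$: indeed $n^{-2}\sum_{t=1}^n\|S_t\|^2 = \mathrm{tr}(P_{p+q+1,n}) = O_{\mathbb P}(1)$ by Theorem~\ref{self-norm-estimator}, and the same bound for $\hat S_t$ follows once the first factor is shown to be $o_{\mathbb P}(1)$. Thus everything reduces to proving
\begin{equation*}
\frac{1}{n^2}\sum_{t=1}^n \|\hat S_t - S_t\|^2 \xrightarrow[n\to\infty]{\mathbb P} 0 .
\end{equation*}
Since $\hat S_t - S_t = \sum_{j=1}^t(\hat U_j - U_j) - \frac{t}{n}\sum_{k=1}^n(\hat U_k-U_k)$, we have $\|\hat S_t-S_t\| \le 2\sum_{j=1}^n\|\hat U_j - U_j\|$ uniformly in $t\le n$, hence $n^{-2}\sum_{t=1}^n\|\hat S_t-S_t\|^2 \le 4\big(n^{-1/2}\sum_{j=1}^n\|\hat U_j-U_j\|\big)^2$, so it is enough to show
\begin{equation*}
\frac{1}{\sqrt n}\sum_{j=1}^n \|\hat U_j - U_j\| \xrightarrow[n\to\infty]{\mathbb P} 0 .
\end{equation*}

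The main work — and the principal obstacle — is this last estimate, which is the FARIMA analogue of the corresponding step in the weak ARMA treatment of \cite{BMS2018,shaox}, but is more delicate here because of the long-memory polynomial decay of the AR/MA coefficients of $\epsilon_t(\theta)$. Recalling $U_j = -J^{-1}(\theta_0)H_j(\theta_0)$ and $\hat U_j = -\hat J_n^{-1}\hat H_j$, I would split
\begin{equation*}
\hat U_j - U_j = -\hat J_n^{-1}(\hat H_j - H_j(\theta_0)) - (\hat J_n^{-1} - J^{-1}(\theta_0))H_j(\theta_0),
\end{equation*}
use $\hat J_n^{-1}\to J^{-1}(\theta_0)$ a.s.\ (Lemma~\ref{Conv_almost_sure_J_n_vers_J}) together with $n^{-1/2}\sum_j\|H_j(\theta_0)\|=O_{\mathbb P}(n^{1/2})$ times $o(1)$ for the second term, and for the first term reduce to showing $n^{-1/2}\sum_{j=1}^n\|\hat H_j - H_j(\theta_0)\|\to 0$ in probability. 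This in turn is handled by a further decomposition $\hat H_j - H_j(\theta_0) = (\hat H_j - H_j(\hat\theta_n)) + (H_j(\hat\theta_n) - H_j(\theta_0))$: the first piece measures the effect of the truncated initial values and is bounded using the $\mathbb L^2$ (indeed $\mathbb L^1$ after Cauchy--Schwarz with the derivative process) estimates on $\epsilon_t(\theta)-\tilde\epsilon_t(\theta)$ and their $\theta$-derivatives uniformly over $\Theta_\delta$ — here the sharp bounds on the tails of the infinite AR/MA representations in the long-memory case (Subsection~\ref{prelim}) are exactly what controls $\sum_{j=1}^n$ of these initialization errors at rate $o(\sqrt n)$, which is where the condition on $d_0-d_1$ enters; the second piece is handled by a mean-value expansion $H_j(\hat\theta_n)-H_j(\theta_0) = \big[\partial H_j(\theta_j^*)/\partial\theta'\big](\hat\theta_n-\theta_0)$ with $\|\hat\theta_n-\theta_0\| = O_{\mathbb P}(n^{-1/2})$, so that $n^{-1/2}\sum_{j=1}^n\|H_j(\hat\theta_n)-H_j(\theta_0)\| \le \big(n^{-1}\sum_{j=1}^n\sup_{\theta\in\mathcal V(\theta_0)}\|\partial H_j(\theta)/\partial\theta'\|\big)\cdot\sqrt n\,\|\hat\theta_n-\theta_0\| = O_{\mathbb P}(1)$; one extra factor of $n^{-1/2}$ must be gained, which comes either from a sharper argument isolating $o_{\mathbb P}(1)$ or, as in \cite{BMS2018}, from noting the sum $n^{-1}\sum_j(\hat U_j-U_j)$ already appears re-centered in $\hat S_t - S_t$ so that only the fluctuations, not the full sum, matter. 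Existence of the moments of $\sup_{\theta}\|\partial H_j(\theta)/\partial\theta'\|$ and of $\|H_j(\theta_0)\|$ — needed for the ergodic-theorem steps above — is guaranteed by Assumption \textbf{(A4')} with $\tau = 8$, exactly as in Theorem~\ref{convergence_Isp}.
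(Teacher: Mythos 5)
Your overall strategy matches the paper's: both write $\hat P_{p+q+1,n}$ as $P_{p+q+1,n}$ plus a perturbation, show the perturbation vanishes in probability, and conclude by a Slutsky-type argument using the tightness of $\sqrt n(\hat\theta_n-\theta_0)$ and the convergence in law of the self-normalized quadratic form from Theorem \ref{self-norm-estimator}. The paper controls the perturbation by splitting off the $\hat J_n^{-1}$ versus $J^{-1}(\theta_0)$ discrepancy (Lemma \ref{Conv_almost_sure_J_n_vers_J}) and then treating the generic difference terms $q_{s,t}^{i,j,k,l}$ built from $\epsilon_t(\theta_0)$ versus $\tilde\epsilon_t(\hat\theta_n)$ via the Taylor and truncation estimates already used in Subsection \ref{sectI}.

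However, your concrete reduction contains a step that fails. Bounding $\|\hat S_t - S_t\|\le 2\sum_{j=1}^n\|\hat U_j-U_j\|$ and then requiring $n^{-1/2}\sum_{j=1}^n\|\hat U_j-U_j\|=o_{\mathbb P}(1)$ is too crude: that quantity is genuinely $O_{\mathbb P}(1)$, not $o_{\mathbb P}(1)$. Indeed, a mean-value expansion gives $\hat U_j - U_j \approx -J^{-1}(\theta_0)\bigl[\partial H_j(\theta_0)/\partial\theta'\bigr](\hat\theta_n-\theta_0)$ up to initialization errors, so $n^{-1/2}\sum_{j}\|\hat U_j-U_j\|\approx \bigl(n^{-1}\sum_{j}\|J^{-1}(\theta_0)\,\partial H_j(\theta_0)/\partial\theta'\|\bigr)\,\|\sqrt n(\hat\theta_n-\theta_0)\|$, whose two factors converge respectively to a positive constant (ergodic theorem) and to the norm of a nondegenerate Gaussian vector; the same obstruction appears in the $(\hat J_n^{-1}-J^{-1}(\theta_0))H_j$ piece, since $\hat J_n-J(\theta_0)$ is only $O_{\mathbb P}(n^{-1/2})$. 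You correctly observe that ``one extra factor of $n^{-1/2}$ must be gained'' from the re-centering: since $\mathbb E[\partial H_j(\theta_0)/\partial\theta']=J(\theta_0)$, the leading term $\sum_{j\le t}(\hat U_j-U_j)\approx -J^{-1}(\theta_0)\,tJ(\theta_0)(\hat\theta_n-\theta_0)=-t(\hat\theta_n-\theta_0)$ cancels exactly against $-(t/n)\sum_{k\le n}(\hat U_k-U_k)\approx t(\hat\theta_n-\theta_0)$. But that cancellation is destroyed the moment you pass to $\sum_j\|\hat U_j-U_j\|$, and you do not supply the argument that would make $n^{-2}\sum_t\|\hat S_t-S_t\|^2$ vanish (a uniform-in-$t$ control of $\sum_{j\le t}\partial H_j/\partial\theta'-tJ(\theta_0)$ together with the analogous treatment of the truncation errors). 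As written, the proof is incomplete at its central estimate; either carry out the cancellation argument on the re-centered partial sums, or, as the paper does, bound $\hat P_{p+q+1,n}-P_{p+q+1,n}$ directly without passing through absolute values of the individual increments.
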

The proof of this theorem is given in Subsection~\ref{sectSN2}.

At the asymptotic level $\alpha$, a joint $100(1-\alpha)\%$ confidence region for the elements of $\theta_0$ is then given by the set of values of the vector $\theta$ which satisfy the following inequality:

$$n(\hat{\theta}_n-\theta)^{'}\hat{P}_{p+q+1,n}^{-1}(\hat{\theta}_n-\theta)\leq \mathcal{U}_{p+q+1,\alpha},$$
where $\mathcal{U}_{p+q+1,\alpha}$ is the quantile of order $1-\alpha$ for the distribution of $\mathcal{U}_{p+q+1}$.
\begin{cor} For any $1\le i\le p+q+1$, a $100(1-\alpha)\%$ confidence region for $\theta_0(i)$ is given by the following set:
$$\left\lbrace x\in\mathbb{R}\ {;} \ n\big (\hat{\theta}_n(i)-x\big )^{2}\hat{P}_{p+q+1,n}^{-1}(i,i)\leq \mathcal{U}_{1,\alpha}\right\rbrace ,$$
\end{cor}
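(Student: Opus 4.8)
\emph{Proof proposal.} The plan is to deduce the corollary from the functional limit theory already established for Theorem~\ref{self-norm-estimator2} by projecting onto the $i$-th coordinate, and then to exploit the elementary fact that a fixed linear combination of the coordinates of a multivariate Brownian motion is, up to a deterministic scalar factor, a one-dimensional Brownian motion; that scalar factor cancels between the numerator and the denominator of the self-normalized statistic, so the resulting one-dimensional quantity is again pivotal, with limit $\mathcal U_1$. Throughout, the notation $\hat P_{p+q+1,n}^{-1}(i,i)$ in the statement is read as the reciprocal $[\hat P_{p+q+1,n}]_{ii}^{-1}$ of the $(i,i)$ entry of $\hat P_{p+q+1,n}$.

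First I would recall the two ingredients underlying the proof of Theorem~\ref{self-norm-estimator2} in Subsection~\ref{sectSN2}: (i) the expansion $\sqrt n(\hat{\theta}_n-\theta_0)=n^{-1/2}\sum_{t=1}^n\hat U_t+\mathrm{o}_{\mathbb P}(1)$ together with the functional central limit theorem $n^{-1/2}\sum_{j=1}^{\lfloor nr\rfloor}\hat U_j\xrightarrow[]{}\Lambda B_{p+q+1}(r)$ in $D([0,1])$, where $\Lambda$ is any square root of the long-run variance $\Omega=J^{-1}(\theta_0)I(\theta_0)J^{-1}(\theta_0)$ of $(U_t)_t$ (so $\Lambda\Lambda'=\Omega$) and $(B_{p+q+1}(r))_{r\ge0}$ is a standard $(p+q+1)$-dimensional Brownian motion; and (ii) the continuous-mapping step that turns this into the joint convergence $\bigl(\sqrt n(\hat{\theta}_n-\theta_0),\,\hat P_{p+q+1,n}\bigr)\xrightarrow[]{}\bigl(\Lambda B_{p+q+1}(1),\,\Lambda V_{p+q+1}\Lambda'\bigr)$, with $V_{p+q+1}$ the matrix defined in \eqref{Vk}. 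Reading off the $i$-th marginal — write $e_i$ for the $i$-th canonical basis vector of $\mathbb R^{p+q+1}$ and $c=\Lambda'e_i$ — the pair $\bigl(\sqrt n(\hat{\theta}_n(i)-\theta_0(i)),\,[\hat P_{p+q+1,n}]_{ii}\bigr)$ converges in law to $\bigl(c'B_{p+q+1}(1),\,\int_0^1(c'B_{p+q+1}(r)-r\,c'B_{p+q+1}(1))^2\,dr\bigr)$.

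The key observation is that $W(r):=c'B_{p+q+1}(r)$ is a real Brownian motion with $\mathrm{Var}(W(r))=r\|c\|^2$, and $\|c\|^2=e_i'\Omega e_i=\Omega_{ii}>0$ since $\Omega$ is positive definite under the standing assumptions; hence $W(\cdot)\overset{d}{=}\|c\|\,B_1(\cdot)$ for a standard scalar Brownian motion $B_1$. Moreover $[\hat P_{p+q+1,n}]_{ii}>0$ almost surely, because $\hat P_{p+q+1,n}$ is positive semidefinite by construction and non-singular by Proposition~\ref{inversibleP}, hence positive definite. Applying the continuous mapping theorem to $(x,y)\mapsto x^2/y$ therefore gives
\begin{align*}
n\bigl(\hat{\theta}_n(i)-\theta_0(i)\bigr)^2\,[\hat P_{p+q+1,n}]_{ii}^{-1}
&\xrightarrow[n\to\infty]{\text{in law}}\frac{W(1)^2}{\int_0^1\bigl(W(r)-rW(1)\bigr)^2\,dr}\\
&\overset{d}{=}\frac{\|c\|^2 B_1(1)^2}{\|c\|^2\int_0^1\bigl(B_1(r)-rB_1(1)\bigr)^2\,dr}=\mathcal U_1,
\end{align*}
the factor $\|c\|^2$ cancelling. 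Finally, taking $x=\theta_0(i)$, the event $\{\,n(\hat{\theta}_n(i)-x)^2\,[\hat P_{p+q+1,n}]_{ii}^{-1}\le\mathcal U_{1,\alpha}\,\}$ has asymptotic probability $1-\alpha$ by definition of the quantile $\mathcal U_{1,\alpha}$ and continuity of the distribution function of $\mathcal U_1$ (tabulated in \cite{lobato}), which is the asserted confidence region.

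There is no serious obstacle beyond Theorem~\ref{self-norm-estimator2}; the only genuinely new point is the scale-cancellation remark for a single coordinate. The mild care required is to make sure one uses the \emph{joint} convergence of $(\sqrt n(\hat{\theta}_n-\theta_0),\hat P_{p+q+1,n})$ rather than only the convergence of the full quadratic form in Theorem~\ref{self-norm-estimator2}; but this joint convergence is exactly the intermediate conclusion reached in Subsection~\ref{sectSN2} just before its final continuous-mapping step, so it is available for free.
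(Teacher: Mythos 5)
Your proof is correct and is precisely the fleshed-out version of the argument the paper invokes in one line ("similar to that of Theorem~\ref{self-norm-estimator2} when one restricts ourselves to a one dimensional case"): projection onto the $i$-th coordinate of the joint functional limit, followed by the observation that the scale $\|c\|^2=\Omega_{ii}$ of the scalar Brownian motion $c'B_{p+q+1}$ cancels between numerator and denominator. Your reading of $\hat P_{p+q+1,n}^{-1}(i,i)$ as $\bigl([\hat P_{p+q+1,n}]_{ii}\bigr)^{-1}$ is the intended one — and is in fact forced, since the $(i,i)$ entry of the inverse matrix would not produce a pivotal limit.
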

where $\mathcal{U}_{1,\alpha}$ denotes the quantile of order $1-\alpha$ of the distribution for $\mathcal{U}_{1}$.

The proof of this corollary is similar to that of Theorem \ref{self-norm-estimator2} when one restricts ourselves to a one dimensional case.
\section{Numerical illustrations}\label{num-ill}
In this section, we investigate the finite sample
properties of the asymptotic results that we introduced in this work. For that sake we use Monte Carlo experiments. 
The numerical illustrations of this section are made  with the open source
statistical software R (see R Development Core Team, 2017) or (see http://cran.r-project.org/).
\subsection{Simulation studies and empirical sizes for confidence intervals}
We study numerically the behavior of the LSE for FARIMA models of the form
\begin{equation}\label{process-sim}
(1-L)^d\left(X_t-aX_{t-1}\right)=\epsilon_t-b\epsilon_{t-1},
\end{equation}
where the  unknown parameter is taken as $\theta_0=(a,b,d)=(-0.7,-0.2,0.4)$.  First we assume that in \eqref{process-sim} the innovation process $(\epsilon_t)_{t\in\mathbb{Z}}$ is an iid centered Gaussian process with common variance 1
which corresponds to the strong FARIMA case.
In two other experiments we consider that in \eqref{process-sim} the innovation processes $(\epsilon_t)_{t\in\mathbb{Z}}$ are defined respectively by
\begin{equation} \label{noise-sim}
\left\{\begin{array}{l}\epsilon_{t}=\sigma_t\eta_{t}\\
\sigma_t^2=0.04+0.12\epsilon_{t-1}^2 +0.85\sigma_{t-1}^2
\end{array}\right.
\end{equation}
and
\begin{align}
\label{PT}
\epsilon_{t}& =\eta_{t}^2\eta_{t-1},
\end{align}
where  $(\eta_t)_{t\ge 1}$ is a sequence of iid centered Gaussian random variables with variance 1.
Note that the innovation process in \eqref{PT} is not a martingale difference whereas it is the case of the noise defined in \eqref{noise-sim}.

%
We simulated $N=1,000$ independent trajectories of size $n=2,000$ of Model \eqref{process-sim} in the three following case: the strong Gaussian noise,
 the semi-strong noise \eqref{noise-sim} and the weak noise \eqref{PT}.

Figure~\ref{fig1}, Figure~\ref{fig2} and Figure~\ref{fig3} compare the distribution of the LSE in these three contexts. The distributions of $\hat{d}_n$ are similar in the three cases whereas the LSE $\hat{a}_n$  of $a$ is more accurate in the weak case than in the strong and semi-strong cases.
The distributions of  $\hat{b}_n$ are more accurate in the strong case than in the weak case. Remark that in the weak case the distributions of  $\hat{b}_n$ are more accurate  to the semi-strong ones.

Figure~\ref{fig4} compares standard estimator $\hat{\Omega}_S=2\hat\sigma_\epsilon^2\hat{J}_n^{-1}$ and the sandwich estimator $\hat{\Omega}=\hat{J}_n^{-1}\hat{I}^{\mathrm{SP}}_n\hat{J}_n^{-1}$ of the LSE asymptotic variance $\Omega$. We used the spectral estimator $\hat{I}^{\mathrm{SP}}_n$  defined in Theorem \ref{convergence_Isp}. The multivariate AR  order $r$ (see \eqref{AR_tronquee}) is automatically selected by AIC (we use the function  {\tt VARselect()} of the {\bf vars} R package). In the strong FARIMA case we know that the two estimators are consistent. In view of the two upper subfigures of Figure~\ref{fig4}, it seems that the sandwich estimator is less accurate in the strong case. This is not surprising because the sandwich estimator is more robust, in the sense that this estimator remains consistent in the semi-strong and weak FARIMA cases, contrary to the standard estimator (see the middle and bottom subfigures of Figure~\ref{fig4}). Figure~\ref{fig5} (resp. Figure~ \ref{fig6}) presents  a zoom of the left(right)-middle and left(right)-bottom panels of Figure~\ref{fig4}. It is clear that in the semi-strong or weak case $n(\hat{a}_n-a)^2$,  $n(\hat{b}_n-b)^2$ and $n(\hat{d}_n-d)^2$ are, respectively, better estimated by $\hat{J}_n^{-1}\hat{I}^{\mathrm{SP}}_n\hat{J}_n^{-1}(1,1)$, $\hat{J}_n^{-1}\hat{I}^{\mathrm{SP}}_n\hat{J}_n^{-1}(2,2)$ and $\hat{J}_n^{-1}\hat{I}^{\mathrm{SP}}_n\hat{J}_n^{-1}(3,3)$ (see Figure~\ref{fig6})
than by $2\hat\sigma_\epsilon^2\hat{J}_n^{-1}(1,1)$, $2\hat\sigma_\epsilon^2\hat{J}_n^{-1}(2,2)$ and $2\hat\sigma_\epsilon^2\hat{J}_n^{-1}(3,3)$ (see Figure~\ref{fig5}). The failure of the standard estimator of $\Omega$ in the weak FARIMA framework may have important consequences in terms of identification or hypothesis testing and validation.

Now we are interested in standard confidence interval and the modified versions proposed in Subsections \ref{ii} and \ref{sn}. 
Table \ref{tab1} displays the empirical sizes in the three previous different FARIMA cases. For the nominal level $\alpha=5\%$, the empirical  size over the $N=1,000$ independent replications should
vary between the significant limits 3.6\% and 6.4\% with probability 95\%. For the nominal level $\alpha=1\%$, the significant limits are 0.3\% and 1.7\%, and for the nominal level $\alpha=10\%$, they are
8.1\% and 11.9\%. When   the relative rejection frequencies  are outside the significant limits, they are displayed in bold type in Table \ref{tab1}. For the strong FARIMA model, all the relative rejection  frequencies are inside the significant limits for $n$ large. For the semi-strong FARIMA model, the relative rejection frequencies of the standard  confidence interval
are definitely outside the significant limits, contrary to  the modified versions proposed. For the weak FARIMA model, only
 the standard confidence interval of $\hat{b}_n$ is  outside the significant limits when $n$ increases.
As a conclusion,  Table \ref{tab1}  confirms  the comments done concerning Figure~\ref{fig4}.

\subsection{Application to real data}
We now consider an application to the daily returns of four stock market indices (CAC, DAX, Nikkei and S\&P 500).
The returns are defined by $r_t=\log(p_t/p_{t-1})$ where $p_t$ denotes the price index  of the stock market indices  at time $t$.
The observations cover the period from the starting date  of each index  to February 14, 2019. 

In Financial Econometrics the returns are often assumed to be  a white noise.
In view of the so-called volatility clustering, it is well known that the
strong white noise model is not adequate for these series
(see for instance \cite{FZ2010,lobatoNS2001,BMCF2012,BMS2018}). A long-range memory property of the stock market returns series was largely investigated by \cite{DGE1993} which shown that there are more correlation between power transformation of the absolute return $|r_t|^v$ ($v>0$) than returns themselves (see also \cite{Beran2013}, \cite{palma}, \cite{baillie1996} and \cite{ling-li}). We choose here the case where $v=2$ which corresponds to the squared returns
$(r_t^2)_{t\ge 1}$ process.
This process have significant positive autocorrelations at least up to lag 100 (see Figure~\ref{acf})  which confirm  the claim that stock market returns have long-term memory (see \cite{DGE1993}). 

We fit a FARIMA$(1,d,1)$ model to the squares of the 4 daily returns. As in \cite{Ling2003}, we denote by $(X_t)_{t\ge 1}$ the mean corrected
series of the squared returns and we adjust  the following model
\begin{eqnarray*}
(1-L)^d\left(X_t-aX_{t-1}\right)=\epsilon_t-b\epsilon_{t-1} .
\end{eqnarray*}
Figure~\ref{graph1} (resp. Figure~\ref{graph}) plots the closing prices (resp. the returns) of the four stock market indices. 
Figure~\ref{acf} shows that squared returns $(X_t)_{t\ge 1}$ are generally strongly autocorrelated.
Table~\ref{tabreal} displays the LSE of the parameter $\theta=(a,b,d)$ of each squared of daily returns.
The $p-$values  of the corresponding LSE, $\hat\theta_n=(\hat{a}_n,\hat{b}_n,\hat{d}_n)$ are given in parentheses. The last column presents the estimated residual variance. Note that for all series, the estimated coefficients $|\hat{a}_n|$ and $|\hat{b}_n|$ are smaller than one and this is in accordance with our Assumption {\bf (A1)}. We also observe that for all series  the estimated long-range dependence coefficients $\hat{d}_n$ are significant for any reasonable asymptotic level and are inside $]-0.5{,}0.5[$.
We thus think that the assumption {\bf (A3)} is  satisfied and thus our asymptotic normality theorem can be applied.
Table~\ref{tab1real} then presents for each serie the modified confidence interval at the asymptotic level $\alpha=5\%$ for the parameters estimated in Table~\ref{tabreal}.

\section{Conclusion}\label{conclusion}
Taking into account the possible lack of independence of the error terms, we show in this paper that we can  fit FARIMA representations of a wide class of
 nonlinear long memory times series. This is possible thanks to our theoretical results and it is illustrated in our real cases and simulations studies. 

This standard methodology (when the noise is supposed to be iid), in particular the significance tests on the parameters,
needs however to be adapted to take into account the possible lack of independence of the errors terms. A first step has been done thanks to our results on the confidence intervals. In future works, we intent to
study how the existing identification (see \cite{yac_jtsa},  \cite{BMK2016}) and diagnostic checking (see \cite{BMS2018}, \cite{frz}) procedures should be
adapted in the presence of long-range dependence framework and dependent noise.

\section{Proofs}\label{proofs}
In all our proofs, $K$ is a positive constant that may vary from line to line.
\subsection{Preliminary results}\label{prelim}
In this subsection, we shall give some results on estimations of the coefficient of formal power series that will arise in our study. Some of them are well know on some others are new to our knowledge. We will make some precise comments hereafter.

We begin by recalling the following properties on power series. If for $|z|\le R$, the power series $f(z)=\sum_{i\ge 0}a_iz^i$ and  $g(z)=\sum_{i\ge 0}b_iz^i$ are well defined, then one has $(fg)(z)= \sum_{i\ge 0} c_iz^i$ is also well defined for $|z|\le R$ with the sequence $(c_i)_{i\ge 0}$ which is given by $c=a\ast b$ where $\ast$ denotes the convolution product between $a$ and $b$ defined by $c_i=\sum_{k=0}^i a_kb_{i-k}=\sum_{k=0}^i a_{i-k}b_{k}$. We will make use of the Young inequality that states that if the sequence $a\in \ell^{r_1}$ and $b\in\ell^{r_2}$ and such that $\frac{1}{r_1}+\frac{1}{r_2}=1+\frac{1}r$ with $1\le r_1,r_2,r \le \infty$, then $$
\left \| a\ast b \right \|_{\ell^r} \le  \left \| a \right \|_{\ell^{r_1}} \times \left \|  b \right \|_{\ell^{r_2}} .$$

Now we come back to the power series that arise in our context. Remind that for the true value of the parameter,
\begin{equation}\label{FF}
a_{\theta_0}(L)(1-L)^{d_0}X_t=b_{\theta_0}(L)\epsilon_t.
\end{equation}
Thanks to the assumptions on the moving average polynomials $b_\theta$ and the autoregressive polynomials $a_\theta$, the power series $a_\theta^{-1}$ and $b_\theta^{-1}$ are well defined.

Thus the functions $\epsilon_t(\theta)$ defined in \eqref{FARIMA-th}  can be written as
\begin{align}\label{epsi}
\epsilon_t(\theta) &= b^{-1}_{\theta}(L) a_{\theta}(L)(1-L)^{d}X_t \\
\label{epsi-bis}
& =b^{-1}_{\theta}(L) a_{\theta}(L)(1-L)^{d-d_0}a^{-1}_{\theta_0}(L) b_{\theta_0}(L)\epsilon_t
\end{align}
and if we denote $\gamma(\theta)=(\gamma_i(\theta))_{i\ge 0}$ the sequence of coefficients of the power series $b^{-1}_{\theta}(z) a_{\theta}(z)(1-z)^{d}$,
 we may write for all $t\in\mathbb Z$:
\begin{align}\label{AR-Inf}
\epsilon_t(\theta)&=\sum_{i\geq 0}\gamma_i(\theta)X_{t-i}.
\end{align}
In the same way, by \eqref{epsi} one has
\begin{align*}
X_t & = (1-L)^{-d}a^{-1}_{\theta}(L) b_{\theta}(L)\epsilon_t(\theta)
\end{align*}
and if we denote $\eta(\theta)=(\eta_i(\theta))_{i\ge 0}$ the coefficients of the power series $(1-z)^{-d}a^{-1}_{\theta}(z) b_{\theta}(z)$ one has \begin{align}
\label{MA-Inf}
X_t & = \sum_{i\geq 0}\eta_i(\theta)\epsilon_{t-i}(\theta) \ .
\end{align}
We strength the fact that $\gamma_0(\theta)=\eta_0(\theta)=1$ for all $\theta$. 

For large $j$, \cite{hallin} have shown that uniformly in $\theta$ the sequences  $\gamma(\theta)$ and $\eta(\theta)$
satisfy
\begin{equation}\label{Coef-Gamma}
\frac{\partial^k\gamma_j(\theta)}{\partial\theta_{i_1}\cdots\partial\theta_{i_k}}=\mathrm{O}\left( j^{-1-d}\left\lbrace \log(j)\right\rbrace ^k\right),\text{ for }k=0,1,2,3,
\end{equation}
and
\begin{equation}\label{Coef-Eta}
\frac{\partial^k\eta_j(\theta)}{\partial\theta_{i_1}\cdots\partial\theta_{i_k}}=\mathrm{O}\left( j^{-1+d}\left\lbrace \log(j)\right\rbrace ^k\right), \text{ for }k=0,1,2,3.
\end{equation}

One difficulty that has to be addressed is that \eqref{AR-Inf} includes the infinite past $(X_{t-i})_{i\ge 0}$ whereas only a finite number of observations $(X_t)_{1\leq t\leq n}$ are available to compute the estimators defined in \eqref{theta_chap}.
The simplest solution is truncation which amounts to setting all unobserved values equal to zero. Thus, 
for all $\theta\in\Theta$ and $1\le t\le n$ one defines
\begin{equation}\label{epsilon-tilde--}
\tilde{\epsilon}_t(\theta)=\sum_{i=0}^{t-1}\gamma_i(\theta)X_{t-i}= \sum_{i\ge 0} \gamma_i^t(\theta) X_{t-i}
\end{equation}
where the truncated sequence $\gamma^t(\theta)= ( \gamma_i^t(\theta))_{i\ge 0}$ is defined by
$$\gamma_i^t(\theta)=\left\{
\begin{array}{rl}
 \gamma_i(\theta) &\text{ if } \ 0\leq i\leq t-1\ , \\
0& \text{ otherwise.}
\end{array}\right.$$
Since our assumptions are made on the noise in $(\ref{FARIMA})$, it will be useful to express the random variables $\epsilon_t(\theta)$ and its partial derivatives with respect to $\theta$, as a function of $(\epsilon_{t-i})_{i\ge 0}$.

From \eqref{epsi-bis}, there exists a sequence $\lambda(\theta)=(\lambda_i(\theta))_{i\geq0}$ such that
\begin{equation}\label{epsil-th}
\epsilon_t(\theta)=\sum_{i=0}^{\infty}\lambda_i\left( \theta\right) \epsilon_{t-i}
\end{equation}
where the sequence $\lambda(\theta)$ is given by the sequence of the coefficients of the power series $b^{-1}_{\theta}(z) a_{\theta}(z)(1-z)^{d-d_0}a^{-1}_{\theta_0}(z) b_{\theta_0}(z)$. Consequently $\lambda(\theta) = \gamma(\theta)\ast \eta(\theta_0)$ or, equivalently,
\begin{align}\label{Coef-lambda}
\lambda_i( \theta)& =\sum_{j=0}^i\gamma_j(\theta)\eta_{i-j}(\theta_0).
\end{align}
As in \cite{hualde2011}, it can be shown using Stirling's approximation that there exists a positive constant $K$ such that 
\begin{equation}\label{eqasym-lambda}
\sup_{\theta\in\Theta_{\delta}}\left| \lambda_i(\theta)\right| \leq K\sup_{d\in[d_1,d_2]} i^{-1-(d-d_0)}\leq K i^{-1-(d_1-d_0)} \ .
\end{equation}
\noindent Equations \eqref{epsil-th} and \eqref{eqasym-lambda} imply that for all $\theta\in\Theta$ the random variable $\epsilon_t(\theta)$ belongs to $\mathbb{L}^2$, that 
the sequence $(\epsilon_t(\theta))_t$ is an ergodic sequence and that for all $t\in\mathbb{Z}$ the function $\epsilon_t(\cdot)$ is a continuous function.
We proceed in the same way as regard to the derivatives of $\epsilon_t(\theta)$. More precisely, for any $\theta\in\Theta$, $t\in\mathbb{Z}$ and $1\le k,l \le p+q+1$ there exists sequences $\overset{\textbf{.}}{\lambda}_{k}(\theta)= (\overset{\textbf{.}}{\lambda}_{i,k}(\theta))_{i\geq1}$
and $\overset{\textbf{..}}{\lambda}_{k,l}(\theta)= (\overset{\textbf{..}}{\lambda}_{i,k,l}(\theta))_{i\geq1}$ such that
\begin{align}
\frac{\partial\epsilon_t(\theta)}{\partial\theta_k}&=\sum_{i=1}^{\infty}\overset{\textbf{.}}{\lambda}_{i,k}\left( \theta\right) \epsilon_{t-i} 
 \label{deriveesecepsil}\\
\frac{\partial^2\epsilon_t(\theta)}{\partial\theta_k\partial\theta_{l}}& =\sum_{i=1}^{\infty}\overset{\textbf{..}}{\lambda}_{i,k,l}\left( \theta\right) \epsilon_{t-i} .\label{deriveesecepsil1}
\end{align}
Of course it holds that $\overset{\textbf{.}}{\lambda}_{k}(\theta)=\frac{\partial\gamma(\theta)}{\partial\theta_k}\ast\eta(\theta_0)$ and
$\overset{\textbf{..}}{\lambda}_{k,l}( \theta)=\frac{\partial^2\gamma(\theta)}{\partial\theta_k\partial\theta_{l}}\ast \eta(\theta_0)$.

Similarly we have
\begin{align}\label{epsiltilde-th}
\tilde{\epsilon}_t(\theta)& =\sum_{i=0}^{\infty}\lambda_i^t\left( \theta\right) \epsilon_{t-i}, \\
\frac{\partial\tilde{\epsilon}_t(\theta)}{\partial\theta_k}& =\sum_{i=1}^{\infty}\overset{\textbf{.}}{\lambda}_{i,k}^t\left( \theta\right) \epsilon_{t-i},
\label{deriveesecepsiltilde} \\
\frac{\partial^2\tilde{\epsilon}_t(\theta)}{\partial\theta_k\partial\theta_{l}}& =\sum_{i=1}^{\infty}\overset{\textbf{..}}{\lambda}_{i,k,l}^t\left( \theta\right) \epsilon_{t-i},
\end{align}
where $\lambda^t(\theta) = \gamma^t(\theta)\ast \eta(\theta_0)$, $\overset{\textbf{.}}{\lambda}^t_{k}(\theta)=\frac{\partial\gamma^t(\theta)}{\partial\theta_k}\ast\eta(\theta_0)$ and
$\overset{\textbf{..}}{\lambda}^t_{k,l}( \theta)=\frac{\partial^2\gamma^t(\theta)}{\partial\theta_k\partial\theta_{l}}\ast \eta(\theta_0)$.

In order to handle the truncation error $\epsilon_t(\theta)-\tilde\epsilon_t(\theta)$, one needs information on the sequence $\lambda(\theta)-\lambda^t(\theta)$. This is the purpose of the following lemma.

\begin{lemme}\label{lemme_sur_les_ecarts_des_coef}
For  $2\leq r\leq \infty$ and $1\le k,l \le p+q+1 $, we have
\begin{equation*}
\parallel\lambda\left(\theta\right)-\lambda^t\left(\theta\right)\parallel_{\ell^r} \ =\mathrm{O}\left(t^{-1+\frac{1}{r}-(d-max(d_0,0))}\right),
\end{equation*}
\begin{equation*}
\parallel\overset{\textbf{.}}{\lambda}_k\left(\theta\right)-\overset{\textbf{.}}{\lambda}_k^t\left(\theta\right)\parallel_{\ell^r} \ =\mathrm{O}\left(t^{-1+\frac{1}{r}-(d-max(d_0,0))}\right)
\end{equation*}
and 
\begin{equation*}
\parallel\overset{\textbf{..}}{\lambda}_{k,l}\left(\theta\right)-\overset{\textbf{..}}{\lambda}_{k,l}^t\left(\theta\right)\parallel_{\ell^r} \ =\mathrm{O}\left(t^{-1+\frac{1}{r}-(d-max(d_0,0))}\right)
\end{equation*}
for any $\theta\in\Theta_{\delta}$ if $d_0\leq 0$ and for $\theta$ with non-negative memory parameter $d$ if $d_0>0$.

\end{lemme}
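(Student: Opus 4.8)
The plan is to reduce every one of the three bounds to a single estimate on a truncated convolution and then feed in the known polynomial decay rates \eqref{Coef-Gamma} and \eqref{Coef-Eta}. By the identities recalled just before the statement one has $\lambda(\theta)-\lambda^t(\theta)=\big(\gamma(\theta)-\gamma^t(\theta)\big)\ast\eta(\theta_0)$, $\overset{\textbf{.}}{\lambda}_k(\theta)-\overset{\textbf{.}}{\lambda}_k^t(\theta)=\big(\tfrac{\partial\gamma(\theta)}{\partial\theta_k}-\tfrac{\partial\gamma^t(\theta)}{\partial\theta_k}\big)\ast\eta(\theta_0)$, and the analogous relation for the second derivatives, so it suffices to bound $\|(g-g^t)\ast\eta(\theta_0)\|_{\ell^r}$, where $g$ is one of $\gamma(\theta)$, $\partial\gamma(\theta)/\partial\theta_k$, $\partial^2\gamma(\theta)/\partial\theta_k\partial\theta_l$ and $g^t$ is its truncation keeping only the indices $0,\dots,t-1$. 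Then $g-g^t$ is supported on $\{i\ge t\}$ and, by \eqref{Coef-Gamma}, $|g_i-g_i^t|=|g_i|\le K\,i^{-1-d}\{\log i\}^{k}$ there, with $k=0,1,2$ according to the differentiation order. I will run the computation for $g=\gamma(\theta)$ (i.e. $k=0$), the derivative cases being identical, with extra logarithmic factors that are carried along harmlessly.

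First I would treat the case $d_0\le 0$, where $\eta(\theta_0)\in\ell^1$: for $d_0<0$ this follows from \eqref{Coef-Eta} since $\sum_j j^{-1+d_0}<\infty$, and for $d_0=0$ the power series $(1-z)^{-d_0}a_{\theta_0}^{-1}(z)b_{\theta_0}(z)$ is rational with no pole in a neighbourhood of the closed unit disk, hence has geometrically decaying coefficients. Young's inequality with exponents $(r,1)$ then gives $\|(g-g^t)\ast\eta(\theta_0)\|_{\ell^r}\le\|g-g^t\|_{\ell^r}\,\|\eta(\theta_0)\|_{\ell^1}$, and since $r\ge2$ and $d\ge d_1>-1/2$ force $(1+d)r>1$, we get $\|g-g^t\|_{\ell^r}=\big(\sum_{i\ge t}|g_i|^r\big)^{1/r}\le K\big(\sum_{i\ge t}i^{-(1+d)r}\big)^{1/r}\le K\,t^{1/r-1-d}$. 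As $\max(d_0,0)=0$, this is the announced rate.

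The case $d_0>0$ is the delicate one, and it is exactly where the restriction to parameters with $d\ge0$ enters: now $\eta(\theta_0)\notin\ell^1$, so Young's inequality is lossy and the convolution tail must be estimated by hand. For $i\ge t$ the $i$-th coefficient of $(g-g^t)\ast\eta(\theta_0)$ equals $\sum_{j=t}^{i}g_j\,\eta_{i-j}(\theta_0)$, and I would split this sum at $j=\lfloor i/2\rfloor$. On $\{t\le j\le i/2\}$ one has $i-j\ge i/2$, hence $|\eta_{i-j}(\theta_0)|\le K\,i^{-1+d_0}$ by \eqref{Coef-Eta}, together with $\sum_{j\ge t}|g_j|\le K\,t^{-d}$ (from \eqref{Coef-Gamma} when $d>0$, and from geometric decay when $d=0$). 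On $\{i/2<j\le i\}$ one has $|g_j|\le K\,i^{-1-d}$ and $\sum_{k=0}^{\lceil i/2\rceil}|\eta_k(\theta_0)|\le K\,i^{d_0}$ (using $\eta_0(\theta_0)=1$ and $d_0>0$). Adding the two pieces and using $i\ge t$ gives $\big|\sum_{j=t}^{i}g_j\eta_{i-j}(\theta_0)\big|\le K\,t^{-d}\,i^{-1+d_0}$, and since $r\ge2$ and $d_0<1/2$ yield $(1-d_0)r>1$, summing $r$-th powers over $i\ge t$ gives $\|(g-g^t)\ast\eta(\theta_0)\|_{\ell^r}\le K\,t^{-d}\big(\sum_{i\ge t}i^{-(1-d_0)r}\big)^{1/r}\le K\,t^{-d}\,t^{1/r-1+d_0}=K\,t^{-1+1/r-(d-d_0)}$, which is the claim since here $\max(d_0,0)=d_0$.

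I expect the only real obstacle to be precisely this non-summability of $\eta(\theta_0)$ for $d_0>0$; the dyadic splitting above resolves it and simultaneously accounts for the appearance of $\max(d_0,0)$ in the exponent and for the hypothesis $d\ge0$. The rest is routine: the bounds are uniform in $\theta$ over the stated set because $\eta(\theta_0)$ is a fixed sequence and the constants in \eqref{Coef-Gamma}--\eqref{Coef-Eta} are uniform, and the logarithmic factors coming from the derivative bounds for $k=1,2$ are propagated through the same two computations without further difficulty.
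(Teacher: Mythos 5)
Your treatment of the case $d_0\le 0$ coincides with the paper's: Young's inequality with exponents $(r,1)$, using $\eta(\theta_0)\in\ell^1$ and the tail estimate $\|\gamma(\theta)-\gamma^t(\theta)\|_{\ell^r}\le Kt^{1/r-1-d}$. For $d_0>0$ you genuinely depart from the paper: instead of Young's inequality with the exponents $r_1=r/(1+r(d_0+\beta))$ and $r_2=(1-(d_0+\beta))^{-1}$ followed by $\beta\to0$, you estimate the $i$-th coefficient of the truncated convolution directly, splitting $\sum_{j=t}^{i}g_j\eta_{i-j}(\theta_0)$ at $j=\lfloor i/2\rfloor$ to get the pointwise bound $Kt^{-d}i^{-1+d_0}$ and then summing $r$-th powers using $(1-d_0)r>1$. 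For $d>0$ this computation is correct (including the $r=\infty$ case), it isolates exactly where $\max(d_0,0)$ and the hypothesis $d\ge0$ enter, and it avoids the limiting argument in $\beta$; the logarithmic factors from the derivative bounds cause at most the same $\zeta$-loss that the paper itself incurs. This is a legitimate and arguably more transparent alternative for the main range of parameters.

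There is, however, one concrete step that fails at the boundary of the stated range. On the block $t\le j\le i/2$ you invoke $\sum_{j\ge t}|g_j|\le Kt^{-d}$, justified ``from geometric decay when $d=0$''. That is valid for $g=\gamma(\theta)$, since at $d=0$ the series $b_\theta^{-1}(z)a_\theta(z)$ is rational with geometrically decaying coefficients, but it is false for the derivative sequences: $\partial\gamma(\theta)/\partial d$ at $d=0$ is the coefficient sequence of $b_\theta^{-1}(z)a_\theta(z)\ln(1-z)$, whose $j$-th coefficient decays only like $j^{-1}$ (like $j^{-1}\log j$ for the second derivative), so $\sum_{j\ge t}|g_j|$ diverges and your block-1 bound collapses for the second and third assertions at $d=0$ with $d_0>0$ --- a case the lemma does claim. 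The paper's route is immune precisely because it measures the tail of $g$ in $\ell^{r_1}$ with $r_1>1$ rather than in $\ell^1$, and $\sum_{j\ge t}j^{-r_1}(\log j)^{kr_1}<\infty$ whenever $r_1>1$. Within your scheme the repair is either to accept $\sum_{j=t}^{\lfloor i/2\rfloor}|g_j|\le K\log i$ at $d=0$ (costing a logarithm, i.e. an arbitrarily small power, which is all the paper effectively obtains anyway) or to replace the $\ell^1\times\ell^\infty$ pairing on block 1 by a H\"older pairing with exponents $(r_1,r_1')$, $r_1>1$. As written, your proof covers every parameter value actually used downstream ($d>\max(d_0,0)$, or $d=d_0$ as in Remark~\ref{impo}), but not the full range asserted in the lemma.
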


\begin{proof}

In view of \eqref{Coef-Eta}, $\eta(\theta_0)\in\ell^{r_2}$ for $r_2\geq 1$ when $d_0<0$. If $d_0=0$, $\eta(\theta_0)$ is the sequence of coefficients of the power series $a^{-1}_{\theta_0}(z) b_{\theta_0}(z)$, so it belongs to $\ell^{r_2}$ for all $r_2\geq 1$ since in this case $|\eta_j(\theta_0)|=\mathrm{O}(\rho^j)$ for some $0<\rho<1$ (see \cite{fz98}). Thanks to \eqref{Coef-Gamma}, when $d_0\leq 0$, Young's inequality for convolution yields that for all $r\geq 2$
\begin{align*}
\parallel\lambda\left(\theta\right)-\lambda^t\left(\theta\right)\parallel_{\ell^r}&\leq K\parallel\gamma(\theta)-\gamma^t(\theta)\parallel_{\ell^r}\\
&\leq K\left(\sum_{i=t}^{\infty}\left|\gamma_i(\theta)\right|^r\right)^{1/r}\\
&\leq K\left(\sum_{i=t}^\infty\frac{1}{i^{r+rd}}\right)^{1/r}\\
&\leq K\left(\int_{t}^{\infty}\frac{1}{x^{r+rd}}\mathrm{dx}+\frac{1}{t^{r+rd}}\right)^{1/r}\\
&\leq \frac{K}{t^{1-\frac{1}{r}+d}}.
\end{align*}

If $d_0>0$, the sequence $\eta(\theta_0)$ belongs to  $\ell^{r_2}$ for any $r_2>1/(1-d_0)$.
Young's inequality for convolution implies in this case that for all $r\ge 2$
\begin{align}\label{inegal_young}
\parallel\lambda\left(\theta\right)-\lambda^t\left(\theta\right)\parallel_{\ell^r} \ \leq \parallel \gamma(\theta)-\gamma^t(\theta)\parallel_{\ell^{r_1}}\parallel \eta(\theta_0)\parallel_{\ell^{r_2}}
\end{align}
with $r_2=(1-(d_0+\beta))^{-1}>1/(1-d_0)$ and $r_1=r/(1+r(d_0+\beta))$, for some $\beta>0$ sufficiently small. Thus there exists $K$ such that $\parallel \eta(\theta_0)\parallel_{\ell^{r_2}}\le K$. Similarly as before, we deduce when $d\geq 0$ that 
\begin{align*}
\parallel\lambda\left(\theta\right)-\lambda^t\left(\theta\right)\parallel_{\ell^r}&\leq K\parallel\gamma(\theta)-\gamma^t(\theta)\parallel_{\ell^{r_1}}\\
&\leq K\left(\sum_{i=t}^\infty\frac{1}{i^{r_1+r_1d}}\right)^{1/r_1}\\
&\leq \frac{K}{t^{1-\frac{1}{r_1}+d}}=\frac{K}{t^{1-\frac{1}{r}+(d-d_0)-\beta}},
\end{align*}
the conclusion follows by tending $\beta$ to 0.
The second and third points of the lemma are shown in the same way as the first. This is because from \eqref{Coef-Gamma}, the coefficients $\partial\gamma_j(\theta)/\partial\theta_k$ and $\partial^2\gamma_j(\theta)/\partial\theta_k\partial\theta_l$ are $\mathrm{O}(j^{-1-d+\zeta})$ for any small enough $\zeta> 0$.
The proof of the lemma is then complete.

\end{proof}

%

\begin{rmq}\label{impo}
The above lemma implies that the sequence $ \overset{\textbf{.}}{\lambda}_k\left(\theta_0\right)-\overset{\textbf{.}}{\lambda^t}_k\left(\theta_0\right)$ is bounded and more precisely there exists $K$ such that
\begin{align}\label{impoeq}
\sup_{j\ge 1} \left | \overset{\textbf{.}}{\lambda}_{j,k}\left(\theta_0\right)-\overset{\textbf{.}}{\lambda^t}_{j,k}\left(\theta_0\right) \right | & \le \frac{K}{t^{1+\min(d_0,0)}}
\end{align} for any $t\geq 1$ and any $1\le k\le p+q+1$.

\end{rmq}
\begin{rmq}\label{rmq:important}
In order to prove our asymptotic results, it will be convenient to give an upper bound for the norms of the sequences introduced in Lemma \ref{lemme_sur_les_ecarts_des_coef} valid for any $\theta\in\Theta_{\delta}$. Since $d_1-d_0>-1/2$, Estimation \eqref{eqasym-lambda} entails that for any $r\geq 2$,
\begin{align*}
\parallel\lambda\left(\theta\right)-\lambda^t\left(\theta\right)\parallel_{\ell^r} \ =\mathrm{O}\left(t^{-1+\frac{1}{r}-(d_1-d_0)}\right), \ \ \ \forall\theta\in\Theta_{\delta}.
\end{align*}
This can easily be seen since $\parallel\lambda(\theta)-\lambda^t(\theta)\parallel_{\ell^r}\leq K(\sum_{i\geq t}i^{-r-r(d_1-d_0)})^{1/r}\leq Kt^{-1+1/r-(d_1-d_0)}$. As in \cite{hallin}, the coefficients $\overset{\textbf{.}}{\lambda}_{j,k}(\theta)$ and $\overset{\textbf{..}}{\lambda}_{j,k,l}(\theta)$ are $\mathrm{O}(j^{-1-(d-d_0)+\zeta})$ for any small enough $\zeta>0$, so we have 
\begin{equation*}
\parallel\overset{\textbf{.}}{\lambda}_k\left(\theta\right)-\overset{\textbf{.}}{\lambda}_k^t\left(\theta\right)\parallel_{\ell^r} \ =\mathrm{O}\left(t^{-1+\frac{1}{r}-(d_1-d_0)+\zeta}\right)
\end{equation*}
and 
\begin{equation*}
\parallel\overset{\textbf{..}}{\lambda}_{k,l}\left(\theta\right)-\overset{\textbf{..}}{\lambda}_{k,l}^t\left(\theta\right)\parallel_{\ell^r} \ =\mathrm{O}\left(t^{-1+\frac{1}{r}-(d_1-d_0)+\zeta}\right)
\end{equation*}
for any $r\geq 2$, any $1\leq k,l\leq p+q+1$ and all $\theta\in\Theta_{\delta}$.
\end{rmq}
One shall also need the following lemmas.
\begin{lemme}\label{miss}
For  any $2\le r\le \infty$, $1\le k \le p+q+1 $ and $\theta\in\Theta$, there exists a constant $K$ such that we have
\begin{equation*}
\parallel\overset{\textbf{.}}{\lambda}_k^t\left(\theta\right)\parallel_{\ell^r} \le K.
\end{equation*}
\end{lemme}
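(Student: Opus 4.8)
The plan is to bound $\|\overset{\textbf{.}}{\lambda}_k^t(\theta)\|_{\ell^r}$ by controlling the two sequences whose convolution defines it, namely $\overset{\textbf{.}}{\lambda}^t_{k}(\theta)=\frac{\partial\gamma^t(\theta)}{\partial\theta_k}\ast\eta(\theta_0)$. By the definition of the truncated sequence, $\partial\gamma_i^t(\theta)/\partial\theta_k$ equals $\partial\gamma_i(\theta)/\partial\theta_k$ for $0\le i\le t-1$ and vanishes otherwise, so in particular $\|\partial\gamma^t(\theta)/\partial\theta_k\|_{\ell^{r_1}}\le \|\partial\gamma(\theta)/\partial\theta_k\|_{\ell^{r_1}}$ for every $r_1\ge 1$; thus the truncation can only decrease the norm and it suffices to show $\|\overset{\textbf{.}}{\lambda}_k(\theta)\|_{\ell^r}\le K$ for the non-truncated sequence, uniformly on $\Theta$.

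First I would invoke the estimate \eqref{Coef-Gamma}, which gives $\partial\gamma_j(\theta)/\partial\theta_k=\mathrm{O}(j^{-1-d}\{\log j\})=\mathrm{O}(j^{-1-d+\zeta})$ uniformly in $\theta$ for any small $\zeta>0$. Since $d\ge d_1>-1/2$ on $\Theta$, we have $-1-d+\zeta<-1/2$ for $\zeta$ small, hence $\partial\gamma(\theta)/\partial\theta_k\in\ell^{r_1}$ for every $r_1\ge 2$ (and in fact for $r_1$ slightly below $2$ as well). Next I would use \eqref{Coef-Eta}: $\eta_j(\theta_0)=\mathrm{O}(j^{-1+d_0}\{\log j\})$, and as noted in the proof of Lemma~\ref{lemme_sur_les_ecarts_des_coef}, $\eta(\theta_0)\in\ell^{r_2}$ for all $r_2\ge 1$ when $d_0\le 0$, and for $r_2>1/(1-d_0)$ when $d_0>0$; in either case one can pick $r_2>1$ with $\|\eta(\theta_0)\|_{\ell^{r_2}}\le K$.

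Then I would apply Young's convolution inequality exactly as in \eqref{inegal_young}: choosing exponents $r_1,r_2$ with $\frac{1}{r_1}+\frac{1}{r_2}=1+\frac{1}{r}$, we get
\begin{equation*}
\parallel\overset{\textbf{.}}{\lambda}_k^t(\theta)\parallel_{\ell^r}\ \le\ \parallel\tfrac{\partial\gamma^t(\theta)}{\partial\theta_k}\parallel_{\ell^{r_1}}\parallel\eta(\theta_0)\parallel_{\ell^{r_2}}\ \le\ \parallel\tfrac{\partial\gamma(\theta)}{\partial\theta_k}\parallel_{\ell^{r_1}}\parallel\eta(\theta_0)\parallel_{\ell^{r_2}}\ \le\ K.
\end{equation*}
For $r=\infty$ one takes $r_1=r_2'$ conjugate to $r_2$ (so $1/r_1+1/r_2=1$) and uses that $\partial\gamma(\theta)/\partial\theta_k\in\ell^{r_1}$ for that finite $r_1\ge 2$; for $2\le r<\infty$ the constraint $\frac{1}{r_1}+\frac{1}{r_2}=1+\frac1r$ with $r_2$ close to $1$ forces $r_1$ close to $r$, which is fine since $\partial\gamma(\theta)/\partial\theta_k\in\ell^{r_1}$ for all $r_1\ge2$. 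The main (minor) obstacle is bookkeeping the exponents when $d_0>0$: there $\eta(\theta_0)$ is only in $\ell^{r_2}$ for $r_2>1/(1-d_0)$, so one must check that the required $r_1=(1+1/r-1/r_2)^{-1}$ stays $\ge 2$ (equivalently $\le$ the summability threshold of $\partial\gamma(\theta)/\partial\theta_k$); since $-1-d+\zeta$ can be taken arbitrarily close to $-1-d_1<-1/2$ while $1/r_2$ can be taken arbitrarily close to $1-d_0$, the inequality $d_1-d_0>-1/2$ is exactly what makes the two thresholds compatible. Uniformity in $\theta\in\Theta$ is automatic because all the $\mathrm{O}(\cdot)$ constants in \eqref{Coef-Gamma}–\eqref{Coef-Eta} are uniform, and uniformity in $t$ is automatic from the truncation inequality above; this completes the argument.
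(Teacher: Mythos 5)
Your argument is correct and is essentially the paper's own route: the paper proves Lemma~\ref{miss} by pointing to the arguments of Remark~\ref{rmq:important}, which rest on exactly the ingredients you use — the decay estimates \eqref{Coef-Gamma}–\eqref{Coef-Eta}, Young's convolution inequality as in \eqref{inegal_young}, and the condition $d_1-d_0>-1/2$ to reconcile the exponents when $d_0>0$. Your observation that truncating $\partial\gamma(\theta)/\partial\theta_k$ can only decrease its $\ell^{r_1}$ norm correctly delivers the uniformity in $t$, so nothing is missing.
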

\begin{proof}
The proof follows the same arguments as those developed in Remark \ref{rmq:important}.

\end{proof}
\begin{lemme}\label{miss2}
There exists a constant $K$ such that we have
\begin{equation}\label{eq-miss2}
 \left | \overset{\textbf{.}}{\lambda}_{i,k}\left(\theta_0\right) \right | \le  \frac{K}{i}.
\end{equation}
\end{lemme}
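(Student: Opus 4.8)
The plan is to prove the bound $|\overset{\textbf{.}}{\lambda}_{i,k}(\theta_0)|\le K/i$ by exploiting the convolution structure $\overset{\textbf{.}}{\lambda}_k(\theta_0)=\frac{\partial\gamma(\theta_0)}{\partial\theta_k}\ast\eta(\theta_0)$ established in Subsection~\ref{prelim}, together with the uniform estimates \eqref{Coef-Gamma} and \eqref{Coef-Eta} of \cite{hallin}. Recall that evaluating at $\theta=\theta_0$ forces the memory parameter to be $d=d_0$, so we must carefully track how the rates $j^{-1-d_0}\{\log j\}$ (for $\partial\gamma_j/\partial\theta_k$) and $j^{-1+d_0}\{\log j\}$ (for $\eta_j$) combine — the point being that the exponents of $j$ are symmetric about $-1$, which is precisely what makes the convolution behave like a sequence decaying as $1/i$ up to logarithmic factors, and then we absorb the logarithm more carefully.

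First I would write, for $i\ge 2$,
\begin{align*}
\overset{\textbf{.}}{\lambda}_{i,k}(\theta_0)=\sum_{j=0}^{i}\frac{\partial\gamma_j(\theta_0)}{\partial\theta_k}\,\eta_{i-j}(\theta_0),
\end{align*}
recall $\partial\gamma_0(\theta_0)/\partial\theta_k=0$ (since $\gamma_0\equiv1$) and $\eta_0(\theta_0)=1$, and split the sum at $j\le i/2$ and $j>i/2$. On the first half, $i-j\ge i/2$, so $|\eta_{i-j}(\theta_0)|=\mathrm{O}((i/2)^{-1+d_0}\log i)$, and what remains is $\sum_{1\le j\le i/2}|\partial\gamma_j(\theta_0)/\partial\theta_k|$. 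Here I would use \eqref{Coef-Gamma}: since $\partial\gamma_j(\theta_0)/\partial\theta_k=\mathrm{O}(j^{-1-d_0}\log j)$, the partial sum is $\mathrm{O}(1)$ if $d_0>0$, $\mathrm{O}(\log^2 i)$ if $d_0=0$, and $\mathrm{O}(i^{-d_0}\log i)$ if $d_0<0$; multiplying by $(i/2)^{-1+d_0}\log i$ gives in every case a bound $\mathrm{O}(i^{-1}\log^{c}i)$ for a small power $c$. Symmetrically, on the second half $j\ge i/2$ so $|\partial\gamma_j(\theta_0)/\partial\theta_k|=\mathrm{O}((i/2)^{-1-d_0}\log i)$ and $\sum_{i/2<j\le i}|\eta_{i-j}(\theta_0)|$ is controlled via \eqref{Coef-Eta} exactly as before, again yielding $\mathrm{O}(i^{-1}\log^{c}i)$.

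The only genuine subtlety is that this crude splitting produces logarithmic factors, whereas the statement claims the clean rate $K/i$; the fix is the standard observation that $\eta(\theta_0)\in\ell^{r}$ and $\partial\gamma(\theta_0)/\partial\theta_k\in\ell^{r'}$ for suitable $r,r'$, so one can instead invoke a sharper convolution estimate — or simply note that from \cite{hallin} the coefficients $\overset{\textbf{.}}{\lambda}_{j,k}(\theta)$ are $\mathrm{O}(j^{-1-(d-d_0)+\zeta})$ for any $\zeta>0$ uniformly in $\theta\in\Theta_\delta$ (as already used in Remark~\ref{rmq:important}), and evaluating at $\theta=\theta_0$ gives $d=d_0$ hence $\overset{\textbf{.}}{\lambda}_{j,k}(\theta_0)=\mathrm{O}(j^{-1+\zeta})$. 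To upgrade $\mathrm{O}(j^{-1+\zeta})$ to the sharp $\mathrm{O}(j^{-1})$, one reruns the Hallin-type argument keeping the logarithm: the power series $b^{-1}_{\theta_0}(z)\frac{\partial}{\partial\theta_k}\{a_{\theta_0}(z)(1-z)^{d_0}\}a^{-1}_{\theta_0}(z)b_{\theta_0}(z)$ — equivalently $\frac{\partial}{\partial\theta_k}\epsilon_t(\theta)$ at $\theta_0$ — has coefficients of exact order $j^{-1}$ because the differentiation with respect to an AR or MA coefficient $\theta_k$ multiplies the $(1-z)^{d_0}$ part (coefficient $\sim j^{-1-d_0}$) by a rational factor and the net Hallin bound carries a single $\log j$ that is killed upon re-summation against the $\ell^1$-type tail. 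I expect writing this last upgrade rigorously to be the main obstacle; the honest short route, which I would take, is to cite \cite{hallin} for the $\mathrm{O}(j^{-1-(d-d_0)}\log j)$ bound on $\overset{\textbf{.}}{\lambda}_{j,k}(\theta)$ uniformly on $\Theta_\delta$ (the same estimate underlying \eqref{Coef-Gamma}), specialize to $\theta=\theta_0$ to kill the $(d-d_0)$ exponent, and then absorb the remaining $\log j$ into the constant at the cost of nothing since $\log j\le K_\epsilon j^{\epsilon}$ would reintroduce $\zeta$; so instead one keeps track that at $d=d_0$ the derivative w.r.t.\ the \emph{fractional} parameter is absent from $\overset{\textbf{.}}{\lambda}_{i,k}$ for $k\le p+q$ (only AR/MA derivatives), whence no $\log$ is generated and the bound is exactly $\mathrm{O}(i^{-1})$. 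This completes the proof.
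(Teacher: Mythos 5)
Your main route cannot deliver the stated bound, and your fallback omits the one case that actually matters. Convolving the uniform Hallin estimates \eqref{Coef-Gamma} and \eqref{Coef-Eta} is intrinsically lossy: as you concede, the split at $i/2$ only yields $\mathrm{O}(i^{-1}\log^{c}i)$, and the alternative $\mathrm{O}(i^{-1+\zeta})$ is strictly weaker than $K/i$; neither loss can be "absorbed into the constant." The reason the sharp rate holds has nothing to do with a refined convolution inequality --- it is that at $\theta=\theta_0$ the factor $(1-z)^{d-d_0}$ in $b_\theta^{-1}(z)a_\theta(z)(1-z)^{d-d_0}a_{\theta_0}^{-1}(z)b_{\theta_0}(z)$ equals $1$, so the derivative of the transfer function can be computed in closed form. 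The paper does exactly this, case by case: for $k=1,\dots,p$ one gets $-z^k a_{\theta_0}^{-1}(z)$, and for $k=p+1,\dots,p+q$ one gets $\bigl(\partial_{\theta_k}b_{\theta_0}^{-1}(z)\bigr)\,b_{\theta_0}(z)$, both rational with geometrically decaying coefficients $\mathrm{O}(\rho^i)$ --- far better than $1/i$, and contrary to your claim that these coefficients are "of exact order $j^{-1}$"; for $k=p+q+1$ one gets $\ln(1-z)$, whose $i$-th coefficient is exactly $-1/i$.

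Your final "honest short route" restricts attention to $k\le p+q$ (you argue that "the derivative w.r.t.\ the fractional parameter is absent \dots for $k\le p+q$") and never treats $k=p+q+1$. But that is precisely the case that saturates the bound $K/i$ and the only one where a logarithm genuinely threatens to appear: differentiating $(1-z)^{d-d_0}$ with respect to $d$ produces the factor $\ln(1-z)$, and it is the identity $\ln(1-z)=-\sum_{i\ge1}z^i/i$, not any cancellation of logs "upon re-summation," that gives the clean $1/i$. Without this case the lemma is unproved; with it, the other two cases are immediate and the convolution machinery is unnecessary.
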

\begin{proof}
For  $1\le k \le p+q+1 $, the sequence $\overset{\textbf{.}}{\lambda}_{k}(\theta)= (\overset{\textbf{.}}{\lambda}_{i,k}(\theta))_{i\geq1}$ is in fact the sequence of  the coefficients in the power series of
$$\frac{\partial}{\partial \theta_k }\left (  b_\theta^{-1}(z) a_{\theta}(z)(1-z)^{d-d_0} a_{\theta_0}^{-1}(z)b_{\theta_0}(z)\right )\ .$$ Thus $\overset{\textbf{.}}{\lambda}_{i,k}\left(\theta_0\right)$ is the $i-$th coefficient taken in $\theta=\theta_0$. There are three cases.
\begin{itemize}
\item[$\diamond$] $k=1,\dots,p$:  \\
Since
$$\frac{\partial}{\partial \theta_k }\left (  b_\theta^{-1}(z) a_{\theta}(z)(1-z)^{d-d_0} a_{\theta_0}^{-1}(z)b_{\theta_0}(z) \right )= -b_\theta^{-1}(z) z^k (1-z)^{d-d_0} a_{\theta_0}^{-1}(z)b_{\theta_0}(z)\ , $$ we deduce that $\overset{\textbf{.}}{\lambda}_{i,k}\left(\theta_0\right)$ is the $i-$th coefficient of $-z^k a_{\theta_0}^{-1}(z)$ which satisfies
$\overset{\textbf{.}}{\lambda}_{i,k}\left(\theta_0\right)\le K \rho^i$ for some $0<\rho<1$ (see \cite{fz98} for example).
\item[$\diamond$] $k=p+1,\dots,p+q$:  \\
We have
$$\frac{\partial}{\partial \theta_k } \left (b_\theta^{-1}(z) a_{\theta}(z)(1-z)^{d-d_0} a_{\theta_0}(z)b_{\theta_0}(z)\right ) = \left  (\frac{\partial}{\partial \theta_k } b_\theta^{-1}(z)\right  ) a_{\theta}(z) (1-z)^{d-d_0} a_{\theta_0}^{-1}(z)b_{\theta_0}(z) $$ and consequently $\overset{\textbf{.}}{\lambda}_{i,k}\left(\theta_0\right)$ is the $i-$th coefficient of $(\frac{\partial}{\partial \theta_k } b_{\theta_0}^{-1}(z) ) b_{\theta_0}(z)$ which also satisfies
$\overset{\textbf{.}}{\lambda}_{i,k}\left(\theta_0\right)\le K \rho^i$ (see \cite{fz98}).

The last case will not be a consequence of the usual works on ARMA processes.
\item[$\diamond$] $k=p+q+1$:  \\
In this case, $\theta_k=d$ and so we have
$$\frac{\partial}{\partial \theta_k } \left (b_\theta^{-1}(z) a_{\theta}(z)(1-z)^{d-d_0} a_{\theta_0}^{-1}(z)b_{\theta_0}(z)\right ) = b_\theta^{-1}(z)a_{\theta}(z) \mathrm{ln}(1-z)(1-z)^{d-d_0} a_{\theta_0}^{-1}(z)b_{\theta_0}(z) $$ and consequently $\overset{\textbf{.}}{\lambda}_{i,k}\left(\theta_0\right)$ is the $i-$th coefficient of $\mathrm{ln}(1-z)$ which is equal to $-1/i$.
\end{itemize}
The three above cases imply the expected result.
\end{proof}

\subsection{Proof of Theorem \ref{convergence}}\label{sectconv}

Consider the random variable $\mathrm{W}_n(\theta)$ defined for any $\theta\in\Theta$ by
\begin{equation*}
\mathrm{W}_n(\theta)=\mathrm{V}(\theta)+Q_n(\theta_0)-Q_n(\theta),
\end{equation*}
where $V(\theta)=\mathbb{E}[O_n(\theta)]-\mathbb{E}[O_n(\theta_0)]$. For $\beta>0$, let $S_\beta=\{\theta: \|\theta-\theta_0\|\leq \beta\}$, $\overline{S}_\beta=\{\theta\in\Theta_\delta: \theta\notin S_\beta\}$. It can readily be shown that 
\begin{align}\label{std-conv}
\mathbb{P}\left(\left\|\hat{\theta}_n-\theta_0\right\|>\beta\right)&\leq \mathbb{P}\left(\hat{\theta}_n\in\overline{S}_\beta\right)\nonumber\\
&\leq \mathbb{P}\left(\inf_{\theta\in\overline{S}_\beta} \left\lbrace Q_n(\theta)-Q_n(\theta_0)\right\rbrace\leq 0\right)\nonumber\\
&\leq \mathbb{P}\left(\sup_{\theta\in\Theta_\delta}\left|\mathrm{W}_n(\theta)\right|\geq \inf_{\theta\in\overline{S}_\beta} \mathrm{V}(\theta)\right)\nonumber\\
&\leq\mathbb{P}\left(\sup_{\theta\in\Theta_\delta}\left|Q_n(\theta)-\mathbb{E}\left[O_n(\theta)\right]\right|\geq \frac{1}{2}\inf_{\theta\in\overline{S}_\beta} \mathrm{V}(\theta)\right).
\end{align}
Since $d_1-d_0>-1/2$, one has 
\begin{align}\label{ineq:momoent2}
\sup_{\theta\in\Theta_\delta}\mathbb{E}\left[\epsilon_t^2(\theta)\right]&=\sup_{\theta\in\Theta_\delta}\sum_{i=0}^\infty\sum_{j=0}^\infty\lambda_i(\theta)\lambda_j(\theta)\mathbb{E}\left[\epsilon_{t-i}\epsilon_{t-j}\right]=\sigma_{\epsilon}^2\sup_{\theta\in\Theta_\delta}\sum_{i=0}^\infty\lambda_i^2(\theta)\nonumber\\
&\leq \sigma_{\epsilon}^2+K\sigma_{\epsilon}^2\sum_{i=1}^\infty i^{-2-2(d_1-d_0)}<\infty.
\end{align}
We can therefore use the same arguments as those of \cite{fz98} to prove under {\bf (A1)} and {\bf (A2)} that for any $\overline{\theta}\in\Theta_\delta\setminus\{\theta_0\}$, there exists a neighbourhood $\mathrm{N}(\overline{\theta})$ of $\overline{\theta}$ such that $\mathrm{N}(\overline{\theta})\subset\Theta_\delta$ and
\begin{equation}\label{eq:fz98}
\liminf_{n\rightarrow\infty}\inf_{\theta\in \mathrm{N}(\overline{\theta})}O_n(\theta)>\sigma^2_{\epsilon}, \quad \text{ a.s}.
\end{equation}
Note that $\mathbb{E}[O_n(\theta_0)]=\sigma_\epsilon^2$. It follows from \eqref{eq:fz98} that 
$$\inf_{\theta\in\overline{S}_\beta}\mathrm{V}(\theta)>K$$
for some positive constant $K$. \\
In view of \eqref{std-conv}, it is then sufficient to show that the random variable $\sup_{\theta\in\Theta_\delta}|Q_n(\theta)-\mathbb{E}[O_n(\theta)]|$ converges in probability to zero to prove Theorem \ref{convergence}. We use Corollary 2.2 of \cite{Whitney-1991} to obtain this uniform convergence in probability. The set $\Theta_\delta$ is compact and $(\mathbb{E}[O_n(\theta)])_{n\geq 1}$ is a uniformly convergent sequence of continuous functions on a compact set so it is equicontinuous. We consequently need to show the following two points to complete the proof of the theorem:\\

\begin{itemize}
\item[$\bullet$] For each $\theta\in\Theta_\delta$, $Q_n(\theta)-\mathbb{E}[O_n(\theta)]=\mathrm{o}_{\mathbb{P}}(1).$\\

\item[$\bullet$] There is $B_n$ and $h: [0,\infty)\rightarrow [0,\infty)$ with $h(0)=0$ and $h$ continuous at zero such that $B_n=\mathrm{O}_{\mathbb{P}}(1)$ and for all $\theta_1, \theta_2\in \Theta_\delta$, $|Q_n(\theta_1)-Q_n(\theta_2)|\leq B_nh(\|\theta_1-\theta_2\|)$.
\end{itemize}

\subsubsection{Pointwise convergence in probability of $Q_n(\theta)-\mathbb{E}[O_n(\theta)]$ to zero}

For any $\theta\in\Theta_\delta$, Remark \ref{rmq:important}, the Cauchy-Schwarz inequality and \eqref{ineq:momoent2} yield that
\begin{align}\label{firstpart-conv}
\mathbb{E}\left|Q_n(\theta)-O_n(\theta)\right|&\leq \frac{1}{n}\sum_{t=1}^n\mathbb{E}\left|\tilde{\epsilon}_t^2(\theta)-\epsilon_t^2(\theta)\right|\nonumber\\
&\leq \frac{1}{n}\sum_{t=1}^n\mathbb{E}\left(\epsilon_t(\theta)-\tilde{\epsilon}_t(\theta)\right)^2+\frac{2}{n}\sum_{t=1}^n\mathbb{E}\left|\epsilon_t(\theta)-\tilde{\epsilon}_t(\theta)\right|\left|\epsilon_t(\theta)\right|\nonumber\\
&\leq \frac{\sigma_\epsilon^2}{n}\sum_{t=1}^n\left\|\lambda_i(\theta)-\lambda_i^t(\theta)\right\|_{\ell^2}^2+\frac{2\sigma_\epsilon}{n}\sum_{t=1}^n\left\|\lambda_i(\theta)-\lambda_i^t(\theta)\right\|_{\ell^2}\sqrt{\mathbb{E}\left[\epsilon_t^2(\theta)\right]}\nonumber\\
&\leq \frac{\sigma_\epsilon^2}{n}\sum_{t=1}^nt^{-1-2(d_1-d_0)}+\frac{2K\sigma_\epsilon}{n}\sum_{t=1}^nt^{-1/2-(d_1-d_0)}\xrightarrow[n\to \infty]{} 0.
\end{align}

We use the ergodic theorem and the continuous mapping theorem to obtain
\begin{align}\label{ergo}
\left|O_n(\theta)-\mathbb{E}\left[O_n(\theta)\right]\right|\xrightarrow[n\to \infty]{\mathrm{a.s.}} 0.
\end{align}
Combining the results in \eqref{firstpart-conv} and \eqref{ergo}, we deduce that for all $\theta\in\Theta_\delta$,
\begin{align*}
Q_n(\theta)-\mathbb{E}\left[O_n(\theta)\right]\xrightarrow[n\to \infty]{\mathrm{\mathbb{P}}} 0.
\end{align*}

\subsubsection{Tightness characterization}

Observe that for any $\theta_1, \theta_2\in\Theta_\delta$, there exists $\theta^\star$ between $\theta_1$ and $\theta_2$ such that
\begin{align*}
\left|Q_n(\theta_1)-Q_n(\theta_2)\right|\leq \left(\frac{1}{n}\sum_{t=1}^n\left\|\frac{\partial\tilde{\epsilon}_t^2(\theta^\star)}{\partial\theta}\right\|\right)\left\|\theta_1-\theta_2\right\|.
\end{align*}
As before, the uncorrelatedness of the innovation process $(\epsilon_t)_{t\in\mathbb{Z}}$ and Remark \ref{rmq:important} entail that
\begin{align*}
\mathbb{E}\left[\frac{1}{n}\sum_{t=1}^n\left\|\frac{\partial\tilde{\epsilon}_t^2(\theta^\star)}{\partial\theta}\right\|\right]&\leq \mathbb{E}\left[\frac{2}{n}\sum_{t=1}^n\sum_{i=1}^{p+q+1}\left|\tilde{\epsilon}_t(\theta^\star)\frac{\partial\tilde{\epsilon}_t(\theta^\star)}{\partial\theta_i}\right|\right]<\infty.
\end{align*}
Thanks to Markov's inequality, we conclude that
\begin{align*}
\frac{1}{n}\sum_{t=1}^n\left\|\frac{\partial\tilde{\epsilon}_t^2(\theta^\star)}{\partial\theta}\right\|=\mathrm{O}_{\mathbb{P}}(1).
\end{align*}
The proof of Theorem \ref{convergence} is then complete.
\subsection{Proof of Theorem \ref{n.asymptotique}}\label{sectna}
By a Taylor expansion of the function $\partial Q_n(\cdot)/ \partial\theta $ around $\theta_0$ and under {\bf (A3)}, we have
\begin{equation}\label{premierTaylor}
0=\sqrt{n}\frac{\partial}{\partial\theta}Q_n(\hat{\theta}_n)=\sqrt{n}\frac{\partial}{\partial\theta}Q_n(\theta_0)+\left[\frac{\partial^2}{\partial\theta_i\partial\theta_j}Q_n\left( \theta^*_{n,i,j}\right) \right] \sqrt{n}\left( \hat{\theta}_n-\theta_0\right),
\end{equation}
where the $\theta^*_{n,i,j}$'s are between $\hat{\theta}_n$ and $\theta_0$. 
The equation $\eqref{premierTaylor}$ can be rewritten in the form:
\begin{align}\label{Taylormodifie}
&\sqrt{n}\frac{\partial}{\partial\theta}O_n(\theta_0)-\sqrt{n}\frac{\partial}{\partial\theta}Q_n(\theta_0)= \sqrt{n}\frac{\partial}{\partial\theta}O_n(\theta_0)+\left[\frac{\partial^2}{\partial\theta_i\partial\theta_j}Q_n\left( \theta^*_{n,i,j}\right) \right] \sqrt{n}\left( \hat{\theta}_n-\theta_0\right).
\end{align}
Under the assumptions of Theorem~\ref{n.asymptotique}, it will be shown respectively in Lemma~\ref{ConProQversO} and Lemma~\ref{Conv_almost_sure_J_n_vers_J} that
\begin{equation*}
\sqrt{n}\frac{\partial}{\partial\theta}O_n(\theta_0)-\sqrt{n}\frac{\partial}{\partial\theta}Q_n(\theta_0)=\mathrm{o}_{\mathbb{P}}(1),
\end{equation*}
and
\begin{equation*}
\left[\frac{\partial^2}{\partial\theta_i\partial\theta_j}Q_n\left( \theta^*_{n,i,j}\right) \right]- J(\theta_0)=\mathrm{o}_{\mathbb{P}}(1).
\end{equation*}
As a consequence, the asymptotic normality of $\sqrt{n}( \hat{\theta}_n-\theta_0)$ will be a consequence of the one of $\sqrt{n}\partial/\partial\theta O_n(\theta_0)$.
\begin{lemme}\label{ConProQversO}
 For  $1\le k\le p+q+1$,  under the assumptions of Theorem~\ref{n.asymptotique}, we have
 \begin{equation}\label{der.o_p}
\sqrt{n}\left(\frac{\partial}{\partial\theta_k}Q_n(\theta_0)-\frac{\partial}{\partial\theta_k}O_n(\theta_0)\right)=\mathrm{o}_{\mathbb{P}}(1).
\end{equation}
\end{lemme}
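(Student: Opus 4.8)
The plan is to rewrite $\sqrt n$ times the difference as an average of products of residuals with their $\theta_k$-derivatives, to telescope it into three normalised sums, and to show that each of them tends to $0$ in $\mathbb L^2$, hence in probability. Writing $e_{t,k}:=\partial\epsilon_t(\theta_0)/\partial\theta_k$ and $\tilde e_{t,k}:=\partial\tilde\epsilon_t(\theta_0)/\partial\theta_k$, and using $\epsilon_t(\theta_0)=\epsilon_t$ a.s., differentiation of $Q_n$ and $O_n$ (see \eqref{theta_chap} and \eqref{On}) gives
\begin{equation*}
\sqrt n\Big(\frac{\partial}{\partial\theta_k}Q_n(\theta_0)-\frac{\partial}{\partial\theta_k}O_n(\theta_0)\Big)=\frac{2}{\sqrt n}\sum_{t=1}^n\big(\tilde\epsilon_t(\theta_0)\tilde e_{t,k}-\epsilon_t e_{t,k}\big)=:S_n+T_n^{(1)}+T_n^{(2)},
\end{equation*}
where $S_n=\tfrac{2}{\sqrt n}\sum_{t=1}^n\epsilon_t(\tilde e_{t,k}-e_{t,k})$, $T_n^{(1)}=\tfrac{2}{\sqrt n}\sum_{t=1}^n(\tilde\epsilon_t(\theta_0)-\epsilon_t)e_{t,k}$ and $T_n^{(2)}=\tfrac{2}{\sqrt n}\sum_{t=1}^n(\tilde\epsilon_t(\theta_0)-\epsilon_t)(\tilde e_{t,k}-e_{t,k})$. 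The key structural observation, coming from \eqref{epsil-th}--\eqref{deriveesecepsiltilde} and \eqref{Coef-lambda}, is that the coefficients of $\tilde\epsilon_t(\theta_0)-\epsilon_t=\sum_{i\ge1}(\lambda_i^t(\theta_0)-\lambda_i(\theta_0))\epsilon_{t-i}$ and of $\tilde e_{t,k}-e_{t,k}=\sum_{i\ge1}(\overset{\textbf{.}}{\lambda}_{i,k}^t(\theta_0)-\overset{\textbf{.}}{\lambda}_{i,k}(\theta_0))\epsilon_{t-i}$ vanish for $i<t$ (because $\gamma^t$ coincides with $\gamma$ up to order $t-1$, see \eqref{epsilon-tilde--}); hence both truncation errors are measurable with respect to the pre-sample $\sigma$-field $\sigma(\epsilon_u,\,u\le0)$ and, in particular, orthogonal to $\epsilon_t$. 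This already yields $\mathbb E[S_n]=0$ and, more generally, kills all the pairings matching a pre-sample index with a sample index in the second-moment expansions below.

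To bound each of $S_n,T_n^{(1)},T_n^{(2)}$ I would compute $\mathbb E[\,\cdot^{\,2}\,]=n^{-1}\sum_{s,t}\mathbb E[\cdots]$ and insert the identity $\mathbb E[\epsilon_a\epsilon_b\epsilon_c\epsilon_d]=\mathrm{cum}(\epsilon_a,\epsilon_b,\epsilon_c,\epsilon_d)+\mathbb E[\epsilon_a\epsilon_b]\mathbb E[\epsilon_c\epsilon_d]+\mathbb E[\epsilon_a\epsilon_c]\mathbb E[\epsilon_b\epsilon_d]+\mathbb E[\epsilon_a\epsilon_d]\mathbb E[\epsilon_b\epsilon_c]$. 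The surviving pairing contributions are controlled, via Cauchy--Schwarz, by products of $\|\lambda(\theta_0)-\lambda^t(\theta_0)\|_{\ell^2}$ and $\|\overset{\textbf{.}}{\lambda}_k(\theta_0)-\overset{\textbf{.}}{\lambda}_k^t(\theta_0)\|_{\ell^2}$ — both handled by Lemma \ref{lemme_sur_les_ecarts_des_coef} — together with $\|\overset{\textbf{.}}{\lambda}_k^t(\theta_0)\|_{\ell^2}\le K$ from Lemma \ref{miss}; for $T_n^{(1)}$, where one factor is the stationary process $e_{t,k}$, one also uses that the lagged autocovariances of $(e_{t,k})_t$ are summable, which follows from $|\overset{\textbf{.}}{\lambda}_{i,k}(\theta_0)|\le K/i$ (Lemma \ref{miss2}). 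The fourth-cumulant contribution is bounded using Assumption \textbf{(A4')} (available with $\tau=4$ under the hypotheses of Theorem \ref{n.asymptotique}) together with the uniform $\ell^\infty$-estimates of Remark \ref{impo}, and, for the doubly-small sum $T_n^{(2)}$, also with the elementary bound $\|\lambda(\theta_0)-\lambda^t(\theta_0)\|_{\ell^1}=\mathrm O(t^{\max(-d_0,0)})$. In every case, after dividing by $n$, one is left with a bound of order $n^{-1}$ or $n^{2\max(-d_0,0)-1}$, up to logarithmic factors, which tends to $0$ since $d_0\in\,]-1/2,1/2[$ forces $\max(-d_0,0)<1/2$. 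Concluding then follows from Markov's inequality.

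The main obstacle is the borderline rate of decay. In the weak ARMA case of \cite{fz98} the analogue of this lemma is almost immediate because the truncation errors decay geometrically; here the fractional operator $(1-L)^{d_0}$ leaves them decaying only polynomially, of order $t^{-1/2}$ in $\mathbb L^2$ in the worst case, which is exactly the threshold at which a $\sqrt n$-renormalised sum of $n$ such terms fails to be $\mathrm o_{\mathbb P}(1)$ under a crude term-by-term Cauchy--Schwarz-plus-Markov estimate (that estimate only gives $\mathrm O_{\mathbb P}(1)$). One is therefore forced to pass to second moments of the full sums and to recover the missing power of $n$ from the orthogonality coming from pre-sample measurability and from the cumulant summability \textbf{(A4')}. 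The bookkeeping is most delicate when $d_0<0$, since then $\gamma(\theta_0)\notin\ell^1$ and one must exploit the sharper in-index decay of the convolution $\lambda^t(\theta_0)-\lambda(\theta_0)=(\gamma^t(\theta_0)-\gamma(\theta_0))\ast\eta(\theta_0)$ rather than just its $\ell^2$ norm.
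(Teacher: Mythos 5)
Your decomposition $\tilde\epsilon_t\tilde e_{t,k}-\epsilon_te_{t,k}=\epsilon_t(\tilde e_{t,k}-e_{t,k})+(\tilde\epsilon_t-\epsilon_t)e_{t,k}+(\tilde\epsilon_t-\epsilon_t)(\tilde e_{t,k}-e_{t,k})$ is the natural one, and your treatment of $S_n$ is exactly what the paper does for its term $\Delta_{n,4}^k(\theta_0)$: there only two of the four noise factors carry coefficient convolutions, so the lag sum and the two coefficient indices are exactly absorbed by the summability of the fourth cumulants in \textbf{(A4')}, with the remaining smallness supplied by $\sup_j|\overset{\textbf{.}}{\lambda}_{j,k}(\theta_0)-\overset{\textbf{.}}{\lambda}_{j,k}^t(\theta_0)|=\mathrm O(t^{-1-\min(d_0,0)})$. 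The paper, however, does \emph{not} use your decomposition for the remaining terms: it telescopes through an auxiliary parameter $\theta$ whose memory parameter satisfies $d>\max(d_0,0)$, so that every truncation error it has to sum over $t$ is evaluated at $\theta$ (or at an intermediate $\theta^\star$) and therefore decays like $t^{-1/2-(d-\max(d_0,0))}$ with a strictly positive exponent gain; a first-moment bound and the fractional Ces\`aro lemma then suffice. This is not a cosmetic difference: it is what prevents the appearance of your term $T_n^{(2)}$.

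The genuine gap is $T_n^{(2)}=\tfrac{2}{\sqrt n}\sum_t(\tilde\epsilon_t(\theta_0)-\epsilon_t)(\tilde e_{t,k}-e_{t,k})$. Both factors are linear forms in the pre-sample noise $(\epsilon_u)_{u\le0}$, so the orthogonality that rescues $S_n$ gives you nothing here, and there is no extra power of $n$ to recover from second moments: the product-of-expectations pairing in $\mathbb E[(T_n^{(2)})^2]$ is exactly $(\mathbb E[T_n^{(2)}])^2$. All the estimates you invoke (Lemma \ref{lemme_sur_les_ecarts_des_coef} with $r=2$, or your $\ell^1\times\ell^\infty$ variant) give $|\mathbb E[(\tilde\epsilon_t-\epsilon_t)(\tilde e_{t,k}-e_{t,k})]|\le Kt^{-1-2\min(d_0,0)+\zeta}$, hence a bound of order $n^{-1/2-2\min(d_0,0)+\zeta}$ for $\mathbb E|T_n^{(2)}|$, which diverges for every $d_0<-1/4$; note that $n^{2\max(-d_0,0)-1}$, the rate you announce, is not what these estimates deliver for this term. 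Unless you exhibit cancellation inside $\sum_{i\ge t}(\lambda_i^t-\lambda_i)(\theta_0)(\overset{\textbf{.}}{\lambda}_{i,k}^t-\overset{\textbf{.}}{\lambda}_{i,k})(\theta_0)$, the argument does not close on the range $d_0\in\,]-1/2,-1/4[$. A secondary but real problem of the same nature affects the cumulant contribution to $\mathbb E[(T_n^{(1)})^2]$: there all four noise factors carry coefficient sums, so five summations face only three cumulant arguments, none of the sequences $\overset{\textbf{.}}{\lambda}_{\cdot,k}(\theta_0)$ is in $\ell^1$ for $k=p+q+1$ (where $\overset{\textbf{.}}{\lambda}_{i,p+q+1}(\theta_0)=-1/i$, so the autocovariances of $e_{t,p+q+1}$ behave like $\log|h|/|h|$ and are \emph{not} summable, contrary to what you assert), and $\|\lambda(\theta_0)-\lambda^t(\theta_0)\|_{\ell^1}$ grows like $t^{-\min(d_0,0)}$. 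You would need a genuinely finer argument exploiting the localization of the cumulants near the diagonal; the paper's auxiliary-$\theta$ device is precisely the way it avoids having to do so.
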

\begin{proof}

Throughout this proof, $\theta=(\theta_1,...,\theta_{p+q},d)'\in\Theta_\delta$ is such that $\max(d_0,0)<d\leq d_2$ where $d_2$ is the upper bound of the support of the long-range dependence parameter $d_0$.

The proof is quite long so we divide it in several steps.
\paragraph{$\diamond$ Step 1: preliminaries} \ \\

For $1\le k \le p+q+1$ we have
\begin{align}\label{decompositionecartQO}
\sqrt{n}\frac{\partial}{\partial\theta_k}Q_n(\theta_0)&=\frac{2}{\sqrt{n}}\sum_{t=1}^n\tilde{\epsilon}_t(\theta_0)\frac{\partial}{\partial\theta_k}\tilde{\epsilon}_t(\theta_0)\nonumber \\
& =\frac{2}{\sqrt{n}}\sum_{t=1}^n\left(\tilde{\epsilon}_t(\theta_0)-\tilde{\epsilon}_t(\theta)\right)\frac{\partial}{\partial\theta_k}\tilde{\epsilon}_t(\theta_0)+\frac{2}{\sqrt{n}}\sum_{t=1}^n\left(\tilde{\epsilon}_t(\theta)-\epsilon_t(\theta)\right)\frac{\partial}{\partial\theta_k}\tilde{\epsilon}_t(\theta_0)\nonumber\\
&\qquad +\frac{2}{\sqrt{n}}\sum_{t=1}^n\left(\epsilon_t(\theta)-\epsilon_t(\theta_0)\right)\frac{\partial}{\partial\theta_k}\tilde{\epsilon}_t(\theta_0)
+\frac{2}{\sqrt{n}}\sum_{t=1}^n\epsilon_t(\theta_0)\left(\frac{\partial}{\partial\theta_k}\tilde{\epsilon}_t(\theta_0)-\frac{\partial}{\partial\theta_k}\epsilon_t(\theta_0)\right)\nonumber\\
& \qquad +\frac{2}{\sqrt{n}}\sum_{t=1}^n\epsilon_t(\theta_0)\frac{\partial}{\partial\theta_k}\epsilon_t(\theta_0)\nonumber\\
&=\Delta_{n,1}^k(\theta)+\Delta_{n,2}^k(\theta)+\Delta_{n,3}^k(\theta)+\Delta_{n,4}^k(\theta_0)+\sqrt{n}\frac{\partial}{\partial\theta_k}O_n(\theta_0),
\end{align}
where
\begin{align*}
 \Delta_{n,1}^k(\theta)&=\frac{2}{\sqrt{n}}\sum_{t=1}^n\left(\tilde{\epsilon}_t(\theta_0)-\tilde{\epsilon}_t(\theta)\right)\frac{\partial}{\partial\theta_k}\tilde{\epsilon}_t(\theta_0),\nonumber\\
 \Delta_{n,2}^k(\theta)&=\frac{2}{\sqrt{n}}\sum_{t=1}^n\left(\tilde{\epsilon}_t(\theta)-\epsilon_t(\theta)\right)\frac{\partial}{\partial\theta_k}\tilde{\epsilon}_t(\theta_0),\nonumber\\
 \Delta_{n,3}^k(\theta)&=\frac{2}{\sqrt{n}}\sum_{t=1}^n\left(\epsilon_t(\theta)-\epsilon_t(\theta_0)\right)\frac{\partial}{\partial\theta_k}\tilde{\epsilon}_t(\theta_0)\nonumber 
\end{align*}
and
\begin{align*}
\Delta_{n,4}^k(\theta_0)&=\frac{2}{\sqrt{n}}\sum_{t=1}^n\epsilon_t(\theta_0)\left(\frac{\partial}{\partial\theta_k}\tilde{\epsilon}_t(\theta_0)-\frac{\partial}{\partial\theta_k}\epsilon_t(\theta_0)\right).
\end{align*}
Using \eqref{deriveesecepsil} and \eqref{deriveesecepsiltilde}, the fourth term $\Delta_{n,4}^k(\theta_0)$ can be rewritten in the form:
\begin{equation}\label{delta1explicite}
\Delta_{n,4}^k(\theta_0)=\frac{2}{\sqrt{n}}\sum_{t=1}^n\sum_{j=1}^{\infty}\left\lbrace \overset{\textbf{.}}{\lambda}^t_{j,k}\left( \theta_0\right)-\overset{\textbf{.}}{\lambda}_{j,k}\left( \theta_0\right)\right\rbrace  \epsilon_t\epsilon_{t-j}.
\end{equation}
Therefore, if we prove that the three sequences of random variables $( \Delta_{n,1}^k(\theta)+\Delta_{n,3}^k(\theta))_{n\geq 1}$, $( \Delta_{n,2}^k(\theta))_{n\geq 1}$ and $( \Delta_{n,4}^k(\theta_0))_{n\geq 1}$ converge in probability to $0$, then  \eqref{der.o_p} will be true.
\paragraph{$\diamond$ Step 2: convergence in probability of $( \Delta_{n,4}^k(\theta_0))_{n\geq 1}$ to $0$} \ \\

For simplicity, we denote in the sequel by $\overset{\textbf{.}}{\lambda}_{j,k}$ the coefficient $\overset{\textbf{.}}{\lambda}_{j,k}(\theta_0)$ and by $\overset{\textbf{.}}{\lambda}_{j,k}^t$ the coefficient $\overset{\textbf{.}}{\lambda}_{j,k}^t(\theta_0)$.
Let $\varrho(\cdot,\cdot)$ be the function defined for $1\le t,s\le n$ by
$$\varrho(t,s)=\sum_{j_1=1}^{\infty}\sum_{j_2=1}^{\infty}\left\lbrace \overset{\textbf{.}}{\lambda}_{j_1,k}-\overset{\textbf{.}}{\lambda}_{j_1,k}^t\right\rbrace\left\lbrace \overset{\textbf{.}}{\lambda}_{j_2,k}-\overset{\textbf{.}}{\lambda}_{j_2,k}^s\right\rbrace\mathbb{E}\left[ \epsilon_t\epsilon_{t-j_1} \epsilon_s\epsilon_{s-j_2}\right]. $$
For all $\beta>0$, using the symmetry of the function $\varrho(t,s)$, we obtain that
\begin{align*}
\mathbb{P}\left( \left| \Delta_{n,4}^k(\theta_0)\right|\geq\beta\right)&\leq \frac{4}{n\beta^2}\mathbb{E}\left[\left(  \sum_{t=1}^n\sum_{j=1}^{\infty}\left\lbrace \overset{\textbf{.}}{\lambda}_{j,k}-\overset{\textbf{.}}{\lambda}_{j,k}^t\right\rbrace  \epsilon_t\epsilon_{t-j}\right) ^2\right] \\
&\leq \frac{4}{n\beta^2}\sum_{t=1}^n\sum_{s=1}^n\sum_{j_1=1}^{\infty}\sum_{j_2=1}^{\infty}\left\lbrace \overset{\textbf{.}}{\lambda}_{j_1,k}-\overset{\textbf{.}}{\lambda}_{j_1,k}^t\right\rbrace\left\lbrace \overset{\textbf{.}}{\lambda}_{j_2,k}-\overset{\textbf{.}}{\lambda}_{j_2,k}^s\right\rbrace\mathbb{E}\left[ \epsilon_t\epsilon_{t-j_1} \epsilon_s\epsilon_{s-j_2}\right] \\
&\leq \frac{8}{n\beta^2}\sum_{t=1}^n\sum_{s=1}^t\sum_{j_1=1}^{\infty}\sum_{j_2=1}^{\infty}\left\lbrace \overset{\textbf{.}}{\lambda}_{j_1,k}-\overset{\textbf{.}}{\lambda}_{j_1,k}^t\right\rbrace\left\lbrace \overset{\textbf{.}}{\lambda}_{j_2,k}-\overset{\textbf{.}}{\lambda}_{j_2,k}^s\right\rbrace\mathbb{E}\left[ \epsilon_t\epsilon_{t-j_1} \epsilon_s\epsilon_{s-j_2}\right].
\end{align*}
By the stationarity of $(\epsilon_t)_{t\in\mathbb{Z}}$ which is assumed in {\bf (A2)},
we have
\begin{align*}
 \mathbb{E}\left[ \epsilon_t\epsilon_{t-j_1} \epsilon_s\epsilon_{s-j_2}\right]  =
 \mathrm{cum}\left( \epsilon_0,\epsilon_{-j_1}, \epsilon_{s-t},\epsilon_{s-t-j_2}\right)
&+\mathbb{E}\left[\epsilon_0\epsilon_{-j_1}\right]\mathbb{E}\left[\epsilon_{s-t}\epsilon_{s-t-j_2}\right]+\mathbb{E}\left[\epsilon_0\epsilon_{s-t}\right]\mathbb{E}\left[\epsilon_{-j_1}\epsilon_{s-t-j_2}\right] \\
& +\mathbb{E}\left[\epsilon_0\epsilon_{s-t-j_2}\right]\mathbb{E}\left[\epsilon_{-j_1}\epsilon_{s-t}\right].
\end{align*}
Since the noise is not correlated, we deduce that  $\mathbb{E}\left[\epsilon_0\epsilon_{-j_1}\right]=0$ and $\mathbb{E}\left[\epsilon_0\epsilon_{s-t-j_2}\right]=0$ for $1\le j_1, j_2$ and $s\leq t$. Consequently we obtain
\begin{align}\label{delta_n4_prob}
\mathbb{P}\left( \left| \Delta_{n,4}^k(\theta_0)\right|\geq\beta\right)&\leq\frac{8}{n\beta^2}\sum_{t=1}^n\sum_{s=1}^t\sum_{j_1=1}^{\infty}\sum_{j_2=1}^{\infty}\sup_{j_1\ge 1}\left| \overset{\textbf{.}}{\lambda}_{j_1,k}-\overset{\textbf{.}}{\lambda}_{j_1,k}^t\right|\left|\overset{\textbf{.}}{\lambda}_{j_2,k}-\overset{\textbf{.}}{\lambda}_{j_2,k}^s\right| \left|\mathrm{cum}\left( \epsilon_0,\epsilon_{-j_1}, \epsilon_{s-t},\epsilon_{s-t-j_2}\right)\right|\nonumber\\
&+\frac{8}{n\beta^2}\sum_{t=1}^n\sum_{s=1}^t\sum_{j_1=1}^{\infty}\sum_{j_2=1}^{\infty}\left| \overset{\textbf{.}}{\lambda}_{j_1,k}-\overset{\textbf{.}}{\lambda}_{j_1,k}^t\right|\left|\overset{\textbf{.}}{\lambda}_{j_2,k}-\overset{\textbf{.}}{\lambda}_{j_2,k}^s\right| \left|\mathbb{E}\left[\epsilon_0\epsilon_{s-t}\right]\mathbb{E}\left[\epsilon_{-j_1}\epsilon_{s-t-j_2}\right]\right|.
\end{align}
If
\begin{equation}\label{cesaro_lemma}
\sum_{s=1}^t\sum_{j_1=1}^{\infty}\sum_{j_2=1}^{\infty}\sup_{j_1\ge 1}\left| \overset{\textbf{.}}{\lambda}_{j_1,k}-\overset{\textbf{.}}{\lambda}_{j_1,k}^t\right|\left|\overset{\textbf{.}}{\lambda}_{j_2,k}-\overset{\textbf{.}}{\lambda}_{j_2,k}^s\right| \left|\mathrm{cum}\left( \epsilon_0,\epsilon_{-j_1}, \epsilon_{s-t},\epsilon_{s-t-j_2}\right)\right|\xrightarrow[t\to\infty]{} 0,
\end{equation}
Ces\`{a}ro's Lemma implies that the first term in the right hand side of  \eqref{delta_n4_prob} tends to $0$.  Thanks to  Lemma \ref{lemme_sur_les_ecarts_des_coef} applied with $r=\infty$ (or see Remark \ref{impo}) and  Assumption {\bf (A4')} with $\tau=4$, we obtain that
\begin{align*}
\sum_{s=1}^t\sum_{j_1=1}^{\infty}\sum_{j_2=1}^{\infty}\sup_{j_1\ge 1}\left| \overset{\textbf{.}}{\lambda}_{j_1,k}-\overset{\textbf{.}}{\lambda}_{j_1,k}^t\right|&\left|\overset{\textbf{.}}{\lambda}_{j_2,k}-\overset{\textbf{.}}{\lambda}_{j_2,k}^s\right| \left|\mathrm{cum}\left( \epsilon_0,\epsilon_{-j_1}, \epsilon_{s-t},\epsilon_{s-t-j_2}\right)\right|\\
&\leq \frac{K}{t^{1+\min(d_0,0)}}\sum_{s=1}^t\sum_{j_1=1}^{\infty}\sum_{j_2=1}^{\infty} \left|\mathrm{cum}\left( \epsilon_0,\epsilon_{-j_1}, \epsilon_{s-t},\epsilon_{s-t-j_2}\right)\right|\\
&\leq \frac{K}{t^{1+\min(d_0,0)}}\sum_{s=-\infty}^{\infty}\sum_{j_1=-\infty}^{\infty}\sum_{j_2=-\infty}^{\infty} \left|\mathrm{cum}\left( \epsilon_0,\epsilon_{s}, \epsilon_{j_1},\epsilon_{j_2}\right)\right|\xrightarrow[t\to\infty]{} 0 \ ,
\end{align*}
hence $\eqref{cesaro_lemma}$ holds true.  Concerning the second term of right hand side of the inequality $\eqref{delta_n4_prob}$, we have
\begin{align*}
\frac{8}{n\beta^2}\sum_{t=1}^n\sum_{s=1}^t\sum_{j_1=1}^{\infty}\sum_{j_2=1}^{\infty} &\left| \overset{\textbf{.}}{\lambda}_{j_1,k}-\overset{\textbf{.}}{\lambda}_{j_1,k}^t\right|\left|\overset{\textbf{.}}{\lambda}_{j_2,k}-\overset{\textbf{.}}{\lambda}_{j_2,k}^s\right|  \left|\mathbb{E}\left[\epsilon_0\epsilon_{s-t}\right]\mathbb{E}\left[\epsilon_{-j_1}\epsilon_{s-t-j_2}\right]\right|\\
&\hspace{2cm}= \frac{8\sigma_{\epsilon}^2}{n\beta^2}\sum_{t=1}^n\sum_{j_1=1}^{\infty}\sum_{j_2=1}^{\infty}\left| \overset{\textbf{.}}{\lambda}_{j_1,k}-\overset{\textbf{.}}{\lambda}_{j_1,k}^t\right|\left|\overset{\textbf{.}}{\lambda}_{j_2,k}-\overset{\textbf{.}}{\lambda}_{j_2,k}^t\right| \left|\mathbb{E}\left[\epsilon_{-j_1}\epsilon_{-j_2}\right]\right|\\
&\hspace{2cm}= \frac{8\sigma_{\epsilon}^4}{n\beta^2}\sum_{t=1}^n\sum_{j_1=1}^{\infty}\left| \overset{\textbf{.}}{\lambda}_{j_1,k}-\overset{\textbf{.}}{\lambda}_{j_1,k}^t\right|^2 \\
&\hspace{2cm}=\frac{8\sigma_{\epsilon}^4}{n\beta^2}\sum_{t=1}^n\left\| \overset{\textbf{.}}{\lambda}_{k}-\overset{\textbf{.}}{\lambda}_{k}^t\right\|_{\ell^2}^2\\
&\hspace{2cm}\leq \frac{K}{\beta^2} \frac{1}n\sum_{t=1}^n\frac{1}{t^{1+2\min(d_0,0)}}\xrightarrow[n\to\infty]{}0
\end{align*}
where we have used the fact that the noise is not correlated, Lemma \ref{lemme_sur_les_ecarts_des_coef} with $r=2$ and Ces\`{a}ro's Lemma. This ends Step 2.
\paragraph{$\diamond$ Step 3: $(\Delta_{n,2}^k(\theta))_{n\geq 1}$ converges in probability to $0$}  \ \\

For all $\beta>0$, we have
\begin{align*}
\mathbb{P}\left( \left| \Delta_{n,2}^k(\theta)\right|\geq\beta\right)&
\leq \frac{2}{\beta\sqrt{n}}\sum_{t=1}^n\left\|\tilde{\epsilon}_t(\theta)-\epsilon_t(\theta)\right\|_{\mathbb{L}^2}\left\|\frac{\partial}{\partial\theta_k}\tilde{\epsilon}_t(\theta_0)\right\|_{\mathbb{L}^2}.
\end{align*}
First, using Lemma \ref{miss},  we have
\begin{align}\nonumber
\left\|\frac{\partial}{\partial\theta_k}\tilde{\epsilon}_t(\theta_0)\right\|^2_{\mathbb{L}^2}& =\mathbb{E}\left[ \left( \sum_{i=1}^{\infty}\overset{\textbf{.}}{\lambda}_{i,k}^t\left(\theta_0\right) \epsilon_{t-i}\right)^2 \right] \\ \nonumber
&=\sum_{i=1}^{\infty}\sum_{j=1}^{\infty}\overset{\textbf{.}}{\lambda}_{i,k}^t\left(\theta_0\right)\overset{\textbf{.}}{\lambda}_{j,k}^t\left(\theta_0\right) \mathbb{E}\left[\epsilon_{t-i}\epsilon_{t-j} \right] \\
& =\sigma_{\epsilon}^2\sum_{i=1}^{\infty}\left\lbrace \overset{\textbf{.}}{\lambda}_{i,k}^t\left(\theta_0\right)\right\rbrace ^2 \nonumber \\
&\le K . \label{dev-eps-tildeL2}
\end{align}

In view of \eqref{epsil-th}, \eqref{epsiltilde-th}  and \eqref{dev-eps-tildeL2}, we may write
\begin{align*}
\mathbb{P}\left( \left| \Delta_{n,2}^k(\theta)\right|\geq\beta\right)&\leq \frac{K}{\beta\sqrt{n}}\sum_{t=1}^n\left (\mathbb{E}\left[\left( \tilde{\epsilon}_t(\theta)-\epsilon_t(\theta)\right)^2\right] \right )^{1/2}\\
&\leq \frac{K}{\beta\sqrt{n}}\sum_{t=1}^n\left (\sum_{i\geq 0}\sum_{j\geq 0}\left(\lambda^t_i(\theta)-\lambda_i(\theta)\right)\left(\lambda^t_j(\theta)-\lambda_j(\theta)\right)\mathbb{E}\left[\epsilon_{t-i}\epsilon_{t-j}\right] \right )^{1/2}\\
&\leq \frac{\sigma_{\epsilon}K}{\beta\sqrt{n}}\sum_{t=1}^n\left (\sum_{i\geq 0}\left(\lambda^t_i(\theta)-\lambda_i(\theta)\right)^2\right )^{1/2} \\
&\leq \frac{\sigma_{\epsilon}K}{\beta\sqrt{n}}\sum_{t=1}^n\left\|\lambda(\theta)-\lambda^t(\theta)\right\|_{\ell^2}. 
\end{align*}
We use Lemma \ref{lemme_sur_les_ecarts_des_coef}, the fact that $d>\max(d_0,0)$ and the fractional version of Ces\`{a}ro's Lemma\footnote{Recall that the fractional version of Ces\`{a}ro's Lemma states that for $(h_t)_t$ a sequence of positive reals, $\kappa>0$ and $c\ge 0$ we have
$$\lim_{t\rightarrow\infty}h_tt^{1-\kappa}=\left|\kappa\right|c\Rightarrow\lim_{n\rightarrow\infty}\frac{1}{n^{\kappa}}\sum_{t=0}^n h_t=c.$$
} to obtain
\begin{align*}
\mathbb{P}\left( \left| \Delta_{n,2}^k(\theta)\right|\geq\beta\right)\leq \frac{\sigma_{\epsilon}K}{\beta}\, \frac{1}{\sqrt{n}}\sum_{t=1}^n\frac{1}{t^{1/2+(d-\max(d_0,0))}}\xrightarrow[n\to\infty]{} 0.
\end{align*}
This proves the expected convergence in probability.

\paragraph{$\diamond$ Step 4: convergence in probability of $(\Delta_{n,1}^k(\theta)+\Delta_{n,3}^k(\theta))_{n\ge 1}$ to 0} \ \\


Note that, for all $n\ge 1$, we have
\begin{align*}
\Delta_{n,1}^k(\theta)+\Delta_{n,3}^k(\theta)
&=\frac{2}{\sqrt{n}}\sum_{t=1}^n\Big \{ \left( \epsilon_t(\theta)-\tilde{\epsilon}_t(\theta)\right)-\left( \epsilon_t(\theta_0)-\tilde{\epsilon}_t(\theta_0)\right)\Big \} \frac{\partial}{\partial\theta_k}\tilde{\epsilon}_t(\theta_0).
\end{align*}
A Taylor expansion of the function $(\epsilon_t-\tilde{\epsilon}_t)(\cdot)$ around $\theta_0$ gives
\begin{align}\label{popo} \Big | ( \epsilon_t(\theta)-\tilde{\epsilon}_t(\theta))-( \epsilon_t(\theta_0)-\tilde{\epsilon}_t(\theta_0))\Big | & \le \left \| \frac{\partial ( \epsilon_t-\tilde{\epsilon}_t)}{\partial\theta}(\theta^\star)\right \|_{\mathbb R^{p+q+1}}  \| \theta-\theta_0\|_{\mathbb R^{p+q+1}}
\ ,
\end{align}
where $\theta^\star$ is between $\theta_0$ and $\theta$. Following the same method as in the previous step we obtain
\begin{align*}
\mathbb E \Big |\left( \epsilon_t(\theta)-\tilde{\epsilon}_t(\theta)\right)-\left( \epsilon_t(\theta_0)-\tilde{\epsilon}_t(\theta_0)\right)\Big |^2
& \le K\|\theta-\theta_0\|^2_{\mathbb R^{p+q+1}}\sum_{k=1}^{p+q+1}\mathbb E \left [\left | \frac{\partial ( \epsilon_t-\tilde{\epsilon}_t)}{\partial\theta_k}(\theta^\star)  \right |^2\right ] \nonumber\\
&\le K\|\theta-\theta_0\|^2_{\mathbb R^{p+q+1}}\sum_{k=1}^{p+q+1}\sigma_{\epsilon}^2\left\|(\overset{\textbf{.}}{\lambda_k}-\overset{\textbf{.}}{\lambda_k}^t)(\theta^\star)\right\|_{\ell^2}^2\nonumber.
\end{align*}
As in \cite{hallin}, it can be shown using Stirling's approximation and the fact that $d^\star >d_0$ that 
\begin{align*}
\left\|(\overset{\textbf{.}}{\lambda_k}-\overset{\textbf{.}}{\lambda_k}^t)(\theta^\star)\right\|_{\ell^2}\leq K\frac{1}{t^{1/2+(d^\star-d_0)-\zeta}}
\end{align*}
for any small enough $\zeta>0$. We then deduce that 
\begin{align}\label{ineq:diff}
\Big \|\left( \epsilon_t(\theta)-\tilde{\epsilon}_t(\theta)\right)-\left( \epsilon_t(\theta_0)-\tilde{\epsilon}_t(\theta_0)\right)\Big \|_{\mathbb{L}^2}&\le K\|\theta-\theta_0\|_{\mathbb R^{p+q+1}}\frac{1}{t^{1/2+(d^\star-d_0)-\zeta}}.
\end{align}
The expected convergence in probability follows from \eqref{dev-eps-tildeL2}, \eqref{ineq:diff} and the fractional version of Ces\`{a}ro's Lemma.

\end{proof}

We show in the following lemma the existence and invertibility of $J(\theta_0)$.
\begin{lemme}\label{matrixJ} Under Assumptions of Theorem $\ref{n.asymptotique}$, the matrix

$$J(\theta_0)=\lim_{n\rightarrow\infty}\left[ \frac{\partial^2}{\partial\theta_i\partial\theta_j}O_n(\theta_0)\right] $$
exists almost surely and is invertible.
\end{lemme}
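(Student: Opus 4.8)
The plan is to first identify the almost sure limit by the ergodic theorem and then to establish positive definiteness. Writing, for $1\le i,j\le p+q+1$,
\[
\frac{\partial^2}{\partial\theta_i\partial\theta_j}O_n(\theta_0)=\frac{2}{n}\sum_{t=1}^n\left\{\frac{\partial\epsilon_t(\theta_0)}{\partial\theta_i}\frac{\partial\epsilon_t(\theta_0)}{\partial\theta_j}+\epsilon_t\,\frac{\partial^2\epsilon_t(\theta_0)}{\partial\theta_i\partial\theta_j}\right\},
\]
I note that by \eqref{deriveesecepsil} and \eqref{deriveesecepsil1} the summands are measurable functions of $(\epsilon_s)_{s\le t}$, so that $(\partial_i\epsilon_t(\theta_0)\,\partial_j\epsilon_t(\theta_0))_t$ and $(\epsilon_t\,\partial^2_{ij}\epsilon_t(\theta_0))_t$ (with $\partial^2_{ij}$ shorthand for $\partial^2/\partial\theta_i\partial\theta_j$) are strictly stationary and ergodic under \textbf{(A2)}. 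Lemma~\ref{miss2} and Remark~\ref{rmq:important} give $\overset{\textbf{.}}{\lambda}_k(\theta_0)\in\ell^2$ and $\overset{\textbf{..}}{\lambda}_{k,l}(\theta_0)\in\ell^2$, whence by \textbf{(A0)} one has $\|\partial_k\epsilon_t(\theta_0)\|_{\mathbb{L}^2}^2=\sigma_\epsilon^2\|\overset{\textbf{.}}{\lambda}_k(\theta_0)\|_{\ell^2}^2<\infty$ and likewise $\partial^2_{kl}\epsilon_t(\theta_0)\in\mathbb{L}^2$; by Cauchy--Schwarz both sequences above are then integrable. The ergodic theorem would therefore yield that $[\partial^2O_n(\theta_0)/\partial\theta_i\partial\theta_j]$ converges almost surely to $2\mathbb{E}[\partial_i\epsilon_t(\theta_0)\,\partial_j\epsilon_t(\theta_0)]+2\mathbb{E}[\epsilon_t\,\partial^2_{ij}\epsilon_t(\theta_0)]$.

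Second, I would check that the last expectation vanishes. Since $\partial^2_{ij}\epsilon_t(\theta_0)=\sum_{k\ge1}\overset{\textbf{..}}{\lambda}_{k,i,j}(\theta_0)\epsilon_{t-k}$ involves only strictly past innovations, with partial sums converging in $\mathbb{L}^2$, continuity of $\langle\epsilon_t,\cdot\rangle$ on $\mathbb{L}^2$ together with $\mathrm{Cov}(\epsilon_t,\epsilon_{t-k})=0$ for $k\ge1$ (Assumption \textbf{(A0)}) gives $\mathbb{E}[\epsilon_t\,\partial^2_{ij}\epsilon_t(\theta_0)]=\sum_{k\ge1}\overset{\textbf{..}}{\lambda}_{k,i,j}(\theta_0)\,\mathbb{E}[\epsilon_t\epsilon_{t-k}]=0$. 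Thus $J(\theta_0)$ exists almost surely and equals $2\mathbb{E}\!\left[\tfrac{\partial}{\partial\theta}\epsilon_t(\theta_0)\tfrac{\partial}{\partial\theta'}\epsilon_t(\theta_0)\right]$, which is the expression recorded in \eqref{matrixJexplicit}.

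It remains to prove invertibility. Since $c'J(\theta_0)c=2\mathbb{E}\big[(c'\tfrac{\partial}{\partial\theta}\epsilon_t(\theta_0))^2\big]$ for $c\in\mathbb{R}^{p+q+1}$, it suffices to show that $c'\tfrac{\partial}{\partial\theta}\epsilon_t(\theta_0)=0$ almost surely forces $c=0$. Differentiating $\epsilon_t(\theta)=b_\theta^{-1}(L)a_\theta(L)(1-L)^dX_t$ and using $(1-L)^{d_0}X_t=a_{\theta_0}^{-1}(L)b_{\theta_0}(L)\epsilon_t$, I get $\partial_{a_k}\epsilon_t(\theta_0)=-L^ka_{\theta_0}^{-1}(L)\epsilon_t$ for $1\le k\le p$, $\partial_{b_j}\epsilon_t(\theta_0)=L^jb_{\theta_0}^{-1}(L)\epsilon_t$ for $1\le j\le q$, and $\partial_d\epsilon_t(\theta_0)=\ln(1-L)\epsilon_t$, so that $c'\tfrac{\partial}{\partial\theta}\epsilon_t(\theta_0)=\Lambda(L)\epsilon_t$ with
\[
\Lambda(z)=-a_{\theta_0}^{-1}(z)\sum_{k=1}^pc_kz^k+b_{\theta_0}^{-1}(z)\sum_{j=1}^qc_{p+j}z^j+c_{p+q+1}\ln(1-z),
\]
a power series with $\ell^2$ coefficients. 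From $\Lambda(L)\epsilon_t=0$ a.s.\ and \textbf{(A0)} one gets $0=\mathrm{Var}(\Lambda(L)\epsilon_t)=\sigma_\epsilon^2\|\Lambda\|_{\ell^2}^2$ (with $\sigma_\epsilon^2>0$), hence $\Lambda\equiv0$. Multiplying by $a_{\theta_0}(z)b_{\theta_0}(z)$, the term $c_{p+q+1}a_{\theta_0}(z)b_{\theta_0}(z)\ln(1-z)$ would have to be a polynomial, which is impossible unless $c_{p+q+1}=0$ because $a_{\theta_0}(0)b_{\theta_0}(0)=1$ and $\ln(1-z)$ is not rational; then $b_{\theta_0}(z)\sum_kc_kz^k=a_{\theta_0}(z)\sum_jc_{p+j}z^j$, and since $a_{\theta_0}$ and $b_{\theta_0}$ are coprime by \textbf{(A1)}, $a_{\theta_0}$ divides $\sum_kc_kz^k$; as this polynomial has degree $\le p$ and no constant term while $a_{\theta_0}(0)=1$, it must vanish identically, and then so must $\sum_jc_{p+j}z^j$, i.e.\ $c=0$. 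The step I expect to be the main obstacle is exactly this last one, which separates the long-memory direction $\partial_d$ from the ARMA directions and relies on the transcendence of $\ln(1-z)$ together with the coprimality assumption \textbf{(A1)}; everything else is routine once the $\ell^2$ bounds of Lemma~\ref{miss2} and Remark~\ref{rmq:important} are in hand.
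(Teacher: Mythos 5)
Your proof is correct, and the existence part coincides with the paper's: same ergodic-theorem argument, same $\mathbb{L}^2$ bounds from Lemma~\ref{miss2} and Remark~\ref{rmq:important}, same orthogonality argument killing the term $\mathbb{E}[\epsilon_t\,\partial^2_{ij}\epsilon_t(\theta_0)]$ (you actually justify the interchange of expectation and infinite sum a bit more explicitly than the paper does). Where you genuinely diverge is the invertibility step. The paper deduces from $\mathbf{c}'\partial_\theta\epsilon_t(\theta_0)=0$ a.s.\ that the process admits a second FARIMA representation with perturbed operators $a_{\theta_0}(L)(1-L)^{d_0}+\mathbf{c}'\partial_\theta\{a_{\theta_0}(L)(1-L)^{d_0}\}$ and $b_{\theta_0}(L)+\mathbf{c}'\partial_\theta b_{\theta_0}(L)$, and invokes uniqueness of the representation under \textbf{(A1)} (citing Hosking) to force both perturbations to vanish; the conclusion $\mathbf{c}=0$ then follows from inspecting the coefficients of $a_{\theta_0}(L)\ln(1-L)$. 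You instead write the score directly as a linear filter $\Lambda(L)\epsilon_t$ with explicit $\Lambda$ (which is in effect what the paper's proof of Lemma~\ref{miss2} computes case by case), use $\mathrm{Var}(\Lambda(L)\epsilon_t)=\sigma_\epsilon^2\|\Lambda\|_{\ell^2}^2$ to get $\Lambda\equiv0$ as a formal power series, and finish by an algebraic identity after clearing denominators. Your route is more self-contained — it replaces the appeal to an external uniqueness theorem by an elementary $\ell^2$ computation — but the two arguments ultimately rest on the same two ingredients: the non-polynomiality of $a_{\theta_0}(z)b_{\theta_0}(z)\ln(1-z)$ (singular at $z=1$ where $a_{\theta_0}b_{\theta_0}$ does not vanish) and the coprimality in \textbf{(A1)}. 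One small remark: your final divisibility step ("$a_{\theta_0}$ divides a degree-$\le p$ polynomial with zero constant term, hence that polynomial is zero") implicitly uses $\deg a_{\theta_0}=p$, i.e.\ that the orders are not overfitted; this is exactly the same implicit minimality that the paper's appeal to uniqueness of the representation requires, so you are on equal footing with the published argument.
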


\begin{proof} For all $1\le i,j \le p+q+1 $, we have
\begin{align*}
\frac{\partial^2}{\partial\theta_i\partial\theta_j}O_n(\theta_0)&=\frac{1}{n}\sum_{t=1}^n\frac{\partial^2}{\partial\theta_i\partial\theta_j}\epsilon_t^2(\theta_0)=\frac{2}{n}\sum_{t=1}^n\left\lbrace \frac{\partial}{\partial\theta_i}\epsilon_t(\theta_0)\frac{\partial}{\partial\theta_j}\epsilon_t(\theta_0)+\epsilon_t(\theta_0)\frac{\partial^2}{\partial\theta_i\partial\theta_j}\epsilon_t(\theta_0)\right\rbrace .
\end{align*}
Note that in view of \eqref{deriveesecepsil}, \eqref{deriveesecepsil1} and Remark \ref{rmq:important},  the first and second order derivatives of $\epsilon_t(\cdot)$ belong to $\mathbb{L}^2$. By using the ergodicity of $(\epsilon_t)_{t\in\mathbb{Z}}$ assumed in Assumption {\bf (A2)}, we deduce that
$$\frac{\partial^2}{\partial\theta_i\partial\theta_j}O_n(\theta_0)\underset{n\to\infty}{\overset{\text{a.s.}}{\longrightarrow}} 2\mathbb{E}\left[ \frac{\partial}{\partial\theta_i}\epsilon_t(\theta_0)\frac{\partial}{\partial\theta_j}\epsilon_t(\theta_0)\right]+2\mathbb{E}\left[ \epsilon_t(\theta_0)\frac{\partial^2}{\partial\theta_i\partial\theta_j}\epsilon_t(\theta_0)\right].  $$
By \eqref{epsil-th} and \eqref{deriveesecepsil}, $\epsilon_t$ and ${\partial\epsilon_t(\theta_0)}/{\partial\theta}$ are non correlated as well as $\epsilon_t$ and
${\partial^2\epsilon_t(\theta_0)}/{\partial\theta\partial\theta}$. Thus we have
\begin{equation}\label{matrixJexplicit}\frac{\partial^2}{\partial\theta_i\partial\theta_j}O_n(\theta_0)\underset{n\to\infty}{\overset{\text{a.s.}}{\longrightarrow}}J(\theta_0)(i,j):= 2\mathbb{E}\left[ \frac{\partial}{\partial\theta_i}\epsilon_t(\theta_0)\frac{\partial}{\partial\theta_j}\epsilon_t(\theta_0)\right].
\end{equation}
From \eqref{epsil-th} and \eqref{eq-miss2} we obtain that
\begin{align*}
\mathbb{E}\left[ \frac{\partial}{\partial\theta_i}\epsilon_t(\theta_0)\frac{\partial}{\partial\theta_j}\epsilon_t(\theta_0)\right]&=\mathbb{E}\left[ \left( \sum_{k_1\geq 1}\overset{\textbf{.}}{\lambda}_{k_1,i}\left(\theta_0\right)\epsilon_{t-k_1}\right) \left(\sum_{k_2\geq 1}\overset{\textbf{.}}{\lambda}_{k_2,j}\left(\theta_0\right)\epsilon_{t-k_2} \right) \right] \\
&=\sum_{k_1\geq 1}\sum_{k_2\geq 1}\overset{\textbf{.}}{\lambda}_{k_1,i}\left(\theta_0\right)\overset{\textbf{.}}{\lambda}_{k_2,j}\left(\theta_0\right)\mathbb{E}\left[ \epsilon_{t-k_1}\epsilon_{t-k_2}\right] \\
& \leq K\, \sigma_{\epsilon}^2\sum_{k_1\geq 1}\left( \frac{1}{k_1}\right) ^2
<\infty .
\end{align*}
Therefore $J(\theta_0)$ exists almost surely.

If the matrix $J(\theta_0)$ is not invertible, there exists some real constants $c_1,\dots,c_{p+q+1}$ not all equal to zero such that $\mathbf{c}^{'}J(\theta_0)\mathbf{c}=\sum_{i=1}^{p+q+1}\sum_{j=1}^{p+q+1}c_jJ(\theta_0)(j,i)c_i=0$, where $\mathbf{c}=(c_1,\dots,c_{p+q+1})^{'}$.
In view of \eqref{matrixJexplicit} we obtain that
\begin{align*}
\sum_{i=1}^{p+q+1}\sum_{j=1}^{p+q+1}\mathbb{E}\left[ \left( c_j\frac{\partial\epsilon_t(\theta_0)}{\partial\theta_j}\right) \left(  c_i\frac{\partial\epsilon_t(\theta_0)}{\partial\theta_i}\right) \right]=
\mathbb{E}\left[ \left( \sum_{k=1}^{p+q+1}c_k\frac{\partial\epsilon_t(\theta_0)}{\partial\theta_k}\right)^2 \right]&=0,
\end{align*}
which implies that
\begin{align}\label{hypInvJ}
\sum_{k=1}^{p+q+1}c_k\frac{\partial\epsilon_t(\theta_0)}{\partial\theta_k}&=0 \ \ \mathrm{a.s.}
\text{ or equivalenty }\ \ \mathbf{c}'\frac{\partial\epsilon_t(\theta_0)}{\partial\theta}=0 \ \ \mathrm{a.s.}
\end{align}
Differentiating the equation \eqref{FARIMA}, we obtain that
\begin{align*}
\mathbf{c}'\frac{\partial}{\partial\theta}\left\lbrace a_{\theta_0}(L)(1-L)^{d_0}\right\rbrace X_t=\mathbf{c}'\left\lbrace \frac{\partial}{\partial\theta}b_{\theta_0}(L)\right\rbrace \epsilon_t + b_{\theta_0}(L)\mathbf{c}'\frac{\partial}{\partial\theta}\epsilon_t(\theta_0).
\end{align*}
and by \eqref{hypInvJ} we may write that
\begin{align*}
\mathbf{c}^{'}\left(\frac{\partial}{\partial\theta}\left\lbrace a_{\theta_0}(L)(1-L)^{d_0}\right\rbrace X_t-\left\lbrace \frac{\partial}{\partial\theta}b_{\theta_0}(L)\right\rbrace \epsilon_t \right) =0 \ \ \mathrm{a.s.}
\end{align*}
It follows that  \eqref{FARIMA} can therefore be rewritten in the form:
\begin{align*}
\left( a_{\theta_0}(L)(1-L)^{d_0}+\mathbf{c}^{'}\frac{\partial}{\partial\theta}\left\lbrace a_{\theta_0}(L)(1-L)^{d_0}\right\rbrace\right)X_t = \left( b_{\theta_0}(L) +\mathbf{c}^{'}\frac{\partial}{\partial\theta}b_{\theta_0}(L)\right) \epsilon_t, \ \ \mathrm{a.s.}
\end{align*}
Under  Assumption {\bf (A1)}  the representation in \eqref{FARIMA} is unique (see \cite{Hosking}) so
\begin{align}\label{tt1}
 \mathbf{c}^{'}\frac{\partial}{\partial\theta}\left\lbrace a_{\theta_0}(L)(1-L)^{d_0}\right\rbrace & =0
\end{align}
and
\begin{align}
\mathbf{c}^{'}\frac{\partial}{\partial\theta}b_{\theta_0}(L) & =0. \label{tt2}
\end{align}
First, \eqref{tt2} implies that
$$ \sum_{k=p+1}^{p+q} c_k \frac{\partial}{\partial\theta_k}b_{\theta_0}(L)=\sum_{k=p+1}^{p+q} -c_k L^k =0$$
and thus $c_k = 0$ for $p+1\le k\le p+q$.

Similarly, \eqref{tt1} yields that
$$ \sum_{k=1}^{p} c_k \frac{\partial}{\partial\theta_k}a_{\theta_0}(L)(1-L)^{d_0}+c_{p+q+1} a_{\theta_0}(L)\frac{\partial (1-L)^d}{\partial d} (d_0) =0 \ .$$
Since $\partial (1-L)^d / \partial d= (1-L)^{d}\mathrm{ln}(1-L)$, it follows that
$$ -\sum_{k=1}^{p} c_k L^k +c_{p+q+1} \sum_{k\ge 0} e_k L^k =0 \ ,$$
where the sequence $(e_k)_{k\ge 1}$ is given by the coefficients of the power series $a_{\theta_0}(L)\mathrm{ln} (1-L)$. Since $e_0=0$ and $e_1=-1 $, we obtain that
\begin{align*}
c_1& =-c_{p+q+1}\\
c_k & =e_k c_{p+q+1} \ \text{for $k=2,\dots,p$} \\
0 & = e_k c_{p+q+1} \ \text{for $k\ge p+1$}.
\end{align*}
Since the polynomial $a_{\theta_0}$ is not the null polynomial, this implies that $c_{p+q+1}=0$ and then $c_k$ for $1\le k\le p$. Thus $\mathbf{c}=0$ which leads us to a contradiction. Hence $J(\theta_0)$ is invertible.

\end{proof}

%

\begin{lemme} \label{Conv_almost_sure_J_n_vers_J} For any $1\le i,j\le p+q+1 $ and under the assumptions of Theorem \ref{convergence}, we have
\begin{equation}
\frac{\partial^2}{\partial\theta_i\partial\theta_j}Q_n\left( \theta^*_{n,i,j}\right)-J(\theta_0)(i,j)=\mathrm{o}_{\mathbb{P}}(1),
\end{equation}
where $\theta^*_{n,i,j}$ is defined in \eqref{premierTaylor}.
\end{lemme}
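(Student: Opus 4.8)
The plan is to combine the consistency already proved in Theorem~\ref{convergence} with a uniform law of large numbers. Fix $1\le i,j\le p+q+1$ and set $\mathcal{J}(\theta)(i,j):=\mathbb{E}\big[\partial^2\epsilon_t^2(\theta)/\partial\theta_i\partial\theta_j\big]$ (this expectation is finite because, by Remark~\ref{rmq:important} and \eqref{ineq:momoent2}, $\epsilon_t(\cdot)$ and its first two derivatives lie in $\mathbb{L}^2$ uniformly over $\Theta_\delta$). I would start from the bound
\begin{align*}
\Big|\tfrac{\partial^2}{\partial\theta_i\partial\theta_j}Q_n(\theta^*_{n,i,j})-J(\theta_0)(i,j)\Big|
&\le \sup_{\theta\in\Theta_\delta}\Big|\tfrac{\partial^2}{\partial\theta_i\partial\theta_j}Q_n(\theta)-\tfrac{\partial^2}{\partial\theta_i\partial\theta_j}O_n(\theta)\Big|\\
&\quad+\sup_{\theta\in\Theta_\delta}\Big|\tfrac{\partial^2}{\partial\theta_i\partial\theta_j}O_n(\theta)-\mathcal{J}(\theta)(i,j)\Big|
+\big|\mathcal{J}(\theta^*_{n,i,j})(i,j)-J(\theta_0)(i,j)\big|.
\end{align*}
Since $\hat\theta_n\to\theta_0$ in probability by Theorem~\ref{convergence} and $\theta^*_{n,i,j}$ lies between $\hat\theta_n$ and $\theta_0$, we have $\theta^*_{n,i,j}\to\theta_0$ in probability, so it suffices to show that each of the three terms on the right is $\mathrm{o}_{\mathbb{P}}(1)$.

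For the first (truncation) term, expand $\partial^2\tilde\epsilon_t^2(\theta)/\partial\theta_i\partial\theta_j=2\{\partial_i\tilde\epsilon_t\,\partial_j\tilde\epsilon_t+\tilde\epsilon_t\,\partial^2_{ij}\tilde\epsilon_t\}(\theta)$ and likewise for $\epsilon_t^2(\theta)$, then write each difference as a telescoping sum of terms of the form $(\text{truncated}-\text{non-truncated})\times(\text{uniformly }\mathbb{L}^2\text{-bounded})$, using the representations \eqref{epsil-th}--\eqref{deriveesecepsiltilde}. Applying the uncorrelatedness of $(\epsilon_t)_t$, the Cauchy--Schwarz inequality, the uniform bounds of Lemma~\ref{miss}, Remark~\ref{rmq:important} and \eqref{ineq:momoent2} on $\tilde\epsilon_t(\cdot)$ and its first two derivatives, the estimates $\|\lambda(\theta)-\lambda^t(\theta)\|_{\ell^2}$, $\|\dot\lambda_k(\theta)-\dot\lambda_k^t(\theta)\|_{\ell^2}$, $\|\ddot\lambda_{k,l}(\theta)-\ddot\lambda_{k,l}^t(\theta)\|_{\ell^2}$, all $\mathrm{O}(t^{-1/2-(d_1-d_0)+\zeta})$ uniformly over $\Theta_\delta$, and finally the fractional version of Ces\`aro's Lemma, yields the uniform convergence to $0$. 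This is entirely analogous to — and strictly simpler than — Steps~2--4 in the proof of Lemma~\ref{ConProQversO}, the simplification coming from the absence here of any $\sqrt n$ normalization, so that only $\mathbb{L}^2$ bounds (and no cumulant assumption) are needed.

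For the second term I would invoke a uniform ergodic theorem. For each fixed $\theta\in\Theta_\delta$, $\partial^2\epsilon_t^2(\theta)/\partial\theta_i\partial\theta_j$ is a measurable function of $(\epsilon_s)_{s\le t}$ lying in $\mathbb{L}^1$, so \textbf{(A2)} and Birkhoff's theorem give pointwise a.s.\ convergence to $\mathcal{J}(\theta)(i,j)$. Stochastic equicontinuity follows from a mean-value bound $\big|\tfrac{\partial^2}{\partial\theta_i\partial\theta_j}O_n(\theta_1)-\tfrac{\partial^2}{\partial\theta_i\partial\theta_j}O_n(\theta_2)\big|\le B_n\|\theta_1-\theta_2\|$ where $B_n$ is an average over $t$ of $\sup_{\theta\in\Theta_\delta}$ of third-order derivatives of $\epsilon_t^2(\cdot)$, and $\mathbb{E}B_n<\infty$ by Cauchy--Schwarz because the third derivatives of $\epsilon_t(\cdot)$ also belong to $\mathbb{L}^2$ uniformly over $\Theta_\delta$: their coefficients are $\mathrm{O}(j^{-1-(d-d_0)+\zeta})$ by the argument of Remark~\ref{rmq:important} (based on \eqref{Coef-Gamma}, \eqref{Coef-Eta} and Young's inequality), and $\sum_j j^{-2-2(d-d_0)+2\zeta}<\infty$ for $\zeta$ small enough since $d-d_0\ge d_1-d_0>-1/2$; hence $B_n=\mathrm{O}_{\mathbb{P}}(1)$ by Markov's inequality. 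Corollary~2.2 of \cite{Whitney-1991}, exactly as in the proof of Theorem~\ref{convergence} (with the fixed continuous function $\mathcal{J}(\cdot)(i,j)$ on the compact $\Theta_\delta$ playing the role of the equicontinuous sequence), then gives $\sup_{\theta\in\Theta_\delta}\big|\tfrac{\partial^2}{\partial\theta_i\partial\theta_j}O_n(\theta)-\mathcal{J}(\theta)(i,j)\big|=\mathrm{o}_{\mathbb{P}}(1)$.

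Finally, $\theta\mapsto\mathcal{J}(\theta)(i,j)$ is continuous on $\Theta_\delta$ by dominated convergence, using the $\mathbb{L}^2$-continuity of $\epsilon_t(\cdot)$ and of its first two derivatives together with the uniform $\mathbb{L}^2$ bounds above; combined with $\theta^*_{n,i,j}\to_{\mathbb{P}}\theta_0$ this makes the third term $\mathrm{o}_{\mathbb{P}}(1)$. Moreover, at $\theta_0$ the cross contribution $\mathbb{E}[\epsilon_t\,\partial^2_{ij}\epsilon_t(\theta_0)]$ vanishes because, by \eqref{epsil-th} and \eqref{deriveesecepsil1}, $\epsilon_t$ is uncorrelated with $\partial^2\epsilon_t(\theta_0)/\partial\theta_i\partial\theta_j$, so $\mathcal{J}(\theta_0)(i,j)=2\mathbb{E}[\partial_i\epsilon_t(\theta_0)\,\partial_j\epsilon_t(\theta_0)]=J(\theta_0)(i,j)$ as in \eqref{matrixJexplicit}, which closes the argument. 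I expect the main obstacle to be the second term: establishing the uniform ergodic convergence over the \emph{whole} parameter set $\Theta_\delta$ (not merely a neighbourhood of $\theta_0$) requires checking that, despite the loss of exponential decay caused by the fractional operator, the third derivatives of $\epsilon_t(\cdot)$ still sit in $\mathbb{L}^2$ uniformly in $\theta$, which is precisely where the standing constraint $d_1-d_0>-1/2$ is genuinely used.
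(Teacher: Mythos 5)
Your proposal is correct and reaches the conclusion by a decomposition that differs from the paper's in its middle step. The paper writes the error as $|J_n(\theta^*_{n,i,j})-J_n^*(\theta^*_{n,i,j})|+|J_n^*(\theta^*_{n,i,j})-J_n^*(\theta_0)|+|J_n^*(\theta_0)-J(\theta_0)|$ (with $J_n=\partial^2 Q_n/\partial\theta\partial\theta'$ and $J_n^*=\partial^2 O_n/\partial\theta\partial\theta'$): the truncation term is handled exactly as you do, the last term by the ergodic theorem applied at the single point $\theta_0$ (as in Lemma~\ref{matrixJ}), and the middle term by a pathwise mean-value bound $|J_n^*(\theta^*_{n,i,j})(i,j)-J_n^*(\theta_0)(i,j)|\le\|\partial J_n^*(\theta^{**}_{n,i,j})(i,j)/\partial\theta\|\,\|\theta^*_{n,i,j}-\theta_0\|$, where the gradient is shown to be $\mathrm{O}_{\mathbb{P}}(1)$ via third-order derivatives of $\epsilon_t(\cdot)$ (see \eqref{J:3}--\eqref{J:1}) and $\|\theta^*_{n,i,j}-\theta_0\|=\mathrm{o}_{\mathbb{P}}(1)$ by Theorem~\ref{convergence}. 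You instead introduce the deterministic limit $\mathcal{J}(\theta)$, prove a uniform law of large numbers over all of $\Theta_\delta$ via Corollary~2.2 of \cite{Whitney-1991}, and conclude by continuity of $\mathcal{J}$ at $\theta_0$. The two routes consume exactly the same technical input (third derivatives of $\epsilon_t(\cdot)$ with square-summable coefficients, which is where $d_1-d_0>-1/2$ and the $k=3$ case of \eqref{Coef-Gamma} enter); yours additionally requires identifying and proving continuity of the limit function on the whole compact set, which the paper avoids, while the paper's route avoids the ULLN machinery at the cost of a slightly less transparent argument at the random intermediate point $\theta^{**}_{n,i,j}$. Both you and the paper pass somewhat quickly over the fact that what is really needed is $\mathbb{E}\bigl[\sup_{\theta\in\Theta_\delta}(\cdot)\bigr]<\infty$ rather than $\sup_{\theta\in\Theta_\delta}\mathbb{E}[\cdot]<\infty$ for these derivative products; this is at the same level of rigor as the paper's own treatment of \eqref{J:2} and \eqref{J:1}, so it is not a gap specific to your argument.
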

\begin{proof}
For any $\theta\in\Theta_\delta$, let
\begin{align*}
{J}_n(\theta)&=\frac{\partial^2}{\partial\theta\partial\theta^{'}}Q_n\left( \theta\right)=\frac{2}{n}\sum_{t=1}^n\left\lbrace \frac{\partial}{\partial\theta}\tilde{\epsilon}_t\left(\theta\right)\right\rbrace \left\lbrace \frac{\partial}{\partial\theta^{'}}\tilde{\epsilon}_t\left(\theta\right)\right\rbrace+\frac{2}{n}\sum_{t=1}^n\tilde{\epsilon}_t(\theta)\frac{\partial^2}{\partial\theta\partial\theta^{'}}\tilde{\epsilon}_t(\theta)
\end{align*}
and
\begin{align*}
J_n^{*}(\theta)&=\frac{\partial^2}{\partial\theta\partial\theta^{'}}O_n\left( \theta\right)=\frac{2}{n}\sum_{t=1}^n\left\lbrace \frac{\partial}{\partial\theta}\epsilon_t\left(\theta\right)\right\rbrace \left\lbrace \frac{\partial}{\partial\theta^{'}}\epsilon_t\left(\theta\right)\right\rbrace+\frac{2}{n}\sum_{t=1}^n\epsilon_t(\theta)\frac{\partial^2}{\partial\theta\partial\theta^{'}}\epsilon_t(\theta).
\end{align*}
We have
\begin{align}\label{ineq_on_J}
\left|\frac{\partial^2}{\partial\theta_i\partial\theta_j}Q_n\left( \theta^*_{n,i,j}\right)-J(\theta_0)(i,j)\right|&\leq \left|{J}_{n}(\theta^*_{n,i,j})(i,j)-J_{n}^{*}(\theta^*_{n,i,j})(i,j)\right|\nonumber\\
&\quad+\left|J_{n}^{*}(\theta^*_{n,i,j})(i,j)-J_{n}^{*}(\theta_0)(i,j)\right|+\left|J_{n}^{*}(\theta_0)(i,j)-J(\theta_0)(i,j)\right|.
\end{align}
So it is enough to show that the three terms in the right hand side of \eqref{ineq_on_J} converge in probability to $0$ when $n$ tends to infinity.
Following the same arguments as the proof of Lemma \ref{matrixJ} and applying the ergodic theorem, we obtain that
\begin{equation*}
J_n^{*}(\theta_0)\underset{n\to\infty}{\overset{\text{a.s.}}{\longrightarrow}}2\mathbb{E}\left[ \frac{\partial}{\partial\theta}\epsilon_t(\theta_0)\frac{\partial}{\partial\theta^{'}}\epsilon_t(\theta_0)\right]=J(\theta_0).
\end{equation*}
Let us now show that the random variable $|J_{n}(\theta^*_{n,i,j})(i,j)-J_{n}^{*}(\theta^*_{n,i,j})(i,j)|$ converges in probability to 0. It can easily be seen that 

\begin{align*}
\frac{\partial\epsilon_t(\theta^*_{n,i,j})}{\partial\theta_i}\frac{\partial\epsilon_t(\theta^*_{n,i,j})}{\partial\theta_j}-&\frac{\partial\tilde{\epsilon}_t(\theta^*_{n,i,j})}{\partial\theta_i}\frac{\partial\tilde{\epsilon}_t(\theta^*_{n,i,j})}{\partial\theta_j}\\
&=\left(\frac{\partial\epsilon_t(\theta^*_{n,i,j})}{\partial\theta_i}-\frac{\partial\tilde{\epsilon}_t(\theta^*_{n,i,j})}{\partial\theta_i}\right)\frac{\partial\epsilon_t(\theta^*_{n,i,j})}{\partial\theta_j}+\frac{\partial\tilde{\epsilon}_t(\theta^*_{n,i,j})}{\partial\theta_i}\left(\frac{\partial\epsilon_t(\theta^*_{n,i,j})}{\partial\theta_j}-\frac{\partial\tilde{\epsilon}_t(\theta^*_{n,i,j})}{\partial\theta_j}\right).
\end{align*}
Hence, by the Cauchy-Schwarz inequality and Remark \ref{rmq:important} one has
\begin{align*}
\mathbb{E}\Big|\frac{\partial\epsilon_t(\theta^*_{n,i,j})}{\partial\theta_i}\frac{\partial\epsilon_t(\theta^*_{n,i,j})}{\partial\theta_j}&-\frac{\partial\tilde{\epsilon}_t(\theta^*_{n,i,j})}{\partial\theta_i}\frac{\partial\tilde{\epsilon}_t(\theta^*_{n,i,j})}{\partial\theta_j}\Big|\\
&\leq \left(\mathbb{E}\left(\frac{\partial\epsilon_t(\theta^*_{n,i,j})}{\partial\theta_i}-\frac{\partial\tilde{\epsilon}_t(\theta^*_{n,i,j})}{\partial\theta_i}\right)^2\mathbb{E}\left(\frac{\partial\epsilon_t(\theta^*_{n,i,j})}{\partial\theta_j}\right)^2\right)^{1/2}\\
&\qquad+\left(\mathbb{E}\left(\frac{\partial\epsilon_t(\theta^*_{n,i,j})}{\partial\theta_j}-\frac{\partial\tilde{\epsilon}_t(\theta^*_{n,i,j})}{\partial\theta_j}\right)^2\mathbb{E}\left(\frac{\partial\epsilon_t(\theta^*_{n,i,j})}{\partial\theta_i}\right)^2\right)^{1/2}\\
&\leq \left(\sup_{\theta\in\Theta_\delta}\mathbb{E}\left(\frac{\partial\epsilon_t(\theta)}{\partial\theta_i}-\frac{\partial\tilde{\epsilon}_t(\theta)}{\partial\theta_i}\right)^2\sup_{\theta\in\Theta_\delta}\mathbb{E}\left(\frac{\partial\epsilon_t(\theta)}{\partial\theta_j}\right)^2\right)^{1/2}\\
&\qquad+\left(\sup_{\theta\in\Theta_\delta}\mathbb{E}\left(\frac{\partial\epsilon_t(\theta)}{\partial\theta_j}-\frac{\partial\tilde{\epsilon}_t(\theta)}{\partial\theta_j}\right)^2\sup_{\theta\in\Theta_\delta}\mathbb{E}\left(\frac{\partial\epsilon_t(\theta)}{\partial\theta_i}\right)^2\right)^{1/2}\\
&\leq \sigma_\epsilon^2\left(\sup_{\theta\in\Theta_\delta}\left\|\overset{\textbf{.}}{\lambda}_{i}(\theta)-\overset{\textbf{.}}{\lambda}_{i}^t(\theta)\right\|_{\ell^2}^2\sup_{\theta\in\Theta_\delta}\sum_{k\geq 1}\left(\overset{\textbf{.}}{\lambda}_{k,j}(\theta)\right)^2\right)^{1/2}\\
&\qquad+\sigma_\epsilon^2\left(\sup_{\theta\in\Theta_\delta}\left\|\overset{\textbf{.}}{\lambda}_{j}(\theta)-\overset{\textbf{.}}{\lambda}_{j}^t(\theta)\right\|_{\ell^2}^2\sup_{\theta\in\Theta_\delta}\sum_{k\geq 1}\left(\overset{\textbf{.}}{\lambda}_{k,i}^t(\theta)\right)^2\right)^{1/2}\\
&\leq K\frac{1}{t^{1/2+(d_1-d_0)-\zeta}}\xrightarrow[t\to \infty]{} \, 0.
\end{align*}
Similar calculation can be done to obtain 
\begin{align*}
\mathbb{E}\left|\epsilon_t(\theta^*_{n,i,j})\frac{\partial^2\epsilon_t(\theta^*_{n,i,j})}{\partial\theta_i\partial\theta_j}-\tilde{\epsilon}_t(\theta^*_{n,i,j})\frac{\partial^2\tilde{\epsilon}_t(\theta^*_{n,i,j})}{\partial\theta_i\partial\theta_j}\right|\xrightarrow[t\to \infty]{} \, 0.
\end{align*}
It follows then using Ces\`{a}ro's Lemma that
\begin{align*}
\mathbb{E}\left|J_{n}(\theta^*_{n,i,j})(i,j)-J_{n}^{*}(\theta^*_{n,i,j})(i,j)\right|&\leq \frac{2}{n}\sum_{t=1}^n\mathbb{E}\left|\frac{\partial\epsilon_t(\theta^*_{n,i,j})}{\partial\theta_i}\frac{\partial\epsilon_t(\theta^*_{n,i,j})}{\partial\theta_j}-\frac{\partial\tilde{\epsilon}_t(\theta^*_{n,i,j})}{\partial\theta_i}\frac{\partial\tilde{\epsilon}_t(\theta^*_{n,i,j})}{\partial\theta_j}\right|\\
&\qquad+\frac{2}{n}\sum_{t=1}^n\mathbb{E}\left|\epsilon_t(\theta^*_{n,i,j})\frac{\partial^2\epsilon_t(\theta^*_{n,i,j})}{\partial\theta_i\partial\theta_j}-\tilde{\epsilon}_t(\theta^*_{n,i,j})\frac{\partial^2\tilde{\epsilon}_t(\theta^*_{n,i,j})}{\partial\theta_i\partial\theta_j}\right|\\
&\leq \frac{K}{n}\sum_{t=1}^n\frac{1}{t^{1/2+(d_1-d_0)-\zeta}}\xrightarrow[n\to \infty]{} \, 0,
\end{align*}
which entails the expected convergence in probability to 0 of $|J_{n}(\theta^*_{n,i,j})(i,j)-J_{n}^{*}(\theta^*_{n,i,j})(i,j)|$.

By a Taylor expansion of $J_n^{*}(\cdot)(i,j)$ around $\theta_0$, there exists $\theta_{n,i,j}^{**}$ between $\theta^*_{n,i,j}$ and $\theta_0$ such that
\begin{align}\label{J:3}
\left|J_n^{*}(\theta^*_{n,i,j})(i,j)-J_n^{*}(\theta_0)(i,j)\right|&=\left|\frac{\partial}{\partial\theta}J_n^{*}(\theta_{n,i,j}^{**})(i,j)\cdot(\theta^*_{n,i,j}-\theta_0)\right| \nonumber\\
& \leq \left\|\frac{\partial}{\partial\theta}J_n^{*}(\theta_{n,i,j}^{**})(i,j)\right\|\left\|\theta^*_{n,i,j}-\theta_0\right\|\nonumber\\
&\leq \frac{2}{n}\sum_{t=1}^n\left\|\left.\frac{\partial}{\partial\theta}\left\lbrace\frac{\partial}{\partial\theta_i}\epsilon_t(\theta)\frac{\partial}{\partial\theta_j}\epsilon_t(\theta) \right\rbrace\right|_{\theta=\theta_{n,i,j}^{**}}\right\|\left\|\theta^*_{n,i,j}-\theta_0\right\|\nonumber\\
&\qquad+\frac{2}{n}\sum_{t=1}^n\left\|\left.\frac{\partial}{\partial\theta}\left\lbrace\epsilon_t(\theta)\frac{\partial^2}{\partial\theta_i\partial\theta_j}\epsilon_t(\theta) \right\rbrace\right|_{\theta=\theta_{n,i,j}^{**}}\right\|\left\|\theta^*_{n,i,j}-\theta_0\right\| . 
\end{align}
Since $d_1-d_0>1/2$, it can easily be shown as before that 
\begin{align}\label{J:2}
\mathbb{E}\left\|\left.\frac{\partial}{\partial\theta}\left\lbrace\frac{\partial}{\partial\theta_i}\epsilon_t(\theta)\frac{\partial}{\partial\theta_j}\epsilon_t(\theta) \right\rbrace\right|_{\theta=\theta_{n,i,j}^{**}}\right\|<\infty
\end{align}
and 
\begin{align}\label{J:1}
\mathbb{E}\left\|\left.\frac{\partial}{\partial\theta}\left\lbrace\epsilon_t(\theta)\frac{\partial^2}{\partial\theta_i\partial\theta_j}\epsilon_t(\theta) \right\rbrace\right|_{\theta=\theta_{n,i,j}^{**}}\right\|<\infty.
\end{align}
We use \eqref{J:3}, \eqref{J:2}, \eqref{J:1}, the ergodic theorem and Theorem \ref{convergence} to deduce the convergence in probability of $|J_n^{*}(\theta^*_{n,i,j})(i,j)-J_n^{*}(\theta_0)(i,j)|$ to 0.

The proof of the lemma is then complete.
\end{proof}
The following lemma states the existence of the matrix $I(\theta_0)$.
\begin{lemme} \label{exit_I} Under the  assumptions of Theorem \ref{n.asymptotique}, the matrix
$$I(\theta_0)=\lim_{n\rightarrow\infty}Var\left\lbrace \sqrt{n}\frac{\partial}{\partial\theta}O_n(\theta_0)\right\rbrace$$
exists.
\end{lemme}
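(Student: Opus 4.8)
The plan is to realize $\sqrt n\,\partial O_n(\theta_0)/\partial\theta$ as a normalized partial sum of the stationary, centered, $\mathbb{L}^2$ process $(H_t(\theta_0))_{t\in\mathbb Z}$ of \eqref{Ht_process}, and then to reduce the existence of $I(\theta_0)$ to the absolute summability of its autocovariance function. First I would observe that, since $\epsilon_t(\theta_0)=\epsilon_t$ a.s., differentiating \eqref{On} gives $\sqrt n\,\partial O_n(\theta_0)/\partial\theta=n^{-1/2}\sum_{t=1}^n H_t(\theta_0)$ with $H_t(\theta_0)=2\epsilon_t\,\partial\epsilon_t(\theta_0)/\partial\theta$. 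From \eqref{deriveesecepsil} the $k$-th coordinate of $H_t(\theta_0)$ equals $2\epsilon_t\sum_{i\ge 1}\dot\lambda_{i,k}(\theta_0)\epsilon_{t-i}$; because {\bf (A0)} holds and the summation starts at $i=1$, this has zero mean, and a cumulant expansion of $\mathbb E[\epsilon_t^2\epsilon_{t-i}\epsilon_{t-j}]$ together with Lemma \ref{miss2} (which yields $|\dot\lambda_{i,k}(\theta_0)|\le K/i$, hence $(\dot\lambda_{i,k}(\theta_0))_{i\ge1}\in\ell^2$ and is bounded) and {\bf (A4')} with $\tau=4$ (implied by {\bf (A4)} with $\tau=4$, which holds under the assumptions of Theorem \ref{n.asymptotique}) shows $H_t(\theta_0)\in\mathbb{L}^2$.

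Set $\Gamma_H(h)=\mathrm{Cov}(H_t(\theta_0),H_{t-h}(\theta_0))$, which does not depend on $t$ by {\bf (A2)}. Then
\begin{equation*}
\mathrm{Var}\!\left\{\sqrt n\,\frac{\partial}{\partial\theta}O_n(\theta_0)\right\}=\sum_{|h|<n}\left(1-\frac{|h|}{n}\right)\Gamma_H(h),
\end{equation*}
so it suffices to prove $\sum_{h\in\mathbb Z}\|\Gamma_H(h)\|<\infty$: the limit then exists by dominated convergence and equals $I(\theta_0)=\sum_{h\in\mathbb Z}\Gamma_H(h)$, which is also the expression announced in Subsection \ref{ii}. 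For the summability, fix $1\le k,l\le p+q+1$, abbreviate $\dot\lambda_{i,k}=\dot\lambda_{i,k}(\theta_0)$, and use \eqref{deriveesecepsil} to write
\begin{equation*}
\Gamma_H(h)(k,l)=4\sum_{i\ge 1}\sum_{j\ge 1}\dot\lambda_{i,k}\,\dot\lambda_{j,l}\,\mathrm{Cov}\!\left(\epsilon_0\epsilon_{-i},\epsilon_{-h}\epsilon_{-h-j}\right).
\end{equation*}
Since $i,j\ge 1$ and the noise is uncorrelated, $\mathbb E[\epsilon_0\epsilon_{-i}]=\mathbb E[\epsilon_{-h}\epsilon_{-h-j}]=0$, so this covariance is $\mathbb E[\epsilon_0\epsilon_{-i}\epsilon_{-h}\epsilon_{-h-j}]$, which by the cumulant decomposition recalled just after {\bf (A4')} equals
\begin{equation*}
\mathrm{cum}(\epsilon_0,\epsilon_{-i},\epsilon_{-h},\epsilon_{-h-j})+\mathbb E[\epsilon_0\epsilon_{-h}]\,\mathbb E[\epsilon_{-i}\epsilon_{-h-j}]+\mathbb E[\epsilon_0\epsilon_{-h-j}]\,\mathbb E[\epsilon_{-i}\epsilon_{-h}].
\end{equation*}
The decisive bookkeeping is that the last product always vanishes (it would require $h=-j\le-1$ and $h=i\ge1$ simultaneously); the middle product is nonzero only for $h=0$ and $j=i$, contributing $4\sigma_\epsilon^4\sum_{i\ge1}\dot\lambda_{i,k}\dot\lambda_{i,l}$ to $\Gamma_H(0)(k,l)$, finite by Lemma \ref{miss2} and Cauchy--Schwarz; and, using $|\dot\lambda_{i,k}|\le K$ for $i\ge1$, the cumulant part satisfies
\begin{equation*}
\sum_{h\in\mathbb Z}\left|\sum_{i\ge1}\sum_{j\ge1}\dot\lambda_{i,k}\dot\lambda_{j,l}\,\mathrm{cum}(\epsilon_0,\epsilon_{-i},\epsilon_{-h},\epsilon_{-h-j})\right|\le K^2\!\!\!\sum_{i_1,i_2,i_3\in\mathbb Z}\!\!\!\left|\mathrm{cum}(\epsilon_0,\epsilon_{i_1},\epsilon_{i_2},\epsilon_{i_3})\right|=K^2 C_4<\infty
\end{equation*}
by {\bf (A4')} with $\tau=4$. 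Summing the three contributions over $h$ gives $\sum_{h\in\mathbb Z}\|\Gamma_H(h)\|<\infty$.

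The hard part is the last display: it is exactly the kind of estimate carried out in Step 2 of the proof of Lemma \ref{ConProQversO}, and its delicate point is the precise accounting of which of the three terms in the cumulant decomposition of $\mathbb E[\epsilon_0\epsilon_{-i}\epsilon_{-h}\epsilon_{-h-j}]$ survive. Here the constraint $i,j\ge 1$ --- coming from the fact that $\partial\epsilon_t(\theta_0)/\partial\theta$ carries no $\epsilon_t$-term in \eqref{deriveesecepsil} --- is precisely what removes the problematic pairing and leaves only a genuinely summable cumulant term (handled through the fourth-order cumulant condition {\bf (A4')}) together with a single diagonal variance term. Once this is settled, the $\mathbb{L}^2$ membership of $H_t(\theta_0)$ and the final dominated-convergence passage are routine.
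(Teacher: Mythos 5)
Your proposal is correct and follows essentially the same route as the paper: realize $\sqrt{n}\,\partial O_n(\theta_0)/\partial\theta$ as $n^{-1/2}\sum_t H_t(\theta_0)$, write the variance as a Cesàro average of autocovariances, and reduce existence to $\sum_h\|\mathrm{Cov}(H_t,H_{t-h})\|<\infty$, which is then obtained from the expansion \eqref{deriveesecepsil}, the bound of Lemma \ref{miss2}, the cumulant decomposition, and {\bf (A4')} with $\tau=4$. Your explicit accounting of which Gaussian-type pairings survive (only the $h=0$, $i=j$ diagonal term, the third pairing being impossible since it would force $h\le -1$ and $h\ge 1$) matches the paper's treatment, which likewise isolates the $h=0$ contribution and controls the rest purely through the summable fourth-order cumulants.
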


\begin{proof} By the stationarity of $(H_t(\theta_0))_{t\in\mathbb{Z}}$ (remind that this process is defined in \eqref{Ht_process}), we have
\begin{align*}
Var\left\lbrace \sqrt{n}\frac{\partial}{\partial\theta}O_n(\theta_0)\right\rbrace &=Var\left\lbrace \frac{1}{\sqrt{n}}\sum_{t=1}^nH_t(\theta_0)\right\rbrace \\
& =\frac{1}{n}\sum_{t=1}^n\sum_{s=1}^nCov\left\lbrace H_t(\theta_0),H_s(\theta_0)\right\rbrace \\
&=\frac{1}{n}\sum_{h=-n+1}^{n-1}\left( n-|h|\right)Cov\left\lbrace H_t(\theta_0),H_{t-h}(\theta_0)\right\rbrace .
\end{align*}
By the dominated convergence theorem, the matrix $I(\theta_0)$ exists and is given by
$$I(\theta_0)=\sum_{h=-\infty}^{\infty}Cov\left\lbrace H_t(\theta_0),H_{t-h}(\theta_0)\right\rbrace $$
whenever
\begin{equation}\label{conv_dominee}
\sum_{h=-\infty}^{\infty}\|Cov\left\lbrace H_t(\theta_0),H_{t-h}(\theta_0)\right\rbrace \|<\infty.
\end{equation}
For  $s\in\mathbb{Z}$ and $1\le k\ \le p+q+1$, we denote $H_{s,k}(\theta_0)=2\epsilon_s(\theta_0)\frac{\partial}{\partial\theta_k}\epsilon_s(\theta_0)$ the $k-$th
entry of $H_{s}(\theta_0)$.
In view of \eqref{deriveesecepsil} we have 
\begin{align*}
\left|Cov\left\lbrace H_{t,i}(\theta_0),H_{t-h,j}(\theta_0)\right\rbrace \right|&=4\left|Cov\left( \sum_{k_1\geq 1}\overset{\textbf{.}}{\lambda}_{k_1,i}\left(\theta_0\right)\epsilon_t\epsilon_{t-k_1},\sum_{k_2\geq 1}\overset{\textbf{.}}{\lambda}_{k_2,j}\left(\theta_0\right)\epsilon_{t-h}\epsilon_{t-h-k_2}\right) \right|
\\&\leq 4\sum_{k_1\geq 1}\sum_{k_2\geq 1}\left|\overset{\textbf{.}}{\lambda}_{k_1,i}\left(\theta_0\right)\right|\left|\overset{\textbf{.}}{\lambda}_{k_2,j}\left(\theta_0\right)\right|\left|\mathbb{E}\left[ \epsilon_t\epsilon_{t-k_1}\epsilon_{t-h}\epsilon_{t-h-k_2}\right]\right| \\
& \le  \sum_{k_1\geq 1}\sum_{k_2\geq 1}\frac{K}{k_1k_2}\left|\mathbb{E}\left[ \epsilon_t\epsilon_{t-k_1}\epsilon_{t-h}\epsilon_{t-h-k_2}\right]\right|
\end{align*}
where we have used Lemma \ref{miss2}. It follows that
\begin{align*}
\sum_{h=-\infty}^{\infty}\left|Cov\left\lbrace H_{t,i}(\theta_0),H_{t-h,j}(\theta_0)\right\rbrace \right|&\leq \sum_{h\in\mathbb Z\setminus\{0\}}\sum_{k_1\geq 1}\sum_{k_2\geq 1}\frac{K}{k_1k_2}\left|\mathrm{cum}\left( \epsilon_t,\epsilon_{t-k_1},\epsilon_{t-h},\epsilon_{t-h-k_2}\right)\right|\\ & \hspace{2cm}+\sum_{k_1\geq 1}\sum_{k_2\geq 1}\frac{K}{k_1k_2}\left|\mathbb{E}\left[ \epsilon_t\epsilon_{t-k_1}\epsilon_{t}\epsilon_{t-k_2}\right]\right| \ .
\end{align*}
Thanks to the stationarity of $(\epsilon_t)_{t\in\mathbb{Z}}$ and Assumption {\bf (A4')} with $\tau=4$ we deduce that
\begin{align*}
\sum_{h=-\infty}^{\infty}\left|Cov\left\lbrace H_{t,i}(\theta_0),H_{t-h,j}(\theta_0)\right\rbrace \right|&\leq
 \sum_{h\in\mathbb Z\setminus\{0\}}\sum_{k_1\geq 1}\sum_{k_2\geq 1}\frac{K}{k_1k_2}\left|\mathrm{cum}\left( \epsilon_0,\epsilon_{-k_1},\epsilon_{-h},\epsilon_{-h-k_2}\right)\right|\\ & \hspace{2cm}+\sum_{k_1\geq 1}\sum_{k_2\geq 1}\frac{K}{k_1k_2}\left|\mathbb{E}\left[ \epsilon_0\epsilon_{-k_1}\epsilon_{0}\epsilon_{-k_2}\right]\right| \\
& \le  K \sum_{h,k,l\in \mathbb Z}\left|\mathrm{cum}\left( \epsilon_0,\epsilon_{k},\epsilon_{h},\epsilon_{l}\right)\right|\\
& \hspace{2cm}+\sum_{k_1\geq 1}\sum_{k_2\geq 1}\frac{K}{k_1k_2}\Big ( \left|\mathrm{cum}\left( \epsilon_0,\epsilon_{-k_1},\epsilon_{0},\epsilon_{-k_2}\right)\right| \\ 
& \hspace{6cm}+\sigma_{\epsilon}^2\left|\mathbb{E}\left[ \epsilon_{-k_1}\epsilon_{-k_2}\right]\right|
\Big ) \\
& \le  K \sum_{h,k,l\in \mathbb Z}\left|\mathrm{cum}\left( \epsilon_0,\epsilon_{k},\epsilon_{h},\epsilon_{l}\right)\right|+ K \sigma_{\epsilon}^4\sum_{k_1\geq 1}\left(\frac{1}{k_1}\right)^2 \le K
\end{align*}
and we obtain the expected result.
\end{proof}
\begin{lemme}\label{normalite_score} Under Assumptions of Theorem~\ref{n.asymptotique}, the random vector $\sqrt{n}({\partial}/{\partial\theta})O_n(\theta_0)$ has a limiting normal distribution with mean $0$ and covariance matrix $I(\theta_0)$.
\end{lemme}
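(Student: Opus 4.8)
The plan is to write $\sqrt{n}(\partial/\partial\theta)O_n(\theta_0)=n^{-1/2}\sum_{t=1}^{n}H_t(\theta_0)$ (cf. \eqref{On} and \eqref{Ht_process}), where, by \eqref{epsil-th} and \eqref{deriveesecepsil}, $H_t(\theta_0)=2\epsilon_t(\partial/\partial\theta)\epsilon_t(\theta_0)$ is a centered, strictly stationary, ergodic process that is a measurable function of $\{\epsilon_s,\ s\le t\}$ only. By the Cram\'er--Wold device it suffices to prove, for an arbitrary $c\in\mathbb{R}^{p+q+1}$, that $n^{-1/2}\sum_{t=1}^{n}Z_t\xrightarrow[n\to\infty]{\text{in law}}\mathcal N(0,c'I(\theta_0)c)$ where $Z_t:=c'H_t(\theta_0)=2\epsilon_t\sum_{i\ge1}\bigl(\sum_{k=1}^{p+q+1}c_k\overset{\textbf{.}}{\lambda}_{i,k}(\theta_0)\bigr)\epsilon_{t-i}$. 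The strategy is the classical one: truncate the infinite-order linear filter defining $Z_t$, apply a central limit theorem for strongly mixing sequences to the finite-memory approximants, and then control the truncation error uniformly in $n$.

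For $m\ge1$ I would set $Z_t^{(m)}:=2\epsilon_t\sum_{i=1}^{m}\bigl(\sum_{k=1}^{p+q+1}c_k\overset{\textbf{.}}{\lambda}_{i,k}(\theta_0)\bigr)\epsilon_{t-i}$, obtained by truncating the series \eqref{deriveesecepsil} at $i=m$. Then $Z_t^{(m)}$ is a measurable function of $(\epsilon_{t-m},\dots,\epsilon_t)$, so $(Z_t^{(m)})_t$ is strongly mixing with $\alpha_{Z^{(m)}}(h)\le\alpha_{\epsilon}(\max(h-m,0))$. Using the Cauchy--Schwarz inequality and {\bf (A4)} with $\tau=4$ (which provides $\mathbb{E}|\epsilon_t|^{4+\nu}<\infty$ for some $\nu\in]0,1]$) one gets $\mathbb{E}|Z_t^{(m)}|^{2+\nu/2}<\infty$, while {\bf (A4)} with $k=4$ yields $\sum_{h\ge0}\{\alpha_{\epsilon}(h)\}^{\nu/(4+\nu)}<\infty$. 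A central limit theorem for strongly mixing sequences (e.g. Herrndorf or Ibragimov) then gives $n^{-1/2}\sum_{t=1}^{n}Z_t^{(m)}\xrightarrow[n\to\infty]{\text{in law}}\mathcal N(0,\sigma_m^2)$ with $\sigma_m^2=\sum_{h\in\mathbb{Z}}\mathrm{Cov}(Z_t^{(m)},Z_{t-h}^{(m)})$, the series being absolutely convergent by the same fourth-order cumulant computation used in the proof of Lemma~\ref{exit_I}.

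It then remains to establish two approximation facts: $\sigma_m^2\to c'I(\theta_0)c$ as $m\to\infty$, and $\lim_{m\to\infty}\limsup_{n\to\infty}\mathrm{Var}\bigl(n^{-1/2}\sum_{t=1}^{n}(Z_t-Z_t^{(m)})\bigr)=0$. Both follow by expanding the covariances attached to $Z_t-Z_t^{(m)}=2\epsilon_t\sum_{i>m}\bigl(\sum_kc_k\overset{\textbf{.}}{\lambda}_{i,k}(\theta_0)\bigr)\epsilon_{t-i}$ into a fourth-order cumulant term plus products of second-order moments of $\epsilon$, exactly as in Lemma~\ref{exit_I}, and bounding them through $|\overset{\textbf{.}}{\lambda}_{i,k}(\theta_0)|\le K/i$ (Lemma~\ref{miss2}) together with {\bf (A4')} with $\tau=4$; the resulting estimates involve the tail $\sum_{i>m}i^{-2}$ and a tail of the summable cumulant series $C_4$, hence tend to $0$ as $m\to\infty$ uniformly in $n$. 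Granting these two facts, the standard triangular-array approximation lemma (e.g. Proposition~6.3.9 in Brockwell--Davis) gives $n^{-1/2}\sum_{t=1}^{n}Z_t\xrightarrow[n\to\infty]{\text{in law}}\mathcal N(0,c'I(\theta_0)c)$, and Cram\'er--Wold concludes.

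The main obstacle I anticipate is the second approximation fact, namely showing that the contribution of the truncated tail $\sum_{i>m}(\sum_kc_k\overset{\textbf{.}}{\lambda}_{i,k}(\theta_0))\epsilon_{t-i}$ to the variance of the partial sums of $Z_t$ is negligible \emph{uniformly in $n$}: this is precisely what lets one transfer the mixing CLT from the finite-memory approximants $Z_t^{(m)}$ to $Z_t$ itself, and it is here that the merely polynomial (not geometric) decay $|\overset{\textbf{.}}{\lambda}_{i,k}(\theta_0)|\le K/i$ inherited from the fractional operator must be combined carefully with the cumulant summability {\bf (A4')} in order to absorb the double sum over lags.
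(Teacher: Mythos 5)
Your proposal is correct and follows essentially the same route as the paper: truncate the infinite linear filter in \eqref{deriveesecepsil} at a finite lag, apply Herrndorf's CLT for strongly mixing sequences to the finite-memory approximants, and control the tail variance uniformly in $n$ via Lemma~\ref{miss2} and {\bf (A4')} with $\tau=4$ before invoking the standard triangular-array approximation lemma. The only differences are cosmetic (you reduce to a scalar statement by Cram\'er--Wold and cite Brockwell--Davis, whereas the paper argues vectorially and cites Anderson's Theorem 7.7.1).
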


\begin{proof} 
Observe that for any $t\in\mathbb Z$
\begin{align}\label{zeromean}
& \mathbb{E}\left[\epsilon_t\frac{\partial}{\partial\theta}\epsilon_t(\theta_0)\right]=0
\end{align}
because ${\partial\epsilon_t(\theta_0)}/{\partial\theta}$ belongs to the Hilbert space  $\mathbf{H}_{\epsilon}(t-1)$, linearly generated by the family $(\epsilon_s)_{s\le t-1}$. Therefore we have
\begin{align*}
\lim_{n\rightarrow\infty}\mathbb{E}\left[ \sqrt{n}\frac{\partial}{\partial\theta}O_n(\theta_0)\right]&=
\lim_{n\rightarrow\infty}\frac{2}{\sqrt{n}}\sum_{t=1}^{n}\mathbb{E}\left[\epsilon_t\frac{\partial}{\partial\theta}\epsilon_t(\theta_0)\right]
=0.
\end{align*}
For $i\ge 1$, we denote by $ {\Lambda}_{i}\left(\theta_0\right)=(\overset{\textbf{.}}{\lambda}_{i,1}\left(\theta_0\right),\dots,
\overset{\textbf{.}}{\lambda}_{i,p+q+1}\left(\theta_0\right))'$ and we introduce for $r\ge 1$
$$H_{t,r}(\theta_0)=2\sum_{i= 1}^r{\Lambda}_{i}\left(\theta_0\right) \epsilon_t\epsilon_{t-i}\ \text{ and } \
G_{t,r}(\theta_0)=2\sum_{i\geq r+1}{\Lambda}_{k}\left(\theta_0\right) \epsilon_t\epsilon_{t-i}.$$
From \eqref{deriveesecepsil} we have
$$\sqrt{n}\frac{\partial}{\partial\theta}O_n(\theta_0)=\frac{1}{\sqrt{n}}\sum_{t=1}^{n}H_{t,r}(\theta_0)+\frac{1}{\sqrt{n}}\sum_{t=1}^{n}G_{t,r}(\theta_0) .$$
Since $H_{t,r}(\theta_0)$ is a function of finite number of values of the process $(\epsilon_t)_{t\in\mathbb{Z}}$, the stationary process $(H_{t,r}(\theta_0))_{t\in\mathbb{Z}}$ satisfies a mixing property  (see Theorem 14.1 in \cite{davidson1994}, p. 210) of the form {\bf (A4)}. The central limit theorem for strongly mixing processes (see \cite{herr})  implies that  $({1}/{\sqrt{n}})\sum_{t=1}^{n}H_{t,r}(\theta_0)$ has a limiting $\mathcal{N}(0,I_r(\theta_0))$ distribution with
$$I_r(\theta_0)=\lim_{n\rightarrow\infty}Var\left( \frac{1}{\sqrt{n}}\sum_{t=1}^{n}H_{t,r}(\theta_0)\right) .$$
Since $ \frac{1}{\sqrt{n}}\sum_{t=1}^{n}H_{t,r}(\theta_0)$ and $ \frac{1}{\sqrt{n}}\sum_{t=1}^{n}H_{t}(\theta_0)$ have zero expectation, we shall have
$$\lim_{r\rightarrow\infty}Var\left( \frac{1}{\sqrt{n}}\sum_{t=1}^{n}H_{t,r}(\theta_0)\right)=Var\left( \frac{1}{\sqrt{n}}\sum_{t=1}^{n}H_{t}(\theta_0)\right)=Var\left\lbrace \sqrt{n}\frac{\partial}{\partial\theta}O_n(\theta_0)\right\rbrace ,$$
 as soon as
\begin{equation}\label{conv_L2}
\lim_{r\rightarrow\infty}\mathbb{E}\left[ \left\|\frac{1}{\sqrt{n}}\sum_{t=1}^{n}H_{t}(\theta_0)-\frac{1}{\sqrt{n}}\sum_{t=1}^{n}H_{t,r}(\theta_0)\right\|^2\right] =0 .
\end{equation}
As a consequence we will have $\lim_{r\rightarrow\infty}I_r(\theta_0)=I(\theta_0)$.  The limit in $\eqref{conv_L2}$ is obtained as follows:
\begin{align*}
\mathbb{E}&\left[ \left\|\frac{1}{\sqrt{n}}\sum_{t=1}^{n}H_{t}(\theta_0)-\frac{1}{\sqrt{n}}\sum_{t=1}^{n}H_{t,r}(\theta_0)\right\|_{\mathbb R^{p+q+1}}^2\right] =\mathbb{E}\left[ \left\|\frac{1}{\sqrt{n}}\sum_{t=1}^{n}G_{t,r}(\theta_0)\right\|_{\mathbb R^{p+q+1}}^2\right] \\
&\hspace{4cm}\leq \frac{4}{n}\sum_{l=1}^{p+q+1}\mathbb{E}\left[ \left( \sum_{t=1}^{n}\sum_{k\geq r+1}\overset{\textbf{.}}{\lambda}_{k,l}\left(\theta_0\right) \epsilon_{t-k}\epsilon_t\right) ^2\right]\\
&\hspace{4cm}\leq \frac{4}{n}\sum_{l=1}^{p+q+1}\sum_{t=1}^{n}\sum_{s=1}^{n}\sum_{k\geq r+1}\sum_{j\geq r+1}\left|\overset{\textbf{.}}{\lambda}_{k,l}\left(\theta_0\right)\right|\left|\overset{\textbf{.}}{\lambda}_{j,l}\left(\theta_0\right)\right|\left|\mathbb{E}\left[ \epsilon_{t-k}\epsilon_t\epsilon_{s-j}\epsilon_{s}\right] \right|,
\end{align*}
We use successively the stationarity of $(\epsilon_t)_{t\in\mathbb{Z}}$, Lemma \ref{miss2} and  Assumption {\bf (A4')} with $\tau=4$ in order to obtain that
\begin{align*}
\mathbb{E}&\left[ \left\|\frac{1}{\sqrt{n}}\sum_{t=1}^{n}H_{t}(\theta_0)-\frac{1}{\sqrt{n}}\sum_{t=1}^{n}H_{t,r}(\theta_0)\right\|_{\mathbb R^{p+q+1}}^2\right] \\
&\hspace{2cm}\leq \frac{4}{n}\sum_{l=1}^{p+q+1}\sum_{h=1-n}^{n-1}\sum_{k\geq r+1}\sum_{j\geq r+1}\left|\overset{\textbf{.}}{\lambda}_{k,l}\left(\theta_0\right)\right|\left|\overset{\textbf{.}}{\lambda}_{j,l}\left(\theta_0\right)\right|\left(n-\left|h\right| \right) \left|\mathbb{E}\left[ \epsilon_{t-k}\epsilon_t\epsilon_{t-h-j}\epsilon_{t-h}\right] \right|\\
&\hspace{2cm}\leq 4\sum_{l=1}^{p+q+1}\sum_{h=-\infty}^{\infty}\sum_{k\geq r+1}\sum_{j\geq r+1}\left|\overset{\textbf{.}}{\lambda}_{k,l}\left(\theta_0\right)\right|\left|\overset{\textbf{.}}{\lambda}_{j,l}\left(\theta_0\right)\right|\left|\mathbb{E}\left[ \epsilon_{t-k}\epsilon_t\epsilon_{t-h-j}\epsilon_{t-h}\right] \right|\\
&\hspace{2cm}\leq \frac{K}{(r+1)^2}\sum_{h\neq 0}\sum_{k\geq r+1}\sum_{j=-\infty}^{\infty}\left|\mathrm{cum}\left(\epsilon_0,\epsilon_{-k},\epsilon_{-j},\epsilon_{-h} \right) \right|\\
&\hspace{4cm}+\frac{K}{(r+1)^2}\sum_{k\geq r+1}\sum_{j\geq r+1}\left|\mathrm{cum}\left(\epsilon_0,\epsilon_{-k},\epsilon_{-j},\epsilon_{0} \right) \right| +K\sigma_{\epsilon}^4\sum_{k\geq r+1}\left(\frac{1}{k}\right)^2
\end{align*}
and we obtain the convergence stated in \eqref{conv_L2} when $r\rightarrow\infty$.

Using Theorem 7.7.1 and Corollary 7.7.1 of Anderson (see \cite{anderson} pages 425-426), the lemma is proved once we have, uniformly in $n$,
$$Var\left( \frac{1}{\sqrt{n}}\sum_{t=1}^{n}G_{t,r}(\theta_0)\right)\xrightarrow[r\to\infty]{} 0 \ .$$

Arguing as before we may write
\begin{align*}
\left[ Var\left( \frac{1}{\sqrt{n}}\sum_{t=1}^{n}G_{t,r}(\theta_0) \right)\right] _{ij}&=\left[ Var\left( \frac{2}{\sqrt{n}}\sum_{t=1}^{n}\sum_{k\geq r+1}\Lambda_k(\theta_0)\epsilon_{t-k}\epsilon_t \right) \right] _{ij}\\
&=\frac{4}{n}\sum_{t=1}^{n}\sum_{s=1}^{n}\sum_{k_1\geq r+1}\sum_{k_2\geq r+1}\overset{\textbf{.}}{\lambda}_{k_1,i}\left(\theta_0\right)\overset{\textbf{.}}{\lambda}_{k_2,j}\left(\theta_0\right)\mathbb{E}\left[ \epsilon_{t-k_1}\epsilon_t\epsilon_{s-k_2}\epsilon_s\right] \\
&\leq 4\sum_{h=-\infty}^{\infty}\sum_{k_1,k_2\geq r+1} 
\left|\overset{\textbf{.}}{\lambda}_{k_1,i}\left(\theta_0\right)\overset{\textbf{.}}{\lambda}_{k_2,j}\left(\theta_0\right)\right|\left|\mathbb{E}\left[ \epsilon_{t-k_1}\epsilon_t\epsilon_{t-h-k_2}\epsilon_{t-h}\right]\right|.\\
\end{align*}
and we obtain that
\begin{align}\label{koko}
& \sup_n Var\left( \frac{1}{\sqrt{n}}\sum_{t=1}^{n}G_{t,r}(\theta_0) \right) \xrightarrow[r\to\infty]{} 0 ,
\end{align}
 which completes the proof.
\end{proof}
No we can end this quite long proof of the asymptotic normality result.
\paragraph*{Proof of Theorem \ref{n.asymptotique}} \ \\

In view of Lemma~\ref{ConProQversO}, the equation \eqref{Taylormodifie} can be rewritten in the form:
\begin{align*}
&\mathrm{o}_{\mathbb{P}}(1)= \sqrt{n}\frac{\partial}{\partial\theta}O_n(\theta_0)+\left[\frac{\partial^2}{\partial\theta_i\partial\theta_j}Q_n\left( \theta^*_{n,i,j}\right) \right] \sqrt{n}\left( \hat{\theta}_n-\theta_0\right).
\end{align*}
From Lemma \ref{normalite_score} $ [({\partial^2}/{\partial\theta_i\partial\theta_j})Q_n( \theta^*_{n,i,j})]\sqrt{n}( \widehat{\theta}_n-\theta_0) $ converges in distribution to $\mathcal{N}(0,I(\theta_0))$.
Using Lemma \ref{Conv_almost_sure_J_n_vers_J} and Slutsky's theorem we deduce that
$$\left(\left[\frac{\partial^2}{\partial\theta_i\partial\theta_j}Q_n\left( \theta^*_{n,i,j}\right) \right],\left[\frac{\partial^2}{\partial\theta_i\partial\theta_j}Q_n\left( \theta^*_{n,i,j}\right) \right]\sqrt{n}( \hat{\theta}_n-\theta_0)\right)$$ converges in distribution to $(J(\theta_0),Z)$ with $\mathbb{P}_{Z}=\mathcal{N}(0,I)$.
Consider now the function $h:\mathbb{R}^{(p+q+1)\times (p+q+1)}\times \mathbb{R}^{p+q+1}\rightarrow\mathbb{R}^{p+q+1}$ that maps $(A,X)$ to $A^{-1}X$. If $D_h$ denotes the set of discontinuity points of $h$, we have  $\mathbb{P}((J(\theta_0),Z)\in D_h)=0$. By the continuous mapping theorem $$h\left(\left[({\partial^2}/{\partial\theta_i\partial\theta_j})Q_n( \theta^*_{n,i,j}) \right],\left[({\partial^2}/{\partial\theta_i\partial\theta_j})Q_n( \theta^*_{n,i,j}) \right]\sqrt{n}( \hat{\theta}_n-\theta_0)\right)$$ converges in distribution to $h(J(\theta_0),Z)$ and thus $\sqrt{n}( \hat{\theta}_n-\theta_0)$ has a limiting normal distribution with mean $0$ and covariance matrix $J^{-1}(\theta_0)I(\theta_0)J^{-1}(\theta_0)$.
The proof of Theorem~\ref{n.asymptotique} is then completed.

\subsection{Proof of the convergence of the variance matrix estimator}\label{sectI}
We show in this section the convergence in probability of $\hat{\Omega}:=\hat{J}_n^{-1}\hat{I}_n^{SP}\hat{J}_n^{-1}$ to $\Omega$, which is an adaptation of the arguments used in \cite{BMCF2012}.\\
Using the same approach as that followed in Lemma \ref{Conv_almost_sure_J_n_vers_J}, we show that $\hat{J}_n$ converges in probability to $J$. We give below the proof of the convergence in probability of the estimator $\hat{I}_n^{SP}$, obtained using the approach of the spectral density, to $I$.\\
We recall that the matrix norm used is given by $\left\|A\right\|=\sup_{\left\|x\right\|\leq 1}\left\|Ax\right\|=\rho^{1/2}( A^{'}A)$, when $A$ is a $\mathbb{R}^{k_1\times k_2}$ matrix, $\|x\|^2=x^{'}x$ is the Euclidean norm of the vector $x\in\mathbb{R}^{k_2}$, and $\rho(\cdot)$ denotes the spectral radius. This norm satisfies
\begin{equation}\label{ineg_norme_matricielle}
\left\|A\right\|^2\leq \sum_{i=1}^{k_1}\sum_{j=1}^{k_2}a_{i,j}^2,
\end{equation}
with $a_{i,j}$ the entries of $A\in\mathbb R^{k_1\times k_2}$. The choice of the norm is crucial for the following results to hold (with e.g. the Euclidean norm, this result is not valid).

 We denote
$$\Sigma_{H,\underline{H_r}}=\mathbb{E}H_t\underline{H}_{r,t}^{'}, \ \ \Sigma_{H}=\mathbb{E}H_tH_t^{'}, \ \ \Sigma_{\underline{H_r}}=\mathbb{E}\underline{H}_{r,t}\underline{H}_{r,t}^{'} $$
where $H_t:=H_t(\theta_0)$ is definied in \eqref{Ht_process} and $\underline{{H}}_{r,t}=({H}_{t-1}^{'},\dots,{H}_{t-r}^{'}) ^{'}$.
For any $n\ge 1$, we have
\begin{align*}
\hat{I}_n^{\mathrm{SP}}&=\hat{\Phi}_r^{-1}(1)\hat{\Sigma}_{\hat{u}_r}\hat{\Phi}_r^{'-1}(1)
\\
&=\left( \hat{\Phi}_r^{-1}(1)-\Phi^{-1}(1)\right) \hat{\Sigma}_{\hat{u}_r}\hat{\Phi}_r^{'-1}(1)+\Phi^{-1}(1)\left( \hat{\Sigma}_{\hat{u}_r}-\Sigma_u\right) \hat{\Phi}_r^{'-1}(1)\\
&\hspace{2cm}+\Phi^{-1}(1)\Sigma_u\left( \hat{\Phi}_r^{'-1}(1)-\Phi^{'-1}(1)\right) +\Phi^{-1}(1)\Sigma_u\Phi^{'-1}(1).
\end{align*}
We then obtain
\begin{align}\label{converg_Isp_vers_I}
\left\|\hat{I}_n^{\mathrm{SP}}-I(\theta_0)\right\|&\leq \left\| \hat{\Phi}_r^{-1}(1)-\Phi^{-1}(1)\right\|\left\| \hat{\Sigma}_{\hat{u}_r}\right\|\left\|\hat{\Phi}_r^{'-1}(1)\right\|+\left\|\Phi^{-1}(1)\right\|\left\| \hat{\Sigma}_{\hat{u}_r}-\Sigma_u\right\|\left\| \hat{\Phi}_r^{'-1}(1)\right\| \nonumber\\
&\hspace{2cm}+\left\|\Phi^{-1}(1)\right\|\left\|\Sigma_u\right\|\left\| \hat{\Phi}_r^{'-1}(1)-\Phi^{'-1}(1)\right\|\nonumber\\
&\leq  \left\| \hat{\Phi}_r^{-1}(1)-\Phi^{-1}(1)\right\|\left( \left\| \hat{\Sigma}_{\hat{u}_r}\right\|\left\|\hat{\Phi}_r^{'-1}(1)\right\|+\left\|\Phi^{-1}(1)\right\|\left\|\Sigma_u\right\|\right)\nonumber\\
&\hspace{2cm} +\left\| \hat{\Sigma}_{\hat{u}_r}-\Sigma_u\right\|\left\| \hat{\Phi}_r^{'-1}(1)\right\|\left\|\Phi^{-1}(1)\right\|\nonumber\\
&\leq  \left\| \hat{\Phi}_r^{-1}(1)\right\|\left\|\Phi(1)- \hat{\Phi}_r(1)\right\|\left\|\Phi^{-1}(1)\right\|\left( \left\| \hat{\Sigma}_{\hat{u}_r}\right\|\left\|\hat{\Phi}_r^{'-1}(1)\right\|+\left\|\Phi^{-1}(1)\right\|\left\|\Sigma_u\right\|\right)\nonumber\\
&\hspace{2cm} +\left\| \hat{\Sigma}_{\hat{u}_r}-\Sigma_u\right\|\left\| \hat{\Phi}_r^{'-1}(1)\right\|\left\|\Phi^{-1}(1)\right\|.
\end{align}
In view of $\eqref{converg_Isp_vers_I}$, to prove the convergence in probability of $\hat{I}_n^{\mathrm{SP}}$ to $I(\theta_0)$, it suffices to show that $\hat{\Phi}_r(1)\rightarrow\Phi(1)$ and $\hat{\Sigma}_{\hat{u}_r}\rightarrow\Sigma_u$ in probability.
Let the $r\times 1$ vector $\mathbb{1}_r=(1,\dots,1)^{'}$ and the $r(p+q+1)\times (p+q+1)$ matrix $\mathbf{E}_r={I}_{p+q+1}\otimes \mathbb{1}_r$, where $\otimes$ denotes the matrix Kronecker product and ${I}_m$ the $m\times m$ identity matrix. Write $\underline{\Phi}^{*}_r=(\Phi_1,\dots,\Phi_r)$ where the $\Phi_i$'s are defined by $\eqref{AR_infty}$. We have
\begin{align}\label{conv_phichapeau_ver_phi}
\left\|\hat{\Phi}_r(1)-\Phi(1)\right\|&=\left\|\sum_{k=1}^r\hat{\Phi}_{r,k}-\sum_{k=1}^r\Phi_{r,k}+\sum_{k=1}^r\Phi_{r,k}-\sum_{k=1}^{\infty}\Phi_k\right\|\nonumber\\
&\leq \left\|\sum_{k=1}^r\left( \hat{\Phi}_{r,k}-\Phi_{r,k}\right) \right\|+\left\|\sum_{k=1}^r\left( \Phi_{r,k}-\Phi_{k}\right) \right\|+\left\|\sum_{k=r+1}^{\infty}\Phi_k\right\|\nonumber\\
&\leq \left\|\left( \underline{\hat{\Phi}}_r-\underline{\Phi}_r\right)\mathbf{E}_r\right\|+ \left\|\left( \underline{\Phi}^{*}_r-\underline{\Phi}_r\right)\mathbf{E}_r\right\|+\left\|\sum_{k=r+1}^{\infty}\Phi_k\right\|\nonumber\\
&\leq \sqrt{p+q+1}\sqrt{r}\left( \left\|\underline{\hat{\Phi}}_r-\underline{\Phi}_r\right\|+ \left\| \underline{\Phi}^{*}_r-\underline{\Phi}_r\right\|\right) +\left\|\sum_{k=r+1}^{\infty}\Phi_k\right\|.
\end{align}
Under the assumptions of Theorem $\ref{convergence_Isp}$ we have 
$$\left\|\sum_{k=r+1}^{\infty}\Phi_k\right\|\leq \sum_{k=r+1}^{\infty}\left\|\Phi_k\right\| \xrightarrow[n\to\infty]{} 0 .$$
Therefore it is enough to show that $\sqrt{r}\|\underline{\hat{\Phi}}_r-\underline{\Phi}_r\|$ and $\sqrt{r}\| \underline{\Phi}^{*}_r-\underline{\Phi}_r\|$ converge in probability towards $0$ in order to obtain the convergence in probability of $\hat{\Phi}_r(1)$ towards $\Phi(1)$. From \eqref{AR_tronquee} we have
\begin{equation}\label{ecriture_MATRICIELLE_ar_tronque}
H_t(\theta_0)=\underline{\Phi}_r\underline{H}_{r,t}(\theta_0)+u_{r,t},
\end{equation}
and thus
\begin{align*}
\Sigma_{u_r}=\mathrm{Var}(u_{r,t})
=\mathbb{E}\left[u_{r,t}\left( H_t(\theta_0)-\underline{\Phi}_r\underline{H}_{r,t}(\theta_0)\right)^{'}  \right].
\end{align*}
The vector $u_{r,t}$ is orthogonal to $\underline{H}_{r,t}(\theta_0)$. It follows that
\begin{align*}
\mathrm{Var}(u_{r,t})&=\mathbb{E}\left[\left( H_t(\theta_0)-\underline{\Phi}_r\underline{H}_{r,t}(\theta_0)\right)H_t^{'}(\theta_0)\right]\\
&=\Sigma_{H}-\underline{\Phi}_r\Sigma_{H,\underline{H}_r}^{'}.
\end{align*}
Consequently the least squares estimator of $\Sigma_{u_r}$ can be rewritten in the form:
\begin{equation}\label{estimateur_Sigma_U_r}
\hat{\Sigma}_{\hat{u}_r}=\hat{\Sigma}_{\hat{H}}-\underline{\hat{\Phi}}_r\hat{\Sigma}^{'}_{\hat{H},\underline{\hat{H}}_r},
\end{equation}
where
\begin{equation}\label{sigma_chap_de_h_chap}
\hat{\Sigma}_{\hat{H}}=\frac{1}{n}\sum_{t=1}^n\hat{H}_t\hat{H}_t^{'}.
\end{equation}
Similar arguments combined with \eqref{AR_infty} yield
\begin{align*}
\Sigma_u =\mathbb{E}\left[ u_t u_t^{'}\right] &=\mathbb{E}\left[ u_t H^{'}_t(\theta_0)\right] \\
&=\mathbb{E}\left[ H_t(\theta_0)H_t^{'}(\theta_0)\right] -\sum_{k=1}^{r}\Phi_k\mathbb{E}\left[ H_{t-k}(\theta_0)H^{'}_t(\theta_0)\right] -\sum_{k=r+1}^{\infty}\Phi_k\mathbb{E}\left[ H_{t-k}(\theta_0)H^{'}_t(\theta_0)\right]\\
&=\Sigma_{H}-\underline{\Phi}^{*}_r\Sigma^{'}_{H,\underline{H}_r}-\sum_{k=r+1}^{\infty}\Phi_k\mathbb{E}\left[ H_{t-k}(\theta_0)H^{'}_t(\theta_0)\right].
\end{align*}
By \eqref{estimateur_Sigma_U_r} we obtain
\begin{align}\label{conv_Sigmachap_vers_Sigma}
\left\|\hat{\Sigma}_{\hat{u}_r}-\Sigma_u\right\|&=\left\|\hat{\Sigma}_{\hat{H}}-\underline{\hat{\Phi}}_r\hat{\Sigma}^{'}_{\hat{H},\underline{\hat{H}}_r}-\Sigma_{H}+\underline{\Phi}^{*}_r\Sigma^{'}_{H,\underline{H}_r}+\sum_{k=r+1}^{\infty}\Phi_k\mathbb{E}\left[ H_{t-k}(\theta_0)H^{'}_t(\theta_0)\right]\right\|\nonumber\\
&=\left\|\hat{\Sigma}_{\hat{H}}-\Sigma_{H}-\left( \underline{\hat{\Phi}}_r-\underline{\Phi}^{*}_r\right) \hat{\Sigma}^{'}_{\hat{H},\underline{\hat{H}}_r}-\underline{\Phi}^{*}_r\left( \hat{\Sigma}^{'}_{\hat{H},\underline{\hat{H}}_r}-\Sigma^{'}_{H,\underline{H}_r}\right) +\sum_{k=r+1}^{\infty}\Phi_k\mathbb{E}\left[ H_{t-k}(\theta_0)H^{'}_t(\theta_0)\right]\right\|\nonumber\\
&\leq \left\|\hat{\Sigma}_{\hat{H}}-\Sigma_{H}\right\|+\left\|\left( \underline{\hat{\Phi}}_r-\underline{\Phi}^{*}_r\right) \left( \hat{\Sigma}^{'}_{\hat{H},\underline{\hat{H}}_r}-\Sigma^{'}_{H,\underline{H}_r}\right) \right\|+\left\|\left( \underline{\hat{\Phi}}_r-\underline{\Phi}^{*}_r\right)\Sigma^{'}_{H,\underline{H}_r} \right\|\nonumber\\
&\quad+\left\|\underline{\Phi}^{*}_r\left( \hat{\Sigma}^{'}_{\hat{H},\underline{\hat{H}}_r}-\Sigma^{'}_{H,\underline{H}_r}\right)\right\|+\left\|\sum_{k=r+1}^{\infty}\Phi_k\mathbb{E}\left[ H_{t-k}(\theta_0)H^{'}_t(\theta_0)\right]\right\|.
\end{align}
From Lemma~\ref{exit_I} and under Assumptions of Theorem~\ref{convergence_Isp} we deduce that
\begin{align*}
\left\|\sum_{k=r+1}^{\infty}\Phi_k\mathbb{E}\left[ H_{t-k}(\theta_0)H^{'}_t(\theta_0)\right]\right\|&\leq \sum_{k=r+1}^{\infty}\left\|\Phi_k\right\|\left\|\mathbb{E}\left[ H_{t-k}(\theta_0)H^{'}_t(\theta_0)\right]\right\| \\
&\leq K \sum_{k=r+1}^{\infty}\frac{1}{k^2}\xrightarrow[n\to\infty]{}  0.
\end{align*}
Observe also that
$$\left\|\underline{\Phi}^{*}_r\right\|^2\leq \sum_{k\geq 1}\mathrm{Tr}\left(\Phi_k\Phi_k^{'}\right) <\infty.$$
Therefore  the convergence $\hat{\Sigma}_{\hat{u}_r}$ to $\Sigma_u$ will be a consequence of the four following properties:
\begin{itemize}
\item $\|\hat{\Sigma}_{\hat{H}}-\Sigma_{H}\|=\mathrm{o}_{\mathbb{P}}(1)$,
\item $\mathbb P-\lim_{n\to\infty}\| \underline{\hat{\Phi}}_r-\underline{\Phi}^{*}_r\|=0$,
\item $\mathbb P-\lim_{n\to\infty}\|\hat{\Sigma}^{'}_{\hat{H},\underline{\hat{H}}_r}-\Sigma^{'}_{H,\underline{H}_r} \|=0$ and
\item $\|\Sigma^{'}_{H,\underline{H}_r} \|=\mathrm{O}(1)$.
\end{itemize}
The above properties will be proved thanks to several lemmas that are stated and proved hereafter. This ends the proof of Theorem \ref{convergence_Isp}.
For this, consider the following lemmas:
\begin{lemme}\label{sur_existence_des_matrices}
Under the assumptions of Theorem~\ref{convergence_Isp}, we have
$$\sup_{r\geq 1}\max\left\lbrace \left\|\Sigma_{H,\underline{H}_r}\right\|, \left\|\Sigma_{\underline{H}_r}\right\|,\left\|\Sigma_{\underline{H}_r}^{-1}\right\|\right\rbrace <\infty.$$
\end{lemme}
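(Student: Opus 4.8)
The plan is to deduce all three bounds from the spectral representation of the centered stationary process $(H_t(\theta_0))_{t\in\mathbb Z}$, exploiting the block Toeplitz structure of $\Sigma_{\underline H_r}$. Recall that $\mathbb E[H_t(\theta_0)]=0$ by \eqref{zeromean}, and that Lemma~\ref{exit_I}---more precisely the summability \eqref{conv_dominee}---gives $\sum_{h\in\mathbb Z}\|\Gamma_H(h)\|<\infty$, where $\Gamma_H(h):=\mathrm{Cov}(H_t(\theta_0),H_{t-h}(\theta_0))$. Hence $(H_t(\theta_0))_{t\in\mathbb Z}$ admits a bounded, continuous, Hermitian positive semi-definite spectral density $f_H(\omega)=(2\pi)^{-1}\sum_{h\in\mathbb Z}\Gamma_H(h)e^{-\mathrm ih\omega}$, with $\Gamma_H(h)=\int_{-\pi}^{\pi}f_H(\omega)e^{\mathrm ih\omega}\,d\omega$ and $\sup_{\omega}\|f_H(\omega)\|\le(2\pi)^{-1}\sum_{h}\|\Gamma_H(h)\|<\infty$. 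Since the $(j,k)$ block of $\Sigma_{\underline H_r}$ equals $\Gamma_H(k-j)$, a direct computation combined with Parseval's identity shows that, for every $v=(v_1',\dots,v_r')'$ with $v_j\in\mathbb C^{p+q+1}$,
\begin{align*}
v^{*}\Sigma_{\underline H_r}v=\int_{-\pi}^{\pi}\phi_v(\omega)^{*}f_H(\omega)\,\phi_v(\omega)\,d\omega,\qquad\phi_v(\omega):=\sum_{j=1}^{r}e^{\mathrm ij\omega}v_j,\qquad\int_{-\pi}^{\pi}\|\phi_v(\omega)\|^{2}\,d\omega=2\pi\|v\|^{2}.
\end{align*}

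The two upper bounds then come for free. The displayed identity yields $\|\Sigma_{\underline H_r}\|=\lambda_{\max}(\Sigma_{\underline H_r})\le 2\pi\sup_{\omega}\|f_H(\omega)\|$, uniformly in $r$. For the remaining two quantities I would note that $\Sigma_H$ and $\Sigma_{H,\underline H_r}$ are submatrices of $\Sigma_{\underline H_{r+1}}$ (the $(1,1)$ block, respectively block-row $1$ and block-columns $2,\dots,r+1$), and that extracting rows or columns does not increase the operator norm; hence $\max\{\|\Sigma_H\|,\|\Sigma_{H,\underline H_r}\|\}\le\|\Sigma_{\underline H_{r+1}}\|\le 2\pi\sup_{\omega}\|f_H(\omega)\|$ for all $r$. (Alternatively $\|\Sigma_{H,\underline H_r}\|^{2}=\|\sum_{k=1}^{r}\Gamma_H(k)\Gamma_H(k)'\|\le(\sum_{k\ge1}\|\Gamma_H(k)\|)^{2}$.)

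The genuinely non-routine step---and the main obstacle---is the uniform bound on $\|\Sigma_{\underline H_r}^{-1}\|$, that is, a uniform lower bound $\lambda_{\min}(\Sigma_{\underline H_r})\ge c>0$, which in particular already entails invertibility of $\Sigma_{\underline H_r}$ for every $r$. From the same identity and Parseval's relation one gets $v^{*}\Sigma_{\underline H_r}v\ge 2\pi\big(\inf_{\omega}\lambda_{\min}(f_H(\omega))\big)\|v\|^{2}$, so it suffices to prove that $\inf_{\omega}\lambda_{\min}(f_H(\omega))>0$. Here I would use the AR$(\infty)$ representation \eqref{AR_infty} assumed in Theorem~\ref{convergence_Isp}: since $\det\Phi(z)\neq0$ for $|z|\le1$, the matrix $\Phi(e^{-\mathrm i\omega})$ is invertible for all $\omega$ and $f_H(\omega)=(2\pi)^{-1}\Phi^{-1}(e^{-\mathrm i\omega})\,\Sigma_u\,\Phi^{-1}(e^{-\mathrm i\omega})^{*}$. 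For a unit vector $x$, putting $y=\Phi^{-1}(e^{-\mathrm i\omega})^{*}x$ gives $x=\Phi(e^{-\mathrm i\omega})^{*}y$, hence $1\le\|\Phi(e^{-\mathrm i\omega})\|\,\|y\|\le M\|y\|$ with $M:=1+\sum_{k\ge1}\|\Phi_k\|<\infty$, so that
\begin{align*}
x^{*}f_H(\omega)\,x=(2\pi)^{-1}\,y^{*}\Sigma_u\,y\ \ge\ \frac{\lambda_{\min}(\Sigma_u)}{2\pi\,M^{2}}\ >\ 0,
\end{align*}
uniformly in $\omega$ and in $x$, using that $\Sigma_u$ is non-singular. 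Thus $\lambda_{\min}(\Sigma_{\underline H_r})\ge c:=\lambda_{\min}(\Sigma_u)M^{-2}>0$ for every $r\ge1$, so $\|\Sigma_{\underline H_r}^{-1}\|\le c^{-1}$, which completes the argument. I expect this last step to be the delicate one: it really needs the detour through the spectral density and the non-degeneracy of the AR$(\infty)$ innovation (namely $\Sigma_u$ non-singular together with $\det\Phi(z)\neq0$ on $|z|\le1$), as there seems to be no way to bound the inverses of the finite matrices $\Sigma_{\underline H_r}$ directly.
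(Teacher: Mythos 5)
Your proof is correct, and it is essentially the standard argument: the paper does not prove this lemma in-text but delegates it to Lemma~1 of the supplementary material of \cite{BMCF2012}, whose proof is precisely this spectral-density/block-Toeplitz computation (bounded spectral density from the summable autocovariances of Lemma~\ref{exit_I} for the upper bounds, and the factorization $f_H(\omega)=(2\pi)^{-1}\Phi^{-1}(e^{-\mathrm i\omega})\Sigma_u\Phi^{-1}(e^{-\mathrm i\omega})^{*}$ together with $\det\Phi(z)\neq0$ on $|z|\le1$ and $\Sigma_u$ non-singular for the uniform lower bound on $\lambda_{\min}(\Sigma_{\underline H_r})$). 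In effect you have supplied in full the argument the paper outsources, and every step checks out.
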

\begin{proof} See Lemma 1 in the supplementary material of \cite{BMCF2012}.

\end{proof}

\begin{lemme}\label{lemme_cov_des_H} Under the assumptions of Theorem $\ref{convergence_Isp}$ there exists a finite positive constant $K$ such that, for $1\le r_1,r_2\le r$ and $1 \le m_1,m_2\le p+q+1$ we have
$$\sup_{t\in\mathbb{Z}}\sum_{h=-\infty}^{\infty}\left|\mathrm{Cov}\left\lbrace H_{t-r_1,m_1}(\theta_0)H_{t-r_2,m_2}(\theta_0),H_{t-r_1-h,m_1}(\theta_0)H_{t-r_2-h,m_2}(\theta_0) \right\rbrace  \right|<K.$$
\end{lemme}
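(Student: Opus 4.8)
The plan is to expand each factor $H_{s,m}(\theta_0)$ as a bilinear form in the noise, to rewrite the left-hand side as a weighted sum of covariances of products of four noise variables, and then to control that sum by the product-to-cumulant (Leonov--Shiryaev) formula together with the cumulant summability provided by Assumption \textbf{(A4')} with $\tau=8$ and by the white-noise property \textbf{(A0)}.

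First I would use that $\epsilon_s(\theta_0)=\epsilon_s$ a.s. and, by \eqref{deriveesecepsil}, that $\partial\epsilon_s(\theta_0)/\partial\theta_m=\sum_{i\ge1}\overset{\textbf{.}}{\lambda}_{i,m}(\theta_0)\epsilon_{s-i}$, so that
$$H_{s,m}(\theta_0)=2\sum_{i\ge1}\overset{\textbf{.}}{\lambda}_{i,m}(\theta_0)\,\epsilon_s\epsilon_{s-i}.$$
Multiplying out the two products $H_{t-r_1,m_1}(\theta_0)H_{t-r_2,m_2}(\theta_0)$ and $H_{t-r_1-h,m_1}(\theta_0)H_{t-r_2-h,m_2}(\theta_0)$ (with new summation indices $j_1,j_2$ in the second one) and interchanging the, a posteriori absolutely convergent, sums, the quantity to be bounded becomes
$$16\sum_{i_1,i_2,j_1,j_2\ge1}\bigl|\overset{\textbf{.}}{\lambda}_{i_1,m_1}(\theta_0)\overset{\textbf{.}}{\lambda}_{i_2,m_2}(\theta_0)\overset{\textbf{.}}{\lambda}_{j_1,m_1}(\theta_0)\overset{\textbf{.}}{\lambda}_{j_2,m_2}(\theta_0)\bigr|\,\sum_{h\in\mathbb{Z}}\bigl|c_h(i_1,i_2,j_1,j_2)\bigr|,$$
where $c_h(i_1,i_2,j_1,j_2)=\mathrm{Cov}\bigl(\epsilon_{t-r_1}\epsilon_{t-r_1-i_1}\epsilon_{t-r_2}\epsilon_{t-r_2-i_2},\,\epsilon_{t-r_1-h}\epsilon_{t-r_1-h-j_1}\epsilon_{t-r_2-h}\epsilon_{t-r_2-h-j_2}\bigr)$. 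By Lemma \ref{miss2} the weights satisfy $|\overset{\textbf{.}}{\lambda}_{i,m}(\theta_0)|\le K/i$, so it remains to bound $\sum_{i_1,i_2,j_1,j_2,h}(i_1i_2j_1j_2)^{-1}|c_h|$ uniformly in $t,r_1,r_2$.

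Next I would apply the product-to-cumulant formula: $c_h(i_1,i_2,j_1,j_2)$ equals the sum, over all partitions $\pi$ of the eight time indices above that possess at least one block meeting both the first four (``left'') and the last four (``right'') indices, of $\prod_{b\in\pi}\mathrm{cum}\bigl((\epsilon_u)_{u\in b}\bigr)$; since the $\epsilon_t$ are centered, only partitions all of whose blocks have cardinality $\ge2$ survive, and there are finitely many of them. For each admissible $\pi$ I would use three facts: (i) by \textbf{(A0)} a block of size $2$ contributes $\sigma_\epsilon^2$ when its two indices coincide and $0$ otherwise, which rules out several blocks unless a linear relation among $r_1,r_2$ and the summation variables holds; (ii) every cumulant of order $k\in\{2,\dots,8\}$ is bounded because $\mathbb{E}|\epsilon_t|^{8}<\infty$; (iii) by stationarity a cumulant of order $k$ depends only on the lag differences of its arguments and its sum over all those lags is at most $C_k<\infty$, which is granted for $k\le8$ by \textbf{(A4')} (cf. \cite{doukhan1989}), the case $k=2$ being \textbf{(A0)}. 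Fixing $\pi$, some block $b^\star$ meets both the left and the right indices, hence at least one lag difference inside $b^\star$ is $h$ plus a fixed affine combination of $r_1,r_2$ and of the $i$'s and $j$'s; summing the corresponding cumulant over all lags of $b^\star$ (ranging freely over $\mathbb{Z}$) therefore simultaneously absorbs the sum over $h$ and is bounded by $C_{|b^\star|}$. The remaining summation variables are handled by the weights: writing $(i_1i_2j_1j_2)^{-1}\le\frac14(i_1^{-2}+j_1^{-2})(i_2^{-2}+j_2^{-2})$ reduces the free sums to convergent series $\sum_i i^{-2}$, a variable tied to another through a size-$2$ block being absorbed after the substitution of fact (i) against such a series, while the other cumulant factors are bounded by their suprema via (ii). This yields a bound independent of $t,r_1,r_2$ (hence of $r$), so that the supremum over $t$ and the constraint $1\le r_1,r_2\le r$ are vacuous, stationarity making all cumulants translation invariant.

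The hard part is the bookkeeping in the last step: one must verify, for each of the admissible partitions of the eight indices, that the four weights $(i_1i_2j_1j_2)^{-1}$, one single use of cumulant summability (applied to $b^\star$), and the boundedness of the remaining cumulant blocks together make the five-fold sum finite. The delicate configurations are those in which a summation variable $i_a$ or $j_a$ is itself tied --- through a size-two block --- either to $h$ or to another summation variable; there one has to assign exactly one convergent factor to each free index while letting the cumulant of $b^\star$ carry the whole sum over $h$.
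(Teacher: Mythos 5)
Your proposal is correct and follows essentially the same route as the paper: expand $H_{s,m}(\theta_0)$ in the noise via \eqref{deriveesecepsil}, use the $K/i$ decay of the weights from Lemma~\ref{miss2}, and control the resulting sum through the Leonov--Shiryaev product theorem together with the summability of cumulants up to order $8$ from \textbf{(A4')} and the white-noise property. The only cosmetic difference is that the paper first splits the covariance of the two products into a fourth-order cumulant of the $H$'s plus two products of second moments (its terms $T^{(1)}$ and $T^{(2)}$, treated separately) before expanding at the $\epsilon$-level, whereas you apply the partition expansion directly to the eight noise factors; the underlying bookkeeping is the same.
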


\begin{proof}  We denote in the sequel by $\overset{\textbf{.}}{\lambda}_{j,k}$ the coefficient $\overset{\textbf{.}}{\lambda}_{j,k}(\theta_0)$ defined in \eqref{Coef-lambda}.\\
Using the fact that the process $(H_t(\theta_0))_{t\in\mathbb{Z}}$ is centered and taking into consideration the strict stationarity of $(\epsilon_t)_{t\in\mathbb{Z}}$ we obtain that for any $t\in\mathbb{Z}$
\begin{align*}
\sum_{h=-\infty}^{\infty}\Big |&\mathrm{Cov}\big ( H_{t-r_1,m_1}(\theta_0)H_{t-r_2,m_2}(\theta_0) ,H_{t-r_1-h,m_1}(\theta_0)H_{t-r_2-h,m_2}(\theta_0) \big ) \Big |\\
&= \sum_{h=-\infty}^{\infty}\Big |\mathbb{E}\left[  H_{t-r_1,m_1}(\theta_0)H_{t-r_2,m_2}(\theta_0)H_{t-r_1-h,m_1}(\theta_0)H_{t-r_2-h,m_2}(\theta_0)\right]  \\
&\hspace{2cm}-\mathbb{E}\left[  H_{t-r_1,m_1}(\theta_0)H_{t-r_2,m_2}(\theta_0)\right]\mathbb{E}\left[H_{t-r_1-h,m_1}(\theta_0)H_{t-r_2-h,m_2}(\theta_0)\right]  \Big |\\
&\leq\sum_{h=-\infty}^{\infty} \Big|\mathrm{cum}\big ( H_{t-r_1,m_1}(\theta_0),H_{t-r_2,m_2}(\theta_0),H_{t-r_1-h,m_1}(\theta_0),H_{t-r_2-h,m_2}(\theta_0)\big ) \Big |\\
&\hspace{2cm}+\sum_{h=-\infty}^{\infty}\left|\mathbb{E}\left[  H_{t-r_1,m_1}(\theta_0)H_{t-r_1-h,m_1}(\theta_0)\right]\right|\left|\mathbb{E}\left[H_{t-r_2,m_2}(\theta_0)H_{t-r_2-h,m_2}(\theta_0)\right]  \right|\\
&\hspace{2cm}+\sum_{h=-\infty}^{\infty}\left|\mathbb{E}\left[  H_{t-r_1,m_1}(\theta_0)H_{t-r_2-h,m_2}(\theta_0)\right]\right|\left|\mathbb{E}\left[H_{t-r_2,m_2}(\theta_0)H_{t-r_1-h,m_1}(\theta_0)\right]  \right|\\
&\leq \sum_{h=-\infty}^{\infty}\sum_{i_1,j_1,k_1,\ell_1\geq 1}\left|\overset{\textbf{.}}{\lambda}_{i_1,m_1}\overset{\textbf{.}}{\lambda}_{j_1,m_2}\overset{\textbf{.}}{\lambda}_{k_1,m_1}\overset{\textbf{.}}{\lambda}_{\ell_1,m_2}\right|\left|\mathrm{cum}\left(\epsilon_{0}\epsilon_{-i_1},\epsilon_{r_1-r_2}\epsilon_{r_1-r_2-j_1},\epsilon_{-h}\epsilon_{-h-k_1},\epsilon_{r_1-r_2-h}\epsilon_{r_1-r_2-h-\ell_1}\right)\right| \\
&\hspace{2cm} +T_{r1,m_1,r_2,m_2}^{(1)}+T_{r1,m_1,r_2,m_2}^{(2)},
\end{align*}
 where
\begin{align*}
T_{r1,m_1,r_2,m_2}^{(1)}&=\sum_{h=-\infty}^{\infty}\left|\mathbb{E}\left[  H_{t-r_1,m_1}(\theta_0)H_{t-r_1-h,m_1}(\theta_0)\right]\right|\left|\mathbb{E}\left[H_{t-r_2,m_2}(\theta_0)H_{t-r_2-h,m_2}(\theta_0)\right]  \right|
\end{align*}
and
\begin{align*}
T_{r1,m_1,r_2,m_2}^{(2)}&=\sum_{h=-\infty}^{\infty}\left|\mathbb{E}\left[  H_{t-r_1,m_1}(\theta_0)H_{t-r_2-h,m_2}(\theta_0)\right]\right|\left|\mathbb{E}\left[H_{t-r_2,m_2}(\theta_0)H_{t-r_1-h,m_1}(\theta_0)\right]  \right|.
\end{align*}
Thanks to Lemma \ref{miss2} one may use the product theorem for the joint cumulants  (\cite{Brillinger1975}) as in the proof of Lemma A.3. in \cite{shao2011} in order to obtain that
\begin{align*}
&\sum_{h=-\infty}^{\infty}\sum_{i_1,j_1,k_1,\ell_1\geq 1}\left|\overset{\textbf{.}}{\lambda}_{i_1,m_1}\overset{\textbf{.}}{\lambda}_{j_1,m_2}\overset{\textbf{.}}{\lambda}_{k_1,m_1}\overset{\textbf{.}}{\lambda}_{\ell_1,m_2}\right|\left|\mathrm{cum}\left(\epsilon_{0}\epsilon_{-i_1},\epsilon_{r_1-r_2}\epsilon_{r_1-r_2-j_1},\epsilon_{-h}\epsilon_{-h-k_1},\epsilon_{r_1-r_2-h}\epsilon_{r_1-r_2-h-\ell_1}\right)\right|\\
&<\infty
\end{align*}
where we have used the absolute summability of the $k$-th $(k=2,\dots,8)$ cumulants assumed in {\bf (A4')} with $\tau=8$.

Observe now that
\begin{align*}
T_{r1,m_1,r_2,m_2}^{(1)}&=\sum_{h=-\infty}^{\infty}\left|\mathbb{E}\left[  H_{t-r_1,m_1}(\theta_0)H_{t-r_1-h,m_1}(\theta_0)\right]\right|\left|\mathbb{E}\left[H_{t-r_2,m_2}(\theta_0)H_{t-r_2-h,m_2}(\theta_0)\right]  \right|\\
&\leq\sup_{h\in\mathbb{Z}}\left|\mathbb{E}\left[  H_{t-r_1,m_1}(\theta_0)H_{t-r_1-h,m_1}(\theta_0)\right]\right|\sum_{h=-\infty}^{\infty}\left|\mathbb{E}\left[H_{t-r_2,m_2}(\theta_0)H_{t-r_2-h,m_2}(\theta_0)\right]  \right|.
\end{align*}
For any $h\in\mathbb{Z}$, from \eqref{epsil-th} we have
\begin{align*}
\left|\mathbb{E}\left[  H_{t-r_1,m_1}(\theta_0)H_{t-r_1-h,m_1}(\theta_0)\right]\right|& \leq\sum_{i,j\geq 1}\left|\overset{\textbf{.}}{\lambda}_{i,m_1}\right|\left|\overset{\textbf{.}}{\lambda}_{j,m_1}\right|\left|\mathrm{cum}\left(\epsilon_0,\epsilon_{-i},\epsilon_{-h},\epsilon_{-h-j}\right)\right|\\
&\hspace{-2cm}+\sum_{i,j\geq 1}\left|\overset{\textbf{.}}{\lambda}_{i,m_1}\right|\left|\overset{\textbf{.}}{\lambda}_{j,m_1}\right|\Bigg\{  \left|\mathbb{E}\left[ \epsilon_0\epsilon_{-i}\right]\mathbb{E}\left[ \epsilon_{-h}\epsilon_{-h-j}\right] \right| \\
&\hspace{-2cm}\hspace{1.5cm}+\left|\mathbb{E}\left[ \epsilon_0\epsilon_{-h}\right]\mathbb{E}\left[ \epsilon_{-i}\epsilon_{-h-j}\right] \right|+\left|\mathbb{E}\left[ \epsilon_0\epsilon_{-h-j}\right]\mathbb{E}\left[ \epsilon_{-i}\epsilon_{-h}\right] \right|\Bigg \} \\
&\hspace{-2cm}\leq \sum_{i,j\geq 1}\left|\mathrm{cum}\left(\epsilon_0,\epsilon_{-i},\epsilon_{-h},\epsilon_{-h-j}\right)\right|+\sigma_{\epsilon}^4\sum_{i\geq 1}\left|\overset{\textbf{.}}{\lambda}_{i,m_1}\right|^2 .
\end{align*}
Under Assumption {\bf (A4')} with $\tau=4$ and in view of Lemma \ref{miss2} we may write that
\begin{align*}
\sup_{h\in\mathbb{Z}}\left|\mathbb{E}\left[  H_{t-r_1,m_1}(\theta_0)H_{t-r_1-h,m_1}(\theta_0)\right]\right|&\leq\sup_{h\in\mathbb{Z}}\sum_{i,j\geq 1}\left|\mathrm{cum}\left(\epsilon_0,\epsilon_{-i},\epsilon_{-h},\epsilon_{-h-j}\right)\right|+\sigma_{\epsilon}^4\sum_{i\geq 1}\left|\overset{\textbf{.}}{\lambda}_{i,m_1}\right|^2 <\infty . 
\end{align*}
Similarly, we obtain
\begin{align*}
\sum_{h=-\infty}^{\infty}\left|\mathbb{E}\left[H_{t-r_2,m_2}(\theta_0)H_{t-r_2-h,m_2}(\theta_0)\right]  \right|&\leq\sum_{h=-\infty}^{\infty}\sum_{i,j\geq 1}\left|\mathrm{cum}\left(\epsilon_0,\epsilon_{-i},\epsilon_{-h},\epsilon_{-h-j}\right)\right|+\sigma_{\epsilon}^4\sum_{i\geq 1}\left|\overset{\textbf{.}}{\lambda}_{i,m_1}\right|^2\\
&<\infty.
\end{align*}
Consequently $T_{r1,m_1,r_2,m_2}^{(1)}<\infty$ and the same approach yields that $
T_{r1,m_1,r_2,m_2}^{(2)}<\infty$ and the lemma is proved.
\end{proof}
 Let $\hat{\Sigma}_{\underline{H}_r}$, $\hat{\Sigma}_{H}$ and $\hat{\Sigma}_{H,\underline{H}_r}$ be the matrices obtained by replacing $\hat{H}_t$ by $H_t(\theta_0)$ in  $\hat{\Sigma}_{\underline{\hat{H}}_r}$, $\hat{\Sigma}_{\hat{H}}$ and $\hat{\Sigma}_{\hat{H},\underline{\hat{H}}_r}$.
\begin{lemme}\label{con_prob_Sigmachap_H} Under the assumptions of Theorem \ref{convergence_Isp}, $\sqrt{r}\|\hat{\Sigma}_{\underline{H}_r}-\Sigma_{\underline{H}_r}\|$,  $\sqrt{r}\|\hat{\Sigma}_{H,\underline{H}_r}-\Sigma_{H,\underline{H}_r}\|$ and  $\sqrt{r}\|\hat{\Sigma}_{H}-\Sigma_{H}\|$ tend to zero in probability as $n\rightarrow\infty$ when $r=\mathrm{o}(n^{1/3})$.
\end{lemme}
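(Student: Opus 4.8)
The plan is to treat the three statements uniformly: each is of the form $\sqrt r\,\|\hat\Sigma - \Sigma\| = \mathrm{o}_{\mathbb P}(1)$ where $\hat\Sigma$ is an empirical average of products $H_{t-a,m}(\theta_0)H_{t-b,m'}(\theta_0)$ (possibly stacked into a block matrix of size $O(r)\times O(r)$ or $O(r)\times 1$ or $1\times 1$), and $\Sigma$ is its expectation. First I would reduce the matrix/operator-norm statement to a sum of scalar entries via the inequality \eqref{ineg_norme_matricielle}: for the $r(p+q+1)\times r(p+q+1)$ matrix $\hat\Sigma_{\underline H_r}-\Sigma_{\underline H_r}$, say, $\|\hat\Sigma_{\underline H_r}-\Sigma_{\underline H_r}\|^2 \le \sum_{r_1,r_2=1}^r\sum_{m_1,m_2=1}^{p+q+1} \big(\widehat{c}_{r_1,m_1,r_2,m_2}-c_{r_1,m_1,r_2,m_2}\big)^2$, where $\widehat c$ is the empirical mean of $H_{t-r_1,m_1}(\theta_0)H_{t-r_2,m_2}(\theta_0)$ over $t=1,\dots,n$ and $c$ is its expectation. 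Hence
\begin{align*}
r\,\mathbb E\|\hat\Sigma_{\underline H_r}-\Sigma_{\underline H_r}\|^2 \le r\sum_{r_1,r_2=1}^r\sum_{m_1,m_2}\mathbb E\big(\widehat c_{r_1,m_1,r_2,m_2}-c_{r_1,m_1,r_2,m_2}\big)^2 .
\end{align*}

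Next I would bound each variance term $\mathbb E(\widehat c - c)^2 = \frac1{n^2}\sum_{t,s=1}^n \mathrm{Cov}\big(H_{t-r_1,m_1}H_{t-r_2,m_2},\,H_{s-r_1,m_1}H_{s-r_2,m_2}\big)$. Writing $s = t-h$ and using stationarity, this is at most $\frac1n \sup_{t}\sum_{h\in\mathbb Z}\big|\mathrm{Cov}(H_{t-r_1,m_1}H_{t-r_2,m_2},H_{t-r_1-h,m_1}H_{t-r_2-h,m_2})\big|$, which by Lemma \ref{lemme_cov_des_H} is $\le K/n$ uniformly in $r_1,r_2,m_1,m_2$. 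Plugging back, $r\,\mathbb E\|\hat\Sigma_{\underline H_r}-\Sigma_{\underline H_r}\|^2 \le r\cdot r^2(p+q+1)^2 \cdot K/n = K'r^3/n$, which tends to $0$ under the hypothesis $r=\mathrm o(n^{1/3})$. Markov's inequality then gives the convergence in probability. The argument for $\sqrt r\|\hat\Sigma_{H,\underline H_r}-\Sigma_{H,\underline H_r}\|$ is identical but with the double sum over $(r_1,r_2)$ replaced by a sum over a single index of range $r$, so the bound improves to $K'r^2/n\to 0$; for $\sqrt r\|\hat\Sigma_H-\Sigma_H\|$ there is no $r$-indexed sum at all, only the fixed $(p+q+1)^2$ entries, giving $K'r/n\to 0$. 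So all three follow from the same two ingredients: the norm inequality \eqref{ineg_norme_matricielle} and the uniform absolute summability of the fourth-order covariance structure of $(H_t(\theta_0))$ provided by Lemma \ref{lemme_cov_des_H}.

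The main obstacle is the uniformity in $r_1,r_2$ (and in $t$) of the covariance bound, i.e. the content already packaged in Lemma \ref{lemme_cov_des_H}; without that uniformity one could not sum $r^2$ identical bounds and still control the total by $r^3/n$. This is precisely why the proof needs Assumption \textbf{(A4')} with $\tau=8$ (the product theorem for joint cumulants expands the covariance of products $H\cdot H$ of $H$'s, which are themselves quadratic in $\epsilon$, into cumulants of order up to $8$ in the $\epsilon_t$'s), together with the decay $|\dot\lambda_{i,k}(\theta_0)|\le K/i$ from Lemma \ref{miss2} to absorb the infinite summations over the AR$(\infty)$-type coefficients. Once Lemma \ref{lemme_cov_des_H} is in hand, the present lemma is a routine second-moment computation: reduce to scalar entries, bound variances by $K/n$, count $O(r^2)$ entries, and invoke $r^3/n\to 0$.
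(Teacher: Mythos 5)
Your proposal is correct and follows essentially the same route as the paper: Chebyshev/Markov, reduction to scalar entries via the norm inequality \eqref{ineg_norme_matricielle}, a uniform $K/n$ variance bound for each entry obtained from stationarity together with the absolutely summable covariance structure of Lemma \ref{lemme_cov_des_H}, and the count of $O(r^2)$ entries giving $r^3/n\to 0$. The only cosmetic difference is that the paper disposes of the two smaller matrices by a chain of inequalities reducing them to the $\hat\Sigma_{\underline H_r}$ case, whereas you redo the (even easier) entry count for each; both are fine.
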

\begin{proof} For $1\le m_1,m_2 \le p+q+1 $ and $1\le r_1,r_2 \le r $, the $( \lbrace (r_1-1)(p+q+1)+m_1\rbrace,\lbrace (r_2-1)(p+q+1)+m_2\rbrace)-$th element of $\hat{\Sigma}_{\underline{H}_r}$ is given by: $$\frac{1}{n}\sum_{t=1}^nH_{t-r_1,m_1}(\theta_0)H_{t-r_2,m_2}(\theta_0).$$
For all $\beta>0$, we use \eqref{ineg_norme_matricielle} and we obtain
\begin{align*}
\mathbb{P}\left( \sqrt{r}\left\|\hat{\Sigma}_{\underline{H}_r}-\Sigma_{\underline{H}_r}\right\|   \geq \beta\right) & \leq \frac{r}{\beta^2}\mathbb{E}\left\|\hat{\Sigma}_{\underline{H}_r}-\Sigma_{\underline{H}_r}\right\|^2\\
& \leq \frac{r}{\beta^2}\mathbb{E}\left\|\frac{1}{n}\sum_{t=1}^n\underline{H}_{r,t}\underline{H}_{r,t}^{'}-\mathbb{E}\left[ \underline{H}_{r,t}\underline{H}_{r,t}^{'}\right] \right\|^2\\
&\leq \frac{r}{\beta^2}\sum_{r_1=1}^r\sum_{r_2=1}^r\sum_{m_1=1}^{p+q+1}\sum_{m_2=1}^{p+q+1}\mathbb{E}\Bigg(\frac{1}{n}\sum_{t=1}^nH_{t-r_1,m_1}(\theta_0)H_{t-r_2,m_2}(\theta_0)  \\
& \hspace{5cm} -\mathbb{E}\left[ H_{t-r_1,m_1}(\theta_0)H_{t-r_2,m_2}(\theta_0)\right] \Bigg)^2.
\end{align*}
The stationarity of the process $\left( H_{t-r_1,m_1}(\theta_0)H_{t-r_2,m_2}(\theta_0)\right) _{t\in\mathbb{Z}}$ and Lemma~\ref{lemme_cov_des_H} imply
\begin{align*}
\mathbb{P}&\left( \sqrt{r}\left\|\hat{\Sigma}_{\underline{H}_r}-\Sigma_{\underline{H}_r}\right\|\geq \beta\right)
\\&\leq\frac{r}{\beta^2}\sum_{r_1=1}^r\sum_{r_2=1}^r\sum_{m_1=1}^{p+q+1}\sum_{m_2=1}^{p+q+1}\mathrm{Var}\left(\frac{1}{n}\sum_{t=1}^nH_{t-r_1,m_1}(\theta_0)H_{t-r_2,m_2}(\theta_0) \right)\\
&\leq\frac{r}{(n\beta)^2}\sum_{r_1=1}^r\sum_{r_2=1}^r\sum_{m_1=1}^{p+q+1}\sum_{m_2=1}^{p+q+1}\sum_{t=1}^n\sum_{s=1}^n\mathrm{Cov}\left(H_{t-r_1,m_1}(\theta_0)H_{t-r_2,m_2}(\theta_0),H_{s-r_1,m_1}(\theta_0)H_{s-r_2,m_2}(\theta_0) \right) \\
&\leq\frac{r}{(n\beta)^2}\sum_{r_1=1}^r\sum_{r_2=1}^r\sum_{m_1=1}^{p+q+1}\sum_{m_2=1}^{p+q+1}\sum_{h=1-n}^{n-1}(n-|h|)\mathrm{Cov}\left(H_{t-r_1,m_1}(\theta_0)H_{t-r_2,m_2}(\theta_0),H_{t-h-r_1,m_1}(\theta_0)H_{t-h-r_2,m_2}(\theta_0) \right) \\
&\leq \frac{r}{n\beta^2}\sum_{r_1=1}^r\sum_{r_2=1}^r\sum_{m_1=1}^{p+q+1}\sum_{m_2=1}^{p+q+1}\sup_{t\in\mathbb{Z}}\sum_{h=-\infty}^{\infty}\left|\mathrm{Cov}\left(H_{t-r_1,m_1}(\theta_0)H_{t-r_2,m_2}(\theta_0),H_{t-h-r_1,m_1}(\theta_0)H_{t-h-r_2,m_2}(\theta_0) \right)\right|\\
&\leq \frac{C(p+q+1)^2r^3}{n\beta^2}.
\end{align*}
Consequently we have
\begin{align*}
\mathbb{E}\left[ r\left\|\hat{\Sigma}_{H}-\Sigma_{H}\right\|^2\right]&  \leq \mathbb{E}\left[ r\left\|\hat{\Sigma}_{H,\underline{H}_r}-\Sigma_{H,\underline{H}_r}\right\|^2\right]
\\
& \leq \mathbb{E}\left[ r\left\|\hat{\Sigma}_{\underline{H}_r}-\Sigma_{\underline{H}_r}\right\|^2\right]\\
& \leq \frac{C(p+q+1)^2r^3}{n}\xrightarrow[n\to\infty]{} 0
\end{align*}
 when $r=\mathrm{o}(n^{1/3})$.
%
The conclusion follows.
\end{proof}

We show in the following lemma that the previous lemma remains valid when we replace $H_t(\theta_0)$ by $\hat{H}_t$.
\begin{lemme}\label{con_prob_Sigmachap_Hchap} Under the assumptions of Theorem \ref{convergence_Isp}, $\sqrt{r}\|\hat{\Sigma}_{\underline{\hat{H}}_r}-\Sigma_{\underline{H}_r}\|$,  $\sqrt{r}\|\hat{\Sigma}_{\hat{H},\underline{\hat{H}}_r}-\Sigma_{H,\underline{H}_r}\|$ and  $\sqrt{r}\|\hat{\Sigma}_{\hat{H}}-\Sigma_{H}\|$ tend to zero in probability as $n\rightarrow\infty$ when $r=\mathrm{o}(n^{(1-2(d_0-d_1))/5})$.
\end{lemme}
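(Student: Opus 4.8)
The plan is to derive Lemma~\ref{con_prob_Sigmachap_Hchap} from Lemma~\ref{con_prob_Sigmachap_H} by estimating the error made when the unobservable $H_t(\theta_0)$ is replaced by $\hat{H}_t$. Setting $\tilde{H}_t(\theta)=2\tilde{\epsilon}_t(\theta)\frac{\partial}{\partial\theta}\tilde{\epsilon}_t(\theta)$ and $H_t(\theta)=2\epsilon_t(\theta)\frac{\partial}{\partial\theta}\epsilon_t(\theta)$ (so that $\hat{H}_t=\tilde{H}_t(\hat{\theta}_n)$), I would split
$$\hat{H}_t-H_t(\theta_0)=\bigl(\tilde{H}_t(\hat{\theta}_n)-H_t(\hat{\theta}_n)\bigr)+\bigl(H_t(\hat{\theta}_n)-H_t(\theta_0)\bigr),$$
the first bracket being a truncation/initialization error and the second a parameter estimation error. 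By the triangle inequality it is then enough to show that $\sqrt{r}$ times the norm of each of $\hat{\Sigma}_{\underline{\hat{H}}_r}-\hat{\Sigma}_{\underline{H}_r}$, $\hat{\Sigma}_{\hat{H},\underline{\hat{H}}_r}-\hat{\Sigma}_{H,\underline{H}_r}$ and $\hat{\Sigma}_{\hat{H}}-\hat{\Sigma}_{H}$ converges to $0$ in probability, since the remaining differences $\hat{\Sigma}_{\underline{H}_r}-\Sigma_{\underline{H}_r}$, $\hat{\Sigma}_{H,\underline{H}_r}-\Sigma_{H,\underline{H}_r}$ and $\hat{\Sigma}_{H}-\Sigma_{H}$ are already controlled by Lemma~\ref{con_prob_Sigmachap_H} (whose condition $r=\mathrm{o}(n^{1/3})$ is weaker than the one required here).

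Consider the first, and worst, of these three matrices, which has $\mathrm{O}(r^2)$ entries. By \eqref{ineg_norme_matricielle} together with $\sqrt{\sum_i x_i^2}\le\sum_i|x_i|$, its norm is bounded by the sum of the absolute values of its entries, a generic one being $\frac1n\sum_{t=1}^n\bigl(\hat{H}_{t-r_1,m_1}\hat{H}_{t-r_2,m_2}-H_{t-r_1,m_1}(\theta_0)H_{t-r_2,m_2}(\theta_0)\bigr)$ with $1\le r_1,r_2\le r$ and $1\le m_1,m_2\le p+q+1$. Expanding the product difference and then $\hat{H}-H(\theta_0)$ as above, I would treat the estimation part by a Taylor expansion of $H_{t,m}(\cdot)$ around $\theta_0$, using $\|\hat{\theta}_n-\theta_0\|=\mathrm{O}_{\mathbb{P}}(n^{-1/2})$ (Theorem~\ref{n.asymptotique}) and $\frac1n\sum_t\sup_{\theta\in\Theta_\delta}\|\frac{\partial}{\partial\theta}H_{t,m}(\theta)\|^2=\mathrm{O}_{\mathbb{P}}(1)$ (ergodic theorem, the expectation being finite under {\bf (A4')} with $\tau=8$), which yields an $\mathrm{O}_{\mathbb{P}}(n^{-1/2})$ contribution, uniformly in the indices. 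The truncation part I would bound in expectation, via $\mathbb{E}|\frac1n\sum_t(\cdot)|\le\frac1n\sum_t\mathbb{E}|(\cdot)|$, then H\"older's inequality (at most four factors, each in $\mathbb{L}^4$ thanks to the eighth-order moments), Lemma~\ref{miss}, and the \emph{uniform} bounds of Remark~\ref{rmq:important}: $\sup_{\theta\in\Theta_\delta}\|(\lambda-\lambda^s)(\theta)\|_{\ell^2}$ and $\sup_{\theta\in\Theta_\delta}\|(\overset{\textbf{.}}{\lambda}_k-\overset{\textbf{.}}{\lambda}_k^s)(\theta)\|_{\ell^2}$ are of order $s^{-1/2-(d_1-d_0)}$ up to logarithmic factors. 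The fractional Ces\`aro lemma gives $\frac1n\sum_{s=1}^n s^{-1/2-(d_1-d_0)}=\mathrm{O}(n^{-1/2+(d_0-d_1)})$ (recall $0\le d_0-d_1<1/2$ since $\theta_0\in\Theta$), and the at most $r$ vanishing initial values contribute $\mathrm{O}_{\mathbb{P}}(r/n)$; altogether each entry is $\mathrm{O}_{\mathbb{P}}(n^{-1/2+(d_0-d_1)}+r/n)$, again uniformly, up to logarithmic factors.

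Summing the $\mathrm{O}(r^2)$ entries and multiplying by $\sqrt{r}$ then gives
$$\sqrt{r}\,\bigl\|\hat{\Sigma}_{\underline{\hat{H}}_r}-\hat{\Sigma}_{\underline{H}_r}\bigr\|=\mathrm{O}_{\mathbb{P}}\bigl(r^{5/2}n^{-1/2+(d_0-d_1)}+r^{3}/n\bigr),$$
the first term tending to $0$ precisely because $r^5(n)/n^{1-2(d_0-d_1)}\to0$, and the second being automatically negligible since $1-2(d_0-d_1)\le1<5/3$. The matrices $\hat{\Sigma}_{\hat{H},\underline{\hat{H}}_r}-\hat{\Sigma}_{H,\underline{H}_r}$ and $\hat{\Sigma}_{\hat{H}}-\hat{\Sigma}_{H}$ have only $\mathrm{O}(r)$ and $\mathrm{O}(1)$ entries respectively, so the identical per-entry bounds produce strictly smaller orders and impose no further restriction. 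Combined with Lemma~\ref{con_prob_Sigmachap_H}, this completes the proof. The point requiring the most care, and the one I expect to be the main obstacle, is the truncation-error estimate: one must work with the \emph{uniform in $\theta\in\Theta_\delta$} rate $s^{-1/2-(d_1-d_0)}$ of Remark~\ref{rmq:important} rather than the pointwise rate of Lemma~\ref{lemme_sur_les_ecarts_des_coef}, and check that {\bf (A4')} with $\tau=8$ supplies enough moments to run the fourfold H\"older bound on products of the $\epsilon_t(\theta)$'s, $\frac{\partial}{\partial\theta}\epsilon_t(\theta)$'s and their truncated analogues. It is precisely the interplay between this rate, the $\mathrm{O}(r^2)$ entry count of $\hat{\Sigma}_{\underline{\hat{H}}_r}$ and the $\sqrt{r}$ normalization that forces the exponent $5$ in the growth condition $r^5(n)/n^{1-2(d_0-d_1)}\to0$.
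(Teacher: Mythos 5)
Your proposal is correct and follows essentially the same route as the paper: the paper likewise splits the error via the intermediate matrix built from $H_t(\hat{\theta}_n)$ (separating the truncation error $\tilde{\epsilon}_t(\hat{\theta}_n)-\epsilon_t(\hat{\theta}_n)$ from the estimation error $\hat{\theta}_n-\theta_0$), reduces to Lemma~\ref{con_prob_Sigmachap_H}, controls the truncation part in expectation via H\"older and the uniform $t^{-1/2-(d_1-d_0)}$ rate, the estimation part via Taylor expansion and the tightness of $\sqrt{n}(\hat{\theta}_n-\theta_0)$, and arrives at exactly your bound $r^{5/2}n^{-1/2+(d_0-d_1)}+r^{7/2}/n$, whence the exponent $5$. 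The only blemish is the harmless arithmetic slip $r^{3}/n$ in place of $r^{7/2}/n$ for the secondary term, which does not affect its negligibility under the stated growth condition.
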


\begin{proof}
As mentioned in the end of the proof of the previous lemma, we only have to deal with the term $\sqrt{r}\|\hat{\Sigma}_{\underline{\hat{H}}_r}-\Sigma_{\underline{H}_r}\|$.

We denote $\hat{\Sigma}_{\underline{H}_{r,n}}$ the matrix obtained by replacing $\tilde{\epsilon}_t(\hat{\theta}_n)$ by $\epsilon_t(\hat{\theta}_n)$ in $\hat{\Sigma}_{\underline{\hat{H}}_{r}}$. We have
\begin{align*}
\sqrt{r}\left\|\hat{\Sigma}_{\underline{\hat{H}}_r}-\Sigma_{\underline{H}_r}\right\|&\leq \sqrt{r}\left\|\hat{\Sigma}_{\underline{\hat{H}}_r}-\hat{\Sigma}_{\underline{H}_{r,n}}\right\|+\sqrt{r}\left\|\hat{\Sigma}_{\underline{H}_{r,n}}-\hat{\Sigma}_{\underline{H}_r}\right\|
+\sqrt{r}\left\|\hat{\Sigma}_{\underline{H}_r}-\Sigma_{\underline{H}_r}\right\|.
\end{align*}
By Lemma~\ref{con_prob_Sigmachap_H}, the term $\sqrt{r}\|\hat{\Sigma}_{\underline{H}_r}-\Sigma_{\underline{H}_r}\|$  converges in probability.
The lemma will be proved as soon as we show that
\begin{align}\label{tr1}
\sqrt{r}\left\|\hat{\Sigma}_{\underline{\hat{H}}_r}-\hat{\Sigma}_{\underline{H}_{r,n}}\right\|&=\mathrm{o}_{\mathbb{P}}(1)\quad\text{ and } \\
 \label{tr2}
\sqrt{r}\left\|\hat{\Sigma}_{\underline{H}_{r,n}}-\hat{\Sigma}_{\underline{H}_r}\right\|&=\mathrm{o}_{\mathbb{P}}(1),
\end{align}
 when $r=\mathrm{o}(n^{(1-2(d_0-d_1))/5})$. This is done in two separate steps.

\paragraph*{Step 1: proof of \eqref{tr1}.}\ \\
For all $\beta>0$, we have
\begin{align*}
\mathbb{P}\left( \sqrt{r}\left\|\hat{\Sigma}_{\hat{\underline{H}}_r}-\hat{\Sigma}_{\underline{H}_{r,n}}\right\|\geq \beta\right) & \leq \frac{\sqrt{r}}{\beta}\mathbb{E}\left\|\hat{\Sigma}_{\hat{\underline{H}}_r}-\hat{\Sigma}_{\underline{H}_{r,n}}\right\| \\
& \leq \frac{\sqrt{r}}{\beta}\mathbb{E}\left\|\frac{1}{n}\sum_{t=1}^n\hat{\underline{H}}_{r,t}\hat{\underline{H}}_{r,t}^{'}-\frac{1}{n}\sum_{t=1}^n \underline{H}_{r,t}^{(n)}\underline{H}_{r,t}^{(n) '} \right\|\\
&\leq \frac{K\sqrt{r}}{\beta}\sum_{r_1=1}^r\sum_{r_2=1}^r\sum_{m_1=1}^{p+q+1}\sum_{m_2=1}^{p+q+1}\mathbb{E}\left|\frac{1}{n}\sum_{t=1}^n\hat{H}_{t-r_1,m_1}\hat{H}_{t-r_2,m_2}-\frac{1}{n}\sum_{t=1}^nH^{(n)}_{t-r_1,m_1}H_{t-r_2,m_2}^{(n)}\right|,
\end{align*}
where
$$H_{t,m}^{(n)}=2\epsilon_t(\hat{\theta}_n)\frac{\partial}{\partial\theta_m}\epsilon_t(\hat{\theta}_n)\quad\text{ and }\quad \underline{H}_{r,t}^{(n)}=\left(H_{t-1}^{(n) '},\dots,H_{t-r}^{(n) '} \right) ^{'} \ .$$
It is follow that
\begin{align}\label{trou}
\mathbb{P}\left( \sqrt{r}\left\|\hat{\Sigma}_{\hat{\underline{H}}_r}-\hat{\Sigma}_{\underline{H}_{r,n}}\right\|\geq \beta\right) \nonumber
& \leq \frac{4K\sqrt{r}}{n\beta}\sum_{r_1=1}^r\sum_{r_2=1}^r\sum_{m_1=1}^{p+q+1}\sum_{m_2=1}^{p+q+1}  \nonumber\\
&\hspace{2cm}\mathbb{E}\Bigg |\sum_{t=1}^n\tilde{\epsilon}_{t-r_1}(\hat{\theta}_n)\frac{\partial}{\partial\theta_{m_1}}\tilde{\epsilon}_{t-r_1}(\hat{\theta}_n)\tilde{\epsilon}_{t-r_2}(\hat{\theta}_n)\frac{\partial}{\partial\theta_{m_2}}\tilde{\epsilon}_{t-r_2}(\hat{\theta}_n)  \nonumber\\
&\hspace{3cm}-\epsilon_{t-r_1}(\hat{\theta}_n)\frac{\partial}{\partial\theta_{m_1}}\epsilon_{t-r_1}(\hat{\theta}_n)\epsilon_{t-r_2}(\hat{\theta}_n)\frac{\partial}{\partial\theta_{m_2}}\epsilon_{t-r_2}(\hat{\theta}_n)\Bigg |.
\end{align}
Observe now that
\begin{align*}
&\tilde{\epsilon}_{t-r_1}(\hat{\theta}_n)\frac{\partial}{\partial\theta_{m_1}}\tilde{\epsilon}_{t-r_1}(\hat{\theta}_n)\tilde{\epsilon}_{t-r_2}(\hat{\theta}_n)\frac{\partial}{\partial\theta_{m_2}}\tilde{\epsilon}_{t-r_2}(\hat{\theta}_n)-\epsilon_{t-r_1}(\hat{\theta}_n)\frac{\partial}{\partial\theta_{m_1}}\epsilon_{t-r_1}(\hat{\theta}_n)\epsilon_{t-r_2}(\hat{\theta}_n)\frac{\partial}{\partial\theta_{m_2}}\epsilon_{t-r_2}(\hat{\theta}_n)\\
&\hspace{3cm}=\left(\tilde{\epsilon}_{t-r_1}(\hat{\theta}_n)-\epsilon_{t-r_1}(\hat{\theta}_n)\right)\frac{\partial}{\partial\theta_{m_1}}\tilde{\epsilon}_{t-r_1}(\hat{\theta}_n)\tilde{\epsilon}_{t-r_2}(\hat{\theta}_n)\frac{\partial}{\partial\theta_{m_2}}\tilde{\epsilon}_{t-r_2}(\hat{\theta}_n)\\
&\hspace{3.5cm}+\epsilon_{t-r_1}(\hat{\theta}_n)\left(\frac{\partial}{\partial\theta_{m_1}}\tilde{\epsilon}_{t-r_1}(\hat{\theta}_n)-\frac{\partial}{\partial\theta_{m_1}}\epsilon_{t-r_1}(\hat{\theta}_n)\right)\tilde{\epsilon}_{t-r_2}(\hat{\theta}_n)\frac{\partial}{\partial\theta_{m_2}}\tilde{\epsilon}_{t-r_2}(\hat{\theta}_n)\\
&\hspace{3.5cm}+\epsilon_{t-r_1}(\hat{\theta}_n)\frac{\partial}{\partial\theta_{m_1}}\epsilon_{t-r_1}(\hat{\theta}_n)\left(\tilde{\epsilon}_{t-r_2}(\hat{\theta}_n)-\epsilon_{t-r_2}(\hat{\theta}_n)\right)\frac{\partial}{\partial\theta_{m_2}}\tilde{\epsilon}_{t-r_2}(\hat{\theta}_n)\\
&\hspace{3.5cm}+\epsilon_{t-r_1}(\hat{\theta}_n)\frac{\partial}{\partial\theta_{m_1}}\epsilon_{t-r_1}(\hat{\theta}_n)\epsilon_{t-r_2}(\hat{\theta}_n)\left(\frac{\partial}{\partial\theta_{m_2}}\tilde{\epsilon}_{t-r_2}(\hat{\theta}_n)-\frac{\partial}{\partial\theta_{m_2}}\epsilon_{t-r_2}(\hat{\theta}_n)\right).
\end{align*}
We replace the above identity in \eqref{trou} and we obtain by H\"{o}lder's inequality that
\begin{align}\label{trou2}
\mathbb{P}\left( \sqrt{r}\left\|\hat{\Sigma}_{\hat{\underline{H}}_r}-\hat{\Sigma}_{\underline{H}_{r,n}}\right\|\geq \beta\right)
&\leq  \frac{4K\sqrt{r}}{n\beta}\sum_{r_1=1}^r\sum_{r_2=1}^r\sum_{m_1=1}^{p+q+1}\sum_{m_2=1}^{p+q+1}\left(T_{n,1}+T_{n,2}+T_{n,3}+T_{n,4}\right)
\end{align}
where
\begin{align*}
T_{n,1}&=\sum_{t=1}^n\left\|\tilde{\epsilon}_{t-r_1}(\hat{\theta}_n)-\epsilon_{t-r_1}(\hat{\theta}_n)\right\|_{\mathbb{L}^2}\left\|\frac{\partial}{\partial\theta_{m_1}}\tilde{\epsilon}_{t-r_1}(\hat{\theta}_n)\right\|_{\mathbb{L}^6}\left\|\tilde{\epsilon}_{t-r_2}(\hat{\theta}_n)\right\|_{\mathbb{L}^6}\left\|\frac{\partial}{\partial\theta_{m_2}}\tilde{\epsilon}_{t-r_2}(\hat{\theta}_n)\right\|_{\mathbb{L}^6},\\
T_{n,2}&=\sum_{t=1}^n\left\|\epsilon_{t-r_1}(\hat{\theta}_n)\right\|_{\mathbb{L}^6}\left\|\frac{\partial}{\partial\theta_{m_1}}\tilde{\epsilon}_{t-r_1}(\hat{\theta}_n)-\frac{\partial}{\partial\theta_{m_1}}\epsilon_{t-r_1}(\hat{\theta}_n)\right\|_{\mathbb{L}^2}\left\|\tilde{\epsilon}_{t-r_2}(\hat{\theta}_n)\right\|_{\mathbb{L}^6}\left\|\frac{\partial}{\partial\theta_{m_2}}\tilde{\epsilon}_{t-r_2}(\hat{\theta}_n)\right\|_{\mathbb{L}^6},\\
T_{n,3}&=\sum_{t=1}^n\left\|\epsilon_{t-r_1}(\hat{\theta}_n)\right\|_{\mathbb{L}^6}\left\|\frac{\partial}{\partial\theta_{m_1}}\epsilon_{t-r_1}(\hat{\theta}_n)\right\|_{\mathbb{L}^6}\left\|\tilde{\epsilon}_{t-r_2}(\hat{\theta}_n)-\epsilon_{t-r_2}(\hat{\theta}_n)\right\|_{\mathbb{L}^2}\left\|\frac{\partial}{\partial\theta_{m_2}}\tilde{\epsilon}_{t-r_2}(\hat{\theta}_n)\right\|_{\mathbb{L}^6}, \\
T_{n,4}&=\sum_{t=1}^n\left\|\epsilon_{t-r_1}(\hat{\theta}_n)\right\|_{\mathbb{L}^6}\left\|\frac{\partial}{\partial\theta_{m_1}}\epsilon_{t-r_1}(\hat{\theta}_n)\right\|_{\mathbb{L}^6}\left\|\epsilon_{t-r_2}(\hat{\theta}_n)\right\|_{\mathbb{L}^6}\left\|\frac{\partial}{\partial\theta_{m_2}}\tilde{\epsilon}_{t-r_2}(\hat{\theta}_n)-\frac{\partial}{\partial\theta_{m_2}}\epsilon_{t-r_2}(\hat{\theta}_n)\right\|_{\mathbb{L}^2}.
\end{align*}
For all $\theta\in\Theta_{\delta}$ and $t\in\mathbb{Z}$, in view of \eqref{epsil-th} and Remark \ref{rmq:important}, we have
\begin{align*}
\left\|\tilde{\epsilon}_{t}(\hat{\theta}_n)-\epsilon_{t}(\hat{\theta}_n)\right\|_{\mathbb{L}^2}&=\left(\mathbb{E}\left[\left\lbrace \sum_{j\geq 0}\left(\lambda^{t}_j(\hat{\theta}_n)-\lambda_j(\hat{\theta}_n)\right)\epsilon_{t-j}\right\rbrace ^2\right]\right)^{1/2}\\
&\leq\sup_{\theta\in\Theta_{\delta}}\left(\mathbb{E}\left[\left\lbrace \sum_{j\geq 0}\left(\lambda^{t}_j({\theta})-\lambda_j({\theta})\right)\epsilon_{t-j}\right\rbrace ^2\right]\right)^{1/2}\\
&\le  \sigma_{\epsilon}\sup_{\theta\in\Theta_{\delta}}\left\|\lambda(\theta)-\lambda^{t}(\theta)\right\|_{\ell^2}\\
&\leq  K\frac{1}{t^{1/2+(d_1-d_0)}}.
\end{align*}
It is not difficult to prove that $\tilde{\epsilon}_t(\theta)$ and $\partial\tilde{\epsilon}_t(\theta)/\partial\theta$ belong to $\mathbb L^6$. The fact that $\epsilon_t(\theta)$ and $\partial\epsilon_t(\theta)/\partial\theta$ have moment of order $6$ 
can be proved using the same method than in Lemma \ref{lemme_cov_des_H} using the absolute summability of the $k$-th $(k=2,\dots,8)$ cumulants assumed in {\bf (A4')} with $\tau=8$.
We deduce that
\begin{align*}
T_{n,1}\leq K\sum_{t=1}^n\left\|\tilde{\epsilon}_{t-r_1}(\hat{\theta}_n)-\epsilon_{t-r_1}(\hat{\theta}_n)\right\|_{\mathbb{L}^2}&\leq K \sum_{t=1-r}^0\left\|\epsilon_{t}(\hat{\theta}_n)\right\|_{\mathbb{L}^2}+K\sum_{t=1}^{n}\left\|\tilde{\epsilon}_{t}(\hat{\theta}_n)-\epsilon_{t}(\hat{\theta}_n)\right\|_{\mathbb{L}^2}\\
&\leq K\left(r+\sum_{t=1}^{n}\frac{1}{t^{1/2+(d_1-d_0)}}\right).
\end{align*}
Then  we obtain
\begin{align}\label{T1}
T_{n,1}&\leq K\left(r+ n^{1/2-(d_1-d_0)}\right).
\end{align}
The same calculations hold for the terms $T_{n,2}$, $T_{n,3}$ and $T_{n,4}$. Thus
\begin{equation}\label{T1_T4}
T_{n,1}+T_{n,2}+T_{n,3}+T_{n,4}\leq K\left(r+ n^{1/2-(d_1-d_0)}\right)
\end{equation}
and reporting this estimation in \eqref{trou2} implies that
\begin{align*}
\mathbb{P}\left( \sqrt{r}\left\|\hat{\Sigma}_{\hat{\underline{H}}_r}-\hat{\Sigma}_{\underline{H}_{r,n}}\right\|\geq \beta\right) & \leq\frac{Kr^{5/2}(p+q+1)^2}{n\beta}\left(r+n^{1/2-(d_1-d_0)}\right)\\
&\leq K\left(\frac{r^{7/2}}{n}+\frac{r^{5/2}}{n^{1/2+(d_1-d_0)}}\right).
\end{align*}
Since $2/7>(1+2(d_1-d_0))/5$, the sequence $\sqrt{r}\left\|\hat{\Sigma}_{\hat{\underline{H}}_r}-\hat{\Sigma}_{\underline{H}_{r,n}}\right\|$ converges in probability to $0$ as $n\rightarrow\infty$ when $r=r(n)=\mathrm{o}(n^{(1-2(d_0-d_1))/5})$.
\paragraph*{Step 2: proof of \eqref{tr2}.}\ \\
First we follow the same approach than in the previous step.
We have
\begin{align*}
\left\|\hat{\Sigma}_{\underline{H}_{r,n}}-\hat{\Sigma}_{\underline{H}_r}\right\|^2&=\left\|\frac{1}{n}\sum_{t=1}^n \underline{H}_{r,t}^{(n)}\underline{H}_{r,t}^{(n) '}-\frac{1}{n}\sum_{t=1}^n\underline{H}_{r,t}\underline{H}_{r,t}^{'} \right\|^2\\
&\leq \sum_{r_1=1}^r\sum_{r_2=1}^r\sum_{m_1=1}^{p+q+1}\sum_{m_2=1}^{p+q+1}\left(\frac{1}{n}\sum_{t=1}^nH^{(n)}_{t-r_1,m_1}H_{t-r_2,m_2}^{(n)}-\frac{1}{n}\sum_{t=1}^nH_{t-r_1,m_1}H_{t-r_2,m_2}\right)^2\\
& \leq 16\sum_{r_1=1}^r\sum_{r_2=1}^r\sum_{m_1=1}^{p+q+1}\sum_{m_2=1}^{p+q+1}\left(\frac{1}{n}\sum_{t=1}^n\epsilon_{t-r_1}(\hat{\theta}_n)\frac{\partial}{\partial\theta_{m_1}}\epsilon_{t-r_1}(\hat{\theta}_n)\epsilon_{t-r_2}(\hat{\theta}_n)\frac{\partial}{\partial\theta_{m_2}}\epsilon_{t-r_2}(\hat{\theta}_n)\right.\\
&\hspace{5cm}\left. -\epsilon_{t-r_1}(\theta_0)\frac{\partial}{\partial\theta_{m_1}}\epsilon_{t-r_1}(\theta_0)\epsilon_{t-r_2}(\theta_0)\frac{\partial}{\partial\theta_{m_2}}\epsilon_{t-r_2}(\theta_0)\right)^2.
\end{align*}
Since
\begin{align*}
&\epsilon_{t-r_1}(\hat{\theta}_n)\frac{\partial}{\partial\theta_{m_1}}\epsilon_{t-r_1}(\hat{\theta}_n)\epsilon_{t-r_2}(\hat{\theta}_n)\frac{\partial}{\partial\theta_{m_2}}\epsilon_{t-r_2}(\hat{\theta}_n)-\epsilon_{t-r_1}(\theta_0)\frac{\partial}{\partial\theta_{m_1}}\epsilon_{t-r_1}(\theta_0)\epsilon_{t-r_2}(\theta_0)\frac{\partial}{\partial\theta_{m_2}}\epsilon_{t-r_2}(\theta_0)\\
&\hspace{2cm}=\left(\epsilon_{t-r_1}(\hat{\theta}_n)-\epsilon_{t-r_1}(\theta_0)\right)\frac{\partial}{\partial\theta_{m_1}}\epsilon_{t-r_1}(\hat{\theta}_n)\epsilon_{t-r_2}(\hat{\theta}_n)\frac{\partial}{\partial\theta_{m_2}}\epsilon_{t-r_2}(\hat{\theta}_n)\\
&\hspace{4cm}+\epsilon_{t-r_1}(\theta_0)\left(\frac{\partial}{\partial\theta_{m_1}}\epsilon_{t-r_1}(\hat{\theta}_n)-\frac{\partial}{\partial\theta_{m_1}}\epsilon_{t-r_1}(\theta_0)\right)\epsilon_{t-r_2}(\hat{\theta}_n)\frac{\partial}{\partial\theta_{m_2}}\epsilon_{t-r_2}(\hat{\theta}_n)\\
&\hspace{4cm}+\epsilon_{t-r_1}(\theta_0)\frac{\partial}{\partial\theta_{m_1}}\epsilon_{t-r_1}(\theta_0)\left(\epsilon_{t-r_2}(\hat{\theta}_n)-\epsilon_{t-r_2}(\theta_0)\right)\frac{\partial}{\partial\theta_{m_2}}\epsilon_{t-r_2}(\hat{\theta}_n)\\
&\hspace{4cm}+\epsilon_{t-r_1}(\theta_0)\frac{\partial}{\partial\theta_{m_1}}\epsilon_{t-r_1}(\theta_0)\epsilon_{t-r_2}(\theta_0)\left(\frac{\partial}{\partial\theta_{m_2}}\epsilon_{t-r_2}(\hat{\theta}_n)-\frac{\partial}{\partial\theta_{m_2}}\epsilon_{t-r_2}(\theta_0)\right),
\end{align*}
one has
\begin{align}\label{cu}
\left\|\hat{\Sigma}_{\underline{H}_{r,n}}-\hat{\Sigma}_{\underline{H}_r}\right\|^2& \leq16\sum_{r_1=1}^r\sum_{r_2=1}^r\sum_{m_1=1}^{p+q+1}\sum_{m_2=1}^{p+q+1}\left(U_{n,1}+U_{n,2}+U_{n,3}+U_{n,4}\right)^2
\end{align}
where
\begin{align*}
U_{n,1}&=\frac{1}{n}\sum_{t=1}^n\left|\epsilon_{t-r_1}(\hat{\theta}_n)-\epsilon_{t-r_1}(\theta_0)\right|\left|\frac{\partial}{\partial\theta_{m_1}}\epsilon_{t-r_1}(\hat{\theta}_n)\right|\left|\epsilon_{t-r_2}(\hat{\theta}_n)\right|\left|\frac{\partial}{\partial\theta_{m_2}}\epsilon_{t-r_2}(\hat{\theta}_n)\right|,\\
U_{n,2}&=\frac{1}{n}\sum_{t=1}^n\left|\epsilon_{t-r_1}(\theta_0)\right|\left|\frac{\partial}{\partial\theta_{m_1}}\epsilon_{t-r_1}(\hat{\theta}_n)-\frac{\partial}{\partial\theta_{m_1}}\epsilon_{t-r_1}(\theta_0)\right|\left|\epsilon_{t-r_2}(\hat{\theta}_n)\right|\left|\frac{\partial}{\partial\theta_{m_2}}\epsilon_{t-r_2}(\hat{\theta}_n)\right|,\\
U_{n,3}&=\frac{1}{n}\sum_{t=1}^n\left|\epsilon_{t-r_1}(\theta_0)\right|\left|\frac{\partial}{\partial\theta_{m_1}}\epsilon_{t-r_1}(\theta_0)\right|\left|\epsilon_{t-r_2}(\hat{\theta}_n)-\epsilon_{t-r_2}(\theta_0)\right|\left|\frac{\partial}{\partial\theta_{m_2}}\epsilon_{t-r_2}(\hat{\theta}_n)\right| \\
U_{n,4}&=\frac{1}{n}\sum_{t=1}^n\left|\epsilon_{t-r_1}(\theta_0)\right|\left|\frac{\partial}{\partial\theta_{m_1}}\epsilon_{t-r_1}(\theta_0)\right|\left|\epsilon_{t-r_2}(\theta_0)\right|\left|\frac{\partial}{\partial\theta_{m_2}}\epsilon_{t-r_2}(\hat{\theta}_n)-\frac{\partial}{\partial\theta_{m_2}}\epsilon_{t-r_2}(\theta_0)\right|.
\end{align*}
Taylor expansions around $\theta_0$ yield that there exists $\underline{\theta}$ and $\overline{\theta}$ between $\hat{\theta}_n$ and $\theta_0$ such that
\begin{align*}
\left|\epsilon_{t}(\hat{\theta}_n)-\epsilon_{t}(\theta_0)\right|& \leq w_t\left\|\hat{\theta}_n-\theta_0\right\|
\end{align*}
and
\begin{align*}
\left|\frac{\partial}{\partial\theta_{m}}\epsilon_{t}(\hat{\theta}_n)-\frac{\partial}{\partial\theta_{m}}\epsilon_{t}(\theta_0)\right|& \leq q_t\left\|\hat{\theta}_n-\theta_0\right\|
\end{align*}
 with $w_t=\left\|{\partial\epsilon_t(\underline{\theta})}/{\partial\theta^{'}}\right\|$ and $q_t=\left\|{\partial^2\epsilon_t(\overline{\theta})}/{\partial\theta^{'}\partial\theta_m}\right\|$.
Using the fact that $$\mathbb{E}\left| w_{t-r_1}\frac{\partial}{\partial\theta_{m_1}}\epsilon_{t-r_1}(\hat{\theta}_n)\epsilon_{t-r_2}(\hat{\theta}_n)\frac{\partial}{\partial\theta_{m_2}}\epsilon_{t-r_2}(\hat{\theta}_n)\right| <\infty$$ and that $(\sqrt{n}( \hat{\theta}_n-\theta_0))_n$ is a tight sequence (which implies that $\|\hat{\theta}_n-\theta_0\|=\mathrm{O}_{\mathbb{P}}(1/\sqrt{n})$), we deduce that
\begin{align*}
U_{n,1}=\mathrm{O}_{\mathbb{P}}\left(\frac{1}{\sqrt{n}}\right).
\end{align*}
The same arguments are valid for $U_{n,2}$, $U_{n,3}$ and $U_{n,4}$. Consequently $
U_{n,1}+U_{n,2}+U_{n,3}+U_{n,4}=\mathrm{O}_{\mathbb{P}}(1/\sqrt{n})$ and \eqref{cu} yields
\begin{align*}
\left\|\hat{\Sigma}_{\underline{H}_{r,n}}-\hat{\Sigma}_{\underline{H}_r}\right\|^2&
=\mathrm{O}_{\mathbb{P}}\left(\frac{r^2}{n}\right).
\end{align*}
When $r=\mathrm{o}(n^{1/3})$ we finally obtain $\sqrt{r}\|\hat{\Sigma}_{\underline{H}_{r,n}}-\hat{\Sigma}_{\underline{H}_r}\|=\mathrm{o}_{\mathbb{P}}(1)$.

\end{proof}
\begin{lemme}\label{lemPhi} Under the assumptions of Theorem~\ref{convergence_Isp}, we have
$$\sqrt{r}\left\|\underline{\Phi}_r^{*}-\underline{\Phi}_r\right\|=\mathrm{o}_{\mathbb{P}}(1)\quad\text{ as }r\rightarrow\infty.$$
\end{lemme}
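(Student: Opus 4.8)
The plan is to establish an exact deterministic identity between $\underline{\Phi}^{*}_r$ and $\underline{\Phi}_r$ via the (population) Yule--Walker normal equations, and then to control the resulting remainder using the decay of the AR$(\infty)$ coefficients and the summability of the autocovariances of $(H_t(\theta_0))_{t\in\mathbb{Z}}$. Observe first that, by the normal equations associated with the truncated regression \eqref{AR_tronquee}, $\underline{\Phi}_r$ is the coefficient of the orthogonal projection of $H_t(\theta_0)$ onto the linear span of $\underline{H}_{r,t}(\theta_0)$, so that
\[
\underline{\Phi}_r\,\Sigma_{\underline{H}_r}=\Sigma_{H,\underline{H}_r}.
\]
On the other hand, multiplying the AR$(\infty)$ representation \eqref{AR_infty} on the right by $\underline{H}_{r,t}^{'}(\theta_0)$, taking expectations and using that $u_t$ is uncorrelated with $\{H_{t-k}(\theta_0):k\ge 1\}$, we obtain
\[
\Sigma_{H,\underline{H}_r}=\underline{\Phi}^{*}_r\,\Sigma_{\underline{H}_r}+\sum_{k=r+1}^{\infty}\Phi_k\,\mathbb{E}\left[H_{t-k}(\theta_0)\underline{H}_{r,t}^{'}(\theta_0)\right],
\]
since $\sum_{k=1}^r\Phi_k\,\mathbb{E}[H_{t-k}(\theta_0)\underline{H}_{r,t}^{'}(\theta_0)]=\underline{\Phi}^{*}_r\Sigma_{\underline{H}_r}$ by the block structure. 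Subtracting the two identities gives $(\underline{\Phi}_r-\underline{\Phi}^{*}_r)\Sigma_{\underline{H}_r}=\sum_{k=r+1}^{\infty}\Phi_k\,\mathbb{E}[H_{t-k}(\theta_0)\underline{H}_{r,t}^{'}(\theta_0)]$, hence
\[
\underline{\Phi}_r-\underline{\Phi}^{*}_r=\left(\sum_{k=r+1}^{\infty}\Phi_k\,\mathbb{E}\left[H_{t-k}(\theta_0)\underline{H}_{r,t}^{'}(\theta_0)\right]\right)\Sigma_{\underline{H}_r}^{-1}.
\]

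Next I would bound the right-hand side. By Lemma~\ref{sur_existence_des_matrices} one has $\sup_{r\ge 1}\|\Sigma_{\underline{H}_r}^{-1}\|<\infty$. The block matrix $\mathbb{E}[H_{t-k}(\theta_0)\underline{H}_{r,t}^{'}(\theta_0)]$ is built from the $r$ blocks $\mathrm{Cov}(H_{t-k}(\theta_0),H_{t-j}(\theta_0))$, $j=1,\dots,r$, so \eqref{ineg_norme_matricielle} combined with the absolute summability $\sum_{h\in\mathbb{Z}}\|\mathrm{Cov}(H_t(\theta_0),H_{t-h}(\theta_0))\|<\infty$ established in the proof of Lemma~\ref{exit_I} (see \eqref{conv_dominee}) yields a finite constant $K$ with $\sup_{k,r}\|\mathbb{E}[H_{t-k}(\theta_0)\underline{H}_{r,t}^{'}(\theta_0)]\|\le K$. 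Therefore, for a new constant $K$,
\[
\left\|\underline{\Phi}_r-\underline{\Phi}^{*}_r\right\|\le K\sum_{k=r+1}^{\infty}\|\Phi_k\|.
\]

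Finally, under the assumption $\|\Phi_k\|=\mathrm{o}(k^{-2})$ of Theorem~\ref{convergence_Isp}, one has $\sum_{k=r+1}^{\infty}\|\Phi_k\|=\mathrm{o}\bigl(\sum_{k>r}k^{-2}\bigr)=\mathrm{o}(r^{-1})$, so that $\sqrt{r}\,\|\underline{\Phi}_r-\underline{\Phi}^{*}_r\|=\mathrm{o}(r^{-1/2})\to 0$ as $r\to\infty$. Since $\underline{\Phi}_r$ and $\underline{\Phi}^{*}_r$ are non-random quantities, this convergence is in fact deterministic, a fortiori $\mathrm{o}_{\mathbb{P}}(1)$, which proves the lemma. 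I expect the only slightly delicate step to be the bound on $\|\mathbb{E}[H_{t-k}(\theta_0)\underline{H}_{r,t}^{'}(\theta_0)]\|$ that must be uniform in both $k$ and $r$; this is precisely where the special matrix norm satisfying \eqref{ineg_norme_matricielle} is needed (as emphasised at the start of Subsection~\ref{sectI}), the remaining arguments being essentially a transcription of Berk's classical computation to the present multivariate, long-memory setting.
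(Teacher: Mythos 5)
Your proof is correct and follows essentially the same route as the paper: you derive the same exact identity $\underline{\Phi}_r-\underline{\Phi}^{*}_r=\bigl(\sum_{k>r}\Phi_k\,\mathbb{E}[H_{t-k}(\theta_0)\underline{H}_{r,t}^{'}(\theta_0)]\bigr)\Sigma_{\underline{H}_r}^{-1}$ (the paper phrases it through the pseudo-innovation $u_{r,t}^{*}$, which is the same computation), then control it via Lemma~\ref{sur_existence_des_matrices} and the tail of $\sum_k\|\Phi_k\|$. The only divergence is in bounding $\|\mathbb{E}[H_{t-k}(\theta_0)\underline{H}_{r,t}^{'}(\theta_0)]\|$: you obtain a bound uniform in $k$ and $r$ from the summability \eqref{conv_dominee}, whereas the paper uses an entrywise Cauchy--Schwarz bound of order $\sqrt{r}$ and hence needs $r\sum_{k>r}\|\Phi_k\|\to 0$ rather than your $\sqrt{r}\sum_{k>r}\|\Phi_k\|\to 0$; your estimate is slightly sharper, but both suffice under the hypothesis $\|\Phi_k\|=\mathrm{o}(k^{-2})$.
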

\begin{proof} Recall that by \eqref{AR_infty} and \eqref{ecriture_MATRICIELLE_ar_tronque} we have
$$H_t(\theta_0)=\underline{\Phi}_r\underline{H}_{r,t}+u_{r,t}=\underline{\Phi}_r^{*}\underline{H}_{r,t}+\sum_{k=r+1}^{\infty}\Phi_kH_{t-k}(\theta_0)+u_t:=\underline{\Phi}_r^{*}\underline{H}_{r,t}+u_{r,t}^{*}.$$
By the orthogonality conditions in \eqref{AR_infty} and \eqref{ecriture_MATRICIELLE_ar_tronque}, one has
\begin{align*}
\Sigma_{u_r^{*},\underline{H}_r}:=\mathbb{E}\left[ u_{r,t}^{*}\underline{H}_{r,t}^{'}\right] &=\mathbb{E}\left[ \left(H_t(\theta_0)-\underline{\Phi}_r^{*}\underline{H}_{r,t} \right) \underline{H}_{r,t}^{'}\right]
=\mathbb{E}\left[ \left(\underline{\Phi}_r\underline{H}_{r,t}+u_{r,t}-\underline{\Phi}_r^{*}\underline{H}_{r,t} \right) \underline{H}_{r,t}^{'}\right]\\
&=\left(\underline{\Phi}_r- \underline{\Phi}_r^{*}\right) \Sigma_{\underline{H}_r} ,
\end{align*}
and consequently
\begin{equation}\label{ecart_phi_etoile_phi}
\underline{\Phi}_r^{*}-\underline{\Phi}_r=-\Sigma_{u_r^{*},\underline{H}_r}\Sigma_{\underline{H}_r}^{-1}.
\end{equation}
Using Lemma~\ref{sur_existence_des_matrices} and Lemma~\ref{lemme_cov_des_H}, \eqref{ecart_phi_etoile_phi} implies that
\begin{align*}
\mathbb{P}\left( \sqrt{r}\left\|\underline{\Phi}_r^{*}-\underline{\Phi}_r\right\|\geq \beta\right)&\leq \frac{\sqrt{r}}{\beta} \left\|\Sigma_{u_r^{*},\underline{H}_r}\right\|\left\|\Sigma_{\underline{H}_r}^{-1}\right\| \\ %
&\leq\frac{K\sqrt{r}}{\beta}\left\|\mathbb{E}\left[ \left( \sum_{k\geq r+1}\Phi_kH_{t-k}(\theta_0)+u_t\right)\underline{H}_{r,t}^{'} \right] \right\|\\
&\leq \frac{K\sqrt{r}}{\beta} \sum_{k\geq r+1}\left\|\Phi_k\right\|\left\|\mathbb{E}\left[H_{t-k}(\theta_0)\underline{H}_{r,t}^{'} \right] \right\|\\
&\leq \frac{K\sqrt{r}}{\beta} \sum_{\ell\geq 1}\left\|\Phi_{\ell+r}\right\|\left\|\mathbb{E}\left[H_{t-\ell-r}(\theta_0)\left( H^{'}_{t-1}(\theta_0),\dots, H^{'}_{t-r}(\theta_0)\right)  \right] \right\|\\
&\leq \frac{K\sqrt{r}}{\beta} \sum_{\ell\geq 1}\left\|\Phi_{\ell+r}\right\|\left ( \sum_{j=1}^{p+q+1}\sum_{k=1}^{p+q+1}\sum_{r_1=1}^{r}\left|\mathbb{E}\left[ H_{t-r-\ell,j}(\theta_0) H_{t-r_1,k}(\theta_0) \right] \right|^2\right )^{1/2}\\
&\leq \frac{K\sqrt{r}}{\beta} \sum_{\ell\geq 1}\left\|\Phi_{\ell+r}\right\|\left (\sum_{j=1}^{p+q+1}\sum_{k=1}^{p+q+1}\sum_{r_1=1}^{r}\mathbb{E}\left[ H_{t-r-\ell,j}^2(\theta_0) \right]\mathbb{E}\left[H_{t-r_1,k}^2(\theta_0) \right]\right )^{1/2}\\
&\leq \frac{K(p+q+1)r}{\beta} \sum_{\ell\geq 1}\left\|\Phi_{\ell+r}\right\|.
\end{align*}
Under Assumptions of Theorem~\ref{convergence_Isp}, $r\sum_{\ell\geq 1}\left\|\Phi_{\ell+r}\right\|=\mathrm{o}(1)$ as $r\rightarrow\infty$. The proof of the lemma follows.
\end{proof}
\begin{lemme}\label{lemme_inter} Under the assumptions of Theorem $\ref{convergence_Isp}$, we  have
$$\sqrt{r}\left\|\hat{\Sigma}^{-1}_{\underline{\hat{H}}_r}-\Sigma^{-1}_{\underline{H}_r}\right\|=\mathrm{o}_{\mathbb{P}}(1)$$
as $n\rightarrow\infty$ when $r=\mathrm{o}(n^{(1-2(d_0-d_1))/5})$ and $r\rightarrow\infty$.
\end{lemme}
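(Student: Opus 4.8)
The natural starting point is the elementary resolvent identity $A^{-1}-B^{-1}=A^{-1}(B-A)B^{-1}$, valid for any two invertible matrices $A,B$. Applying it with $A=\hat{\Sigma}_{\underline{\hat{H}}_r}$ (which is invertible with probability tending to one, by assumption) and $B=\Sigma_{\underline{H}_r}$ gives
\begin{align*}
\sqrt{r}\left\|\hat{\Sigma}^{-1}_{\underline{\hat{H}}_r}-\Sigma^{-1}_{\underline{H}_r}\right\|
\leq \left\|\hat{\Sigma}^{-1}_{\underline{\hat{H}}_r}\right\|\,\left(\sqrt{r}\left\|\hat{\Sigma}_{\underline{\hat{H}}_r}-\Sigma_{\underline{H}_r}\right\|\right)\,\left\|\Sigma^{-1}_{\underline{H}_r}\right\|.
\end{align*}
The last factor is $\mathrm{O}(1)$ uniformly in $r$ by Lemma~\ref{sur_existence_des_matrices}, and the middle parenthesis is $\mathrm{o}_{\mathbb{P}}(1)$ by Lemma~\ref{con_prob_Sigmachap_Hchap} under the stated rate $r=\mathrm{o}(n^{(1-2(d_0-d_1))/5})$. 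Hence the whole proof reduces to showing that $\|\hat{\Sigma}^{-1}_{\underline{\hat{H}}_r}\|=\mathrm{O}_{\mathbb{P}}(1)$, i.e.\ that the smallest eigenvalue of the (symmetric, positive semi-definite) matrix $\hat{\Sigma}_{\underline{\hat{H}}_r}$ stays bounded away from zero, uniformly in $r$, with probability tending to one.

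\textbf{The key step.} Since the norm used in the paper satisfies $\|A^{-1}\|=1/\lambda_{\min}(A)$ for a symmetric positive definite matrix $A$, and by Weyl's perturbation inequality $\lambda_{\min}(\hat{\Sigma}_{\underline{\hat{H}}_r})\geq \lambda_{\min}(\Sigma_{\underline{H}_r})-\|\hat{\Sigma}_{\underline{\hat{H}}_r}-\Sigma_{\underline{H}_r}\|$, we combine two facts. First, by Lemma~\ref{sur_existence_des_matrices}, $\lambda_{\min}(\Sigma_{\underline{H}_r})=1/\|\Sigma^{-1}_{\underline{H}_r}\|\geq 1/K$ for a constant $K$ not depending on $r$. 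Second, dropping the $\sqrt{r}$ factor in Lemma~\ref{con_prob_Sigmachap_Hchap} yields $\|\hat{\Sigma}_{\underline{\hat{H}}_r}-\Sigma_{\underline{H}_r}\|=\mathrm{o}_{\mathbb{P}}(1)$. Consequently, on an event of probability tending to one we have $\|\hat{\Sigma}_{\underline{\hat{H}}_r}-\Sigma_{\underline{H}_r}\|\leq 1/(2K)$, hence $\lambda_{\min}(\hat{\Sigma}_{\underline{\hat{H}}_r})\geq 1/(2K)$ and therefore $\|\hat{\Sigma}^{-1}_{\underline{\hat{H}}_r}\|\leq 2K$. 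This gives $\|\hat{\Sigma}^{-1}_{\underline{\hat{H}}_r}\|=\mathrm{O}_{\mathbb{P}}(1)$.

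\textbf{Conclusion and main obstacle.} Plugging these three estimates into the displayed bound, the right-hand side is the product of an $\mathrm{O}_{\mathbb{P}}(1)$ term, an $\mathrm{o}_{\mathbb{P}}(1)$ term and an $\mathrm{O}(1)$ term, hence $\mathrm{o}_{\mathbb{P}}(1)$, which is exactly the assertion of the lemma. The only genuinely delicate point is the control of $\|\hat{\Sigma}^{-1}_{\underline{\hat{H}}_r}\|$: it is essential that Lemma~\ref{sur_existence_des_matrices} provides a lower bound on $\lambda_{\min}(\Sigma_{\underline{H}_r})$ that is \emph{uniform in} $r$, since $r$ grows with $n$; without this uniformity the perturbation argument above would break down. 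Everything else is a routine bookkeeping of the orders already established in Lemmas~\ref{sur_existence_des_matrices} and~\ref{con_prob_Sigmachap_Hchap}.
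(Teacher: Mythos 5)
Your proof is correct and rests on the same two pillars as the paper's: the resolvent identity $A^{-1}-B^{-1}=A^{-1}(B-A)B^{-1}$, the uniform-in-$r$ bound $\sup_r\|\Sigma_{\underline{H}_r}^{-1}\|<\infty$ from Lemma~\ref{sur_existence_des_matrices}, and the rate $\sqrt{r}\|\hat{\Sigma}_{\underline{\hat{H}}_r}-\Sigma_{\underline{H}_r}\|=\mathrm{o}_{\mathbb{P}}(1)$ from Lemma~\ref{con_prob_Sigmachap_Hchap}. Where you diverge is in the one genuinely delicate point you correctly isolate, namely controlling $\|\hat{\Sigma}^{-1}_{\underline{\hat{H}}_r}\|$: you invoke Weyl's perturbation inequality for the smallest eigenvalue of the symmetric matrix $\hat{\Sigma}_{\underline{\hat{H}}_r}$, which together with $\lambda_{\min}(\Sigma_{\underline{H}_r})\ge 1/K$ uniformly in $r$ gives $\|\hat{\Sigma}^{-1}_{\underline{\hat{H}}_r}\|=\mathrm{O}_{\mathbb{P}}(1)$ directly. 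The paper instead iterates the resolvent bound to produce the Neumann-type series $\|\Sigma_{\underline{H}_r}^{-1}\|\sum_{k\ge 1}\|\Sigma_{\underline{H}_r}-\hat{\Sigma}_{\underline{\hat{H}}_r}\|^k\|\Sigma_{\underline{H}_r}^{-1}\|^k$ and then splits on the event $\{\|\Sigma_{\underline{H}_r}-\hat{\Sigma}_{\underline{\hat{H}}_r}\|\|\Sigma_{\underline{H}_r}^{-1}\|<1\}$, which implicitly accomplishes the same control. Your eigenvalue argument is the more transparent of the two (it makes explicit why symmetry and the spectral norm matter), while the paper's series argument has the mild advantage of not appealing to any spectral fact beyond submultiplicativity of the norm; both are legitimate, and both correctly exploit the uniformity in $r$ that you rightly flag as indispensable.
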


\begin{proof} We have
\begin{align*}
\left\|\hat{\Sigma}^{-1}_{\underline{\hat{H}}_r}-\Sigma^{-1}_{\underline{H}_r}\right\|&
\leq \left( \left\|\hat{\Sigma}^{-1}_{\underline{\hat{H}}_r}-\Sigma^{-1}_{\underline{H}_r}\right\|+\left\|\Sigma^{-1}_{\underline{H}_r}\right\|\right)\left\|\Sigma_{\underline{H}_r}- \hat{\Sigma}_{\underline{\hat{H}}_r}\right\|\left\| \Sigma_{\underline{H}_r}^{-1}\right\| ,
\end{align*}
and by induction we obtain
$$\left\|\hat{\Sigma}^{-1}_{\underline{\hat{H}}_r}-\Sigma^{-1}_{\underline{H}_r}\right\|\leq\left\| \Sigma_{\underline{H}_r}^{-1}\right\|\sum_{k=1}^{\infty}\left\|\Sigma_{\underline{H}_r}- \hat{\Sigma}_{\underline{\hat{H}}_r}\right\|^k\left\| \Sigma_{\underline{H}_r}^{-1}\right\|^k.$$
We have
\begin{align*}
\mathbb{P}\Big ( \sqrt{r}\big\|\hat{\Sigma}^{-1}_{\underline{\hat{H}}_r}-& \Sigma^{-1}_{\underline{H}_r}\big\|>\beta\Big ) \\
 & \leq \mathbb{P}\left(\sqrt{r}\left\| \Sigma_{\underline{H}_r}^{-1}\right\|\sum_{k=1}^{\infty}\left\|\Sigma_{\underline{H}_r}- \hat{\Sigma}_{\underline{\hat{H}}_r}\right\|^k\left\| \Sigma_{\underline{H}_r}^{-1}\right\|^k>\beta\right) \\
&\leq \mathbb{P}\left(\sqrt{r}\left\| \Sigma_{\underline{H}_r}^{-1}\right\|\sum_{k=1}^{\infty}\left\|\Sigma_{\underline{H}_r}- \hat{\Sigma}_{\underline{\hat{H}}_r}\right\|^k\left\| \Sigma_{\underline{H}_r}^{-1}\right\|^k>\beta \text{ and } \left\|\Sigma_{\underline{H}_r}- \hat{\Sigma}_{\underline{\hat{H}}_r}\right\|\left\| \Sigma_{\underline{H}_r}^{-1}\right\|<1\right)\\
&+\mathbb{P}\left( \sqrt{r}\left\| \Sigma_{\underline{H}_r}^{-1}\right\|\sum_{k=1}^{\infty}\left\|\Sigma_{\underline{H}_r}- \hat{\Sigma}_{\underline{\hat{H}}_r}\right\|^k\left\| \Sigma_{\underline{H}_r}^{-1}\right\|^k>\beta \text{ and } \left\|\Sigma_{\underline{H}_r}- \hat{\Sigma}_{\underline{\hat{H}}_r}\right\|\left\| \Sigma_{\underline{H}_r}^{-1}\right\|\geq 1\right)\\
&\leq  \mathbb{P}\left( \sqrt{r}\frac{\left\| \Sigma_{\underline{H}_r}^{-1}\right\|^2\left\|\Sigma_{\underline{H}_r}- \hat{\Sigma}_{\underline{\hat{H}}_r}\right\|}{1-\left\|\Sigma_{\underline{H}_r}- \hat{\Sigma}_{\underline{\hat{H}}_r}\right\|\left\| \Sigma_{\underline{H}_r}^{-1}\right\|}>\beta\right)+\mathbb{P}\left(\sqrt{r} \left\|\Sigma_{\underline{H}_r}- \hat{\Sigma}_{\underline{\hat{H}}_r}\right\|\left\| \Sigma_{\underline{H}_r}^{-1}\right\|\geq 1\right)\\
&\leq \mathbb{P}\left( \sqrt{r}\left\|\Sigma_{\underline{H}_r}- \hat{\Sigma}_{\underline{\hat{H}}_r}\right\|>\frac{\beta}{\left\| \Sigma_{\underline{H}_r}^{-1}\right\|^2+\beta r^{-1/2}\left\| \Sigma_{\underline{H}_r}^{-1}\right\|}\right) \\
& \hspace{4.5cm}+\mathbb{P}\left(\sqrt{r} \left\|\Sigma_{\underline{H}_r}- \hat{\Sigma}_{\underline{\hat{H}}_r}\right\|\geq \left\| \Sigma_{\underline{H}_r}^{-1}\right\|^{-1}\right) .
\end{align*}
Lemmas~\ref{sur_existence_des_matrices} and \ref{con_prob_Sigmachap_Hchap} imply the result.
\end{proof}

\begin{lemme} Under the assumptions of Theorem~\ref{convergence_Isp}, we have

$$\sqrt{r}\left\|\underline{\hat{\Phi}}_r-\underline{\Phi}_r\right\|=\mathrm{o}_{\mathbb{P}}(1)
\quad\text{ as }r\rightarrow\infty\text{ and }r=\mathrm{o}(n^{(1-2(d_0-d_1))/5}).$$
\end{lemme}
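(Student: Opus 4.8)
The plan is to exploit the explicit formulas $\underline{\hat{\Phi}}_r = \hat{\Sigma}_{\hat{H},\underline{\hat{H}}_r}\hat{\Sigma}_{\underline{\hat{H}}_r}^{-1}$ from \eqref{coef-reg} and $\underline{\Phi}_r = \Sigma_{H,\underline{H}_r}\Sigma_{\underline{H}_r}^{-1}$, the latter being the normal equations associated with the truncated regression \eqref{AR_tronquee} (recall that $u_{r,t}$ is uncorrelated with $\underline{H}_{r,t}$), and thereby to reduce the statement to the three previous lemmas. First I would write the telescoping decomposition
$$\underline{\hat{\Phi}}_r - \underline{\Phi}_r = \left(\hat{\Sigma}_{\hat{H},\underline{\hat{H}}_r} - \Sigma_{H,\underline{H}_r}\right)\hat{\Sigma}_{\underline{\hat{H}}_r}^{-1} + \Sigma_{H,\underline{H}_r}\left(\hat{\Sigma}_{\underline{\hat{H}}_r}^{-1} - \Sigma_{\underline{H}_r}^{-1}\right),$$
so that, by submultiplicativity of the operator norm,
$$\sqrt{r}\left\|\underline{\hat{\Phi}}_r - \underline{\Phi}_r\right\| \leq \left(\sqrt{r}\left\|\hat{\Sigma}_{\hat{H},\underline{\hat{H}}_r} - \Sigma_{H,\underline{H}_r}\right\|\right)\left\|\hat{\Sigma}_{\underline{\hat{H}}_r}^{-1}\right\| + \left\|\Sigma_{H,\underline{H}_r}\right\|\left(\sqrt{r}\left\|\hat{\Sigma}_{\underline{\hat{H}}_r}^{-1} - \Sigma_{\underline{H}_r}^{-1}\right\|\right).$$

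Then each factor is controlled by a lemma already proved. The factor $\sqrt{r}\|\hat{\Sigma}_{\hat{H},\underline{\hat{H}}_r} - \Sigma_{H,\underline{H}_r}\|$ is $\mathrm{o}_{\mathbb{P}}(1)$ by Lemma~\ref{con_prob_Sigmachap_Hchap} under the rate $r = \mathrm{o}(n^{(1-2(d_0-d_1))/5})$; the factor $\sqrt{r}\|\hat{\Sigma}_{\underline{\hat{H}}_r}^{-1} - \Sigma_{\underline{H}_r}^{-1}\|$ is $\mathrm{o}_{\mathbb{P}}(1)$ by Lemma~\ref{lemme_inter} under the same rate; and $\|\Sigma_{H,\underline{H}_r}\| = \mathrm{O}(1)$ uniformly in $r$ by Lemma~\ref{sur_existence_des_matrices}. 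It remains only to note that $\|\hat{\Sigma}_{\underline{\hat{H}}_r}^{-1}\| = \mathrm{O}_{\mathbb{P}}(1)$, which follows from $\|\hat{\Sigma}_{\underline{\hat{H}}_r}^{-1}\| \leq \|\hat{\Sigma}_{\underline{\hat{H}}_r}^{-1} - \Sigma_{\underline{H}_r}^{-1}\| + \|\Sigma_{\underline{H}_r}^{-1}\|$, the first term being $\mathrm{o}_{\mathbb{P}}(1)$ by Lemma~\ref{lemme_inter} and the second bounded uniformly in $r$ by Lemma~\ref{sur_existence_des_matrices}. Substituting these four facts into the displayed bound gives $\sqrt{r}\|\underline{\hat{\Phi}}_r - \underline{\Phi}_r\| = \mathrm{o}_{\mathbb{P}}(1)$.

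Since the argument is only an assembly of earlier results, there is no genuine obstacle; the single point requiring attention is the compatibility of the bandwidth conditions, and indeed Lemmas~\ref{con_prob_Sigmachap_Hchap} and \ref{lemme_inter} both require exactly $r = \mathrm{o}(n^{(1-2(d_0-d_1))/5})$, which is the hypothesis of the present lemma, so no extra constraint on $r$ is needed. Combining this lemma with Lemma~\ref{lemPhi} and the tail bound $\|\sum_{k\geq r+1}\Phi_k\| \to 0$ then yields, via \eqref{conv_phichapeau_ver_phi}, the convergence $\hat{\Phi}_r(1) \to \Phi(1)$ in probability, and together with $\hat{\Sigma}_{\hat{u}_r} \to \Sigma_u$ this completes the proof of Theorem~\ref{convergence_Isp}.
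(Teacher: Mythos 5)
Your proposal is correct and follows essentially the same route as the paper: the identity $\underline{\Phi}_r=\Sigma_{H,\underline{H}_r}\Sigma_{\underline{H}_r}^{-1}$ from the orthogonality of $u_{r,t}$ and $\underline{H}_{r,t}$, the telescoping decomposition $(\hat{\Sigma}_{\hat{H},\underline{\hat{H}}_r}-\Sigma_{H,\underline{H}_r})\hat{\Sigma}_{\underline{\hat{H}}_r}^{-1}+\Sigma_{H,\underline{H}_r}(\hat{\Sigma}_{\underline{\hat{H}}_r}^{-1}-\Sigma_{\underline{H}_r}^{-1})$, and the appeal to Lemmas~\ref{sur_existence_des_matrices}, \ref{con_prob_Sigmachap_Hchap} and \ref{lemme_inter} (including the bound $\|\hat{\Sigma}_{\underline{\hat{H}}_r}^{-1}\|=\mathrm{O}_{\mathbb{P}}(1)$ via the triangle inequality) are exactly the paper's argument. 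No gaps.
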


\begin{proof} Lemmas~\ref{sur_existence_des_matrices} and \ref{lemme_inter} yield
\begin{equation}\label{eq_inter}
\left\|\hat{\Sigma}^{-1}_{\underline{\hat{H}}_r}\right\|\leq \left\|\hat{\Sigma}^{-1}_{\underline{\hat{H}}_r}-\Sigma^{-1}_{\underline{H}_r}\right\|+\left\|\Sigma^{-1}_{\underline{H}_r}\right\|=\mathrm{O}_{\mathbb{P}}(1).
\end{equation}
By \eqref{ecriture_MATRICIELLE_ar_tronque}, we have
\begin{align*}
0&=\mathbb{E}\left[u_{r,t}\underline{H}_{r,t}^{'} \right] =\mathbb{E}\left[\left(H_t(\theta_0)-\underline{\Phi}_r\underline{H}_{r,t} \right) \underline{H}_{r,t}^{'} \right] =\Sigma_{H,\underline{H}_r}-\underline{\Phi}_r\Sigma_{\underline{H}_r},
\end{align*}
and so we have  $\underline{\Phi}_r=\Sigma_{H,\underline{H}_r}\Sigma_{\underline{H}_r}^{-1}$.
Lemmas~\ref{sur_existence_des_matrices},  \ref{con_prob_Sigmachap_Hchap},  \ref{lemme_inter} and \eqref{eq_inter} imply
\begin{align*}
\sqrt{r}\left\|\underline{\hat{\Phi}}_r-\underline{\Phi}_r\right\| & =\sqrt{r}\left\|\hat{\Sigma}_{\hat{H},\underline{\hat{H}}_r}\hat{\Sigma}_{\underline{\hat{H}}_r}^{-1}-\Sigma_{H,\underline{H}_r}\Sigma_{\underline{H}_r}^{-1}\right\| \\
& =\sqrt{r}\left\|\left( \hat{\Sigma}_{\hat{H},\underline{\hat{H}}_r}-\Sigma_{H,\underline{H}_r}\right) \hat{\Sigma}_{\underline{\hat{H}}_r}^{-1}+\Sigma_{H,\underline{H}_r}\left(\hat{\Sigma}_{\underline{\hat{H}}_r}^{-1}-\Sigma_{\underline{H}_r}^{-1}\right) \right\|\\
&=\mathrm{o}_{\mathbb{P}}(1) ,
\end{align*}
and the lemma is proved.
\end{proof}
\subsection*{Proof of Theorem \ref{convergence_Isp}}
Since by Lemma~\ref{con_prob_Sigmachap_Hchap} we have $\|\hat{\Sigma}_{\hat{H}}-\Sigma_{H}\|=\mathrm{o}_{\mathbb{P}}(r^{-1/2})=\mathrm{o}_{\mathbb{P}}(1)$ and $\|\hat{\Sigma}_{\hat{H},\underline{\hat{H}}_r}-\Sigma_{H,\underline{H}_r} \|=\mathrm{o}_{\mathbb{P}}(r^{-1/2})=\mathrm{o}_{\mathbb{P}}(1)$,  and by Lemma~\ref{lemPhi} $\| \underline{\hat{\Phi}}_r-\underline{\Phi}^{*}_r\|=\mathrm{o}_{\mathbb{P}}(r^{-1/2})=\mathrm{o}_{\mathbb{P}}(1)$,  Theorem~\ref{convergence_Isp} is then proved.

\subsection{Invertibility of the normalization matrix $P_{p+q+1,n}$}\label{sectinvP}
The following proofs are quite technical and are adaptations of the arguments used in \cite{BMS2018}.\\
To prove Proposition~\ref{inversibleP}, we need to introduce the following notation.\\
We denote $S_t$ the vector of $\mathbb{R}^{p+q+1}$ defined by
$$S_t=\sum_{j=1}^tU_j=\sum_{j=1}^t-2J^{-1}H_j=-2J^{-1}\sum_{j=1}^t\epsilon_j\frac{\partial}{\partial\theta}\epsilon_j(\theta_0)$$
and $S_t(i)$ its $i-$th component. We have 
\begin{align}\label{recurrence_Ft}
S_{t-1}(i)&=S_t(i)-U_t(i).
\end{align}
If the matrix $P_{p+q+1,n}$ is not invertible, there exists some real constants $d_1,\dots,d_{p+q+1}$ not all equal to zero, such that $\mathbf{d}^{'}P_{p+q+1,n}\mathbf{d}=0$, where $\mathbf{d}=(d_1,\dots,d_{p+q+1})^{'}$. Thus we may write that $\sum_{i=1}^{p+q+1}\sum_{j=1}^{p+q+1}d_jP_{p+q+1,n}(j,i)d_i=0$ or equivalently 
\begin{align*}
\frac{1}{n^2}\sum_{t=1}^n\sum_{i=1}^{p+q+1}\sum_{j=1}^{p+q+1}d_j\left(\sum_{k=1}^t( U_k(j)-\bar U_n(j))\right)
\left(\sum_{k=1}^t( U_k(i)-\bar U_n(i))\right)d_i&=0 .
\end{align*}
Then 
\begin{align*}
\sum_{t=1}^n\left(\sum_{i=1}^{p+q+1}d_i\left(\sum_{k=1}^t( U_k(i)-\bar U_n(i))\right)\right) ^2&=0,
\end{align*}
which implies that for all $t\ge 1$ 
\begin{align*}
\sum_{i=1}^{p+q+1}d_i\left(\sum_{k=1}^t( U_k(i)-\bar U_n(i))\right)=\sum_{i=1}^{p+q+1}d_i\left(S_t(i)-\frac{t}{n}S_n(i)\right)&=0 .
\end{align*}
So we have 
\begin{align}\label{plo}
\frac{1}{t}\sum_{i=1}^{p+q+1}d_iS_t(i)&=\sum_{i=1}^{p+q+1}d_i\left(\frac{1}{n}S_n(i)\right) . 
\end{align}
We apply the ergodic theorem and we use the orthogonality of $\epsilon_t$ and $(\partial/\partial\theta)\epsilon_t(\theta_0)$ in order to obtain that 
$$\sum_{i=1}^{p+q+1}d_i\left(\frac{1}{n}\sum_{k=1}^nU_k(i)\right)\xrightarrow[n\to\infty]{\text{a.s.}}
\sum_{i=1}^{p+q+1}d_i\mathbb{E}\left[U_k(i) \right]=-2\sum_{i,j=1}^{p+q+1}d_iJ^{-1}(i,j)\mathbb{E}\left[\epsilon_k\frac{\partial\epsilon_k}{\partial\theta_j} \right]=0  \ .$$
Reporting this convergence in \eqref{plo} implies that $\sum_{i=1}^{p+q+1}d_iS_t(i)=0$ a.s. for all $t\ge 1$. 
By \eqref{recurrence_Ft}, we deduce that 
\begin{align*}
\sum_{i=1}^{p+q+1}d_iU_t(i)=-2\sum_{i=1}^{p+q+1}d_i\sum_{j=1}^{p+q+1}J^{-1}(i,j)\left(\epsilon_t\frac{\partial\epsilon_t}{\partial\theta_j} \right)=0, \ \ \  \mathrm{ a.s.}
\end{align*}
Thanks to Assumption {\bf (A5)}, $(\epsilon_t)_{t\in\mathbb{Z}}$ has a positive density in some neighborhood of zero and then
$\epsilon_t\neq0$ almost-surely. So we would have $\mathbf {d}^{'}J^{-1}\frac{\partial\epsilon_t}{\partial\theta}=0$ a.s. 
Now we can follow the same arguments that we developed in the proof of  the invertibility of $J$ (see Proof of Lemma \ref{matrixJ} and more precisely \eqref{hypInvJ}) and this leads us to a contradiction. We deduce that the matrix $P_{p+q+1,n}$ is non singular. 
\subsection{Proof of Theorem \ref{self-norm-estimator}}\label{sectSN}
The arguments follows the one \cite{BMS2018} in a simpler context.\\
Recall that the Skorohod space $\mathbb{D}^{\ell}[ 0{,}1]$  is the set of $\mathbb{R}^{\ell}-$valued functions on $[ 0{,}1] $ which are right-continuous and has left limits everywhere. It is endowed with the Skorohod topology and the weak convergence on $\mathbb{D}^{\ell} [ 0{,}1]$ is mentioned by $\xrightarrow[]{\mathbb{D}^{\ell}}$. The integer part of $x$ will be denoted by $\lfloor x\rfloor$.\\
The goal at first is to show that there exists a lower triangular matrix $T$ with nonnegative diagonal entries such that
\begin{equation}\label{conv-Skorohod-1-pr}
\frac{1}{\sqrt{n}}\sum_{t=1}^{\lfloor nr\rfloor}U_t  \xrightarrow[n\to\infty]{\mathbb{D}^{p+q+1}} \ (TT^{'})^{1/2} B_{p+q+1}(r),
\end{equation}
where $(B_{p+q+1}(r))_{r\geq 0}$ is a $(p+q+1)-$dimensional standard Brownian motion.
Using \eqref{epsil-th}, $U_t$ can be rewritten as
$$U_t=\left( -2\left\lbrace \sum_{i=1}^{\infty}\overset{\textbf{.}}{\lambda}_{i,1}\left(\theta_0\right) \epsilon_t\epsilon_{t-i},\dots,\sum_{i=1}^{\infty}\overset{\textbf{.}}{\lambda}_{i,p+q+1}\left(\theta_0\right) \epsilon_t\epsilon_{t-i}\right\rbrace J^{-1 '} \right) ^{'}.$$
The non-correlation between $\epsilon_t$'s implies that the process $(U_t)_{t\in\mathbb{Z}}$ is centered. In order to apply the functional central limit theorem for strongly mixing process, we need to identify the asymptotic covariance matrix in the classical central limit theorem for the sequence $(U_t)_{t\in\mathbb{Z}}$. It is proved in Theorem \ref{n.asymptotique} that
$$\frac{1}{\sqrt{n}}\sum_{t=1}^nU_t\xrightarrow[n\to\infty]{\text{in law}} \mathcal{N}\left(0,\Omega=:2\pi f_U(0)\right),$$
where $f_U(0)$ is the spectral density of the stationary process $(U_t)_{t\in\mathbb{Z}}$ evaluated at frequency 0. The existence of the matrix $\Omega$ has already been discussed (see the proofs of lemmas \ref{matrixJ} and \ref{exit_I}).

Since the matrix $\Omega$ is symmetric positive definite, it can be factorized as $\Omega=TT^{'}$ where the $(p+q+1)\times(p+q+1)$ lower triangular matrix $T$ has real positive diagonal entries. Therefore, we have
$$\frac{1}{\sqrt{n}}\sum_{t=1}^n(TT^{'})^{-1/2}F_t\xrightarrow[n\to\infty]{\text{in law}}  \ \mathcal{N}\left(0,I_{p+q+1}\right),$$
where $I_{p+q+1}$ is the identity matrix of order $p+q+1$.

As in the proof of the asymptotic normality of $(\sqrt{n}(\hat{\theta}_n-\theta_0))_{n\geq 1}$, the distribution  of $n^{-1/2}\sum_{t=1}^nU_t$ when $n$ tends to infinity is obtained by introducing the random vector $U_t^k$ defined for any positive integer $k$ by
$$U_t^k=\left( -2\left\lbrace \sum_{i=1}^{k}\overset{\textbf{.}}{\lambda}_{i,1}\left(\theta_0\right) \epsilon_t\epsilon_{t-i},\dots,\sum_{i=1}^{k}\overset{\textbf{.}}{\lambda}_{i,p+q+1}\left(\theta_0\right) \epsilon_t\epsilon_{t-i}\right\rbrace J^{-1 '} \right) ^{'}.$$
Since $U_t^k$ depends on a finite number of values of the noise process $(\epsilon_t)_{t\in\mathbb{Z}}$, it also satisfies a mixing property (see Theorem 14.1 in \cite{davidson1994}, p. 210).
The central limit theorem for strongly mixing process  of \cite{herr} shows  that its asymptotic distribution is normal with zero mean and variance matrix $\Omega_k$ that converges when $k$ tends to infinity to $\Omega$ (see the proof of Lemma \ref{normalite_score}): 
$$\frac{1}{\sqrt{n}}\sum_{t=1}^nU_t^k \xrightarrow[n\to\infty]{\text{in law}}  \ \mathcal{N}\left(0,\Omega_k\right).$$
The above arguments also apply to matrix $\Omega_k$ with some matrix $T_k$ which is defined analogously as $T$. Consequently, we obtain
$$\frac{1}{\sqrt{n}}\sum_{t=1}^n(T_kT_k^{'})^{-1/2}U_t^k \xrightarrow[n\to\infty]{\text{in law}} \mathcal{N}(0,I_{p+q+1}).$$
Now we are able to apply the functional central limit theorem for strongly mixing process  of \cite{herr} and we obtain that 
$$\frac{1}{\sqrt{n}}\sum_{t=1}^{\lfloor nr\rfloor}(T_kT_k^{'})^{-1/2}U_t^k  \xrightarrow[n\to\infty]{\mathbb{D}^{p+q+1}} B_{p+q+1}(r).$$
Since
$$(TT^{'})^{-1/2}U_t^k=\left((TT^{'})^{-1/2}-(T_kT_k^{'})^{-1/2}\right)U_t^k+(T_kT_k^{'})^{-1/2}U_t^k,$$
we may use the same approach as in the proof of Lemma \ref{normalite_score} in order to prove that $n^{-1/2}\sum_{t=1}^{n}((TT^{'})^{-1/2}-(T_kT_k^{'})^{-1/2})U_t^k$ converges in distribution to 0.
Consequently we obtain that 
$$\frac{1}{\sqrt{n}}\sum_{t=1}^{\lfloor nr\rfloor}(TT^{'})^{-1/2}U_t^k  \xrightarrow[n\to\infty]{\mathbb{D}^{p+q+1}}B_{p+q+1}(r).$$

In order to conclude that \eqref{conv-Skorohod-1-pr} is true, it remains to observe that uniformly with respect to $n$ it holds that 
\begin{equation}\label{conv-skorokh-2-pr}
\tilde{Y}_n^k(r):=\frac{1}{\sqrt{n}}\sum_{t=1}^{\lfloor nr\rfloor}(TT^{'})^{-1/2}\tilde{Z}_t^k  \xrightarrow[n\to\infty]{\mathbb{D}^{p+q+1}} \ 0,
\end{equation}
where
$$\tilde{Z}_t^k=\left( -2\left\lbrace \sum_{i=k+1}^{\infty}\overset{\textbf{.}}{\lambda}_{i,1}\left(\theta_0\right) \epsilon_t\epsilon_{t-i},\dots,\sum_{i=k+1}^{\infty}\overset{\textbf{.}}{\lambda}_{i,p+q+1}\left(\theta_0\right) \epsilon_t\epsilon_{t-i}\right\rbrace J^{-1 '} \right) ^{'}.$$
By \eqref{koko}, one has
$$\sup_{n}\mathrm{Var}\left(\frac{1}{\sqrt{n}}\sum_{t=1}^{n}\tilde{Z}_t^k\right)\xrightarrow[n\to\infty]{} 0$$
and since $\lfloor nr\rfloor\leq n$,
$$\sup_{0\leq r\leq 1}\sup_n\left\lbrace \left\|\tilde{Y}_n^k(r)\right\|\right\rbrace  \xrightarrow[n\to\infty]{} \ 0.$$
Thus \eqref{conv-skorokh-2-pr} is true and the proof of \eqref{conv-Skorohod-1-pr} is achieved.

By \eqref{conv-Skorohod-1-pr} we deduce that 
\begin{align}\label{convsko}
& \frac{1}{\sqrt{n}}\left(\sum_{j=1}^{\lfloor nr \rfloor} (U_j-\bar U_n)\right)  \xrightarrow[n\to\infty]{\mathbb{D}^{p+q+1}}(TT^{'})^{1/2}\left(B_{p+q+1}(r)-rB_{p+q+1}(1)\right) . 
\end{align}
One remarks that the continuous mapping theorem on the Skorohod space yields 
\begin{align*}
P_{p+q+1,n}  \xrightarrow[n\to\infty]{\text{in law}}  
&(TT^{'})^{1/2}\left[ \int_0^1\left\lbrace  B_{p+q+1}(r)-rB_{p+q+1}(1)\right\rbrace\left\lbrace B_{p+q+1}(r)-rB_{p+q+1}(1)\right\rbrace^{'}\mathrm{dr}\right] (TT^{'})^{1/2}\\
=&(TT^{'})^{1/2}V_{p+q+1}(TT^{'})^{1/2}.
\end{align*}
Using \eqref{conv-Skorohod-1-pr}, \eqref{convsko} and the continuous mapping theorem on the Skorohod space, one finally obtains
\begin{align*}
& n \big (\hat{\theta}_{n}-\theta_0\big )^{'}P_{p+q+1,n}^{-1}\big (\hat{\theta}_{n}-\theta_0\big ) \\ 
& \hspace{2.5cm} \xrightarrow[n\to\infty]{\mathbb{D}^{p+q+1}} \left\lbrace (TT^{'})^{1/2}B_{p+q+1}(1)\right\rbrace ^{'}\left\lbrace (TT^{'})^{1/2}V_{p+q+1}(TT^{'})^{1/2}\right\rbrace^{-1} \left\lbrace (TT^{'})^{1/2}B_{p+q+1}(1)\right\rbrace\\
& \hspace{4cm} =B_{p+q+1}^{'}(1)V_{p+q+1}^{-1}B_{p+q+1}(1):=\mathcal{U}_{p+q+1}.
\end{align*}
The proof of Theorem \ref{self-norm-estimator} is then complete.

\subsection{Proof of Theorem \ref{self-norm-estimator2}}\label{sectSN2}
In view of \eqref{matP} and \eqref{matPchap},  we write $\hat P_{p+q+1,n}=P_{p+q+1,n}+ Q_{p+q+1,n}$ where
\begin{align*}
Q_{p+q+1,n}  &=
\big ( J(\theta_0)^{-1}-{\hat J}_n^{-1} \big  )\frac{1}{n^2}\sum_{t=1}^n\left(\sum_{j=1}^t( H_j-\mbox{$\frac{1}n \sum_{k=1}^n H_k$})\right)\left(\sum_{t=1}^n( H_j-\mbox{$\frac{1}n \sum_{k=1}^n H_k$})\right)^{'}
\\ 
&\hspace{2cm}+ {\hat J}_n^{-1} \frac{1}{n^2} \sum_{t=1}^n \Bigg \{ \left(\sum_{j=1}^t( H_j-\mbox{$\frac{1}n \sum_{k=1}^n H_k$})\right)\left(\sum_{t=1}^n( H_j-\mbox{$\frac{1}n \sum_{k=1}^n H_k$})\right)^{'} \\ 
& \hspace{5cm} - \left(\sum_{j=1}^t( \hat H_j-\mbox{$\frac{1}n \sum_{k=1}^n \hat H_k$})\right)\left(\sum_{t=1}^n( \hat H_j-\mbox{$\frac{1}n \sum_{k=1}^n \hat H_k$})\right)^{'} \Bigg \} . 
\end{align*}
Using the same approach as in Lemma \ref{Conv_almost_sure_J_n_vers_J}, $\hat{J}_n$ converges in probability to $J$. Thus we deduce that the first term in the right hand side of the above equation tends to zero in probability. 

The second term is a sum composed of the following terms 
$$q_{s,t}^{i,j,k,l} = \epsilon_s(\theta_0)\epsilon_{t}(\theta_0)\frac{\partial {\epsilon}_{s}(\theta_0)}{\partial \theta_i}
\frac{\partial \epsilon_{t}(\theta_0)}{\partial \theta_{j}}  - \tilde{\epsilon}_s(\hat\theta_n)\tilde{\epsilon}_{t}(\hat\theta_n)\frac{\partial \tilde{\epsilon}_{s}(\hat\theta_n)}{\partial \theta_k}
\frac{\partial \tilde{\epsilon}_{t}(\hat\theta_n)}{\partial \theta_l}   \ .$$
Using similar arguments done before (see for example the use of Taylor's expansion in Subsection \ref{sectI}, we have $q_{s,t}^{i,j,k,l} = \mathrm{o}_{\mathbb P}(1)$ as $n$ goes to infinity and thus  $Q_{p+q+1,n}= \mathrm{o}_{\mathbb P}(1)$.
So one may find a matrix $Q^\ast_{p+q+1,n}$ that tends to the null matrix in probability and such  that
\begin{align*}
n\, \left(\hat{\theta}_{n}-\theta_0\right)^{'}\hat P_{p+q+1,n}^{-1}\left(\hat{\theta}_{n}-\theta_0\right) &  = n\,  \left(\hat{\theta}_{n}-\theta_0\right)^{'}
 \left(P_{p+q+1,n}+ Q_{p+q+1,n}\right)^{-1}\left(\hat{\theta}_{n}-\theta_0\right) \\&= n\, \left(\hat{\theta}_{n}-\theta_0\right)^{'}P_{p+q+1,n}^{-1}\left(\hat{\theta}_{n}-\theta_0\right)\\&\qquad + n\, \left(\hat{\theta}_{n}-\theta_0\right)^{'}Q_{p+q+1,n}^\ast \left(\hat{\theta}_{n}-\theta_0\right).
\end{align*}
Thanks to the arguments developed in the proof of Theorem \ref{self-norm-estimator}, $n (\hat{\theta}_{n}-\theta_0)^{'} P_{p+q+1,n}^{-1}(\hat{\theta}_{n}-\theta_0)$ converges in distribution. So $n (\hat{\theta}_{n}-\theta_0)^{'}Q_{p+q+1,n}^\ast (\hat{\theta}_{n}-\theta_0)$ tends to zero in distribution, hence in probability.  Then $n (\hat{\theta}_{n}-\theta_0)^{'}\hat P_{p+q+1,n}^{-1}(\hat{\theta}_{n}-\theta_0)$ and  $n (\hat{\theta}_{n}-\theta_0)^{'} P_{p+q+1,n}^{-1}(\hat{\theta}_{n}-\theta_0)$ have the same limit in distribution and the result is proved.

\bibliographystyle{smfalpha2}
\bibliography{biblio-youss}
\newpage
\section{Figures and tables}\label{fig}
\begin{figure}[H]
\includegraphics[width=12cm,height=10cm]{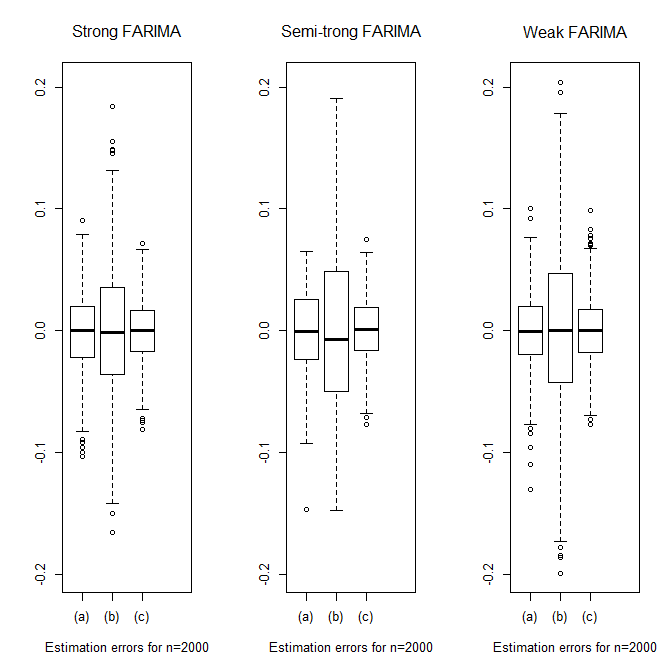}
\caption{LSE of $N=1,000$ independent simulations of the FARIMA$(1,d,1)$ model \eqref{process-sim} with size $n=2,000$ and unknown parameter $\theta_0=(a,b,d)=(-0.7,-0.2,0.4)$, when the noise is strong (left panel), when the noise is semi-strong \eqref{noise-sim} (middle panel) and when the noise is weak of the form \eqref{PT} (right panel). Points (a)-(c), in the box-plots, display the distribution of the estimation error $\hat{\theta}_n(i)-\theta_0(i)$ for $i=1,2,3$.  }
\label{fig1}
\end{figure}
\begin{figure}[H]
\includegraphics[width=12cm,height=10cm]{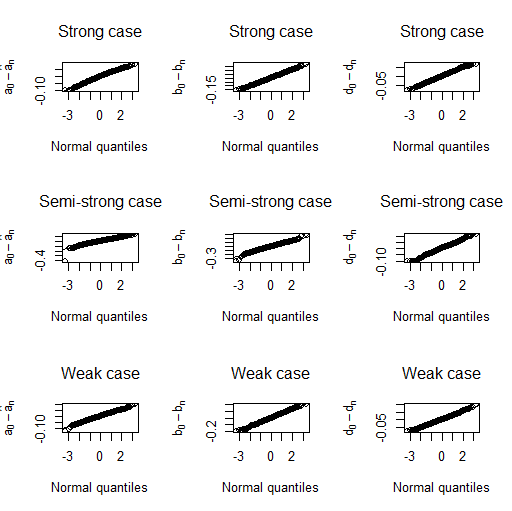}
\caption{LSE of $N=1,000$ independent simulations of the FARIMA$(1,d,1)$ model \eqref{process-sim} with size $n=2,000$ and unknown parameter $\theta_0=(a,b,d)=(-0.7,-0.2,0.4)$.
The top panels present respectively, from left to right, the Q-Q plot of the estimates $\hat{a}_n$, $\hat{b}_n$ and $\hat{d}_n$ of $a$, $b$ and $d$ in the strong case.
Similarly the middle and the bottom panels present respectively, from left to right, the Q-Q plot of the estimates $\hat{a}_n$, $\hat{b}_n$ and $\hat{d}_n$ of $a$, $b$ and $d$ in the semi-strong and weak cases. }
\label{fig2}
\end{figure}
\begin{figure}[H]
\includegraphics[width=12cm,height=10cm]{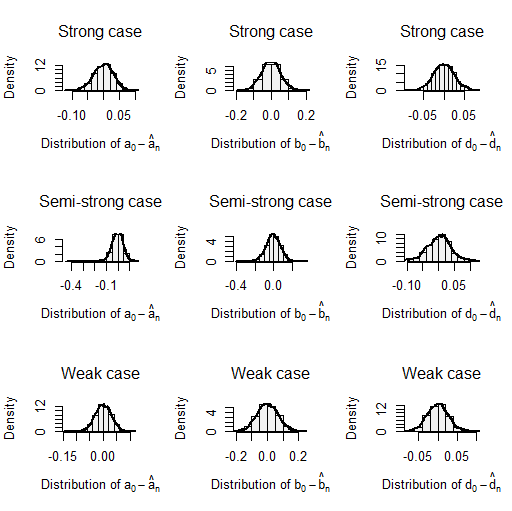}
\caption{LSE of $N=1,000$ independent simulations of the FARIMA$(1,d,1)$ model \eqref{process-sim} with size $n=2,000$ and unknown parameter $\theta_0=(a,b,d)=(-0.7,-0.2,0.4)$. The top panels present respectively, from left to right, the distribution of the estimates $\hat{a}_n$, $\hat{b}_n$ and $\hat{d}_n$ of $a$, $b$ and $d$ in the strong case. Similarly the middle and the bottom panels present respectively, from left to right, the distribution of the estimates $\hat{a}_n$, $\hat{b}_n$ and $\hat{d}_n$ of $a$, $b$ and $d$ in the semi-strong and weak cases. The kernel density estimate is displayed in full line, and the centered Gaussian density with the same variance is plotted in dotted line. }
\label{fig3}
\end{figure}
\begin{figure}[H]
\includegraphics[width=12cm,height=10cm]{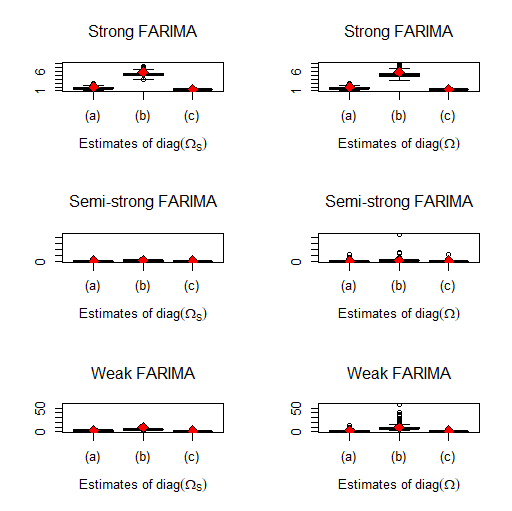}
\caption{Comparison of standard and modified estimates of the asymptotic variance $\Omega$ of the LSE, on the simulated models presented in Figure~\ref{fig1}.
The diamond symbols represent the mean, over $N=1,000$ replications, of the standardized errors $n(\hat{a}_n+0.7)^2$ for (a) (1.90 in the strong case and 4.32
(resp. 1.80) in the semi-strong case (resp.  in the weak case)), $n(\hat{b}_n+0.2)^2$ for (b) (5.81 in the strong case and  11.33
(resp. 8.88) in the semi-strong case (resp.  in the weak case)) and $n(\hat{d}_n-0.4)^2$ for (c) (1.28 in the strong case and 2.65 (resp. 1.40) in the semi-strong case (resp.  in the weak case)).}
\label{fig4}
\end{figure}
\begin{figure}[H]
\includegraphics[width=12cm,height=10cm]{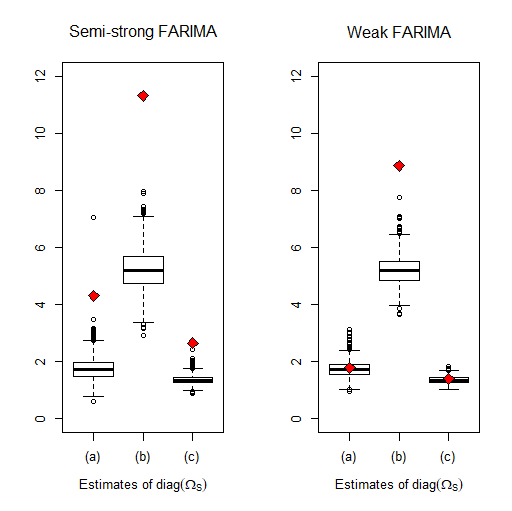}
\caption{A zoom of the left-middle and left-bottom panels of Figure~\ref{fig4}}
\label{fig5}
\end{figure}
\begin{figure}[H]
\includegraphics[width=12cm,height=10cm]{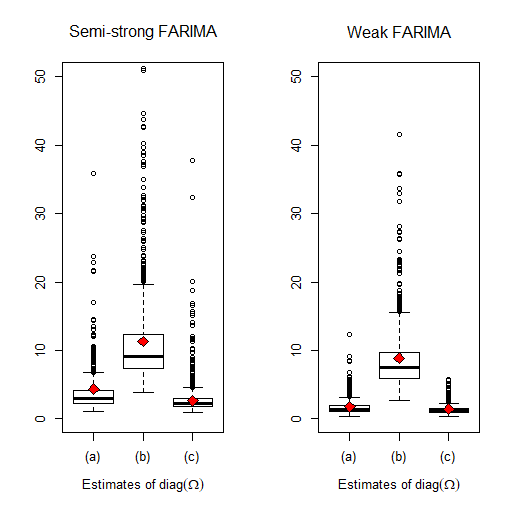}
\caption{A zoom of the right-middle and right-bottom panels of Figure~\ref{fig4}}
\label{fig6}
\end{figure}
\begin{table}[H]
 \caption{\small{Empirical size of standard and modified confidence interval: relative frequencies (in \%)
 of rejection. Modified SN stands for the self-normalized approach. In Modified we use the sandwich estimator
 of the asymptotic variance $\Omega$ of the LSE  while in Standard we use  $\hat\Omega_S$.
The number of replications is $N=1000$. }}
\begin{center}
{\scriptsize
\begin{tabular}{lll rrr rrr rrr}
\hline\hline
Model& Length $n$ & Level & \multicolumn{3}{c}{Standard}& \multicolumn{3}{c}{Modified}& \multicolumn{3}{c}{Modified SN}\\
&&&$\hat{a}_n$&$\hat{b}_n$&$\hat{d}_n$
&$\quad$ $\hat{a}_n$&$\hat{b}_n$&$\hat{d}_n$&$\quad$ $\hat{a}_n$&$\hat{b}_n$&$\hat{d}_n$\vspace*{0.1cm}\\
&& $\alpha=1\%$&{\bf 2.8}  &{\bf 2.7} &{\bf 2.1} &$\quad${\bf 3.7} &{\bf 3.1}& {\bf 2.5}&$\quad${\bf 2.5} &{\bf 2.5}&{\bf 2.0} \\
Strong FARIMA&$n=200$& $\alpha=5\%$&{\bf  7.1 }&{\bf 7.3} &5.2 &$\quad${\bf 8.2 }&{\bf 8.0}&5.4&$\quad${\bf 8.4} &{\bf 6.9}& 5.6\\
 && $\alpha=10\%$& 11.8  & 11.2 &8.3 &$\quad${\bf 12.8} &{\bf 12.4}&9.5&$\quad$ {\bf 14.5}&11.5&10.6
 \\
\cline{2-12}
&& $\alpha=1\%$&1.1  &1.6 &  0.7&$\quad$1.3 & 1.6&1.0&$\quad$1.6 &1.0& 0.8 \\
Strong FARIMA&$n=2,000$& $\alpha=5\%$& 5.8 &{\bf 6.9 }&5.1 &$\quad$6.1 &{\bf 6.8}&5.3&$\quad$5.6 &6.4& 3.8\\
 && $\alpha=10\%$&10.9  &{\bf 13.0} & 9.5  & $\quad$ 11.4  &{\bf 12.8 }& 9.5&$\quad$ 10.3&{\bf 12.4}& 9.0
 \\
\cline{2-12}
&& $\alpha=1\%$&1.3  &1.2 &  0.7&$\quad$1.2 & 1.2&0.8&$\quad$0.8&1.3& 1.2 \\
Strong FARIMA&$n=5,000$& $\alpha=5\%$& 5.3 &4.9 &5.2&$\quad$5.7 &5.1&5.4&$\quad$5.1 &4.9&5.3\\
 && $\alpha=10\%$&10.6  &10.3 & 11.4 & $\quad$ 10.7  &10.2 &11.7&$\quad$ 11.8&10.8& 11.4
 \\
\hline
&& $\alpha=1\%$&{\bf 5.3 } &{\bf 3.7} &{\bf 2.3 }&$\quad${\bf 4.8} &{\bf 3.4}& {\bf 3.2}&$\quad${\bf 4.0} &{\bf 2.8}&1.4 \\
Semi-strong FARIMA&$n=200$& $\alpha=5\%$&{\bf  11.2 }&{\bf 9.5 }&6.1&$\quad${\bf 10.7 }&{\bf 9.1}&5.9&$\quad${\bf 11.1 }&{\bf 8.5}& 5.7\\
 && $\alpha=10\%$&{\bf 16.8}  & {\bf 14.5} &8.6 &$\quad${\bf 16.7}  &{\bf 14.7}&10.0&$\quad$ {\bf 17.1}&{\bf 13.9}&10.9
 \\
\cline{2-12}
&& $\alpha=1\%$&{\bf 6.8 } &{\bf 7.7} & {\bf  7.8}&$\quad$ 1.7 & 0.9&1.4&$\quad${\bf 2.3} &{\bf 2.0}& 0.8\\
Semi-strong FARIMA&$n=2,000$& $\alpha=5\%$& {\bf 19.5}
 &{\bf 17.7} &{\bf 14.9 }&$\quad${\bf 6.5} &5.8&5.5&$\quad${\bf 8.1} &{\bf 6.8}&{\bf  6.5}\\
 && $\alpha=10\%$&{\bf 26.5}  &{\bf 26.7} &{\bf  21.5}  & $\quad$ {\bf 13.5} &11.0 & 9.9&$\quad$ {\bf 14.6}&{\bf 12.8}& {\bf 12.5}
  \\
\cline{2-12}
&& $\alpha=1\%$&{\bf 11.2 } &{\bf 9.8} &{\bf 9.4}&$\quad$ 1.6 & 1.5&1.1&$\quad$1.6&1.3& 1.2\\
Semi-strong FARIMA&$n=5,000$& $\alpha=5\%$& {\bf 20.8}
 &{\bf 20.2} &{\bf 20.9 }&$\quad$6.4 &5.7&5.3&$\quad$5.7 &6.2&{\bf 7.2}\\
 && $\alpha=10\%$&{\bf 28.2}  &{\bf 28.4} &{\bf  28.4}  & $\quad$ {\bf 12.2} &9.8 & 11.3&$\quad$ {\bf 12.0}&{\bf 13.0}& {\bf 13.9}
\\
\hline
&& $\alpha=1\%$&{\bf 2.6}  &{\bf 4.4} &1.2 &$\quad${\bf 6.2 }&{\bf 6.9}&{\bf 4.3}&$\quad${\bf 4.1} &{\bf 4.2}&{\bf 2.6 }\\
Weak FARIMA&$n=200$& $\alpha=5\%$&{\bf  6.6} &{\bf 11.3 }&4.3 &$\quad${\bf 13.8} &{\bf 14.6}&{\bf 10.2}&$\quad${\bf 12.0 }&{\bf 10.5}&{\bf  8.9}\\
 && $\alpha=10\%$& 10.9 & {\bf 18.8 }&{\bf 7.1} &$\quad${\bf 20.3}  &{\bf 21.9}&{\bf 16.3}&$\quad${\bf  17.7}&17.4&{\bf 15.4}
 \\
\cline{2-12}
&& $\alpha=1\%$&1.1  &{\bf 5.3 }&  1.3&$\quad$1.5 & 1.2&1.6&$\quad$1.2 &1.1& 0.9 \\
Weak FARIMA&$n=2,000$& $\alpha=5\%$& 5.4 &{\bf 13.4 }&5.8 &$\quad${\bf 7.0 }&{\bf 6.8}&5.5&$\quad$5.7 &{\bf 6.5}& 6.4\\
 && $\alpha=10\%$&11.4 &{\bf 21.2} & 9.6  & $\quad$ {\bf 12.8}  &{\bf 12.0} & 11.2&$\quad$ 11.3&{\bf 11.9}& {\bf 12.2}
\\
\cline{2-12}
&& $\alpha=1\%$&1.3  &{\bf 4.6} & 1.7&$\quad$1.2 & 1.3&1.2&$\quad$1.3 &1.4& 0.9 \\
Weak FARIMA&$n=5,000$& $\alpha=5\%$& 6.3 &{\bf 14.4} &6.0 &$\quad${\bf 6.7 }&6.3&5.9&$\quad$6.2 &6.2& 5.0\\
 && $\alpha=10\%$&11.5 &{\bf 22.3} & 11.6  & $\quad$ {\bf 12.1}  &{\bf 12.3} & 10.8&$\quad$ 10.6&10.9& 10.0
 \\
\hline\hline
\\\end{tabular}
}
\end{center}
\label{tab1}
\end{table}

\newpage
\begin{figure}[H]
\includegraphics[width=12cm,height=10cm]{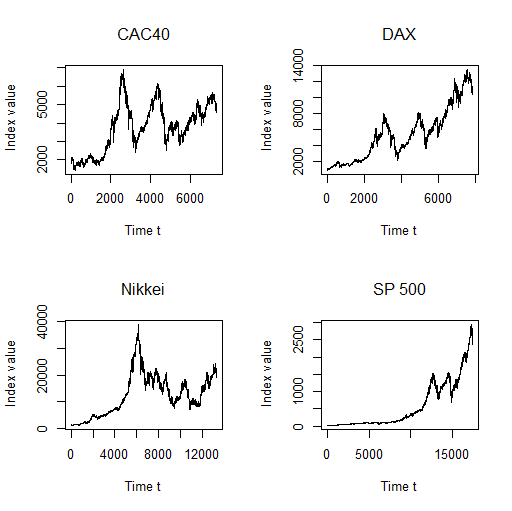}
\caption{Closing prices of the four stock market indices from  the starting date  of each index  to February 14, 2019. }
\label{graph1}
\end{figure}
\begin{figure}[H]
\includegraphics[width=12cm,height=10cm]{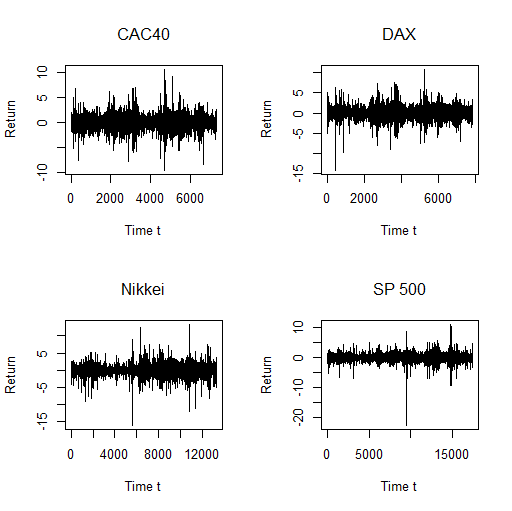}
\caption{Returns of the four stock market indices from  the starting date  of each index  to February 14, 2019. }
\label{graph}
\end{figure}

\begin{figure}[H]
\includegraphics[width=12cm,height=10cm]{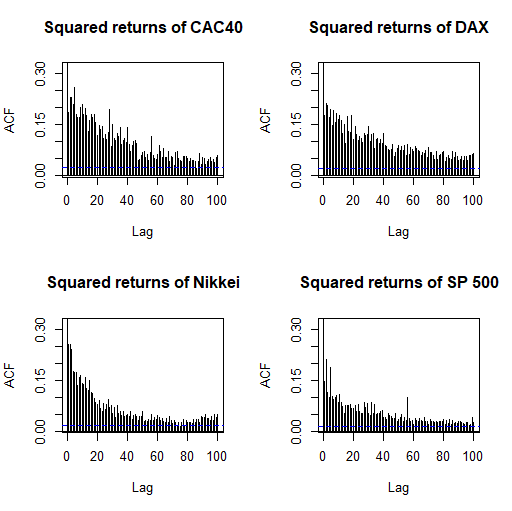}
\caption{Sample autocorrelations of squared returns of the four stock market indices. }
\label{acf}
\end{figure}

\begin{table}[H]
 \caption{\small{Fitting a FARIMA$(1,d,1)$ model to the squares of the 4 daily returns considered.
The corresponding $p-$values are reported in parentheses. The last column presents the estimated residual variance. }}
\begin{center}
\begin{tabular}{ll rrr r}
\hline\hline
Series& Length $n$ & \multicolumn{3}{c}{$\hat\theta_n$}&$\mbox{Var}(\epsilon_t)$\\
&&$\hat{a}_n$&$\hat{b}_n$&$\hat{d}_n$
&$\quad$ $\hat{\sigma}_{\epsilon}^2$\vspace*{0.1cm}\\
CAC&$n=7,341$& 0.1199 (0.1524) &0.5296 (0.0000) &0.4506 (0.0000)&$\quad$ $19.6244\times 10^{-8}$ \\
\cline{2-6}
DAX&$n=7,860$& 0.1598 (0.1819) & 0.4926 (0.0000) &0.3894 (0.0000) &$\quad$ $25.9383\times 10^{-8}$  \\
\cline{2-6}
Nikkei&$n=13,318$& -0.0217 (0.9528) &0.1579 (0.6050) &0.3217 (0.0000)&$\quad$ $25.6844\times 10^{-8}$  \\
\cline{2-6}
S\&P 500&$n=17,390$&  -0.3371 (0.0023)
 &-0.1795 (0.0227) &0.2338 (0.0000) &$\quad$ $22.9076\times 10^{-8}$  \\
\hline\hline
\\\end{tabular}
\end{center}
\label{tabreal}
\end{table}

\begin{table}[H]
 \caption{\small{Modified confidence interval at the asymptotic level $\alpha=5\%$ for the parameters estimated in Table~\ref{tabreal}.
 Modified SN stands for the self-normalized approach while Modified corresponds to the confidence interval obtained by using the sandwich estimator
 of the asymptotic variance $\Omega$ of the LSE. }}
\begin{center}
{\scriptsize
\begin{tabular}{l rrr rrr}
\hline\hline
Series& \multicolumn{3}{c}{Modified}& \multicolumn{3}{c}{Modified SN}\\
&$\hat{a}_n$&$\hat{b}_n$&$\hat{d}_n$&$\quad$ $\hat{a}_n$&$\hat{b}_n$&$\hat{d}_n$\vspace*{0.1cm}\\
CAC&$[-0.044,0.284]$ &$[0.319,0.740]$ &$[0.251,0.651]$ &$\quad$$[0.049,0.193]$ &$[0.432,0.627]$ &$[0.358,0.543]$ \\
\cline{2-7}
DAX& $[-0.075,0.394]$  &$[0.272,0.714]$  &$[0.263,0.516]$  &$\quad$$[-0.219,0.537]$  &$[0.422,0.563]$ &$[0.158,0.621]$ \\
\cline{2-7}
Nikkei&$[-0.741,0.698]$  &$[-0.440,0.756]$  &$[0.206,0.437]$ &$\quad$$[-0.823,0.779]$ &$[-0.407,0.717]$ &$[0.156,0.488]$ \\
\cline{2-7}
S\&P 500& $[-0.554,-0.121]$  &$[-0.334,-0.025]$  &$[0.162,0.306]$  &$\quad$$[-0.430,-0.244]$  &$[-0.297,-0.062]$ &$[0.101,0.366]$ \\
\hline\hline
\\\end{tabular}
}
\end{center}
\label{tab1real}
\end{table}


\end{document}